\documentclass[11pt]{article}

\usepackage[utf8]{inputenc}

\usepackage[ruled,vlined,linesnumbered,noend]{algorithm2e}
\SetKwInOut{Parameter}{Parameters}
\usepackage{amsthm}
\usepackage{amsmath}
\usepackage{amssymb}
\usepackage{enumerate}
\usepackage{graphicx}
\usepackage[margin=1in]{geometry}
\usepackage{dsfont}
\usepackage{booktabs} 
\usepackage{tabularx}
\usepackage{multirow}
\usepackage[shortlabels]{enumitem}
\usepackage{tablefootnote}

\usepackage[dvipsnames]{xcolor}
\definecolor{ForestGreen}{rgb}{0.1333,0.5451,0.1333}
\definecolor{DarkRed}{rgb}{0.65,0,0}

\usepackage{hyperref}
\hypersetup{
    colorlinks=true,
    linkcolor=DarkRed,
    filecolor=magenta,      
    urlcolor=cyan,
    citecolor=ForestGreen,
}

\usepackage{cleveref}

\usepackage{framed}
\usepackage[framemethod=tikz]{mdframed}
\usepackage{tikz-cd}
\newenvironment{wrapper}[1]
{
	\begin{center}
		\begin{minipage}{\linewidth}
			\begin{mdframed}[hidealllines=true, backgroundcolor=gray!20, leftmargin=0cm,innerleftmargin=0.4cm,innerrightmargin=0.4cm,innertopmargin=0.4cm,innerbottommargin=0.4cm,roundcorner=3pt]
				#1}
			{\end{mdframed}
		\end{minipage}
	\end{center}
}

\newcommand{\brac}[1]{\left(#1\right)}
\newcommand{\miss}{\mathsf{miss}}
\newcommand{\clr}{\mathsf{clr}}

\newcommand{\lst}{\mathsf{lst}}

\newcommand{\rdy}{\mathsf{rdy}}
\newcommand{\scl}{\mathsf{scl}}
\newcommand{\ind}{\mathsf{ind}}
\newcommand{\lon}{\mathsf{lon}}

\newcommand{\Vizing}{\textnormal{\textsf{Vizing}}}
\newcommand{\TVizing}{\textnormal{\textsf{TruncatedVizing}}}
\newcommand{\VizingF}{\textnormal{\textsf{VizingFan}}}

\DeclareMathAlphabet{\mathmybb}{U}{bbold}{m}{n}

\newtheorem{theorem}{Theorem}[section]
\newtheorem{lemma}[theorem]{Lemma}

\newtheorem{corollary}[theorem]{Corollary}

\newtheorem{definition}[theorem]{Definition}

\newtheorem{assumption}[theorem]{Assumption}
\newtheorem{observation}[theorem]{Observation}
\newtheorem{claim}[theorem]{Claim}

\title{Even Faster $(\Delta + 1)$-Edge Coloring \\via Shorter Multi-Step Vizing Chains}

\author{
	Sayan Bhattacharya$^*$ \and Mart\'in Costa$^*$ \and Shay Solomon\textsuperscript{\textdagger} \and Tianyi Zhang\textsuperscript{\textdagger}
}

\date{
University of Warwick$^*$ \\ Tel Aviv University\textsuperscript{\textdagger}
}

\begin{document}

\maketitle

\pagenumbering{gobble}

\begin{abstract}
Vizing's Theorem from 1964 states that 
any $n$-vertex $m$-edge graph with maximum degree $\Delta$ can be {\em edge colored} using at most $\Delta + 1$ colors.
For over 40 years, the state-of-the-art running time for computing such a coloring, obtained independently by Arjomandi [1982] and by Gabow,
Nishizeki, Kariv, Leven and Terada~[1985],
was $\tilde O(m\sqrt{n})$.
Very recently, this time bound was improved in two independent works, by Bhattacharya, Carmon, Costa, Solomon and Zhang to $\tilde O(mn^{1/3})$, and by
Assadi to $\tilde O(n^2)$.

In this paper we present an algorithm that computes such a coloring in $\tilde O(mn^{1/4})$ time.
Our key technical contribution is a subroutine for extending the coloring to one more edge within time $\tilde O(\Delta^2 + \sqrt{\Delta n})$. The best previous time bound of any color extension subroutine is either the trivial $O(n)$, dominated by the length of a Vizing chain, or the bound 
$\tilde{O}(\Delta^6)$ 
by Bernshteyn [2022], dominated by the length of {\em multi-step Vizing chains}, which is basically a concatenation of multiple (carefully chosen) Vizing chains.
Our color extension subroutine produces significantly shorter multi-step Vizing chains than in previous works, for sufficiently large $\Delta$.
\end{abstract}

\newpage

\tableofcontents

\newpage

\pagenumbering{arabic}

\part{Extended Abstract}

\section{Introduction}
Let $G = (V, E)$ be a simple, undirected $m$-edge $n$-vertex graph with maximum degree $\Delta$. 
For an integer $\kappa \in \mathbb{N}^+$,
a $\kappa$-edge coloring $\chi : E \rightarrow \{1, 2, \ldots, \kappa\}$ of the graph $G$ assigns a {\em color} $\chi(e)$ to every edge $e \in E$, so that any two adjacent edges receive distinct colors. 
The minimum  $\kappa$ for which the graph $G$ admits a $\kappa$-edge coloring, or the {\em edge chromatic number} of $G$, cannot be smaller than $\Delta$. On the other hand, Vizing's Theorem states that  $\Delta+1$ colors are always  sufficient~\cite{Vizing}.

Vizing's original proof for the existence of $(\Delta+1)$-edge coloring can rather easily be converted into  an $O(m n)$ time algorithm. In two independent works from the 80s, Arjomandi~\cite{arjomandi1982efficient} and Gabow et al.~\cite{gabow1985algorithms} improved this runtime bound to $\tilde{O}(m\sqrt{n})$.\footnote{We use the notation $\tilde{O}(\cdot)$
throughout to suppress polylogarithmic in $n$ factors.} Recently, Sinnamon~\cite{sinnamon2019fast}  used randomization to achieve a clean bound of $O(m \sqrt{n})$. 
There has been  no polynomial improvement over this $m\sqrt{n}$ time barrier in over 40 years, until very recently, where this time bound was improved in two independent works. Bhattacharya, Carmon, Costa, Solomon and Zhang \cite{BhattacharyaCCSZ24} improved the time bound to $\tilde O(mn^{1/3})$, which provides a polynomial improvement over the $m\sqrt{n}$ time bound in the entire regime of parameters. \cite{Assadi24} achieved a time bound of $\tilde{O}(n^2)$, which in particular provides a near-linear time algorithm for dense graphs; refer to \Cref{related} for additional results in \cite{Assadi24}, where more than $\Delta+1$ colors are used.
Both algorithms of \cite{BhattacharyaCCSZ24} and \cite{Assadi24} are randomized and the running time bounds hold with high probability.
Remarkably, the approaches in \cite{BhattacharyaCCSZ24}
and in \cite{Assadi24} are inherently different. In a nutshell, the key contribution of \cite{BhattacharyaCCSZ24} is in speeding up the coloring of the last few edges whereas the key contribution of \cite{Assadi24} is in speeding up the coloring of all but the last few edges. 

Despite this exciting recent progress,
the following outstanding question remains open.
\begin{wrapper}
How fast can one compute a $(\Delta+1)$-edge coloring of an input $m$-edge $n$-vertex graph?
\end{wrapper}
By combining the results of \cite{BhattacharyaCCSZ24} 
and \cite{Assadi24}, one can directly get a running time of $\tilde{O}(n \sqrt{m})$.\footnote{This observation was made via personal communication with the author of \cite{Assadi24}.}
Other than this direct corollary of the combination of the two works \cite{BhattacharyaCCSZ24,Assadi24}, it is unclear whether any further improvement is possible: This is exactly where the contribution of the current paper lies.
Specifically, 
we\footnote{{\em quasi nanos, gigantium humeris insidentes}} prove the following theorem.

\begin{theorem}\label{thm:main}
Given a simple, undirected $m$-edge $n$-vertex graph $G = (V, E)$ with maximum degree $\Delta$, we can compute a $(\Delta + 1)$-edge coloring of $G$ in $\tilde O(mn^{1/4})$ time with high probability.
\end{theorem}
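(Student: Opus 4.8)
The plan is to prove \Cref{thm:main} with a two-phase algorithm built around the new color-extension subroutine. In the first phase we compute quickly a proper partial $(\Delta+1)$-edge-coloring of $G$ that leaves only a relatively small set $F\subseteq E$ of edges uncolored; in the second phase we extend the coloring to the edges of $F$ one at a time, each extension costing $\tilde O(\Delta^2+\sqrt{\Delta n})$ by the subroutine. The running time is then (cost of phase one) $+\ |F|\cdot\tilde O(\Delta^2+\sqrt{\Delta n})$, and the task reduces to (a) designing phase one so that $|F|$ is small, and (b) proving the subroutine bound. The conceptual gain over \cite{BhattacharyaCCSZ24} is exactly that the new subroutine is polynomially cheaper than the previous $\tilde O(\Delta^6)$ bound of Bernshteyn and than the trivial $\tilde O(n)$ bound, so we can afford to have phase one terminate earlier (leaving a larger $F$), and re-balancing against $|F|$ turns the exponent $1/3$ into $1/4$.

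For phase one I would first dispose of the dense regime separately: when $m$ is large relative to $n$ (roughly $m\gtrsim n^{3/2}$) the bound $\tilde O(mn^{1/4})$ is already implied by the $\tilde O(n\sqrt m)$ running time obtainable by combining \cite{BhattacharyaCCSZ24} with \cite{Assadi24} (and by the $\tilde O(n^2)$ bound of \cite{Assadi24} when $m$ is larger still), so it suffices to treat sparse graphs. On a sparse graph I would run a recursive degree-halving scheme in the style of \cite{gabow1985algorithms} and of \cite{BhattacharyaCCSZ24}: Euler-split $G$ into $G_1,G_2$ of maximum degree $\le\lceil\Delta/2\rceil$, recursively $(\lceil\Delta/2\rceil+1)$-color each from disjoint palettes, and eliminate the $O(1)$ surplus colors thereby created; since each color class is a matching, uncoloring the smallest surplus classes discards only an $O(1/\Delta)$-fraction of the edges at each merge. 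Unwinding the recursion down to a suitably chosen threshold degree (below a small constant the graph is trivially colorable), the discarded edges accumulate into $F$, and the parameters of the threshold and of the regime split are tuned so that the phase-one cost plus $|F|\cdot\tilde O(\Delta^2+\sqrt{\Delta n})$ comes out to $\tilde O(mn^{1/4})$.

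The main obstacle is the color-extension subroutine itself: extending a proper partial $(\Delta+1)$-coloring to one designated uncolored edge in time $\tilde O(\Delta^2+\sqrt{\Delta n})$. I would construct a \emph{multi-step Vizing chain} — an alternating concatenation of Vizing fans (each of length $O(\Delta)$) and alternating paths — but with each alternating path \emph{truncated} as soon as it reaches length about $\sqrt{n/\Delta}$, at which point one restarts the construction from the truncation point with a fresh, randomly chosen pair of colors. If one shows that with high probability only $\tilde O(\Delta)$ steps are needed before the chain closes up, then the total length (and hence running time) is $\tilde O(\Delta\cdot\Delta+\Delta\cdot\sqrt{n/\Delta})=\tilde O(\Delta^2+\sqrt{\Delta n})$. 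Proving the $\tilde O(\Delta)$-step bound is the heart of the matter: the natural route is to randomize the color choices and argue that each step closes the chain — or at least decreases a suitable potential — with probability $\Omega(1/\Delta)$, and that the fans and alternating segments produced in distinct steps do not interfere in a way that invalidates this estimate; one must also check that truncating an alternating path and re-growing from its interior still yields a valid recoloring, which follows from the usual color-shifting argument applied segment by segment. Beyond this, what remains is bookkeeping: verifying that deleting the surplus colors restores a palette of size exactly $\Delta+1$, that $|F|$ is as small as claimed, and that the per-level costs of the recursion telescope to the stated bound.
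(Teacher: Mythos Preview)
Your two-phase plan has a genuine gap in phase one. The Euler-split recursion you describe (split, recursively color, discard the $O(1)$ surplus color classes at each merge) leaves $|F|=\tilde\Theta(m/\Delta)$ edges uncolored, regardless of where you set the recursion threshold: at every level the discarded classes contribute $\Theta(m_i/\Delta_i)=\Theta(m/\Delta)$ edges, and there are $\tilde O(1)$ levels. Feeding those $|F|$ edges into the per-edge subroutine costs
\[
\tilde O\!\Big(\tfrac{m}{\Delta}\Big)\cdot\tilde O\big(\Delta^2+\sqrt{\Delta n}\big)=\tilde O\big(m\Delta+m\sqrt{n/\Delta}\big),
\]
and for $n^{1/4}\ll\Delta\ll\sqrt n$ the $m\Delta$ term alone already exceeds $mn^{1/4}$ (e.g.\ take $\Delta=n^{1/3}$). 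Your dense-regime split at $m\approx n^{3/2}$ does not rescue this: a graph with $m<n^{3/2}$ can perfectly well have $\Delta$ in the bad range. The paper explicitly flags this failure mode (see the discussion after recurrence~(\ref{eq:recurrence})): a $\tilde O(\Delta^2)$ extension subroutine plugged into the naive Euler-split framework gives no polynomial improvement over $\tilde O(m\Delta)$.

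What you are missing is a \emph{batch} extension primitive. The paper's route is: sample a small set $U$ so that $G[V\setminus U]$ has degree at most $\Delta-\Theta(\log n)$, color $G[V\setminus U]$ in $\tilde O(m)$ time via \cite{Assadi24}, and then reduce the $\tilde O(m/\Delta)$ star-edges incident on $U$ down to only $\tau=\tilde O\big(\sqrt{m n_{\mathrm{hi}}/(\Delta^2+\sqrt{\Delta n})}\big)$ uncolored edges using the star-coloring lemma (Lemma~\ref{lm:key}/Corollary~\ref{lem:key 2}), which extends a constant fraction of a batch of $\lambda$ star-edges in $\tilde O(\lambda\Delta+\Delta m|U|/\lambda)$ time. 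Only the last $\tau$ edges are handled one-by-one with the new subroutine. The balancing between the batch step and the per-edge step is what produces the $n^{1/4}$ exponent; without the batch step you cannot get below $n^{1/3}$.

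Separately, your sketch of the subroutine itself (truncate each alternating path at length $\sqrt{n/\Delta}$, restart with fresh random colors, hope for $\tilde O(\Delta)$ rounds via a $\Omega(1/\Delta)$ success probability per round) is structurally different from the paper's construction, which uses much longer truncation length $L=\tilde\Theta(\Delta^2+\sqrt{\Delta n})$ and only $\tilde O(1)$ rounds. The paper's analysis hinges on a meta-tree argument classifying rounds as clean/dirty/contaminated and showing (Lemma~\ref{lem:cont is good:main}) that whenever the types start colliding, the large $L$ forces many of the candidate continuations to be short and hence terminal; this is precisely where $L\gtrsim\sqrt{\Delta n}$ is used. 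Your $\Omega(1/\Delta)$-per-step heuristic is not supported by any argument in your proposal, and it is unclear how it would avoid the interference between fans and earlier segments that the paper has to work hard to control.
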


\noindent
Our proof of this theorem
builds on the recent works of \cite{BhattacharyaCCSZ24,Assadi24}, as well as on a sequence of earlier works \cite{duan2019dynamic,Bernshteyn22,Christiansen23}.
Our key technical contribution is a subroutine for extending a partial coloring to one more edge within time $\tilde O(\Delta^2 + \sqrt{\Delta n})$. The best previous time bound of any color extension subroutine is either the trivial $O(n)$, dominated by the length of a Vizing chain, or the bound 
$\tilde{O}(\Delta^6)$ 
by Bernshteyn~\cite{Bernshteyn22}, dominated by the length of {\em multi-step Vizing chains}, which is basically a concatenation of multiple (carefully chosen) Vizing chains. 
There is also a much faster color extension subroutine, by Duan et al.\ \cite{duan2019dynamic}: it uses $(1+\epsilon)\Delta$ colors within time $\tilde O(1/\epsilon^2)$, provided that 
$\Delta = \tilde{\Omega}(1)$ and
$\epsilon = \tilde \Omega(1/\sqrt{\Delta})$; we note that the minimum number of colors achievable by \cite{duan2019dynamic} is $\Delta + \tilde{O}(\sqrt{\Delta})$,
and the respective runtime is  $\tilde{O}(\Delta)$.
Even slightly reducing the number of colors below 
$\Delta + \tilde{O}(\sqrt{\Delta})$, while allowing a higher time of $\tilde{O}(\Delta^2)$, 
is currently out of reach.

Our color extension subroutine settles for the aforementioned higher running time of  $\tilde O(\Delta^2 + \sqrt{\Delta n})$; even then, it
has to drill much deeper than \cite{duan2019dynamic}, since the transition from 
 $\Delta + \tilde{O}(\sqrt{\Delta})$ colors to $\Delta+1$ has to overcome numerous nontrivial technical hurdles.
  Ultimately, we manage to
produce significantly shorter multi-step Vizing chains than in previous works for $\Delta+1$ colors, in the regime that  $\Delta$ is sufficiently large; we then demonstrate the usefulness of such a color extension subroutine in proving \Cref{thm:main}.
A comprehensive technical overview is given in  \Cref{sec:overview}.

\medskip
\noindent \textbf{Subsequent Work:} Very recently, taking a completely different approach to the recent works of \cite{BhattacharyaCCSZ24, Assadi24}, \cite{assadi2024vizing} showed how to compute a $(\Delta + 1)$-edge coloring in $O(m \log \Delta)$ time with high probability, giving a near-linear time algorithm for this problem.

\subsection{Related Work} \label{related}

If we allow for (sufficiently) more colors than $\Delta+1$, then the problem becomes (dramatically) simpler.  First, it was known since the 80s that, for
$\Delta+\tilde{O}(\sqrt{\Delta})$ colors, the problem can be solved in $\tilde{O}(m)$ time~\cite{karloff1987efficient}.
A recent line of works provided algorithms with near-linear runtime for $(1+\epsilon)\Delta$-edge coloring~\cite{duan2019dynamic,BhattacharyaCPS24,ElkinK24}, culminating with the recent result of \cite{Assadi24} for $(\Delta + O(\log n))$-edge coloring in $O(m \log \Delta)$ time and a $(1+\epsilon)\Delta$-edge coloring in $O(m \log(1/\epsilon))$ time for any $\epsilon = \omega(\log n / \Delta)$.

  There is a large body of work on edge coloring in restricted graph classes. 
First, for bipartite graphs one can compute a $\Delta$-edge coloring in $\tilde{O}(m)$ time~\cite{cole1982edge,combinatorica/ColeOS01,alon2003simple}. One can  compute a $(\Delta+1)$-edge coloring in $\tilde{O}(m \Delta)$ time \cite{gabow1985algorithms}
for bounded degree graphs. 
This result of \cite{gabow1985algorithms}
 from the 80s was generalized recently for
  bounded arboricity graphs~\cite{BhattacharyaCPS24b}, and there has been further recent work on edge coloring in bounded arboricity graphs
\cite{BhattacharyaCPS24c,ChristiansenRV24,Kowalik24}.
Refer to \cite{chrobak1989fast,chrobak1990improved,cole2008new} and the references therein
for works on edge coloring in planar graphs, bounded treewidth graphs and bounded genus graphs.
 
Finally, the edge coloring problem is receiving an extensive and growing research attention in various computational models other than the classic static sequential setting, including in dynamic algorithms~\cite{BarenboimM17,BhattacharyaCHN18,duan2019dynamic,Christiansen23,BhattacharyaCPS24,Christiansen24}, distributed algorithms~\cite{panconesi2001some,elkin20142delta,fischer2017deterministic,ghaffari2018deterministic,balliu2022distributed,ChangHLPU20,Bernshteyn22,Christiansen23,Davies23}, online algorithms~\cite{CohenPW19,BhattacharyaGW21,SaberiW21,KulkarniLSST22,BilkstadSVW24,BlikstadOnline2025, dudeja2024randomizedgreedyonlineedge}, and streaming algorithms~\cite{BehnezhadDHKS19,behnezhad2023streaming,chechik2023streaming,ghosh2023low}.

\subsection{Organization of the Rest of the Paper}

To prove \Cref{thm:main}, we use 3 different subroutines, which we formally describe in \Cref{part:full version} of our paper. In \Cref{sec:overview}, we give a technical overview of our algorithm and analysis. In Sections~\ref{sec:lm:main} and \ref{sec:th:extension:main}, we sketch the analysis of the subroutines that we use to prove \Cref{thm:main}. In \Cref{part:full version}, we provide the full version of our paper.\footnote{See the first paragraph in \Cref{part:full version} for the organization of the full version of our paper.}

\section{Overview of Our Techniques}
\label{sec:overview}

Our key contributions are to design an efficient algorithm for the following subroutine, and to show that it leads to Theorem~\ref{thm:main}  in combination with the machinery developed in~\cite{BhattacharyaCCSZ24} and~\cite{Assadi24}.

\medskip
\noindent {\bf $(\Delta+\eta)$-Color Extension Subroutine:} Here, $\eta \geq 1$ is an integer. The input to this subroutine is a graph $G = (V, E)$ with maximum degree $\Delta$, a given edge $e \in E$, and a valid coloring $\chi_{\texttt{init}} : E \setminus \{e \} \rightarrow [\Delta+\eta]$ of $G \setminus \{e\}$. The subroutine has to {\em extend} the {\em partial} coloring $\chi$ to the entire graph $G$, by assigning some color $c \in [\Delta+\eta]$ to the uncolored edge $e \in E$ and possibly changing the colors of some edges in $E \setminus \{e \}$. Let $\chi_{\texttt{final}} : E \rightarrow [\Delta+\eta]$ be the valid $(\Delta+\eta)$-edge coloring of $G$ when the subroutine finishes execution. We define the {\bf cost} incurred by the subroutine to be the number of edges in $G$ that change their colors during this process, i.e., the cost equals $1+ | \{ e' \in E \setminus \{e\} : \chi_{\texttt{init}}(e') \neq \chi_{\texttt{final}}(e') \}|$. It is easy to observe that {\bf the runtime of any such subroutine is at least its cost}, because the subroutine needs to spend $\Omega(1)$ time to change the color of any given edge. We now summarize a lower bound derived by Chang et al.~\cite{ChangHLPU20}.

\begin{theorem}
[\cite{ChangHLPU20}]
\label{thm:lowerbound} Any $(\Delta+\eta)$-color extension subroutine  has  cost $\Omega((\Delta/\eta) \log (\eta n/\Delta))$.
\end{theorem}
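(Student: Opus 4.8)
The plan is to exhibit, for every $\eta \ge 1$, a family of graphs and partial colorings on which any color‑extension subroutine is forced to recolor $\Omega((\Delta/\eta)\log(\eta n/\Delta))$ edges. The natural candidate is a balanced "binary‑tree‑like" gadget built from $\Delta$‑regular (or near‑regular) bipartite pieces, where the uncolored edge $e$ sits at the root, and where the missing colors at the endpoints of $e$ have been arranged so that the only way to free a common color at $e$ is to propagate changes down a path (or a shallow tree) of length $\Omega(\log_{1+\Theta(\eta/\Delta)}(n)) = \Omega((\Delta/\eta)\log n)$, with each level requiring $\Omega(1)$ recolorings. First I would describe the gadget: start from a single edge $e = (u,v)$ and, level by level, attach enough new vertices and edges so that each vertex already in the construction has its palette of $\Delta+\eta$ colors almost saturated — it is missing only a small set of size $\Theta(\eta)$ — and so that the sets of missing colors along any candidate augmenting structure are pairwise disjoint until we are $\Omega(\log n)$ levels deep. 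Because each vertex can "absorb" at most $O(\eta)$ colors worth of slack and we need to reconcile a palette gap of size $\Theta(\Delta)$ between $u$ and $v$, the recoloring frontier can shrink by only a $(1 - \Theta(\eta/\Delta))$ factor per level, giving the claimed depth $\Omega((\Delta/\eta)\log(\eta n/\Delta))$; since the number of vertices at depth $d$ is bounded (the graph has $n$ vertices total), the $\log(\eta n /\Delta)$ term is exactly what the vertex budget allows.

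The key steps, in order, are: (1) define the gadget precisely, parametrised by a depth $D$ and branching factor tied to $\eta/\Delta$, and verify it is simple, has maximum degree $\Delta$, and has $O(n)$ vertices when $D = \Theta((\Delta/\eta)\log(\eta n/\Delta))$; (2) specify the initial coloring $\chi_{\texttt{init}}$ on $E \setminus \{e\}$ so that at the root the two missing‑color sets at $u$ and at $v$ are disjoint, forcing a genuine augmenting walk rather than a length‑$0$ fix; (3) prove a "potential" or "credit" invariant: define a potential measuring, at each point in any execution, how much of the color deficit at $e$ has been resolved, show it must reach a target value $\Omega(\Delta)$ for the coloring to be completed, and show that recoloring a single edge changes this potential by only $O(1)$ while "advancing one level" — i.e. crossing into the next shell of vertices — costs $\Omega(\Delta/\eta)$ recolorings because only $O(\eta)$ colors are free per vertex there; (4) chain this over the $\Omega(\log(\eta n/\Delta))$ levels, or alternatively argue directly that the set of recolored edges must contain a connected subgraph reaching depth $D$, and a connected subgraph reaching depth $D$ in this gadget has $\Omega((\Delta/\eta)D)$ edges.

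The main obstacle I expect is step (3) — proving the lower bound is robust against \emph{arbitrary} recoloring strategies, not just the "canonical" Vizing‑chain/Kempe‑chain augmentations. A clever subroutine might recolor edges far from any obvious path, or make many small disjoint changes, so the argument has to be a genuine adversary/potential argument: I would want an invariant that is monotone under any single edge recolor and that is "local" enough that each unit of progress can be charged to $\Omega(\Delta/\eta)$ distinct recolored edges in a fresh shell of the gadget. Getting the constants and the $\log$ base to line up so that the vertex count is $O(n)$ while the depth is $\Omega((\Delta/\eta)\log(\eta n/\Delta))$ is the delicate quantitative part; the qualitative obstacle is ruling out shortcuts. (As this statement is cited from \cite{ChangHLPU20}, I would ultimately defer to their construction, but the sketch above is the route I would reconstruct: a layered expander‑like gadget plus a potential‑function adversary argument.)
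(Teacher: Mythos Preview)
The paper does not prove this theorem at all: it is stated purely as a citation of \cite{ChangHLPU20}, with no construction, no potential argument, and no further discussion beyond the sentence ``We now summarize a lower bound derived by Chang et al.'' So there is nothing in the paper for your proposal to be compared against; the authors simply invoke the result as background motivation and move on.

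As for your sketch itself, it is a plausible outline of how such a lower bound is typically obtained (and is in the spirit of the construction in \cite{ChangHLPU20}), but be aware that you have correctly identified the hard part and then not actually done it. Step~(3) is the entire content of the theorem: you need a potential that is robust against \emph{arbitrary} recoloring sequences, not just Vizing/Kempe chains, and your description (``recoloring a single edge changes this potential by only $O(1)$ while advancing one level costs $\Omega(\Delta/\eta)$ recolorings'') is an assertion of exactly what has to be proved rather than an argument for it. If you want to reconstruct the proof, the right reference is \cite{ChangHLPU20} directly; for the purposes of this paper, a citation suffices.
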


\medskip
\noindent {\bf Why Is This Useful?} Suppose that we had a $(\Delta+1)$-color extension subroutine whose runtime matched the lower bound of Theorem~\ref{thm:lowerbound}, up to polylogarithmic factors. Then this would imply an algorithm for $(\Delta+1)$-edge coloring that runs in $\tilde{O}(m)$ time! Below, we explain this in more detail.  

We start by applying a well-known  {\em Eulerian partition} technique~\cite{arjomandi1982efficient,gabow1985algorithms,sinnamon2019fast}. In $\tilde{O}(m)$ time, this allows us to partition the input graph $G = (V, E)$ into two (almost) equal-sized subgraphs $G_1 = (V, E_1)$ and $G_2 = (V, E_2)$, such that maximum degee in each of these subgraphs is  $\leq \lceil \Delta/2 \rceil$. We then recursively color $G_1$ and $G_2$   using two mutually {\em disjoint} palettes, each of size $\lceil \Delta/2 \rceil+1$. This gives us a $2 \cdot (\lceil \Delta/2 \rceil + 1) \leq (\Delta+3)$-edge coloring of $G$. We next uncolor those edges in $G$ that belong to the two {\em least popular} color classes. By a simple averaging argument, this leads  to a partial $(\Delta+1)$-edge coloring of $G$ with $O(m/\Delta)$ uncolored edges. We next scan through these uncolored edges, and extend the partial $(\Delta+1)$-edge coloring to them one at a time by applying the $(\Delta+1)$-color extension subroutine. Note that we make $O(m/\Delta)$ calls to the color extension subroutine, each call taking $\tilde{O}(\Delta)$ time. Thus, the total time spent on all these calls is $O(m/\Delta) \cdot \tilde{O}(\Delta) = \tilde{O}(m)$. At the end of the scan, we have a $(\Delta+1)$-edge coloring of $G$. The overall runtime  is captured by the following recurrence, whose solution is $T(m, \Delta) = \tilde{O}(m)$.
\begin{equation}
\label{eq:recurrence}
T(m, \Delta) = 2 \cdot T\left( \lceil m/2 \rceil, \lceil \Delta/2 \rceil  \right) + \tilde{O}(m).
\end{equation}
Accordingly, a natural line of attack on the problem is to design an efficient  $(\Delta+1)$-color extension subroutine. For this plan to work, however, we first need to address the following concern.

 What happens if we have a $(\Delta+1)$-color extension subroutine that runs in, say, $\tilde{O}(\Delta^2)$ time (which is reasonably fast, but falls short of matching the lower bound of Theorem~\ref{thm:lowerbound})? Unfortunately, if we plug in such a color extension subroutine in the framework described above, then we spend $\tilde{O}(\Delta^2) \cdot O(m/\Delta) = \tilde{O}(m\Delta)$ total time  to extend the coloring to the last $O(m/\Delta)$ uncolored edges. So, the last term in the RHS   of~(\ref{eq:recurrence})  becomes $\tilde{O}(m \Delta)$ instead of  $\tilde{O}(m)$. Thus, using this approach, the time taken to compute a $(\Delta+1)$-edge coloring  is now  $\Omega(m \Delta)$, which is no better than the existing $\tilde{O}(m \Delta)$ time algorithms by~\cite{arjomandi1982efficient,gabow1985algorithms,sinnamon2019fast}. So the concern is that the above framework is not {\em robust}: It does not give any {\em polynomial advantage} over the current state-of-the-art, if our $(\Delta+1)$-color extension subroutine takes $\Omega(\Delta^2)$ time.  To address this concern, we synthesize the main technical insights   from~\cite{BhattacharyaCCSZ24}, and prove the following lemma.  

 \begin{lemma}
\label{lm:key}
Consider a partial coloring $\chi : E \setminus E^{\star} \rightarrow [\Delta+1]$ of a graph $G = (V, E)$, where $E^{\star} \subseteq E$ is the set of uncolored edges. Let  $U^{\star} \subseteq V$ be a vertex cover of $G^{\star} := (V, E^{\star})$. Then there is a randomized algorithm that {\em extends} $\chi$ to $\Omega\left(|E^{\star}|\right)$ edges of $E^{\star}$ in $\tilde{O}\left( |E^{\star}|  \Delta + \Delta  m  |U^{\star}|/|E^{\star}| \right)$ expected time. Specifically, the algorithm terminates with a partial coloring $\chi' : E \setminus E' \rightarrow [\Delta+1]$ of $G$, with $E' \subseteq E$ being the set of uncolored edges, such that $E' \subseteq E^{\star}$ and $|E^{\star} \setminus E'| = \Omega\left( |E^{\star}|\right)$.
\end{lemma}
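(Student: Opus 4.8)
The plan is to instantiate the batched multi-step Vizing chain machinery of \cite{BhattacharyaCCSZ24}, using the vertex cover $U^{\star}$ to control the chain lengths. We scan the edges of $E^{\star}$ one at a time and, as long as at least $|E^{\star}|/2$ of them are still uncolored, we invoke a randomized subroutine $\mathsf{ExtendOne}(e)$ on the current uncolored edge $e$. This subroutine grows a multi-step Vizing chain rooted at $e$: it builds a Vizing fan at an endpoint of $e$, rotates it to a \emph{uniformly random} admissible position, walks the resulting alternating path for a \emph{uniformly random} number of steps from a window $[W,2W]$ (the parameter $W$ is fixed below), and then either (i) discovers that this path has already become augmenting within the budget, in which case it flips the chain, colors $e$, and reports success; or (ii) performs the standard fan-shift hand-off to a freshly uncolored edge $e'$ near the current endpoint and recurses on $e'$, for at most $s$ steps in total. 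If no augmenting chain is found within $s$ steps the call \emph{aborts}: using a change-log it rolls back every recoloring it made, restoring $\chi$ exactly and leaving $e$ uncolored. That a single step either exposes an augmenting chain within its budget or yields a valid partial coloring with exactly one uncolored edge handed off (and that the rollback is faithful) is precisely the primitive-step analysis of \cite{Bernshteyn22,BhattacharyaCCSZ24}; in particular a successful call colors exactly the one edge $e$ and uncolors nothing, so the set of uncolored edges only shrinks and always stays inside $E^{\star}$, giving $E' \subseteq E^{\star}$.

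For the running time, a single call costs $\tilde O\!\left(s(\Delta + W)\right)$: each of the $\le s$ steps builds one fan in $\tilde O(\Delta)$ time and walks a path of length $O(W)$, and the rollback just replays the log. We set $W := \Theta(\Delta)$ (so per-step path work is subsumed by the cost of building the fan) and $s := \tilde\Theta\!\left(1 + m\,|U^{\star}|/|E^{\star}|^2\right)$, which makes the per-call cost $\tilde O(s\Delta) = \tilde O\!\left(\Delta + \Delta m |U^{\star}|/|E^{\star}|^2\right)$; since we make at most $|E^{\star}|$ calls, the total expected cost is $\tilde O\!\left(|E^{\star}|\Delta + \Delta m |U^{\star}|/|E^{\star}|\right)$, matching the claimed bound. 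It therefore remains to establish the \textbf{key probabilistic claim: every call aborts with probability at most $1/2$.} Given this, linearity of expectation shows that at least $|E^{\star}|/2$ of the first $|E^{\star}|$ calls succeed in expectation, so the procedure reaches the state ``$|E^{\star}|/2$ edges uncolored'' after $O(|E^{\star}|)$ calls; and since we stop there, the number of uncolored edges is $\Theta(|E^{\star}|)$ throughout and $U^{\star}$ remains a vertex cover of the uncolored subgraph, so all estimates below can be carried out against the original parameters $|E^{\star}|, |U^{\star}|$.

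The heart of the argument --- and the step I expect to be the main obstacle --- is to prove the abort bound by showing that, conditioned on the history, each step of the multi-step chain \emph{terminates} (finds an augmenting chain within its random budget) with probability $\Omega(1/s)$, so that the probability of surviving all $s$ steps is at most $(1-\Omega(1/s))^{s} \le 1/2$. This rests on two estimates, each obtained by averaging over the step's two independent random choices --- the fan rotation and the truncation length --- and this is exactly where the vertex cover is used. First, the random truncation keeps the step's chain \emph{collision-free}: the probability that its random-length prefix runs into a vertex incident to a currently-uncolored edge, or self-intersects an earlier segment of the same multi-step chain, is small, because the random endpoint falls on any fixed vertex with probability $O(1/W)$ and the relevant ``forbidden'' set is small --- here we use that all uncolored edges are incident to $U^{\star}$, so they live among only $O(\Delta|U^{\star}|)$ edges rather than being spread over the whole graph. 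Second, conditioned on no collision, the \emph{random fan rotation} makes the step's chain augmenting within its budget with probability $\Omega(1/s)$: the $\Omega(|E^{\star}|)$ uncolored edges, being concentrated on the $|U^{\star}|$ vertices of the cover, force those vertices to miss many colors and thus create an $\Omega(|E^{\star}|)$-sized ``mass'' of good chain endpoints, so the expected number of steps before termination scales like $m|U^{\star}|/|E^{\star}|^{2}$ rather than like the trivial $n/|E^{\star}|$. Making these two estimates rigorous --- in particular, using the random truncation to decorrelate consecutive steps of the chain, ruling out the degenerate fan/path interactions that the multi-step construction must avoid, and extracting the $|U^{\star}|$-dependence (rather than an $n$-dependence) from the clustering of the uncolored edges --- is the technical core of the proof; everything else is bookkeeping over the $O(|E^{\star}|)$ calls and the change-log.
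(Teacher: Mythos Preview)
Your proposal has a genuine gap, and the paper takes a completely different route.

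\textbf{The gap in your argument.} The role you assign to the vertex cover $U^{\star}$ only makes sense at the \emph{first} step of the multi-step chain. At step~$1$ the uncolored edge $e$ has an endpoint in $U^{\star}$, and it is true that such a vertex misses many colors (on average $\Omega(|E^{\star}|/|U^{\star}|)$ of them), which is exactly the leverage one needs. But after you truncate the alternating path at a random position and hand off to a new uncolored edge $e'$, that edge sits at an arbitrary interior vertex of an alternating path. This vertex need not lie in $U^{\star}$ and typically has only $O(1)$ missing colors. So from step~$2$ onward you have lost the vertex-cover structure entirely, and there is no reason the per-step termination probability should be $\Omega(|E^{\star}|^{2}/(m|U^{\star}|))$ rather than the generic $\Omega(1/n)$ coming from a single Vizing chain. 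Your sentence ``the $\Omega(|E^{\star}|)$ uncolored edges, being concentrated on the $|U^{\star}|$ vertices of the cover, force those vertices to miss many colors and thus create an $\Omega(|E^{\star}|)$-sized mass of good chain endpoints'' identifies the right phenomenon but applies it to the wrong object: good chain endpoints are not where the uncolored edges sit, and the many-missing-colors property is a property of $U^{\star}$, not of wherever the multi-step chain happens to have wandered.

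\textbf{What the paper actually does.} The paper never uses multi-step chains for this lemma; it uses a \emph{single} Vizing chain per attempt and keeps the pivot inside $U^{\star}$ throughout. First, a Cauchy--Schwarz bucketing over dyadic uncolored-degree classes isolates a subset $U^{\star}_p \subseteq U^{\star}$ in which every vertex has at least $d$ missing colors, with $d$ satisfying $\Delta m/d = \tilde O(\Delta m |U^{\star}|/|E^{\star}|)\cdot(\lambda'/|E^{\star}|)$ where $\lambda'$ is the number of uncolored edges incident to $U^{\star}_p$. Then, to color those $\lambda'$ edges, it samples a random uncolored edge $(v,u)$ with $u\in U^{\star}_p$ and a random missing color $x\in\miss_\chi(u)$; since $|\miss_\chi(u)|\ge d/2$, this yields $\Omega(d\lambda')$ distinct alternating paths in total, whose combined length is $O(\Delta m)$, so the average path length is $O(\Delta m/(d\lambda'))$. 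Setting the length cap $L$ to a constant times this average, Markov gives a constant success probability per attempt. A separate classification of uncolored edges into ``ready/social/independent/lonely'' and a potential-function argument handle the edges for which the naive single chain does not immediately work, but the point is that every alternating path is rooted at a vertex of $U^{\star}_p$ with $\Omega(d)$ missing colors, and this is what drives the $|U^{\star}|$-dependence in the running time.
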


The above lemma shows how to efficiently extend a partial $(\Delta+1)$-edge coloring to a constant fraction of a {\em batch} of uncolored edges, when they admit a small vertex cover.
For  comparison, as implicit in \cite{BhattacharyaCCSZ24}, the runtime bound for the same task is $$\tilde{O}\!\brac{|E^\star|\Delta + \min_{\tau\geq 1}\left\{\frac{\Delta m|U^\star|\tau}{|E^\star|} + \frac{|E^\star| n}{\tau}\right\}},$$ which is always worse than the new bound we derive in Lemma~\ref{lm:key}; refer to \Cref{sec:parts} (see the remark following \Cref{lem:key 1}) for a detailed discussion on this matter. In addition, our proof of Lemma~\ref{lm:key} is arguably much simpler and more intuitive than the analysis in~\cite{BhattacharyaCCSZ24}; we defer the complete, self-contained proof of this lemma to Section~\ref{sec:star}.

We are now ready to address the concern we pointed out after recurrence~(\ref{eq:recurrence}).   Lemma~\ref{lm:key}, along with the recent breakthrough result of~\cite{Assadi24}, implies Lemma~\ref{lm:main} stated below. This gives us the desired  tradeoff between the runtime of a $(\Delta+1)$-color extension subroutine and the runtime of a $(\Delta+1)$-edge coloring algorithm.  Section~\ref{sec:lm:main} outlines the main idea behind the proof of Lemma~\ref{lm:main}.

\begin{lemma}
\label{lm:main}
Let there be a $(\Delta+1)$-color extension subroutine with $\tilde{O}(\Delta^{\gamma})$ expected runtime, for some $\gamma \geq 1$. Then there is a $(\Delta+1)$-edge coloring algorithm with $\tilde{O}\!\left(m n^{(\gamma -1)/(2\gamma)} \right)$ expected runtime.
\end{lemma}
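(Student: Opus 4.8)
The plan is to combine three ingredients: (i) the recursive Eulerian-partition framework captured by recurrence~(\ref{eq:recurrence}); (ii) Lemma~\ref{lm:key}, which lets us clear a constant fraction of a batch of uncolored edges with a small vertex cover; and (iii) the algorithm of~\cite{Assadi24}, which colors all but a small set of edges quickly and, crucially, leaves the remaining uncolored edges concentrated on few vertices (i.e.\ admitting a small vertex cover). The idea is that the expensive $\tilde{O}(\Delta^\gamma)$-per-edge color extension subroutine should only ever be invoked on a \emph{tiny} residual set of uncolored edges, whose size we will choose as a function of a parameter to be optimized.

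\medskip
\noindent\textbf{Step 1: Reduce to coloring a small batch with a small vertex cover.} First I would run the recursive halving scheme: partition $G$ into $G_1, G_2$ via the Eulerian partition, recurse, and uncolor the two least-popular color classes, arriving at a partial $(\Delta+1)$-coloring with $|E^\star| = O(m/\Delta)$ uncolored edges in $\tilde O(m)$ time. Then I would invoke the machinery of~\cite{Assadi24} (which is what the excerpt signals: ``Lemma~\ref{lm:key}, along with the recent breakthrough result of~\cite{Assadi24}'') to further reduce the uncolored set so that it both is small and has a vertex cover $U^\star$ of controlled size; write $|E^\star| = s$ and $|U^\star| = O(s)$ (or whatever bound that result gives — the key point is $|U^\star| \le |E^\star|$ trivially and typically much smaller). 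The budget for this phase is also $\tilde O(m)$ or $\tilde O(n^2)$-type, absorbed into the final bound.

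\medskip
\noindent\textbf{Step 2: Iterate Lemma~\ref{lm:key} down to a threshold $t$.} With a set $E^\star$ of uncolored edges with vertex cover $U^\star$, apply Lemma~\ref{lm:key} repeatedly: each application extends the coloring to an $\Omega(|E^\star|)$-fraction of the current uncolored edges, so after $O(\log m)$ rounds the uncolored set shrinks geometrically. I would stop once $|E^\star|$ drops below a threshold $t$ (a free parameter). The total time over all these rounds is, by summing the geometric series in Lemma~\ref{lm:key}, $\tilde O\!\left(\frac{m}{\Delta}\cdot\Delta + \frac{\Delta m |U^\star|}{|E^\star|}\right)$ dominated by the first round, i.e.\ roughly $\tilde O\!\left(m + \Delta m\,|U^\star|/t\right)$ — here one must be slightly careful that $|U^\star|$ and $|E^\star|$ refer to the initial batch, and check that the ``$\Delta m |U^\star|/|E^\star|$'' term does not blow up as $|E^\star|$ shrinks; this is handled by stopping at $t$ rather than going all the way to zero.

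\medskip
\noindent\textbf{Step 3: Finish with the expensive subroutine, and optimize $t$.} Once at most $t$ uncolored edges remain, color them one at a time with the given $\tilde O(\Delta^\gamma)$ color extension subroutine, at total cost $\tilde O(t\,\Delta^\gamma)$. The overall runtime is then of the form
\[
\tilde O\!\left( m + \frac{\Delta m\, |U^\star|}{t} + t\,\Delta^\gamma \right),
\]
and I would balance the last two terms by choosing $t \approx \sqrt{\Delta m |U^\star| / \Delta^\gamma}$, giving those two terms value $\tilde O\!\left(\sqrt{\Delta^{\gamma+1} m |U^\star|}\right)$. Plugging in the bound on $|U^\star|$ from~\cite{Assadi24} (which, combined with the fact that a batch living on few vertices has $|U^\star| = \tilde O(n)$ or even $|E^\star|\le \Delta|U^\star|$-type relations, yields $|U^\star|$ small relative to $m/\Delta$), and simplifying using $m \le n\Delta/2$, this should collapse to $\tilde O\!\left(m\,n^{(\gamma-1)/(2\gamma)}\right)$. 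The exponent $(\gamma-1)/(2\gamma)$ is exactly what one gets from balancing a ``$\Delta$-free'' linear term against a ``$\Delta^\gamma$'' term across $\Theta(\log\Delta)$ levels of the recursion~(\ref{eq:recurrence}), since at recursion depth $i$ the degree is $\Delta/2^i$ and the batch-finishing cost scales like $(\Delta/2^i)^{\gamma}\cdot t_i$; summing the geometric series in $i$ keeps the top level dominant.

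\medskip
\noindent\textbf{Main obstacle.} The delicate point is Step 2 interacting with the recursion: we cannot simply run Lemma~\ref{lm:key} at the top level only, because the per-edge cost $\tilde O(\Delta^\gamma)$ must be paid at \emph{every} level of the halving recurrence~(\ref{eq:recurrence}), and naively this contributes $\tilde O(m\Delta^{\gamma-1})$ to the last term, which is far too much. The resolution — and the technically substantial part — is to argue that after applying~\cite{Assadi24} and Lemma~\ref{lm:key}, the residual batch is small enough ($t$ below the optimized threshold) that even summed over all $\Theta(\log \Delta)$ recursion levels the total extra cost is $\tilde O(m n^{(\gamma-1)/(2\gamma)})$; equivalently, one wants the recurrence
\[
T(m,\Delta) = 2\,T(\lceil m/2\rceil, \lceil \Delta/2\rceil) + \tilde O\!\left(m + \sqrt{\Delta^{\gamma+1} m\,|U^\star|}\right)
\]
to solve to the claimed bound, which requires the added term to be geometrically decreasing down the recursion — true because $\Delta^{(\gamma+1)/2}$ shrinks geometrically while $m$ only halves, so for $\gamma \ge 1$ the top level dominates. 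Verifying that the vertex-cover size $|U^\star|$ supplied by~\cite{Assadi24} is compatible with this balancing (in particular that it does not itself depend badly on $\Delta$) is where I expect to spend most of the effort.
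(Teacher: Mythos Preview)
Your proposal has the right high-level shape (reduce to a small batch with a small vertex cover, iterate Lemma~\ref{lm:key} down to a threshold, finish with the expensive subroutine, balance the threshold), but there are two concrete gaps that prevent it from going through as written.

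\textbf{First, the construction of $U^\star$ is not right.} You first run the Eulerian-partition recursion to get $O(m/\Delta)$ uncolored edges, and then say you will ``invoke the machinery of~\cite{Assadi24}'' to make this set have a small vertex cover. But Assadi's result does not do that: it gives a $(\Delta+O(\log n))$-edge coloring in near-linear time, nothing about vertex covers. The paper's actual move is the reverse order: \emph{first} sample each vertex into $U^\star$ with probability $\Theta(\log n/\Delta)$, so that $|U^\star|=\tilde O(n/\Delta)$ and (by Chernoff) the subgraph $G_0$ on $V\setminus U^\star$ has maximum degree at most $\Delta-\Theta(\log n)$; \emph{then} run~\cite{Assadi24} on $G_0$, which now yields a valid $(\Delta+1)$-coloring of $G_0$. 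The uncolored edges are exactly those touching $U^\star$, so $U^\star$ is automatically a vertex cover and $|E^\star|\le\Delta|U^\star|=\tilde O(n)$. With these concrete bounds the balancing in your Step~3 becomes a clean calculation; with your vague ``$|U^\star|=O(s)$ or whatever'' it does not close. Relatedly, the Eulerian-partition recursion plays no role here---the paper's proof of this lemma does not recurse at all, so your entire ``Main obstacle'' paragraph is chasing a phantom.

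\textbf{Second, you are missing a case split on $\Delta^\gamma$ versus $n$.} Your balanced bound in Step~3 gives roughly $\tilde O(n\Delta^{(\gamma+1)/2})$ (using $|U^\star|=\tilde O(n/\Delta)$ and $m=\Theta(n\Delta)$ in the near-regular reduction), and this equals $\tilde O(m n^{(\gamma-1)/(2\gamma)})$ only when $\Delta\le n^{1/\gamma}$. When $\Delta^\gamma>n$, the $\tilde O(\Delta^\gamma)$ subroutine is worse than the trivial $O(n)$ Vizing extension, so the paper instead uses the $O(n)$ subroutine on the final $L$ edges and sets $L=\sqrt{n\Delta}$; that case then gives $\tilde O(m\sqrt{n/\Delta})\le\tilde O(mn^{(\gamma-1)/(2\gamma)})$. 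Without this split, your bound is wrong for large $\Delta$.
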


If we set $\gamma = 1$ in the above lemma, then as expected it leads to a $\tilde{O}(m)$ time $(\Delta+1)$-edge coloring algorithm; and as long as $\gamma < 3$, we get a polynomial improvement over the state-of-the-art $\tilde{O}(mn^{1/3})$ time bound of~\cite{BhattacharyaCCSZ24}. Thus, ideally we would like to obtain a $(\Delta+1)$-color extension subroutine with (say) an expected runtime of $\tilde{O}(\Delta^{2.99})$. But this brings us in front of a significant technical challenge, as explained below.

\medskip
\noindent {\bf A Major Challenge:} In recent years, an influential line of work~\cite{duan2019dynamic,SuV19,ChangHLPU20,GrebikP20,Bernshteyn22,Christiansen23} has addressed the question of finding a {\em small augmenting subgraph} to extend a partial coloring to an uncolored edge, primarily from a different vantage point of distributed algorithms. 
Many of these results were obtained using \emph{multi-step Vizing chains}, a generalization of the central object used in Vizing's original proof \cite{Vizing}.
Couched in our language, this is closely related to  designing a color extension subroutine with small {\em cost}. In particular, for a $(\Delta+1)$-color extension subroutine, the current state-of-the-art bounds are by~\cite{Christiansen23} and~\cite{Bernshteyn22}, who respectively achieve $O(\Delta^7 \log n)$ and $O(\Delta^6 \log^2 n)$ costs using multi-step Vizing chains. It therefore remains an outstanding open question to design such a subroutine with a cost, and moreover with a {\em runtime}, of $\tilde{O}(\Delta^{2.99})$, as demanded by Lemma~\ref{lm:main}. 
We demonstrate that this barrier can be {\em bypassed} by instead achieving the time bound summarized in the theorem below; in fact, proving this theorem is our main technical contribution.

\begin{theorem}
\label{th:extension:main}
    Given a graph $G = (V, E)$ and a partial $(\Delta + 1)$-edge coloring $\chi$ of $G$ with an uncolored edge $e \in E$, we can extend $\chi$ to the edge $e$ in $\tilde O(\Delta^2 + \sqrt{\Delta n})$ expected time.
\end{theorem}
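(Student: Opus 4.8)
The plan is to extend the partial $(\Delta+1)$-coloring $\chi$ to the uncolored edge $e = (u,v)$ by constructing a \emph{multi-step Vizing chain} whose total length is $\tilde O(\Delta^2 + \sqrt{\Delta n})$, rather than the $\tilde O(\Delta^6)$ achieved by Bernshteyn or the trivial $O(n)$ of a single Vizing chain. The key idea I would pursue is to build the augmenting structure by alternately constructing \emph{Vizing fans} and \emph{alternating color paths}, but to carefully control two competing quantities: the number of concatenation steps and the length of each individual alternating path. A single alternating path can be as long as $\Omega(n)$ in the worst case, but on average over the choice of the two missing colors used to form it, its length is $O(n/\Delta)$; this averaging, combined with a random choice of which fan center and which color to pivot on, is what should let us replace one expensive $\Theta(n)$-length path with many cheaper ones while keeping the cumulative length controlled. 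I would first formalize a step of the walk: given a current ``free'' vertex $x$ with a missing color, grow a fan at $x$ to relocate the deficiency, then follow a two-colored alternating path until it either closes the coloring or terminates at a new vertex $x'$, at which point we restart. The targeted bound suggests running this for roughly $\tilde O(\sqrt{\Delta n})$ total path-length while each of $\tilde O(\Delta)$ steps contributes fans of size $O(\Delta)$, giving the $\tilde O(\Delta^2)$ term, with the $\sqrt{\Delta n}$ term coming from the path segments.

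The main structural step is to argue termination, i.e.\ that after $\tilde O(\cdot)$ steps the process must succeed in recoloring. Following the Bernshteyn/Christiansen template, I would track a potential that certifies that we never revisit the same ``configuration'' too many times: each step either strictly makes progress (a fan that halts, or a path that closes) or it enters genuinely fresh territory. The refinement over previous work is to ensure that the paths we walk along do not accumulate in length: I would truncate each alternating path at a carefully chosen threshold $\ell \approx \sqrt{n/\Delta}$ (or similar), and when a path is longer than the threshold, rather than following it to the end we \emph{shift} a prefix of it and restart from an interior vertex — the classical trick of flipping only part of a chain. Because each such truncated flip is cheap and still makes measurable progress, and because there are at most $\tilde O(\Delta)$ vertices/colors that can serve as ``anchors'' before we are guaranteed to close up (by a counting argument on the $\Delta+1$ colors available at $u$ and $v$), the whole walk has bounded length. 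The randomization enters to beat adversarial instances where a fixed choice of pivot color repeatedly yields long paths: picking the pivot color uniformly among the $\Omega(\Delta)$ candidates makes the expected path length $O(n/\Delta)$, so even $\tilde O(\Delta)$ steps sum to $\tilde O(n)$ in the naive bound, and combining this with the truncation threshold $\sqrt{n/\Delta}$ balances the two regimes to yield $\tilde O(\Delta^2 + \sqrt{\Delta n})$ expected time.

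I expect the hard part to be controlling the \emph{interaction} between successive fans and paths: when we shift a partial chain and restart, the fan we build at the new anchor may overlap edges that were touched by earlier segments of the walk, and one must show that this overlap does not create a ``cycle'' in the walk that prevents termination or that inflates the cost. This is exactly the delicate accounting that forced the $\Delta^6$ bound in prior work, and the claimed improvement presumably comes from a sharper invariant — perhaps showing that the set of vertices that can ever be anchors is small (size $\tilde O(\Delta)$ rather than $\tilde O(\Delta^5)$), or that each anchor is used only polylogarithmically many times. Establishing such an invariant, together with the probabilistic bound on truncated path lengths, is where the real work lies; the rest (building a single fan, flipping a single path, verifying validity of the resulting coloring) is standard Vizing-chain bookkeeping that I would invoke essentially verbatim from $\VizingF$ and the chain-shifting primitives. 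I would also need to be careful that the $\tilde O(\sqrt{\Delta n})$ term genuinely dominates the total path length in expectation and not merely per step, which requires a union/amortization argument over all steps of the multi-step chain.
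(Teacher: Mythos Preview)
Your proposal has the right high-level shape (multi-step Vizing chains with random truncation), but the concrete mechanism you sketch does not lead to the bound, and it differs from the paper's argument in essential ways.

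First, the randomization is in the wrong place. You propose to pick the \emph{pivot color} uniformly so that the expected path length is $O(n/\Delta)$, and then run for $\tilde O(\Delta)$ steps. The paper does not randomize over colors at all; at each step it fixes the type $\{\alpha_u,\alpha_v\}$ essentially deterministically (only avoiding two ``blocking colors'' inherited from the previous step), and instead samples the \emph{truncation position} $i$ uniformly in $[L]$ with $L = \tilde\Theta(\Delta^2 + \sqrt{\Delta n})$. Your averaging argument breaks down because the expected path length being $O(n/\Delta)$ at one step tells you nothing about the path you actually walk at that step, and there is no mechanism to prevent the same long path from being chosen repeatedly. The paper's analysis never averages over colors; it runs for only $O(\log n)$ rounds, not $\tilde O(\Delta)$.

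Second, the core structural idea you are missing is what the paper calls the \emph{contaminated} case and its consequence (Lemma~\ref{lem:cont is good:main}). Along the root-to-current path in the recursion tree one records the sequence of types $\tau^{(x_0)},\dots,\tau^{(x_k)}$. If at some step a $1/\tilde O(1)$ fraction of the $L$ truncation choices would lead to a child whose type intersects a type already used (a ``dirty'' child), then all of those children's alternating paths coexist under a single coloring and have types drawn from a pool of only $\tilde O(\Delta)$ types; hence their total length is $\tilde O(\Delta n)$, so their average length is $\tilde O(\Delta n)/\bigl(L/\tilde O(1)\bigr) \le L$ precisely when $L = \tilde\Omega(\sqrt{\Delta n})$. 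This is the exact origin of the $\sqrt{\Delta n}$ term, and it is a win-win: either a constant fraction of children are clean (and a transcript-counting argument, Lemma~\ref{lem:expansion:main}, bounds the number of clean nodes at depth $\ell$ by $\Delta^{O(\ell)}\cdot n \ll L^\ell$), or a constant fraction of children are terminal. Your proposal has no analogue of this dichotomy; the potential you allude to and the ``at most $\tilde O(\Delta)$ anchors'' claim are not substantiated and do not correspond to anything in the actual proof.

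Finally, your accounting for the $\Delta^2$ term is wrong: it does not come from ``$\tilde O(\Delta)$ steps times fan size $O(\Delta)$''. In the paper each of the $O(\log n)$ rounds costs $\tilde O(L)$, and the $\Delta^2$ summand in $L$ is needed to absorb the $\tilde O(\Delta^2)$ ``damaged'' children per node created by fan rotations (Lemma~\ref{lem:low damage}). This is precisely the fan-path interaction you flag as the hard part, and the resolution is not a sharper anchor invariant but simply making $L$ large enough that damaged children are a negligible fraction.
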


We present a  detailed outline of the proof of Theorem~\ref{th:extension:main} in Section~\ref{sec:th:extension:main}. For now, we focus on emphasizing the following two  conceptual, take-home messages. First, Theorem~\ref{th:extension:main} is sufficient for making a polynomial improvement over the state-of-the-art $\tilde{O}(mn^{1/3})$ runtime for computing a $(\Delta+1)$-edge coloring~\cite{BhattacharyaCCSZ24}. As a simple sanity check, consider the following two cases.
\begin{enumerate}[(i)]
\item $\Delta < n^{1/4}$. Here, we can apply any existing $\tilde{O}(m\Delta)$ time algorithm~\cite{arjomandi1982efficient,gabow1985algorithms,sinnamon2019fast} to compute a $(\Delta+1)$-edge coloring in $\tilde{O}(mn^{1/4})$ time.
\item $\Delta > n^{1/4}$. Here, it is easy to verify that $\Delta^{2.5} \geq \sqrt{\Delta n}$, and hence Theorem~\ref{th:extension:main}  implies a $(\Delta+1)$-color extension subroutine with $\tilde{O}(\Delta^{2.5})$ expected runtime. So we can set $\gamma = 2.5$ in Lemma~\ref{lm:main}, to get a $(\Delta+1)$-edge coloring algorithm with an expected runtime of $\tilde{O}(mn^{3/10})$.
\end{enumerate}
Thus, in both cases we obtain a polynomial improvement over the previous $\tilde{O}(mn^{1/3})$ runtime bound of~\cite{BhattacharyaCCSZ24}. In Section~\ref{sec:full:algo}, we perform a more refined analysis, and derive that Theorem~\ref{th:extension:main} actually implies a $(\Delta+1)$-edge coloring algorithm with $\tilde{O}(mn^{1/4})$ expected runtime (see Theorem~\ref{thm:alg}). This leads us to our main result, as stated in Theorem~\ref{thm:main}.

The second message we wish to emphasize is this: The starting point of our approach for proving Theorem~\ref{th:extension:main} is the multi-step Vizing chain construction of~\cite{duan2019dynamic}.  Specifically,~\cite{duan2019dynamic} designed a $((1+\epsilon)\Delta)$-color extension subroutine with an expected runtime of $\tilde{O}(1/\epsilon^2)$, as long as $\Delta = \tilde{\Omega}(1)$ and $\epsilon = \tilde{\Omega}(1/\sqrt{\Delta})$.
A priori, it seems that such a bound will not be suitable for our purpose, because we want to set $\epsilon = 1/\Delta$. We now outline, at a very high level, the algorithm  of~\cite{duan2019dynamic} and how we  build on top of it. 

\medskip
\noindent {\em In the ensuing paragraphs, we assume that the reader is already familiar with the proof of Vizing's theorem and concepts like ``alternating paths'', ``fans'' and ``Vizing chains''.}

\medskip
\noindent {\bf The~\cite{duan2019dynamic} Algorithm:}
We are given an input graph $G = (V, E)$, and a partial $((1+\epsilon)\Delta)$-edge coloring $\chi$ in $G$ with one uncolored edge $e = (u,v)$. The algorithm of~\cite{duan2019dynamic} proceeds in $R = \Theta(\log n)$ {\em rounds}. For each round $i \in \{1, \ldots, R\}$, it samples a subset of colors $\mathcal{C}_i \subseteq [\Delta+1] \setminus \left(\bigcup_{j < i} \mathcal{C}_j\right)$ of size $|\mathcal{C}_i| = \Theta(\log n/\epsilon)$ independently and u.a.r. We refer to $\mathcal{C}_i$ as the {\em palette} for round $i$. For technical reasons,~\cite{duan2019dynamic} want these palettes across different rounds to be mutually disjoint. In addition, for each $i \in [R]$, they want that  under any fixed partial coloring $\chi'$ of $G$, every vertex $v$ has at least one {\em missing color} in $\mathcal{C}_i$.\footnote{We say that a color is missing at $v$ if it is {\em not} assigned to any of the edges incident on $v$. Since $v$ has degree at most $\Delta$, it has at least $\epsilon \Delta$ missing colors.} To ensure these two properties, it is easy to verify that we must have $|R| \cdot |\mathcal{C}_i| = \Theta(\log^2 n/\epsilon) < \epsilon \Delta$, the RHS being the {\em slack}, in terms of the number of extra available colors. This necessitates the requirement that $\epsilon^2 = \tilde{\Omega}(1/\Delta)$, and hence $\epsilon = \tilde{\Omega}(1/\sqrt{\Delta})$. 

Let $u_1 := u$, $v_1 := v$, $e_1 := (u_1, v_1)$ and $\chi_1 := \chi$. At the start of a given round $i \in [R]$,~\cite{duan2019dynamic} have a partial coloring $\chi_i$ and an uncolored edge $e_i = (u_i, v_i)$ w.r.t.~$\chi_i$. Using only the colors from the palette $\mathcal{C}_i$, they identify a Vizing chain starting from an endpoint (say) $u_i$ of $e_i$. If the alternating path corresponding to this Vizing chain has length less than $L := \tilde{\Theta}(1/\epsilon^2)$, then they {\em apply} the Vizing chain to extend the partial coloring to $e_i$, and their algorithm terminates. Otherwise, they pick some $\alpha_i \in [L]$ independently and u.a.r., and {\em shift} the position of the uncolored edge {\em from} $e_i$ {\em to} the $\alpha_i^{th}$ edge (say) $e_{i+1} = (u_{i+1}, v_{i+1})$ on the concerned alternating path (say) $P_i$. This is done by appropriately shifting the colors on the Vizing fan and the first $\alpha_i$ edges of $P_i$, which leads to a new partial coloring $\chi_{i+1}$. The algorithm then proceeds to round $(i+1)$. 

The choice of the value of $L$ is dictated by the fact that for technical reasons,~\cite{duan2019dynamic} have to ensure that $L = \Omega(|\mathcal{C}_i|^2)$. The main technical contribution of~\cite{duan2019dynamic} is to show that this algorithm terminates within $R$ rounds, w.h.p.\footnote{The in-expectation guarantee follows because if the algorithm fails, then it can always revert back to the trivial implementation of Vizing's proof to extend the partial coloring to one edge in $O(n)$ time.}

\medskip
\noindent {\bf Our Approach:} At a very high-level, we diverge from~\cite{duan2019dynamic} in two major aspects. First, since we need to set $\epsilon = 1/\Delta$, we cannot afford to have these separate palettes $\{\mathcal{C}_i\}_{i}$ across different rounds, and so we get rid of them altogether. Morally, we observe  that all we need is the following {\em key property}: The pair of colors on the concerned alternating path in each round $i \in [R]$ is  disjoint from the ones used in previous rounds. As long as this key property holds,  the~\cite{duan2019dynamic} analysis continues to work just fine. (This assertion has a big caveat associated with it, namely, the complications that arise from the Vizing fans;  see Section~\ref{sec:algo:hurdle:main} for a discussion on those complications.
But we ignore the fans for now.)

The second point of divergence from~\cite{duan2019dynamic} arises because the key property might not hold after a certain number of rounds. To address this issue, our main insight is to increase the value of the parameter $L$ to be $:= \tilde{\Theta}(1/\epsilon^2 + \sqrt{\Delta n}) = \tilde{\Theta}(\Delta^2 + \sqrt{\Delta n})$. With this new increased value of $L$, we derive the following crucial implication (see Lemma~\ref{lem:cont is good:main}). Let $i \in [R]$ be the first round such that: most of the $L$ choices for the value of $\alpha_i$ lead to a scenario where the subsequent round $i+1$ will not satisfy the key property. Then with sufficiently large probability, the algorithm will terminate in the next round (i.e., in round $i+1$). This leads us to a {\em win-win} framework. Either round $i$ is such that with good probability we can continue to apply the analysis of~\cite{duan2019dynamic} in the subsequent round, or with good probability our algorithm actually terminates in the subsequent round. Section~\ref{sec:th:extension:main} contains a much more detailed  exposition of this analysis.

Finally,  we need to overcome further significant obstacles to deal with the Vizing fans. The complete proof of Theorem~\ref{th:extension:main}, which handles these obstacles, is deferred to Section~\ref{sec:extension}.\footnote{The proof in \Cref{sec:extension} is self-contained and uses slightly different notation than the proof sketch in \Cref{sec:th:extension:main}.}

\section{Proof (Sketch) of Lemma~\ref{lm:main}}
\label{sec:lm:main}

By applying Lemma~\ref{lm:key} at most $O(\log n)$ times, where in each of these applications we have at most $|E^*|$ and at least $|L|$ uncolored edges, we derive the following corollary.

\begin{corollary}
\label{cor:key} Suppose that  we receive as input: (i) a partial $(\Delta+1)$-coloring $\chi$ of $G$ as specified in Lemma~\ref{lm:key} and (ii) a parameter $L \in \{1, \ldots, |E^{\star}|\}$. Then in $\tilde{O}(|E^{\star}| \Delta + \Delta m |U^{\star}|/L)$ expected time, we can obtain a partial $(\Delta+1)$-coloring $\chi'' : E \setminus E'' \rightarrow [\Delta+1]$ of $G$, with $E''$ being the set of uncolored edges, such that $E'' \subseteq E^{\star}$ and $|E''| \leq L$.
\end{corollary}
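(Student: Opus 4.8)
The target is Corollary~\ref{cor:key}: given the partial coloring setup of Lemma~\ref{lm:key} and a threshold $L \in \{1,\dots,|E^\star|\}$, produce a partial $(\Delta+1)$-coloring with at most $L$ uncolored edges, all of which lie in $E^\star$, in $\tilde O(|E^\star|\Delta + \Delta m |U^\star|/L)$ expected time. The natural approach is iterated halving: repeatedly invoke Lemma~\ref{lm:key}, each time extending the coloring to a constant fraction of the currently-uncolored set, until the number of uncolored edges drops below $L$.

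\textbf{Plan.} Maintain a running partial coloring $\chi^{(t)} : E \setminus E^{(t)} \to [\Delta+1]$ with $E^{(0)} := E^\star$, $\chi^{(0)} := \chi$. Note that $U^\star$ remains a vertex cover of $(V, E^{(t)})$ for every $t$, since $E^{(t)} \subseteq E^\star$ and Lemma~\ref{lm:key} only shrinks the uncolored set. At step $t$, if $|E^{(t)}| \le L$ we stop and output $(\chi^{(t)}, E^{(t)})$; this clearly satisfies $E'' \subseteq E^\star$ and $|E''| \le L$. Otherwise we apply Lemma~\ref{lm:key} with the current partial coloring and vertex cover $U^\star$, obtaining $\chi^{(t+1)}$ with $E^{(t+1)} \subseteq E^{(t)}$ and $|E^{(t)} \setminus E^{(t+1)}| = \Omega(|E^{(t)}|)$, i.e. $|E^{(t+1)}| \le (1 - c)|E^{(t)}|$ for some absolute constant $c \in (0,1)$. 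Since each step shrinks the uncolored count by a constant factor and we halt once it is $\le L \le |E^\star|$, the process terminates after $T = O(\log(|E^\star|/L)) = O(\log n)$ steps. (One subtlety: Lemma~\ref{lm:key}'s guarantee $|E^\star \setminus E'| = \Omega(|E^\star|)$ is phrased for the \emph{original} batch size; when applied to the smaller set $E^{(t)}$ it yields a constant fraction of $|E^{(t)}|$ — this is fine as long as $|E^{(t)}| \geq |L|$ matches the hypothesis scaling, which is exactly why the corollary statement mentions ``at most $|E^*|$ and at least $|L|$ uncolored edges'' in each application; I would state this explicitly.)

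\textbf{Running time.} The expected time of the step-$t$ call to Lemma~\ref{lm:key} is $\tilde O\!\left(|E^{(t)}|\Delta + \Delta m |U^\star|/|E^{(t)}|\right)$. Summing over $t$: the first term telescopes geometrically, $\sum_t |E^{(t)}|\Delta = \tilde O(|E^\star|\Delta)$ since $|E^{(0)}| = |E^\star|$ dominates the geometric sum. For the second term, $|E^{(t)}| \ge L$ throughout (we only call Lemma~\ref{lm:key} while $|E^{(t)}| > L$), so each summand is at most $\tilde O(\Delta m |U^\star|/L)$, and there are $O(\log n)$ of them, giving $\tilde O(\Delta m |U^\star|/L)$ total (the logarithmic factor is absorbed into $\tilde O(\cdot)$). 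Adding the two contributions yields the claimed $\tilde O(|E^\star|\Delta + \Delta m |U^\star|/L)$ expected bound, where linearity of expectation over the $O(\log n)$ calls handles the in-expectation guarantee. The output correctness ($E'' \subseteq E^\star$, $|E''| \le L$) is immediate from the loop invariants $E^{(t)} \subseteq E^\star$ and the termination condition.

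\textbf{Main obstacle.} The only real care needed is in matching the hypothesis of Lemma~\ref{lm:key} at each invocation: the lemma's conclusion $|E^\star \setminus E'| = \Omega(|E^\star|)$ and its running time bound are both stated relative to the size of the uncolored set \emph{being passed in}, so one must verify that applying it to $(V, E^{(t)})$ with $|E^{(t)}|$ in the range $[L, |E^\star|]$ is legitimate — in particular that the constant in the $\Omega(\cdot)$ is uniform and that $U^\star$ being a (possibly now far-from-minimum) vertex cover of $(V, E^{(t)})$ does not degrade anything, which it does not since the bound depends on $|U^\star|$ only as an upper bound on the cover size. A secondary point is ensuring the geometric sum for the first running-time term is dominated by the initial term rather than accumulating a spurious $\log$ factor on $|E^\star|\Delta$ — but since $\sum_{t \ge 0}(1-c)^t = O(1)$, this is automatic. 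Everything else is bookkeeping.
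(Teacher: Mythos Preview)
Your proposal is correct and matches the paper's own argument: the paper simply states that the corollary follows ``by applying Lemma~\ref{lm:key} at most $O(\log n)$ times, where in each of these applications we have at most $|E^*|$ and at least $|L|$ uncolored edges,'' which is precisely the iterated-halving scheme you spell out. Your running-time accounting (geometric sum on the first term, $|E^{(t)}|\ge L$ bounding the second term per iteration) is the intended computation.
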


To convey the main idea behind the proof of Lemma~\ref{lm:main},  we will assume that the input graph $G = (V, E)$ is {\em almost} $\Delta$-regular, that is, the degree of each vertex $v \in V$ is $\Omega(\Delta)$ and hence $m = \Theta(n \Delta)$. We also assume that $\Delta = \omega(\log n)$; otherwise, we can  apply an existing algorithm~\cite{arjomandi1982efficient,gabow1985algorithms,sinnamon2019fast} to compute a $(\Delta+1)$-edge coloring of $G$ in $\tilde{O}(m\Delta) = \tilde{O}(m)$ time.

We sample each vertex $v \in V$ into a set $U^{\star} \subseteq V$ independently with probability $(\kappa \log n)/\Delta$, for a sufficiently large constant $\kappa > 1$. Let $E^{\star} := \{ (u, v) \in E : \{u, v\} \cap U^{\star} \neq \emptyset\}$ denote the set of edges incident on $U^{\star}$, and let $E_0 := E \setminus E^{\star}$ denote the set of remaining edges in $G$. Define the subgraphs $G^{\star} := (V, E^{\star})$ and $G_0 := (V, E_0)$. 

Consider any vertex $v \in V$. Its degree in $G$ is at most $\Delta$, and each of its  neighbors gets sampled in $U^{\star}$ with probability $(\kappa \log n)/\Delta$. Thus, applying standard Chernoff bounds, we infer that the degree of $v$ in $G_0$ is at most $(\Delta - \kappa' \log n)$ w.h.p., for a sufficiently large constant $\kappa' > 1$ that depends on $\kappa$. Taking a union bound over all $v \in V$, we get that whp the maximum degree in $G_0$ is at most $(\Delta-\kappa' \log n)$. Based on this observation, we now compute a $(\Delta+1)$-coloring $\chi$ of $G_0$ in $\tilde{O}(m)$ time, by invoking the algorithm of~\cite{Assadi24}. Note that $\chi$ is a partial $(\Delta+1)$-edge coloring of the entire graph $G$, with $E^{\star}$ being the set of uncolored edges and $U^{\star}$ being a vertex cover of $G^{\star} := (V, E^{\star})$, just as in the statement of Lemma~\ref{lm:key}.

At this point, we fix an $L \in \{1, \ldots, |E^{\star}|\}$ whose value will be determined later. Next, we apply Corollary~\ref{cor:key} on $(\chi, G, U^{\star}, E^{\star}, L)$ to obtain a partial coloring $\chi'' : E \setminus E'' \rightarrow [\Delta+1]$, with $E'' \subseteq E$ being the set of remaining uncolored edges and $|E''| \leq L$. It now remains to extend the partial $(\Delta+1)$-coloring $\chi''$ to the edges in $E''$. Towards this end, we fork into one of the following  cases.

\medskip
\noindent {\bf Case (i): $\Delta^{\gamma} > n$.} In this case, we scan through the edges in $E''$. While considering an edge $e \in E''$ during this scan, we extend the current partial coloring  to  $e$,  {\em using a $(\Delta+1)$-color extension subroutine implied by Vizing's original proof which runs in $O(n)$ time}. Thus, overall we spend $O(n  |E''|) = O(n L)$ time to extend the partial $(\Delta+1)$-coloring $\chi''$ to all the edges in $E''$. 

\medskip
\noindent {\bf Case (ii): $\Delta^{\gamma} \leq n$.} Here, the basic set up remains the same as in Case (i) above, except the following: While considering an edge $e \in E''$ during the scan, {\em we apply the $(\Delta+1)$-color extension subroutine  that runs in $\tilde{O}(\Delta^{\gamma})$ expected time}. Thus, overall we spend $\tilde{O}(\Delta^{\gamma}  |E''|) = \tilde{O}(\Delta^{\gamma} L)$ expected time to extend the partial $(\Delta+1)$-coloring $\chi''$ to all the edges in $E''$. 

\medskip
It is easy to see that at the end of the above process, we obtain a $(\Delta+1)$-edge coloring of the entire graph $G = (V, E)$. We now focus on analyzing the runtime of this algorithm. We first recall that each vertex $v \in V$ is sampled into $U^{\star}$ with probability $(\kappa \log n)/\Delta$. Thus, by standard Chernoff bounds we have the following guarantees w.h.p. 
\begin{equation}
\label{eq:sketch:1}
|U^{\star}| = \tilde{O}(n/\Delta), \text{ and hence } |E^{\star}| \leq |U^{\star}| \Delta = \tilde{O}(n).
\end{equation}

We have already observed that it takes $\tilde{O}(m)$ time to compute the partial coloring $\chi$. Given $\chi$, Corollary~\ref{cor:key} allows us to obtain the partial coloring $\chi''$ in expected time $\tilde{O}(|E^{\star}| \Delta + \Delta m |U^{\star}|/L) = \tilde{O}(n \Delta + mn/L) = \tilde{O}(m + n^2 \Delta/L)$. The first equality holds because of~(\ref{eq:sketch:1}), and the second inequality holds because we have assumed that the input graph is almost regular. Accordingly, the overall runtime of the algorithm depends on whether we are in Case (i) or Case (ii), as follows.

If we are in Case (i), then the  total expected runtime of the algorithm is given by $T := \tilde{O}(m) + \tilde{O}(m+n^2 \Delta/L) + O(nL)$. Setting $L = \sqrt{n \Delta}$, we get  
$$T = \tilde{O}(m+ n \sqrt{n \Delta}) = \tilde{O}(m + n \Delta \sqrt{n/\Delta}) = \tilde{O}(m \sqrt{n/\Delta}) = \tilde{O}\left(m n^{(\gamma-1)/(2\gamma)}\right),$$ where the last inequality holds because $\Delta^{\gamma}> n$ (which implies that $1/\Delta < 1/n^{1/\gamma}$). 

In contrast, if we are in Case (ii), then the total expected runtime of the algorithm is given by $T := \tilde{O}(m) + \tilde{O}(m+n^2 \Delta/L) + \tilde{O}(\Delta^{\gamma} L)$. Setting $L = n/\Delta^{(\gamma-1)/2}$, we get  
$$T =  \tilde{O}\left(m + n \Delta^{(\gamma+1)/2}\right) = \tilde{O}\left(m + n\Delta \cdot  \Delta^{(\gamma-1)/2}\right) = \tilde{O}\left(m   \Delta^{(\gamma-1)/2}\right)  = \tilde{O}\left(m n^{(\gamma-1)/(2\gamma)}\right).$$ where the last inequality holds because $\Delta^{\gamma} \leq n$. This concludes the proof (sketch) of Lemma~\ref{lm:main}.

\section{Proof (Sketch) of Theorem~\ref{th:extension:main}}
\label{sec:th:extension:main}

We define the parameters:
\begin{equation}
\label{eq:parameters:main}
\ell := 100 \log n, \ \ L := 10^3 \ell^2 (\Delta^2 + \sqrt{\Delta n})
\end{equation}
To convey the main conceptual ideas behind our analysis, in this section we will explain the proof of Theorem~\ref{th:extension:main} under  Assumption~\ref{assume:main}, stated below. This assumption  clearly holds, for example, on bipartite graphs. For those readers familiar with the proof of Vizing's theorem, this assumption will allow us to ignore the Vizing fans altogether, and we will be able to focus only on the alternating paths, which are more intuitive to reason about. 

\begin{assumption}
\label{assume:main} The graph $G = (V, E)$ does not contain any odd cycle of length $\leq 2L+3$.
\end{assumption}

In Section~\ref{sec:algo:prelim:main}, we introduce some key notations and terminologies regarding alternating paths. Next, we describe our randomized algorithm in Section~\ref{sec:algo:main}. In Section~\ref{sec:algo:metatree:main}, we introduce a recursion tree (which we refer to as the ``meta-tree'') which encodes all possible execution-paths of our algorithm,
and present a few basic properties of this meta-tree. We show that our algorithm essentially performs a random walk on this meta-tree, and state Lemma~\ref{lm:extension:main} which guarantees that this random walk terminates quickly. We prove Lemma~\ref{lm:extension:main}, which implies Theorem~\ref{th:extension:main}, in Section~\ref{sec:algo:analysis:main}. Finally, we conclude our discussion in Section~\ref{sec:algo:hurdle:main} by pointing out the remaining significant technical challenges that we need to overcome, if we are to get rid of Assumption~\ref{assume:main}.

\subsection{Preliminaries}
\label{sec:algo:prelim:main}

\noindent {\bf Alternating Paths and Types:} Consider a partial $(\Delta+1)$-edge coloring $\chi : E \rightarrow [\Delta+1] \cup \{ \bot \}$ in the input graph $G = (V, E)$. Consider a path $P = ( (v_0, v_1), (v_1, v_2), \ldots, (v_{k-1}, v_k))$ in $G$, where $(v_{i-1}, v_i) \in E$ is the $i^{th}$ edge on the path, for all $i \in [k]$. We say that $P$ is an {\bf alternating path}  (w.r.t.~$\chi$) iff there exist two distinct colors $c, c' \in [\Delta+1]$  that satisfy the following conditions.
\begin{enumerate}
\item The colors on the consecutive edges of $P$ alternate between $c$ and $c'$. Specifically, this means that $\chi(v_{i-1}, v_i) \in \{c, c'\}$ for all $i \in [k]$, and $\chi(v_{i-1}, v_{i}) \neq \chi(v_i, v_{i+1})$ for all $i \in [k-1]$.
\item The path $P$ is {\em maximal}. Thus,   for each vertex $u \in \{v_0, v_k\}$,  either $c \in \texttt{miss}_{\chi}(u)$ or $c' \in \texttt{miss}_{\chi}(u)$, where $\texttt{miss}_{\chi}(u) \subseteq [\Delta+1]$ denotes the set of {\em missing colors} at $u$ under $\chi$ (i.e., these are the colors that are {\em not} assigned to any edge in $G$ incident on $u$). 
\end{enumerate}
Let $\tau = \{c, c'\}$. We refer to $\tau$ as being the {\bf type} of the alternating path $P$. We let $\texttt{length}(P) = k$ denote the length of the path $P$.  We also say that the path $P$ {\bf starts} at $v_0$ and {\bf ends} at $v_k$. For  $i \in [k]$, we let $P_{\leq i} = ((v_0, v_1), (v_1, v_2), \ldots, (v_{i-1}, v_i))$ denote the {\bf length-$i$ prefix} of $P$. For $i > k$, we let  $P_{\leq i} := P$. Finally, note that an alternating path always comes with an associated  {\bf orientation}. In particular, there is another alternating path with the same set of edges as $P$, but in reverse order.

\medskip
Our algorithm in Section~\ref{sec:algo:main} will use two basic subroutines, as described below.

\medskip
\noindent {\bf The Subroutine $\textsf{Apply}(\chi, e, P)$:} Here, the input is a partial $(\Delta+1)$-edge coloring $\chi$ of $G$, an uncolored edge $e = (u, v)$ and an alternating path $P$ starting from $u$ that is of type $\tau = \{\alpha_u, \alpha_v\}$, where $\alpha_u \in \texttt{miss}_{\chi}(u) \setminus \texttt{miss}_{\chi}(v)$ and $\alpha_v \in \texttt{miss}_{\chi}(v) \setminus \texttt{miss}_{\chi}(u)$.
(If either $\alpha_u \in \miss_\chi(u) \cap \miss_\chi(v)$ or 
$\alpha_v \in \miss_\chi(u) \cap \miss_\chi(v)$,  
our algorithm 
will not make the call $\textsf{Apply}(\chi,e,P)$.)
Crucially, it is  guaranteed that $\texttt{length}(P) \leq 2L+2$. W.l.o.g., suppose that  $P = ((v_0, v_1), (v_1, v_2), \ldots, (v_{k-1}, v_k))$, where $k = \texttt{length}(P)$ and $v_0 = u$, and let $c_k = \chi(v_{k-1}, v_k) \in \{\alpha_u, \alpha_v\}$.  It is easy to verify
that under Assumption~\ref{assume:main}, we have $v_k \neq v$,  for otherwise the path $P$ along with the edge $(u, v)$ would create an odd cycle of length  $\leq 2L+3$. The subroutine updates the coloring $\chi$ as follows.
\begin{itemize}
\item $\chi(v_{i-1}, v_i) \leftarrow \chi(v_i, v_{i+1})$ for all $i \in [k-1]$.
\item $\chi(v_{k-1}, v_k) \leftarrow c$, where  $c$ is the unique color in $\{\alpha_u, \alpha_v\} \setminus \{c_k\}$.
\item $\chi(u, v) \leftarrow \alpha_v$.
\end{itemize}
At the end of the above operations, the partial coloring $\chi$ gets extended to the edge $e$. The subroutine returns the updated partial coloring. In summary, the subroutine {\em applies} the alternating path $P$ to extend the partial coloring to $e$.

\medskip
\noindent {\bf The Subroutine $\textsf{Shift}(\chi, e, P_{\leq i})$:} Here, the input is a partial $(\Delta+1)$-edge coloring $\chi$ of $G$, an uncolored edge $e = (u, v)$ and a length-$i$  prefix $P_{\leq i}$ of an alternating path $P$ starting from $u$ that is of type $\tau = \{\alpha_u, \alpha_v\}$, where $\alpha_u \in \texttt{miss}_{\chi}(u) \setminus \texttt{miss}_{\chi}(v)$ and $\alpha_v \in \texttt{miss}_{\chi}(v) \setminus \texttt{miss}_{\chi}(u)$.  
(If either $\alpha_u \in \miss_\chi(u) \cap \miss_\chi(v)$ or 
$\alpha_v \in \miss_\chi(u) \cap \miss_\chi(v)$,  
our algorithm 
will not make the call $\textsf{Shift}(\chi, e, P_{\leq i})$.)
Crucially, we are also guaranteed that $i \leq 2L+2$. W.l.o.g., let $P_{\leq i} = ((v_0, v_1), (v_1, v_2), \ldots, (v_{i-1}, v_i))$, where $v_0 = u$. Due to Assumption~\ref{assume:main}, it is again easy to verify that $v_i \neq v$.  The subroutine updates  $\chi$ as follows.
\begin{itemize}
\item $\chi(v_{j-1}, v_j) \leftarrow \chi(v_{j}, v_{j+1})$ for all $j \in [i-1]$.
\item $\chi(v_{i-1}, v_i) \leftarrow \bot$.
\item $\chi(u, v) \leftarrow \alpha_v$.
\end{itemize}
At the end of the above operations, in the partial coloring $\chi$ the position of the uncolored edge gets {\em shifted} from $e$ to $(v_{i-1}, v_i)$. The subroutine returns the updated coloring.

\subsection{Our Algorithm}
\label{sec:algo:main}

We have a graph $G = (V, E)$, and a partial $(\Delta+1)$-edge coloring $\chi$ of $G$ with one uncolored edge $e = (u,v)$. We need to extend $\chi$ to  $e$. We do this by  identifying two colors $c_u \in [\Delta+1] \setminus \miss_{\chi}(u)$ and $c_v \in [\Delta+1] \setminus \miss_{\chi}(v)$, and then  calling Algorithm~\ref{alg:extend:main} with input $(G, \chi, e = (u,v), \{c_u, c_v\})$.

While parsing the pseudocode of Algorithm~\ref{alg:extend:main}, the reader should think of $c_u$ and $c_v$ as {\bf blocking colors}. The algorithm considers an  alternating path $P$ starting from $u$, such that the type of this alternating path, given by $\{\alpha_u, \alpha_v\}$, is disjoint from $\{c_u, c_v\}$. Note that it is always possible to find such a type, for the following reason. Since  $u$ has degree at most $\Delta$ in $G$ and the edge $e = (u, v)$ is currently uncolored, we have $|\texttt{miss}_{\chi}(u)| \geq (\Delta+1) - (\Delta-1) = 2$. As $c_u \notin \texttt{miss}_{\chi}(u)$, it must be the case that $\texttt{miss}_{\chi}(u) \setminus \{c_u, c_v\} \neq \emptyset$, and so there exists an appropriate color $\alpha_u$ that we can pick from  $\texttt{miss}_{\chi}(u) \setminus \{c_u, c_v\}$. A similar argument holds for $\alpha_v$. Also, by induction, it is easy to verify that all future recursive calls to the algorithm will continue to satisfy the same property with regard to the blocking colors. Specifically, if the algorithm is called with input $(G, \chi,  (u', v'), \{c_1, c_2\})$, then $\texttt{miss}_{\chi}(w) \cap \{c_1, c_2\} \neq \emptyset$ for each $w \in \{u', v'\}$. The reason for having these blocking colors will become apparent later on (see Observation~\ref{ob:parent:child:main} and the proof of Lemma~\ref{lem:cont is good:main}).

\medskip

\begin{algorithm}[H]
    \SetAlgoLined
    \DontPrintSemicolon
    \SetKwRepeat{Do}{do}{while}
    \SetKwBlock{Loop}{repeat}{EndLoop}
    Find two colors $\alpha_u \in \miss_{\chi}(u) \setminus \{c_u, c_v\}$ and $\alpha_v \in \miss_{\chi}(v) \setminus \{c_u, c_v\}$\label{line:start}\;
    \If{$\exists c \in \{\alpha_u, \alpha_v\}$ such that $c \in \miss_{\chi}(u) \cap \miss_{\chi}(v)$}
    {
    $\chi(u, v) \leftarrow c$ \;
    \Return{$\chi$}\label{line:end 1}
    }
    Let $P := (e_1, \ldots, e_k)$ be the $\{\alpha_u, \alpha_v\}$-alternating path in $G$ (w.r.t.~$\chi$) starting from $u$\label{line:5}\;
    \If{$k \leq 2L+2$}
    {$\chi \leftarrow \textsf{Apply}(\chi, e , P)$ \;
    \Return{$\chi$}\label{line:end 2}
    }
    \Else
    {
    Sample $i \in \{1, \ldots, L\}$ independently and u.a.r.\label{line:choices}\;
    Let $P_{\leq i} = (e_1, \ldots, e_i)$ denote the prefix of $P$ consisting of its first $i$ edges\;
    $\chi \leftarrow \textsf{Shift}(\chi, e, P_{\leq i})$\;
    $\textsf{ExtendColoring}(G, \chi, e_i, \{\alpha_u, \alpha_v\})$
    }
    \label{line:end}
    \caption{\textsf{ExtendColoring}$(G,\chi, e = (u, v), \{c_u, c_v\})$}
    \label{alg:extend:main}
\end{algorithm}

\medskip

\noindent
The remainder of Algorithm~\ref{alg:extend:main} is very natural and intuitive. If there exists some color $c \in \{\alpha_u, \alpha_v\}$ that is missing {\em both} at $u$ and $v$, then  the algorithm simply assigns the color $c$ to $e$, and terminates. Otherwise, if the length of the alternating path $P$ is at most $2L+2$, then it extends the partial coloring to $e$ by applying the alternating path, and terminates. Finally, if $\texttt{length}(P) > 2L+2$, then it picks some $i \in [L]$  u.a.r., shifts the position of the uncolored edge from $e$ to the $i^{th}$ edge on $P$, and makes a recursive call to itself.

\subsection{The Meta-Tree}
\label{sec:algo:metatree:main}

We now define a  (possibly infinite) tree $\mathcal{T}$ that captures all possible execution paths taken by our recursive algorithm, on a given input. To clearly distinguish it from the input graph $G$, we refer to $\mathcal{T}$ as a {\bf meta-tree} and its vertices as {\bf meta-nodes}. We next introduce some key notations.

The meta-tree $\mathcal T$ is rooted at a meta-node $r$. We let $V(\mathcal{T})$ denote the set of all meta-nodes in $\mathcal{T}$. 
Every meta-node $x \in V(\mathcal T)$ corresponds to a recursive {\bf call} of \Cref{alg:extend:main}, and the root-to-leaf path from $r$ to $x$ in $\mathcal T$ corresponds to the execution of our recursive algorithm leading to this call.\footnote{While giving the full proof in \Cref{sec:extension}, we formulate our algorithm iteratively instead of recursively and refer to \textbf{iterations} instead of \textbf{calls}.}
For each such meta-node $x$, we denote the state of an entity $\Phi$  at the start of the corresponding call of Algorithm~\ref{alg:extend:main} by $\Phi^{(x)}$. For example, we use the symbols $\chi^{(x)}$, $P^{(x)}$ and $\tau^{(x)}$ respectively to denote the following entities at the start of the concerned call of Algorithm~\ref{alg:extend:main}: (i) the current partial coloring $\chi$, (ii) the alternating path $P$ (see \Cref{line:5} of Algorithm~\ref{alg:extend:main}) and (iii) the type $\{\alpha_u, \alpha_v\}$ of the path $P$ (see \Cref{line:5} of Algorithm~\ref{alg:extend:main}). 

If a meta-node $x$ is {\em not} a leaf in $\mathcal{T}$, then it has exactly $L$ children, one for each choice of $i \in [L]$ in \Cref{line:choices} of Algorithm~\ref{alg:extend:main}. In contrast, a meta-node $x$ is a leaf in $\mathcal{T}$ iff the corresponding call of Algorithm~\ref{alg:extend:main} terminates at either \Cref{line:end 1} or \Cref{line:end 2}. We will refer to the leaves in $\mathcal{T}$ as {\bf terminals}.

\medskip
\noindent \textbf{Random Walks in $\mathcal T$:} 
An execution of our algorithm defines a {\bf random walk} on $\mathcal T$ that starts at the root and, at each step, independently and uniformly samples a random child of the current meta-node. The random walk ends when, and if, it reaches a terminal meta-node: At that point the algorithm succeeds in extending the initial partial coloring $\chi$ to the uncolored edge given to it as part of the input. Using standard data structures, it is easy to ensure that the algorithm spends $\tilde{O}(L)$ time at each meta-node it visits during this random walk, because the colors of at most $O(L)$ edges get changed during any given call of Algorithm~\ref{alg:extend:main}. Since $L = \tilde{O}(\Delta^2 + \sqrt{\Delta n})$, Theorem~\ref{th:extension:main}  follows from Lemma~\ref{lm:extension:main} and standard tools for boosting the success probability of a randomized algorithm. We derive some basic properties of the meta-tree in Section~\ref{subsec:prop}. Subsequently, we use these properties to prove Lemma~\ref{lm:extension:main} in Section~\ref{sec:algo:analysis:main}.

\begin{lemma}
\label{lm:extension:main}
With probability $\Omega(1/\log n)$, the random walk on $\mathcal{T}$ executed by our algorithm ends at a terminal vertex that lies within  depth $O(\log n)$ from the root $r$.
\end{lemma}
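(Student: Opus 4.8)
\textbf{Proof proposal for Lemma~\ref{lm:extension:main}.}

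The plan is to show that the random walk on $\mathcal{T}$ behaves, up to lower-order losses, like a biased random walk on the nonnegative integers that drifts toward $0$, so that it hits a terminal within $O(\log n)$ steps with probability $\Omega(1/\log n)$. The key quantity to track is, at each meta-node $x$ visited, the number of ``bad'' children of $x$ --- i.e., the choices of $i \in [L]$ for which the recursive call in round $i+1$ fails to satisfy the key property (its alternating path $P^{(x')}$ has a type or a set of edges colliding with colors/edges already used along the current root-to-$x$ path). The first step is to make this precise: define, for each non-terminal meta-node $x$, the set $\mathrm{Bad}(x) \subseteq [L]$ of children whose corresponding alternating path intersects the ``forbidden'' set accumulated so far, and let $\beta(x) := |\mathrm{Bad}(x)|/L$ be its density. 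The choice $L = 10^3 \ell^2(\Delta^2 + \sqrt{\Delta n})$ with $\ell = 100\log n$ is exactly what controls this: because each previous round contributes a constant number of edges/colors to the forbidden set and there are $O(\log n)$ rounds, the total forbidden ``mass'' is $\tilde O(\Delta^2 + \sqrt{\Delta n})$ (coming from the standard Vizing-chain accounting: the number of positions along $P^{(x)}$ that could collide is at most $\tilde O(n/\Delta + \text{stuff})$, and summing/multiplying appropriately gives the bound); dividing by $L$ makes $\beta(x)$ smaller than a suitable constant, say $1/100$, for ``most'' meta-nodes.

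The second step is the \emph{win-win dichotomy}, which I expect to be the heart of the argument and which the excerpt signals via Lemma~\ref{lem:cont is good:main} (``if most of the $L$ choices for $\alpha_i$ lead to a scenario where round $i+1$ will not satisfy the key property, then with good probability the algorithm terminates in round $i+1$''). Concretely: call a meta-node $x$ \emph{heavy} if $\beta(x) \geq 1/100$, and \emph{light} otherwise. I would prove two complementary facts. (a) If $x$ is light, then conditioned on reaching $x$, the walk moves to a ``good'' child (one in $[L]\setminus\mathrm{Bad}(x)$) with probability $\geq 99/100$, and on a good child the standard analysis imported from~\cite{duan2019dynamic} applies: the uncolored edge has been shifted to a random position on a long alternating path whose type is fresh, so with probability $\Omega(1)$ (over the choice of $i$, exploiting that a random prefix length lands the uncolored edge at a vertex with a useful missing color) the next call terminates, and otherwise the ``progress measure'' (e.g. the number of remaining fresh colors, or an appropriate potential) decreases by a constant. (b) If $x$ is heavy, then by the geometry of long Vizing chains --- a large fraction of prefixes of $P^{(x)}$ already collide with the accumulated structure --- one can argue (this is Lemma~\ref{lem:cont is good:main}) that with probability $\Omega(1)$ the sampled $i$ lands on a prefix after which the subsequent call terminates outright (either at \Cref{line:end 1} because a color is now missing at both endpoints, or at \Cref{line:end 2} because the new alternating path is short). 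Either way, from any visited meta-node the walk either terminates in one more step with probability $\Omega(1)$, or descends to a child at which a bounded potential has strictly decreased with probability $\Omega(1)$.

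The third step assembles these into the depth and probability bounds. Since the relevant potential starts at $O(\log n)$ (it is bounded by the number of rounds $R = \Theta(\log n)$, equivalently by $\ell$, because after $\Theta(\log n)$ rounds the accumulated forbidden set is forced to be large enough that \emph{every} node becomes heavy and hence termination is forced), and since at each step it drops by a constant with probability $\Omega(1)$ conditioned on not terminating, a standard coupling with a biased walk / a direct martingale-style argument shows that within $O(\log n)$ steps the walk reaches a meta-node from which termination is essentially forced. To turn ``$\Omega(1)$ per step'' into the claimed $\Omega(1/\log n)$ overall: at each of the first $O(\log n)$ steps there is an independent (conditionally) $\Omega(1)$ chance of hitting a terminal; these ``success events'' can be chained so that the probability of hitting a terminal within depth $O(\log n)$ is at least a constant times $1/\log n$ --- more carefully, one shows the potential is non-increasing along the walk (deterministically, or up to an event of probability $o(1/\log n)$ handled by the $O(n)$-time fallback mentioned in the footnotes), it takes values in $\{0,1,\dots,O(\log n)\}$, and each step that does not already terminate decreases it by $1$ with probability $\Omega(1)$; hence with probability $\Omega(1/\log n)$ all $O(\log n)$ decrements happen in the first $O(\log n)$ steps, forcing the potential to $0$ and the walk to a terminal. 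The main obstacle is rigorously establishing the heavy-case claim (b): one must show that a long alternating path which ``mostly collides'' with previously-used colors and edges genuinely forces quick termination after a random shift, rather than merely forcing another long path --- this is where the specific value of $L$ and the structural properties of the meta-tree (Observation~\ref{ob:parent:child:main}, the blocking-color invariant) must be combined carefully, and where Assumption~\ref{assume:main} (no short odd cycles) is used to rule out pathological self-intersections of the Vizing chains.
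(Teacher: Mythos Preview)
Your heavy/light dichotomy is morally the paper's contaminated/non-contaminated split, and your step (b) is exactly Lemma~\ref{lem:cont is good:main}, so that half of the plan is on target. The genuine gap is in step (a), the ``light'' (non-contaminated) case. You propose to import the~\cite{duan2019dynamic} analysis to show that at a light node the next call terminates with probability $\Omega(1)$, or else some potential drops by a constant. Neither claim is substantiated, and neither is how the paper proceeds. In particular you never say what the potential \emph{is}; ``number of remaining fresh colors'' does not work (there are $\Delta+1$ colors, not $O(\log n)$), and there is no natural integer-valued quantity in $\{0,\dots,O(\log n)\}$ that monotonically decreases along the walk.

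The paper handles the light case by a \emph{counting} argument rather than a drift argument. The missing ingredient is Lemma~\ref{lem:expansion:main}: for any fixed transcript $(\tau_0,\dots,\tau_i)$ of types, there are at most $n$ congenitally clean meta-nodes with that transcript, because the length-$L$ prefixes $P^{(x)}_{\leq L}$ of their alternating paths are pairwise vertex-disjoint (this uses that clean means all types along the root path are disjoint, so every such path is a maximal alternating path already in the \emph{original} coloring $\chi^{(r)}$). Since there are at most $\Delta^{2(\ell+1)}$ transcripts of length $\ell+1$, the number of congenitally-clean non-contaminated meta-nodes at depth $\ell$ is at most $n\Delta^{2(\ell+1)}$; each is reached with probability $1/L^{\ell}$, and the choice $L=\tilde\Theta(\Delta^2+\sqrt{\Delta n})$ makes $n\Delta^{2(\ell+1)}/L^{\ell}<1/10$. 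Combined with the easy bound that the walk avoids dirty children for $\ell$ steps with probability $\geq 4/5$ (since each non-contaminated node has at most $L/(10\ell)$ dirty children), one gets $\Pr[\text{reach terminal or contaminated within depth }\ell]\geq 7/10$, and then Lemma~\ref{lem:cont is good:main} supplies the final $1/(40\ell)$ factor. Your proposal never identifies this vertex-disjointness/expansion lemma, which is the crux of the argument and the reason the $\sqrt{\Delta n}$ term is not needed in the light case; without it, the ``standard~\cite{duan2019dynamic} analysis'' you invoke cannot be made to go through, because that analysis relies on disjoint per-round palettes that you do not have here.
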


\subsubsection{Basic Properties of the Meta-Tree}
\label{subsec:prop}

 Recall that $\tau^{(x)}$ denotes the type of the alternating path $P^{(x)}$ at a meta-node $x$.
 Consider a non-terminal meta-node $x$ at {\bf depth} $= k$ (say), and suppose that $(x_0, x_1, \ldots, x_k)$ is the unique path in $\mathcal{T}$ from the root to $x$ (i.e., $r = x_0$ and $x = x_k$). We say that the meta-node $x$ is {\bf dirty} iff $\tau^{(x)} \cap \left( \tau^{(x_0)} \cup \cdots \cup \tau^{(x_{k-1})} \right) \neq \emptyset$, and {\bf clean} otherwise. We also say that $x$ is {\bf contaminated} iff at least $L/(10 \ell)$ of its children are dirty. Note that a contaminated meta-node itself might be clean.  

 \medskip
 \noindent {\bf Remark:} We emphasize that according to our definitions  {\em only} the non-leaf meta-nodes are classified as being either clean or dirty. Thus, the set of meta-nodes is partitioned into three subsets: terminal, dirty and clean. Furthermore, a subset of non-terminal meta-nodes are contaminated. This implies that some of the contaminated meta-nodes are clean, the rest being dirty.

 \medskip
 We next derive a few key properties that will be  useful in proving Lemma~\ref{lm:extension:main} later on.

\begin{observation}
\label{ob:parent:child:main}
Consider any non-terminal meta-node $x \in V(\mathcal{T})$, and let $y$ be any child of $x$. If $y$ is {\em not} a terminal, then we must have $\tau^{(x)} \cap \tau^{(y)} = \emptyset$.
\end{observation}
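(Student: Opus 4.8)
\textbf{Proof proposal for Observation~\ref{ob:parent:child:main}.}

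The plan is to trace through a single recursive call of Algorithm~\ref{alg:extend:main} and observe how the partial coloring is modified by \textsf{Shift} when we pass from meta-node $x$ to its child $y$. Fix a non-terminal meta-node $x$. Then the call at $x$ reached \Cref{line:choices}, meaning the alternating path $P^{(x)}$ has length $> 2L+2$, a value $i \in [L]$ was sampled, and the child $y$ corresponds to the recursive call $\textsf{ExtendColoring}(G, \chi^{(y)}, e_i, \{\alpha_u, \alpha_v\})$ where $\chi^{(y)} = \textsf{Shift}(\chi^{(x)}, e, P^{(x)}_{\leq i})$ and $\{\alpha_u,\alpha_v\} = \tau^{(x)}$ is the type of $P^{(x)}$. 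The crucial point is that $\tau^{(x)}$ becomes the pair of \emph{blocking colors} for the child call. So at meta-node $y$ we have, writing $e_i = (u', v')$, that $\miss_{\chi^{(y)}}(w) \cap \tau^{(x)} \neq \emptyset$ for each $w \in \{u', v'\}$ — this is exactly the invariant about blocking colors stated in the paragraph preceding Algorithm~\ref{alg:extend:main}, and it holds by induction. At \Cref{line:start} of the call at $y$, we select $\alpha_{u'} \in \miss_{\chi^{(y)}}(u') \setminus \tau^{(x)}$ and $\alpha_{v'} \in \miss_{\chi^{(y)}}(v') \setminus \tau^{(x)}$, so by construction $\tau^{(y)} = \{\alpha_{u'}, \alpha_{v'}\}$ is disjoint from the blocking colors $\tau^{(x)}$.

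More carefully: I would first argue that this selection at \Cref{line:start} is always possible (the excerpt already does this — $|\miss_{\chi^{(y)}}(u')| \geq 2$ since $u'$ has degree $\leq \Delta$ and $e_i$ is uncolored, and one of the two missing colors can avoid the two blocking colors), so $\tau^{(y)}$ is well-defined and satisfies $\tau^{(y)} \cap \tau^{(x)} = \emptyset$. Then I would note that $P^{(y)}$ (computed on \Cref{line:5} of the call at $y$, if that line is reached) has type exactly $\tau^{(y)}$ by definition of how the path is chosen. The only subtlety is that $y$ might \emph{not} reach \Cref{line:5}: it could terminate at \Cref{line:end 1} (when some $c \in \{\alpha_{u'}, \alpha_{v'}\}$ lies in $\miss_{\chi^{(y)}}(u') \cap \miss_{\chi^{(y)}}(v')$). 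But this is precisely the hypothesis we are allowed to exclude — the observation only asserts the conclusion when $y$ is \emph{not} a terminal. If $y$ is not a terminal, the call at $y$ proceeds past the \texttt{if} on \Cref{line:end 1}, so $P^{(y)}$ and hence $\tau^{(y)}$ are defined, and the argument above gives $\tau^{(x)} \cap \tau^{(y)} = \emptyset$.

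I do not expect any real obstacle here; this is essentially an unwinding of definitions combined with the already-established blocking-color invariant. The one thing to be careful about is making sure the blocking-color invariant is genuinely available: it is asserted (with a brief inductive justification) in the paragraph before Algorithm~\ref{alg:extend:main}, so I would either cite it directly or re-derive the one-step version, namely that the child call receives $\tau^{(x)}$ as its blocking pair and that $\tau^{(x)}$ does indeed intersect $\miss_{\chi^{(y)}}$ at both endpoints of $e_i$ — the latter because, along the alternating path $P^{(x)}$, the $i$-th edge $e_i$ carries one of the two colors of $\tau^{(x)}$, and after \textsf{Shift} uncolors $e_i$, both its endpoints are missing a color of $\tau^{(x)}$. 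This confirms that \Cref{line:start} at $y$ can always avoid $\tau^{(x)}$, closing the argument.
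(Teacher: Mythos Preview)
Your proposal is correct and follows essentially the same approach as the paper: the paper's proof is a one-liner pointing to \Cref{line:start} and \Cref{line:end} of Algorithm~\ref{alg:extend:main}, and what you have written is precisely the unpacking of why those two lines yield the claim. Your extra care about why the non-terminal hypothesis on $y$ is needed (so that $\tau^{(y)}$ is defined) and why the blocking-color invariant holds at the child call is accurate and more thorough than what the paper spells out.
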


\begin{proof}
Follows from \Cref{line:start} and \Cref{line:end} of Algorithm~\ref{alg:extend:main}.
\end{proof}

\begin{lemma}\label{lem:cont is good:main}
    Consider any contaminated meta-node $x \in V(\mathcal{T})$ at a depth $\leq \ell$ in the meta-tree $\mathcal T$. 
    Then at least  $L/(40\ell)$  many children of $x$ are terminal meta-nodes.
\end{lemma}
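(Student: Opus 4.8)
The plan is to argue that if $x$ is contaminated but sits shallow in the tree (depth $\le \ell$), then most of the dirty children are in fact terminal, because the ``dirtiness'' that makes them dirty is exactly the event that their alternating path runs into a color used by an earlier ancestor, and this forces the recursion to stop. More precisely, let $(x_0,\dots,x_k)$ be the root-to-$x$ path with $k\le\ell$, and let $\tau^{(x)}=\{\alpha_u,\alpha_v\}$ with associated alternating path $P=P^{(x)}$ of length $>2L+2$. For each choice $i\in[L]$ in Line~\ref{line:choices}, the child $y_i$ of $x$ is obtained by $\textsf{Shift}(\chi,e,P_{\le i})$, which makes the $i$-th edge $e_i=(v_{i-1},v_i)$ of $P$ uncolored and recurses with blocking colors $\{\alpha_u,\alpha_v\}$. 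First I would record, via Observation~\ref{ob:parent:child:main}, that any non-terminal child $y_i$ has $\tau^{(y_i)}\cap\tau^{(x)}=\emptyset$; combined with the definition of dirty, a child $y_i$ is dirty-and-nonterminal only if $\tau^{(y_i)}$ meets $\tau^{(x_0)}\cup\cdots\cup\tau^{(x_{k-1})}$, a set of at most $2k\le 2\ell$ colors none of which is $\alpha_u$ or $\alpha_v$.

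The heart of the argument is a counting bound on how many of the $L$ prefix-positions $i$ can produce a child $y_i$ that is dirty but \emph{not} terminal. When we run $\textsf{ExtendColoring}(G,\chi',e_i,\{\alpha_u,\alpha_v\})$ at $y_i$, the type $\tau^{(y_i)}$ consists of a missing color at $v_{i-1}$ and a missing color at $v_i$ under the shifted coloring $\chi'$, chosen to avoid the two blocking colors $\alpha_u,\alpha_v$. For $\tau^{(y_i)}$ to hit one of the $\le 2\ell$ ``ancestor colors,'' one of the two endpoints $v_{i-1},v_i$ of $e_i$ must be missing such a color under $\chi'$ — and since shifting along $P$ only changes colors on the edges of $P$ (and $e$), the missing set at an internal vertex $v_j$ of $P$ changes in a controlled way: a fixed color $c$ can be missing at $v_j$ (under the relevant shifted coloring) only if $c\in\{\alpha_u,\alpha_v\}$ or $v_j$ has an incident $P$-edge that got recolored, which near position $i$ means $v_j$ is one of $v_{i-1},v_i$. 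Fix one ancestor color $c$ (one of at most $2\ell$ of them). I claim the number of indices $i$ for which $c$ is available to be chosen into $\tau^{(y_i)}$ (i.e.\ $c\in\miss_{\chi'}(v_{i-1})$ or $c\in\miss_{\chi'}(v_i)$) is at most a constant: $c$ is not on $P$ in more than two consecutive edges' worth of positions, because $P$ is an $\{\alpha_u,\alpha_v\}$-alternating path and $c\notin\{\alpha_u,\alpha_v\}$, so the only way $c$ can fail to block at $v_{i-1}$ or $v_i$ is at the finitely many spots where the $P$-edges incident to $v_{i-1}$ or $v_i$ are exactly the ones uncolored/recolored by the shift — a bounded window around each occurrence. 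Summing over the $\le 2\ell$ ancestor colors gives at most $O(\ell)$ "bad" indices; choosing the hidden constants in $L=10^3\ell^2(\Delta^2+\sqrt{\Delta n})$ and in the threshold $L/(10\ell)$ appropriately, $O(\ell)=o(L/(10\ell))$, so fewer than $L/(10\ell)-L/(40\ell)$ of the dirty children can be nonterminal.

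Putting it together: $x$ contaminated means at least $L/(10\ell)$ children are dirty; at most $L/(40\ell)$ of those dirty children fail to be terminal by the counting bound above (after absorbing the $O(\ell)$ estimate into the gap $L/(10\ell)-L/(40\ell)=3L/(40\ell)$, which dominates $O(\ell)$ for our value of $L$); hence at least $L/(10\ell)-L/(40\ell)= 3L/(40\ell) \ge L/(40\ell)$ children are terminal, as claimed. I would close by noting that the termination itself is genuine: a dirty nonterminal child would, at its first recursive step, be forced to pick a type meeting the blocking colors — contradicting Observation~\ref{ob:parent:child:main} — \emph{or} have an alternating path of length $\le 2L+2$ and thus terminate at Line~\ref{line:end 2}, or find a common missing color and terminate at Line~\ref{line:end 1}; it is precisely this trichotomy that converts "dirty" into "terminal."

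\textbf{Main obstacle.} The delicate point is the quantitative claim that each ancestor color $c\notin\{\alpha_u,\alpha_v\}$ can be a \emph{candidate} for $\tau^{(y_i)}$ for only $O(1)$ values of $i$. This requires carefully tracking how $\miss_{\chi'}(v_{i-1})$ and $\miss_{\chi'}(v_i)$ evolve as the shift position $i$ slides along $P$, and using the alternating-path structure (color $c$ appears on $P$ at isolated, non-adjacent edges only if $c$ coincides with $\alpha_u$ or $\alpha_v$, which it does not) to bound the window of positions where $c$ can sneak into the type. Getting the constant here small enough that $O(\ell)$ ancestor-color contributions stay well below $3L/(40\ell)$ is where the chosen magnitude of $L$ (quadratic in $\ell$, with the large constant $10^3$) is spent, and it is the step I expect to require the most care.
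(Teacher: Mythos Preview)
Your central quantitative claim is false, and the argument collapses without it. You assert that for each ancestor color $c\notin\{\alpha_u,\alpha_v\}$, the number of indices $i\in[L]$ with $c\in\miss_{\chi'}(v_{i-1})\cup\miss_{\chi'}(v_i)$ is $O(1)$. But whether $c$ is missing at a vertex $v_j$ on $P$ has nothing to do with the colors on $P$: since $c\notin\{\alpha_u,\alpha_v\}$, the shift along $P$ does not touch any $c$-colored edge, and $c\in\miss_{\chi'}(v_j)$ iff $c\in\miss_{\chi^{(x)}}(v_j)$. There is no structural reason $c$ cannot be missing at \emph{every} vertex of $P$ in the original coloring; a single ancestor color can therefore be ``available'' at all $L$ positions, not just $O(1)$ of them. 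Your justification (``$c$ is not on $P$ in more than two consecutive edges' worth of positions'') is true but irrelevant---$c$ is on \emph{zero} edges of $P$, which says nothing about whether $c$ appears on the \emph{non-$P$} edges incident to $v_j$. The final ``trichotomy'' is likewise off: a dirty child can perfectly well be non-terminal (its type meets an earlier ancestor's type, not the parent's, and its alternating path can still be long); Observation~\ref{ob:parent:child:main} only forbids overlap with the \emph{immediate} parent's type.

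The paper takes a completely different, global route. It does not try to bound how many positions $i$ can produce a dirty child; instead it looks at the alternating paths $P^{(y)}$ of the (non-trivially) dirty children $y$ themselves. Because these children avoid the blocking colors $\tau^{(x)}$, all the paths $\{P^{(y)}\}$ coexist in the single coloring $\chi^{(x)}$, and each has a type meeting the $\le 4\ell$ ancestor colors, hence lies in a pool of at most $O(\ell\Delta)$ types. Since all alternating paths of any fixed type are vertex-disjoint, their total length is $\le 2n$ per type, so $\sum_y |P^{(y)}|\le O(\ell\Delta n)$. With $|D\setminus D_t|\ge L/(20\ell)$ non-trivial dirty children, the \emph{average} path length is $O(\ell^2\Delta n/L)\le L$ (this is exactly where $L=\tilde\Omega(\sqrt{\Delta n})$ is spent), so at least half of them have length $\le 2L$ and are terminal. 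The key idea you are missing is this total-length-over-few-types counting, not a local position-by-position argument along $P^{(x)}$.
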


\begin{proof}
    Let $(x_0, x_1, \ldots, x_k)$ denote the unique path from the root to $x$ in $\mathcal{T}$, with $r = x_0$ and $x = x_k$. Thus, at most $2(k+1) \leq 2 (\ell+1) \leq 4\ell$ distinct colors appear in the set $\tau^{(x_0)} \cup \tau^{(x_1)} \cup \cdots \cup \tau^{(x_k)} = \mathcal{C}^{\star}$ (say); because
    $|\tau^{(x_i)}| = 2$ for all $i \in [k]$. Let $K^{\star} := \{ \tau \in  \binom{[\Delta+1]}{2} : \tau \cap \mathcal{C}^{\star} \neq \emptyset\}$ denote the collection of types with at least one color from $\mathcal{C}^{\star}$. Note that $|K^{\star}| \leq |\mathcal{C}^{\star}| (\Delta+1)  \leq 4\ell(\Delta+1) \leq 8 \ell \Delta$. Let $D$ denote the set of dirty children of $x$. By definition, for each meta-node $y \in D$, we have $\tau^{(y)} \in K^{\star}$. Furthermore, since $x$ is contaminated, it follows that $|D| \geq L/(10\ell)$. 

    Say that a meta-node is {\bf trivial} iff the corresponding call of Algorithm~\ref{alg:extend:main} ends at \Cref{line:end 1}. Thus, every trivial meta-node is terminal, but not vice versa. Let $D_t \subseteq D$ denote the set of trivial meta-nodes that belong to  $D$. If at least half of the meta-nodes in $D$ are trivial, then  $|D_t| \geq |D|/2 \geq L/(20\ell)$, and the lemma follows. Thus, for the rest  of the proof we assume that:
    \begin{equation}
    \label{eq:trivial}
    |D \setminus D_t| \geq |D|/2 \geq L/(20 \ell).
    \end{equation}
    Since $\tau^{(y)} \cap \tau^{(x)} = \emptyset$ for all $y \in D \setminus D_t$ (see Observation~\ref{ob:parent:child:main}), all the  alternating paths in the collection $\mathcal{P}' := \{ P^{(y)} \}_{y \in D \setminus D_t}$ exist {\em simultaneously} in the graph $G$ w.r.t.~the partial $(\Delta+1)$-edge coloring $\chi^{(x)}$. To be more precise, this means that for every meta-node $y \in D \setminus D_t$  and every edge $e \in P^{(y)}$, we have $\chi^{(y)}(e) = \chi^{(x)}(e)$.  Furthermore, we have already inferred that each path $P^{(y)} \in \mathcal{P}'$ is of a type $\tau^{(y)} \in K^{\star}$. Next, note that the total length of all possible alternating paths of a given type (w.r.t.~a specific partial coloring) is at most $2n$. This holds because such paths are vertex-disjoint, except the same path being possibly counted twice from two opposite directions. 
    Thus, we have $
    \sum_{P \in \mathcal{P}'} \texttt{length}(P) \leq 2n |K^{\star}|  \leq 16 \ell \Delta n$. Hence, the average length of an alternating path $P^{(y)} \in \mathcal{P}'$ is at most $16\ell  \Delta  n/|\mathcal{P}'| = 16 \ell \Delta n/|D \setminus D_t| \leq 320 \ell^2 \Delta n/L \leq L$, where the second-last inequality follows from~(\ref{eq:trivial}) and the last inequality\footnote{This is the only place in our analysis where we require $L$ to be  $\tilde{\Omega}(\sqrt{\Delta n})$.} follows from~(\ref{eq:parameters:main}).   This implies that at least half of the alternating paths in $\mathcal{P}'$ have length at most $2L$. So, at least half of the meta-nodes $y \in D \setminus D_t$ have $\texttt{length}\brac{P^{(y)}} \leq 2L$; and such meta-nodes are terminals. We therefore conclude that at least $|D \setminus D_t|/2 \geq L/(40 \ell)$ children of $x$ are terminal meta-nodes. 
\end{proof}

We say that a meta-node $x \in V(\mathcal{T})$ is {\bf congenitally clean} iff  every ancestor of $x$, along with $x$ itself, is clean. Next, consider any meta-node $x \in V(\mathcal{T})$ at depth (say) $k$ in $\mathcal{T}$. Let $(x_0, x_1, \ldots, x_k)$ denote the unique path from the root to $x$ in $\mathcal{T}$, with $r = x_0$ and $x = x_k$. Then we refer to the ordered tuple of types $\left(\tau^{(x_0)}, \tau^{(x_1)}, \ldots, \tau^{(x_k)}\right)$ as the {\bf transcript} of $x$. We now upper bound the number of congenitally clean vertices with the same transcript.

\begin{lemma}\label{lem:expansion:main}
 Let $\tau_0,\dots,\tau_i$ be a sequence of types. Then there can be at most $n$ congenitally clean meta-nodes with transcript $= (\tau_0, \ldots, \tau_i)$.
\end{lemma}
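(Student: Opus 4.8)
The plan is to fix the sequence of types $(\tau_0,\dots,\tau_i)$ and understand how a congenitally clean meta-node $x$ with this transcript is determined by the execution of the algorithm. The key observation is that the alternating path $P^{(x_j)}$ at each meta-node $x_j$ along the root-to-$x$ path is an alternating path \emph{of type $\tau_j$ with respect to the initial coloring} $\chi = \chi^{(r)}$. This is because, by the definition of congenitally clean, every $x_j$ (for $j \le i$) is clean, meaning $\tau^{(x_j)} = \tau_j$ is disjoint from all of $\tau_0,\dots,\tau_{j-1}$; hence when the algorithm performed the $\textsf{Shift}$ operations that led from $x_{j'}$ to $x_{j'+1}$ for $j' < j$, it only recolored edges whose colors lie in $\tau_{j'} \subseteq \{\tau_0,\dots,\tau_{j-1}\}$, none of which is in $\tau_j$. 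So the edges of color-class $\tau_j$ are untouched, and $P^{(x_j)}$ coincides with the $\tau_j$-alternating path through the relevant endpoint in the \emph{original} graph-coloring $(G,\chi)$.

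With this in hand, the argument is an induction on the length of the transcript. First I would establish the base case: the root $r = x_0$ is the unique meta-node with transcript $(\tau_0)$, so there is at most $1 \le n$ of them. For the inductive step, I would argue that each congenitally clean meta-node $x$ with transcript $(\tau_0,\dots,\tau_i)$ has a congenitally clean parent $x_{i-1}$ with transcript $(\tau_0,\dots,\tau_{i-1})$, and that the uncolored edge $e^{(x)}$ passed to the call at $x$ is the $\alpha$-th edge of $P^{(x_{i-1})}$ for the random choice $\alpha \in [L]$ made at $x_{i-1}$. Now, crucially, by the observation above, $P^{(x_{i-1})}$ is a fixed alternating path of type $\tau_{i-1}$ in the original coloring $(G,\chi)$ — determined only by its starting endpoint, which is an endpoint of $e^{(x_{i-1})}$. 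And across \emph{all} congenitally clean meta-nodes $x_{i-1}$ with transcript $(\tau_0,\dots,\tau_{i-1})$, the paths $P^{(x_{i-1})}$ are the $\tau_{i-1}$-alternating paths in one and the same coloring $(G,\chi)$; since alternating paths of a fixed type are vertex-disjoint (each vertex lies on at most one such maximal path), these paths collectively contain at most $n$ edges. Therefore the uncolored edges $e^{(x)}$ — each of which is an edge lying on one of these vertex-disjoint $\tau_{i-1}$-alternating paths — range over a set of size at most $n$, and $e^{(x)}$ together with the transcript determines $x$. Hence there are at most $n$ congenitally clean meta-nodes with transcript $(\tau_0,\dots,\tau_i)$.

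I should double-check one subtlety in the counting: I want to conclude ``at most $n$'' rather than ``at most $2n$''. The paths of a fixed type are vertex-disjoint as \emph{unoriented} paths, so as edge-sets they use at most $n-1 < n$ edges total, even though each oriented path may be counted twice. Since $e^{(x)}$ is determined (as an unoriented edge) by $x$, and distinct congenitally clean $x$ with the same transcript must have distinct $e^{(x)}$ (the transcript plus $e^{(x)}$ reconstructs the whole execution: the sequence of colorings, paths, shift-endpoints are all forced), the bound $n$ follows. I would also want to state explicitly that two congenitally clean meta-nodes with the same transcript and the same uncolored edge $e^{(x)}$ must be equal — this is because the transcript fixes all the types, and given the types, the alternating paths and the shift operations are deterministic functions of the current coloring, which in turn is a deterministic function of the original coloring and the prefix of the transcript (no randomness survives once we condition on which path-edge was chosen, i.e.\ on $e^{(x_j)}$ at each level, which is what the induction tracks).

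The main obstacle I anticipate is making the ``colors in $\tau_j$ are untouched by earlier shifts'' claim fully rigorous, because the $\textsf{Shift}$ subroutine does more than just recolor edges of the two colors in $\tau_{j'}$ — it also uncolors one edge and colors the previously-uncolored edge $e^{(x_{j'})}$ with $\alpha_v \in \tau_{j'}$. One needs to check that all these changes stay within the color set $\bigcup_{j' < j}\tau_{j'}$ (the shift pushes colors of $P^{(x_{j'})}$, which are exactly $\tau_{j'}$, down the path, and assigns $\alpha_v \in \tau_{j'}$ to $e^{(x_{j'})}$), so that the restriction of the coloring to edges currently colored by something in $\tau_j$ is genuinely invariant — and hence the $\tau_j$-alternating path through a given vertex is the same under $\chi^{(x_j)}$ as under $\chi$. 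This requires carefully reading off exactly which edges $\textsf{Shift}$ modifies and verifying the color-set containment; it is the one place where the disjointness built into ``clean'' is really being exploited, so it deserves a careful sentence or two rather than being asserted.
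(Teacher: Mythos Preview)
Your approach is in the same spirit as the paper's---both hinge on the observation that, because the types along a congenitally clean root-to-$x$ path are pairwise disjoint, every $P^{(x_j)}$ is in fact a maximal $\tau_j$-alternating path in the \emph{original} coloring $\chi^{(r)}$. The paper packages this into a slightly different inductive statement: it proves that for all $x \in \Gamma_j$ the \emph{length-$L$ prefixes} $P^{(x)}_{\le L}$ are pairwise vertex-disjoint, which immediately gives $|\Gamma_i|\le n$.

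Your edge-counting variant is natural, but the injectivity step has a gap. You assert that ``the transcript plus $e^{(x)}$ reconstructs the whole execution,'' yet your justification only shows that the \emph{sequence} $(e^{(x_0)},\dots,e^{(x_i)})$ together with the transcript determines $x$. Going from $e^{(x)}$ alone back to the parent is not automatic: two distinct congenitally clean parents $x_{i-1}\neq y_{i-1}$ with the same transcript could a priori share the same starting vertex $u^{(x_{i-1})}=u^{(y_{i-1})}$ (nothing in ``$e^{(x_{i-1})}\neq e^{(y_{i-1})}$'' prevents two edges from sharing an endpoint), in which case $P^{(x_{i-1})}=P^{(y_{i-1})}$ as oriented paths and the same child edge $e^{(x)}=e^{(y)}$ arises from both. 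Your argument does not rule this out.

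The fix is exactly the observation the paper uses: strengthen the inductive hypothesis to vertex-disjointness of the prefixes $P^{(x)}_{\le L}$ (which in particular forces distinct starting vertices at the next level), and handle the ``same unoriented path, opposite orientations'' case by noting that non-terminal congenitally clean nodes have $\texttt{length}(P)\ge 2L+3$, so the first $L$ edges from each end are disjoint. Once you add this, your proof and the paper's become essentially the same.
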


\begin{proof}
For $j \in [0, i]$, let $\Gamma_j \subseteq V(\mathcal{T})$ denote the set of congenitally clean meta-nodes with transcript $= (\tau_0, \ldots, \tau_j)$. Note that every meta-node in $\Gamma_j$ is at depth $= j$ in $\mathcal{T}$. 
\begin{claim}
\label{cl:expansion:main}
For all $j \in [0, i]$, the collection $\left\{P^{(x)}_{\leq L}\right\}_{x \in \Gamma_j}$ of length-$L$ prefixes are  vertex-disjoint.
\end{claim}

Setting $j = i$ in Claim~\ref{cl:expansion:main}, it follows that the size of the set $\Gamma_i$ is at most the maximum possible number of vertex-disjoint paths in the input graph $G$, which in turn, is at most $n$. This implies the lemma. Accordingly, from now on we focus on proving Claim~\ref{cl:expansion:main}. 

We will prove Claim~\ref{cl:expansion:main} via induction on $j$. Since $\Gamma_0 = \{r\}$, the claim trivially holds if $j = 0$. By induction hypothesis, we now assume that there exists an index $j^{\star} \in [0, i-1]$ such that the claim holds for all $j \leq j^{\star}$. Under this assumption, we will show that the claim holds for $j = j^{\star}+1$.

If there is a color that appears more than once across  the types $\tau_0, \ldots, \tau_{j^{\star}+1}$, then  a meta-node with the transcript $(\tau_0, \ldots, \tau_{j^{\star}+1})$ is not congenitally clean, and so $\Gamma_{j} = \emptyset$ for all $j \in [j^{\star}+1, i]$. Henceforth, we assume that the types $\tau_0, \ldots, \tau_{j^{\star}+1}$ are mutually disjoint.

Consider any two distinct meta-nodes $x,y \in \Gamma_{j^{\star}+1}$. Let $u^{(x)}$ and $u^{(y)}$ respectively denote the starting points of the alternating paths $P^{(x)}$ and $P^{(y)}$. Our induction hypothesis implies that $u^{(x)} \neq u^{(y)}$. Since the types $\tau_0, \ldots, \tau_{j^{\star}+1}$ are mutually disjoint, both the alternating paths $P^{(x)}$ and $P^{(y)}$ exist {\em simultaneously} in $G$ w.r.t.~the initial partial coloring  $\chi^{(r)}$. In other words, either they are two completely  
disjoint type-$\tau_{j^{\star}+1}$ alternating paths in $G$ w.r.t.~$\chi^{(r)}$, or essentially the same path (with the same set of edges) but with a different orientation. In the first case, their length-$L$ prefixes $P^{(x)}_{\leq L}$ and  $P^{(y)}_{\leq L}$ are clearly vertex-disjoint. In the second case, we note that $x$ and $y$ are {\em not} terminal meta-nodes (because every meta-node in $\Gamma_{j^{\star}+1}$ is congenitally clean, by definition); hence both $P^{(x)}$ and $P^{(y)}$ have length  $\geq 2L+2$, and so $P^{(x)}_{\leq L}$ and $P^{(y)}_{\leq L}$ are also vertex-disjoint. This concludes the proof of the claim.
\end{proof}

\subsection{Analyzing the Random Walk on the Meta-Tree: Proof of Lemma~\ref{lm:extension:main}}
\label{sec:algo:analysis:main}

Our analysis will crucially rely on the behavior of the random walk within a certain {\bf critical subtree} $\mathcal{T}^{\star}$ of $\mathcal{T}$. We define this critical subtree below, and then summarize a few of its key properties.

\begin{wrapper}
The ``critical subtree'' $\mathcal{T}^{\star}$ is obtained by starting with $\mathcal{T}$, and then deleting every meta-node $x \in V(\mathcal{T})$ that satisfies at least one of the following conditions: (i) $x$ has a dirty ancestor in $\mathcal{T}$, (ii) $x$ has a contaminated ancestor in $\mathcal{T}$, and (iii) $x$ is at a depth strictly greater than $\ell$ in $\mathcal{T}$. We let $V(\mathcal{T}^{\star}) \subseteq V(\mathcal{T})$ denote the set of meta-nodes in the critical subtree.
\end{wrapper}
Note that if the root $r$ is itself a terminal, then Lemma~\ref{lm:extension:main} trivially holds because the random walk ends at $r$. Thus, for the rest of the proof we assume that {\bf the root $r$ is {\em not} a terminal}.

\begin{observation}
\label{obs:connect:0}
\label{obs:connect}
$\mathcal{T}^{\star}$ is a connected subtree of $\mathcal{T}$, rooted at $r$, and $r$ is {\em not} a leaf in $\mathcal{T}^{\star}$. Also, for every   $x \in V(\mathcal{T}) \setminus V(\mathcal{T}^{\star})$, the unique path in $\mathcal{T}$ from  $r$ to  $x$ passes through some leaf in $\mathcal{T}^{\star}$.
\end{observation}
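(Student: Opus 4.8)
The plan is to prove all three assertions by unwinding the definition of $\mathcal{T}^{\star}$, the key structural fact being that $\mathcal{T}^{\star}$ is \emph{ancestor-closed}: whenever a meta-node lies in $V(\mathcal{T}^{\star})$, so does each of its ancestors. First I would fix, once and for all, the reading of the deletion rule: a meta-node $x$ is removed from $\mathcal{T}$ precisely when $x$ has a dirty \emph{strict} ancestor, or a contaminated strict ancestor, or sits at depth $>\ell$; so a dirty or contaminated meta-node is itself retained in $\mathcal{T}^{\star}$ (provided its strict ancestors are all clean and non-contaminated and it lies at depth $\le\ell$), while \emph{all} of its strict descendants are deleted. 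With this convention, ancestor-closedness is immediate: if $y$ is a strict ancestor of $x\in V(\mathcal{T}^{\star})$ and $y$ were deleted, then the dirty/contaminated strict ancestor witnessing $y$'s deletion would also be a strict ancestor of $x$, or else $y$ has depth $>\ell$ and hence so does $x$; either way $x$ would be deleted, a contradiction. Since $r$ has no strict ancestors and depth $0\le\ell$, we get $r\in V(\mathcal{T}^{\star})$, and ancestor-closedness then shows $\mathcal{T}^{\star}$ is a connected subtree rooted at $r$ (every $x\in V(\mathcal{T}^{\star})$ reaches $r$ along the path of its ancestors, which lies entirely in $V(\mathcal{T}^{\star})$).

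Next I would argue $r$ is not a leaf of $\mathcal{T}^{\star}$. By the standing assumption $r$ is not a terminal, so it has $L\ge1$ children in $\mathcal{T}$, and I will show each such child $y$ survives. The depth of $y$ is $1\le\ell$, so the depth condition does not delete $y$, and the only strict ancestor of $y$ is $r$; the meta-node $r$ is clean vacuously (there are no earlier types along the root-to-$r$ path), so it remains to check that $r$ is not contaminated. Here I invoke the Remark (only non-terminal meta-nodes are dirty) together with Observation~\ref{ob:parent:child:main}: any non-terminal child $y'$ of $r$ satisfies $\tau^{(r)}\cap\tau^{(y')}=\emptyset$ and is therefore clean, so $r$ has \emph{no} dirty child and hence is not contaminated. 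Thus no deletion condition applies to $y$, so $y\in V(\mathcal{T}^{\star})$ and $r$ is not a leaf.

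For the last assertion, take any $x\in V(\mathcal{T})\setminus V(\mathcal{T}^{\star})$ and let $r=x_0,x_1,\dots,x_k=x$ be the root-to-$x$ path in $\mathcal{T}$. Let $j$ be the largest index with $x_j\in V(\mathcal{T}^{\star})$; this is well defined since $x_0=r\in V(\mathcal{T}^{\star})$, and $j<k$ since $x_k=x$ is deleted. By maximality $x_{j+1}\notin V(\mathcal{T}^{\star})$, so $x_{j+1}$ satisfies one of the deletion conditions, and I claim each possibility forces $x_j$ to be a leaf of $\mathcal{T}^{\star}$ — i.e.\ every child of $x_j$ is deleted — which finishes the proof, as the root-to-$x$ path then passes through the leaf $x_j$. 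If $x_{j+1}$ has depth $>\ell$, then $x_j$ has depth exactly $\ell$ (it is $\le\ell$ since $x_j$ survives), so all children of $x_j$ lie at depth $\ell+1$ and are deleted. If instead $x_{j+1}$ has a dirty (resp.\ contaminated) strict ancestor $z$, then $z\in\{x_0,\dots,x_j\}$; $z$ cannot be a strict ancestor of $x_j$ (else $x_j$ itself would be deleted), so $z=x_j$, i.e.\ $x_j$ is dirty (resp.\ contaminated), and then every child of $x_j$ has a dirty (resp.\ contaminated) strict ancestor and is deleted. In all cases $x_j$ is a leaf of $\mathcal{T}^{\star}$ on the root-to-$x$ path, as required.

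The argument is essentially a definition chase; the one place that needs care — and which I would flag as the main pitfall — is the interplay between the ``strict ancestor'' reading of conditions (i)--(ii) and the status of a retained dirty/contaminated meta-node. It is precisely the fact that such a node is \emph{kept} in $\mathcal{T}^{\star}$ but has all of its descendants pruned, hence is a leaf of $\mathcal{T}^{\star}$, that makes the last assertion hold; under an ``inclusive ancestor'' reading the statement would fail, since a clean sibling of a dirty child could survive, leaving the path to a deleted descendant not passing through any leaf of $\mathcal{T}^{\star}$. I would therefore state this convention explicitly at the outset and use it consistently through all three parts.
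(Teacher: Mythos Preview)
Your proof is correct and follows essentially the same route as the paper's: establish that $V(\mathcal{T}^{\star})$ is ancestor-closed (the paper phrases this as ``if $x\in\mathcal{T}^{\star}$ then each of its siblings and ancestors is also in $\mathcal{T}^{\star}$''), deduce connectedness, and for the last assertion locate the last surviving vertex on the root-to-$x$ path and argue it is a leaf.

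Two points where you are more careful than the paper. First, you make the strict-ancestor reading of conditions (i)--(ii) explicit, which is indeed the intended reading (as witnessed by Observation~\ref{obs:partition}, where dirty and contaminated meta-nodes appear as leaves of $\mathcal{T}^{\star}$). Second, you actually verify that $r$ is not contaminated, via Observation~\ref{ob:parent:child:main}; the paper simply asserts that $r$ being congenitally clean forces all its children into $\mathcal{T}^{\star}$, which silently skips the contamination check. Your case analysis in the final part is equivalent to, though more explicit than, the paper's sibling-closure observation.
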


\begin{proof}
Since the $r$ is at depth $0$ and does not have any ancestor, it belongs to $\mathcal{T}^{\star}$. Furthermore, we have assumed that $r$ is {\em not} a terminal. This implies that, by definition, $r$ is a congenitally clean meta-node. So all the children of $r$ are part of $\mathcal{T}^{\star}$, and hence $r$ is {\em not} a leaf in $\mathcal{T}^{\star}$.

Next, note that if a meta-node $x$ is in $\mathcal{T}^{\star}$, then each of its siblings and each of its ancestors is also in $\mathcal{T}^{\star}$. This implies the observation.
\end{proof}

\begin{observation}
\label{obs:internal}
Every non-leaf meta-node in $\mathcal{T}^{\star}$ is congenitally clean and not contaminated.
\end{observation}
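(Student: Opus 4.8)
The statement to prove is: every non-leaf meta-node in $\mathcal{T}^\star$ is congenitally clean and not contaminated. The proof is essentially a direct unwinding of the definition of $\mathcal{T}^\star$, so the plan is short. Let $x$ be a non-leaf meta-node of $\mathcal{T}^\star$. First I would observe that, since $x$ survived the pruning that produced $\mathcal{T}^\star$, it has no dirty ancestor in $\mathcal{T}$ (condition (i)), no contaminated ancestor in $\mathcal{T}$ (condition (ii)), and lies at depth $\le \ell$ (condition (iii)). The only subtle point is that these conditions forbid \emph{ancestors} of $x$ from being dirty or contaminated, but we must also rule out $x$ \emph{itself} being dirty, contaminated, or a terminal in $\mathcal{T}$.

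For the ``$x$ itself'' part I would argue as follows. If $x$ were dirty, then every child of $x$ in $\mathcal{T}$ would have $x$ as a dirty ancestor and hence be deleted from $\mathcal{T}^\star$; since $x$ is at depth $\le \ell$ and is non-terminal, $x$ has $L$ children in $\mathcal{T}$, and in $\mathcal{T}^\star$ it would then be a leaf — contradicting the assumption that $x$ is a non-leaf of $\mathcal{T}^\star$. Hence $x$ is clean. Combined with the fact that no ancestor of $x$ is dirty, this gives that $x$ is congenitally clean. Similarly, if $x$ were contaminated, all its children in $\mathcal{T}$ would have a contaminated ancestor (namely $x$) and would be deleted, again forcing $x$ to be a leaf of $\mathcal{T}^\star$ — contradiction. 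So $x$ is not contaminated. (One should also note that $x$ is not a terminal in $\mathcal{T}$: a terminal has no children, so it would automatically be a leaf of $\mathcal{T}^\star$; thus $x$ is a genuine internal node with $L$ children, which is what makes the ``leaf'' contradictions above go through, and also what lets us speak of $x$ being clean/dirty, since by the Remark preceding Observation~\ref{ob:parent:child:main} only non-terminal meta-nodes are classified as clean or dirty.)

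I do not anticipate a real obstacle here; the one place to be careful is making the ``$x$ becomes a leaf of $\mathcal{T}^\star$'' argument airtight, i.e.\ confirming that \emph{all} $L$ children of $x$ in $\mathcal{T}$ are indeed removed when $x$ is dirty or contaminated (for contamination we only know $L/(10\ell)$ children are dirty, but condition (ii) deletes every descendant of $x$ through the contaminated-ancestor rule, not just the dirty children, so all children are removed regardless). Once that is spelled out, the observation follows immediately, and it sets up the later use of Lemma~\ref{lem:cont is good:main} and Lemma~\ref{lem:expansion:main} inside $\mathcal{T}^\star$.
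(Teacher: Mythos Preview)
Your proposal is correct and follows essentially the same argument as the paper: both observe that if $x$ were terminal, dirty, or contaminated, then $x$ would have no descendants in $\mathcal{T}^\star$ and hence be a leaf, so a non-leaf must be clean and uncontaminated, and likewise for all its ancestors. Your write-up is somewhat more detailed (explicitly checking the terminal case and spelling out why \emph{all} children get deleted under condition~(ii)), but the underlying reasoning is identical.
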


\begin{proof}
Consider any meta-node $x \in V(\mathcal{T}^{\star})$. If $x$ is terminal, then it is a leaf in $\mathcal{T}$ itself, and hence also a leaf in $\mathcal{T}^{\star}$. In contrast, if $x$ is either contaminated or dirty, then it cannot have any descendant in $\mathcal{T}^{\star}$. Thus, for $x$ to be a non-leaf meta-node in $\mathcal{T}^{\star}$, it must be clean and not contaminated. Since we can infer the same for every ancestor of $x$, the observation follows.
\end{proof}

The above  observations help us gain an intuitive understanding of the critical subtree $\mathcal{T}^{\star}$. Define the {\bf core} (resp.~{\bf boundary}) of $\mathcal{T}^{\star}$ to be the set of its non-leaf (resp.~leaf) meta-nodes. Every meta-node within the core is congenitally clean and not contaminated (see Observation~\ref{obs:internal}). The random walk on $\mathcal{T}$ undertaken by our algorithm starts at the root $r$, which is part of the core of $\mathcal{T}^{\star}$ (see Observation~\ref{obs:connect:0}). For a certain number of steps the random walk stays within the core. After that, at some point in time the random walk exits the core by reaching a meta-node (say) $x$ at the boundary of $\mathcal{T}^{\star}$ (see Observation~\ref{obs:connect}). If $x$ is terminal, then  the random walk ends at $x$. Otherwise,  it ventures out of $V(\mathcal{T}^{\star})$ in the subsequent step, and  never comes back to $V(\mathcal{T}^{\star})$ in future.

The above discussion also implies that for every meta-node $x \in V(\mathcal{T}^{\star})$, its depth in $\mathcal{T}^{\star}$ is the same as its depth in $\mathcal{T}$. Accordingly, from this point onward we will use the phrase ``the depth of a meta-node'' without explicitly referring to the underlying meta-tree.

\begin{observation}
\label{obs:partition}
Let $\mathcal{Z}^{\star}$ denote the set of  leaves in $\mathcal{T}^{\star}$. Then $\mathcal{Z}^{\star}$ is partitioned into four subsets:
\begin{itemize}
\item $\mathcal{Z}^{\star}_{\textnormal{\texttt{t}}} := \{ x \in \mathcal{Z}^{\star} : x \textnormal{ is terminal}\}$.
\item $\mathcal{Z}^{\star}_{\textnormal{\texttt{cc-cont}}} := \{ x \in \mathcal{Z}^{\star} : x \textnormal{ is congenitally clean and contaminated}\}$.
\item $\mathcal{Z}^{\star}_{\textnormal{\texttt{cc-not-cont}}} := \{ x \in \mathcal{Z}^{\star} : x \textnormal{ is congenitally clean and {\em not} contaminated}\}$.
\item $\mathcal{Z}^{\star}_{\textnormal{\texttt{d}}} := \{ x \in \mathcal{Z}^{\star} : x \textnormal{ is dirty}\}$.
\end{itemize}
\end{observation}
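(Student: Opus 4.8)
The plan is to verify the two halves of a partition statement: (a) the four listed sets are pairwise disjoint, and (b) every leaf of $\mathcal{T}^{\star}$ lies in (at least) one of them. Both parts are short once we unpack the definitions, so the proof will be essentially a careful bookkeeping argument using the Remark after the definition of clean/dirty.

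For disjointness, I would first recall that, by that Remark, \emph{only} non-leaf meta-nodes of $\mathcal{T}$ are classified as clean or dirty. Hence a terminal meta-node is neither clean nor dirty; in particular it is not dirty, and — since being congenitally clean requires $x$ itself to be clean — it is not congenitally clean. Likewise a dirty meta-node is not clean, hence not congenitally clean. Therefore $\mathcal{Z}^{\star}_{\textnormal{\texttt{t}}}$, $\mathcal{Z}^{\star}_{\textnormal{\texttt{d}}}$, and $\mathcal{Z}^{\star}_{\textnormal{\texttt{cc-cont}}} \cup \mathcal{Z}^{\star}_{\textnormal{\texttt{cc-not-cont}}}$ are pairwise disjoint; and $\mathcal{Z}^{\star}_{\textnormal{\texttt{cc-cont}}}$ and $\mathcal{Z}^{\star}_{\textnormal{\texttt{cc-not-cont}}}$ are disjoint from each other by the contaminated/not-contaminated dichotomy.

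For exhaustiveness, take an arbitrary leaf $x \in \mathcal{Z}^{\star}$. If $x$ is terminal, then $x \in \mathcal{Z}^{\star}_{\textnormal{\texttt{t}}}$ and we are done. Otherwise $x$ is an internal (non-leaf) meta-node of $\mathcal{T}$: indeed, it has exactly $L$ children in $\mathcal{T}$, all of which were deleted when forming $\mathcal{T}^{\star}$, and this can only happen if $x$ itself is dirty, or $x$ itself is contaminated, or $x$ is at depth exactly $\ell$ (because the ancestors of a child of $x$ are $x$ together with the ancestors of $x$, and $x \in V(\mathcal{T}^{\star})$ forbids $x$ from having a dirty or contaminated ancestor). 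In every such case $x$ is internal in $\mathcal{T}$ and hence classified as clean or dirty. If $x$ is dirty, then $x \in \mathcal{Z}^{\star}_{\textnormal{\texttt{d}}}$. If $x$ is clean, then since $x \in V(\mathcal{T}^{\star})$ has no deleted ancestor — in particular no dirty ancestor, and every ancestor of $x$ is itself internal in $\mathcal{T}$, hence clean — the meta-node $x$ is congenitally clean, and depending on whether $x$ is contaminated it lies in $\mathcal{Z}^{\star}_{\textnormal{\texttt{cc-cont}}}$ or $\mathcal{Z}^{\star}_{\textnormal{\texttt{cc-not-cont}}}$. Combined with disjointness, this gives the claimed partition.

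I do not anticipate a genuine obstacle here; the one point to be careful about is the step identifying a non-terminal leaf of $\mathcal{T}^{\star}$ as an internal node of $\mathcal{T}$ (so that it is actually classified clean/dirty), which rests on the observation that the only ways all of $x$'s $\mathcal{T}$-children get deleted are $x$ being dirty, $x$ being contaminated, or $\mathrm{depth}(x) = \ell$. Everything else is immediate from the definitions.
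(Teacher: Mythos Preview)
Your proof is correct and follows the same underlying idea as the paper's: use the terminal/clean/dirty trichotomy for meta-nodes, and observe that any clean meta-node in $\mathcal{T}^{\star}$ is automatically congenitally clean because membership in $\mathcal{T}^{\star}$ forbids dirty ancestors. The paper's proof is considerably terser---it simply invokes the trichotomy and the ``no dirty ancestor'' property, without your intermediate case analysis on \emph{why} all $\mathcal{T}$-children of a leaf of $\mathcal{T}^{\star}$ were deleted (dirty, contaminated, or depth $\ell$); that analysis is correct but unnecessary for establishing the partition, since once $x$ is known to be non-terminal it is already classified as clean or dirty.
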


\begin{proof}
This holds because the set of meta-nodes is partitioned into three substes: terminal, clean and dirty. Finally, every clean meta-node in $\mathcal{T}^{\star}$ is  congenitally clean, because it cannot have any dirty ancestor.
\end{proof}

\begin{corollary}
\label{cor:partition}
Let $\mathcal{E}^{\star}_{\texttt{d}}, \mathcal{E}^{\star}_{\texttt{t}},\mathcal{E}^{\star}_{\texttt{cc-cont}}$ and $\mathcal{E}^{\star}_{\texttt{cc-not-cont}}$ respectively denote the  events that the random walk undertaken by our  algorithm reaches a meta-node in $\mathcal{Z}^{\star}_{\texttt{d}}, \mathcal{Z}^{\star}_{\texttt{t}},\mathcal{Z}^{\star}_{\texttt{cc-cont}}$ and $\mathcal{Z}^{\star}_{\texttt{cc-not-cont}}$. These four events are mutually exclusive and exhaustive, and hence:
$$\Pr\left[\mathcal{E}^{\star}_{\texttt{d}}\right] + \Pr\left[\mathcal{E}^{\star}_{\texttt{t}}\right]
+
\Pr\left[\mathcal{E}^{\star}_{\texttt{cc-cont}}\right]  + \Pr\left[\mathcal{E}^{\star}_{\texttt{cc-not-cont}}\right] = 1.$$
\end{corollary}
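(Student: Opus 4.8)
The plan is to show that the random walk, with probability $1$, visits exactly one leaf of the critical subtree $\mathcal{T}^{\star}$, and then to read off the claim from Observation~\ref{obs:partition}.

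First I would establish that the walk cannot escape $V(\mathcal{T}^{\star})$ without passing through a leaf of $\mathcal{T}^{\star}$, and that it is in fact forced to reach such a leaf within $\ell$ steps. The walk begins at $r$, which by Observation~\ref{obs:connect:0} lies in the core of $\mathcal{T}^{\star}$ (it is not a leaf of $\mathcal{T}^{\star}$). Whenever the walk sits at a core meta-node $x$, Observation~\ref{obs:internal} tells us $x$ is congenitally clean and not contaminated; moreover $x$ has depth $\le \ell-1$, since any meta-node of depth $\ell$ in $V(\mathcal{T}^{\star})$ has all of its children removed by condition (iii) in the definition of $\mathcal{T}^{\star}$ and hence is a leaf of $\mathcal{T}^{\star}$. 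Consequently every child $y$ of $x$ has no dirty or contaminated ancestor (its ancestors are exactly $x$ and the ancestors of $x$, all of which are clean and not contaminated) and has depth $\le \ell$; therefore $y \in V(\mathcal{T}^{\star})$. Thus the walk stays inside $V(\mathcal{T}^{\star})$ as long as it occupies core meta-nodes, and since each step increases the depth by one, after at most $\ell$ steps it reaches a meta-node of depth $\ell$, which is necessarily a leaf of $\mathcal{T}^{\star}$. Hence the walk reaches the leaf set $\mathcal{Z}^{\star}$ of $\mathcal{T}^{\star}$ deterministically, within $\ell$ steps.

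Next I would argue that the walk visits $\mathcal{Z}^{\star}$ exactly once. If the leaf it reaches is terminal, the walk stops there. Otherwise, the next meta-node lies outside $V(\mathcal{T}^{\star})$, and the walk never re-enters $V(\mathcal{T}^{\star})$: by Observation~\ref{obs:connect} every root-to-node path in $\mathcal{T}$ ending outside $V(\mathcal{T}^{\star})$ already passes through a leaf of $\mathcal{T}^{\star}$; equivalently, any child of a meta-node possessing a dirty ancestor, a contaminated ancestor, or depth $> \ell$ inherits the same disqualifying property, so the complement of $V(\mathcal{T}^{\star})$ is closed under taking descendants. Therefore there is a unique leaf $z \in \mathcal{Z}^{\star}$ visited by the walk. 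By Observation~\ref{obs:partition}, $\mathcal{Z}^{\star}$ is the disjoint union of $\mathcal{Z}^{\star}_{\texttt{t}}$, $\mathcal{Z}^{\star}_{\texttt{cc-cont}}$, $\mathcal{Z}^{\star}_{\texttt{cc-not-cont}}$ and $\mathcal{Z}^{\star}_{\texttt{d}}$, so $z$ lies in exactly one of these four sets; equivalently, exactly one of the four events $\mathcal{E}^{\star}_{\texttt{t}}$, $\mathcal{E}^{\star}_{\texttt{cc-cont}}$, $\mathcal{E}^{\star}_{\texttt{cc-not-cont}}$, $\mathcal{E}^{\star}_{\texttt{d}}$ occurs on every sample path. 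Summing probabilities then yields the stated identity.

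The argument is essentially bookkeeping, and I expect no serious obstacle. The only point I would take care to spell out is that the walk genuinely reaches $\mathcal{Z}^{\star}$ — this uses both the depth cap $\le \ell$ and the fact that children of core meta-nodes remain in $\mathcal{T}^{\star}$ — and that it cannot oscillate in and out of $V(\mathcal{T}^{\star})$, so that the four events are truly exhaustive (and not merely exhaustive up to a null event), which is what makes the probabilities sum to exactly $1$.
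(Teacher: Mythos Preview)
Your proposal is correct and follows essentially the same approach as the paper, which simply states that the corollary follows from Observation~\ref{obs:connect} and Observation~\ref{obs:partition}. You have spelled out in detail the argument that the walk deterministically reaches exactly one leaf of $\mathcal{T}^{\star}$ and then invoked the partition of $\mathcal{Z}^{\star}$, which is precisely what the paper leaves implicit in its one-line proof.
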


\begin{proof}
Follows from Observation~\ref{obs:connect} and Observation~\ref{obs:partition}.
\end{proof}

\begin{observation}
\label{obs:depth}
Every meta-node in $\mathcal{Z}^{\star}_{\texttt{cc-not-cont}}$ is at depth $\ell$.
\end{observation}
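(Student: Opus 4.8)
The plan is a short argument by contradiction: I will show that a congenitally clean, non-contaminated leaf of $\mathcal{T}^{\star}$ at depth strictly less than $\ell$ cannot in fact be a leaf of $\mathcal{T}^{\star}$, since all of its children survive into $\mathcal{T}^{\star}$. Combined with the depth-$\le \ell$ constraint built into the definition of $\mathcal{T}^{\star}$, this forces such a meta-node to sit exactly at depth $\ell$.

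First I would record two preliminary facts. (1) If $x$ is congenitally clean then, by definition, $x$ itself is clean; since only non-terminal (non-leaf) meta-nodes are ever classified as clean, $x$ is non-terminal, and hence $x$ has exactly $L \ge 1$ children in $\mathcal{T}$, one for each choice of $i \in [L]$ in \Cref{line:choices} of Algorithm~\ref{alg:extend:main}. (2) If $x \in V(\mathcal{T}^{\star})$, then by construction of the critical subtree $x$ has no dirty ancestor, no contaminated ancestor, and depth $\le \ell$ in $\mathcal{T}$; moreover, if in addition $x$ is congenitally clean and not contaminated, then $x$ itself is also neither dirty nor contaminated. Together, (2) says that \emph{every} node on the root-to-$x$ path, including $x$, is neither dirty nor contaminated.

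Now take $x \in \mathcal{Z}^{\star}_{\texttt{cc-not-cont}}$ and suppose for contradiction that $x$ is at depth $k \le \ell - 1$. By fact (1), $x$ has children $y_1, \dots, y_L$ in $\mathcal{T}$, each at depth $k+1 \le \ell$. Fix any such child $y = y_j$. The ancestors of $y$ in $\mathcal{T}$ are exactly $x$ together with the ancestors of $x$; by fact (2) none of these is dirty and none is contaminated, and $y$ itself has depth $\le \ell$. Hence $y$ satisfies none of the three deletion conditions (i)--(iii) defining $\mathcal{T}^{\star}$, so $y \in V(\mathcal{T}^{\star})$. As this holds for all $L$ children of $x$, the meta-node $x$ is not a leaf in $\mathcal{T}^{\star}$, contradicting $x \in \mathcal{Z}^{\star} \supseteq \mathcal{Z}^{\star}_{\texttt{cc-not-cont}}$. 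Therefore $x$ cannot be at depth $< \ell$; and since $x \in V(\mathcal{T}^{\star})$ rules out depth $> \ell$, we conclude that $x$ is at depth exactly $\ell$.

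I do not anticipate a genuine obstacle here — the statement is essentially a bookkeeping consequence of the definitions of ``congenitally clean'', ``contaminated'', and the deletion rule for $\mathcal{T}^{\star}$. The only point to be careful about is the one flagged in fact (1): ``clean/dirty'' is a label reserved for non-terminal meta-nodes, so one must explicitly invoke that a congenitally clean node is automatically non-terminal before claiming it has $L$ children; overlooking this would leave a gap when $x$ happens to be a terminal. Everything else is immediate.
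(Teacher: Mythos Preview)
Your proposal is correct and follows essentially the same argument as the paper: assume a node $x \in \mathcal{Z}^{\star}_{\texttt{cc-not-cont}}$ sits at depth $k < \ell$, observe that $x$ being clean forces it to be non-terminal with $L$ children, and then check that each child $y$ survives all three deletion rules (no dirty ancestor since $x$ is congenitally clean, no contaminated ancestor since $x$ is not contaminated and $x \in V(\mathcal{T}^{\star})$, depth $k+1 \le \ell$), contradicting that $x$ is a leaf of $\mathcal{T}^{\star}$. Your explicit remark that ``clean'' is only defined for non-terminal meta-nodes is a nice point of care that the paper's proof leaves implicit.
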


\begin{proof}
Consider any meta-node $x \in \mathcal{Z}^{\star}_{\texttt{cc-not-cont}}$. By definition, the depth of $x$ is no more than $\ell$. Suppose that $x$ is at a depth (say) $k < \ell$. Since $x$ is clean, it has  $L$ children in the meta-tree $\mathcal{T}$. 

Let $y$ be any  child of $x$ in $\mathcal{T}$. Since $x$ is congenitally clean, $y$ does not have any dirty ancestor. Also, since $x$  is {\em not contaminated}, $y$ cannot have any contaminated ancestor; for otherwise $x$ itself would have had the same contaminated ancestor and so $x$ would not be part of $\mathcal{T}^{\star}$. Finally, the meta-node $y$ is at depth $= k+1 \leq \ell$.  We therefore conclude that $y$ is part of $\mathcal{T}^{\star}$. But this contradicts our  assumption that $x$ is a leaf in $\mathcal{T}^{\star}$. So, the meta-node $x$ must be at depth $k = \ell$.
\end{proof}

Armed with these basic observations about the critical subtree, we are now ready to prove Lemma~\ref{lm:extension:main}. Our strategy will be to show that the event $\mathcal{E}^{\star}_{\texttt{cc-cont}} \cup \mathcal{E}^{\star}_t$ 
occurs with constant probability, and conditioned on this event, the random walk ends within $O(\log n)$ steps with probability $\Omega(1/\log n)$.

\begin{claim}
\label{cl:clean}
We have $\Pr\left[\mathcal{E}^{\star}_{\texttt{cc-not-cont}} \right] \leq 1/10$.
\end{claim}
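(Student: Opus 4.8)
\emph{Proof plan.} The plan is to expand $\Pr[\mathcal{E}^{\star}_{\texttt{cc-not-cont}}]$ as a sum of reach‑probabilities over $\mathcal{Z}^{\star}_{\texttt{cc-not-cont}}$ and reduce the claim to a counting bound on $|\mathcal{Z}^{\star}_{\texttt{cc-not-cont}}|$. First I would record two structural facts. By \Cref{obs:depth}, every meta-node $x \in \mathcal{Z}^{\star}_{\texttt{cc-not-cont}}$ lies at depth exactly $\ell$; and since $x$ is a leaf of $\mathcal{T}^{\star}$, each of its $\ell$ strict ancestors $r = x_0, x_1, \dots, x_{\ell-1}$ is a non-leaf of $\mathcal{T}^{\star}$, hence (by \Cref{obs:internal}) congenitally clean, not contaminated, and in particular non-terminal — so each $x_j$ has exactly $L$ children in $\mathcal{T}$. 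Because the random walk chooses a uniformly random child at each non-terminal meta-node it visits and there is a unique root-to-$x$ path, it reaches $x$ with probability exactly $L^{-\ell}$; and the events ``the walk reaches $x$'' for distinct $x \in \mathcal{Z}^{\star}_{\texttt{cc-not-cont}}$ are mutually exclusive, since the walk visits at most one meta-node at depth $\ell$. Hence $\Pr[\mathcal{E}^{\star}_{\texttt{cc-not-cont}}] = |\mathcal{Z}^{\star}_{\texttt{cc-not-cont}}| \cdot L^{-\ell}$.

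Next I would bound $|\mathcal{Z}^{\star}_{\texttt{cc-not-cont}}|$ by grouping these meta-nodes according to their transcript $(\tau_0, \tau_1, \dots, \tau_\ell)$. Here $\tau_0 = \tau^{(r)}$ is fixed by the algorithm, and each of $\tau_1, \dots, \tau_\ell$ is an element of $\binom{[\Delta+1]}{2}$, so the number of transcripts of length $\ell+1$ that can occur is at most $\binom{\Delta+1}{2}^{\ell} \le \Delta^{2\ell}$. Every meta-node in $\mathcal{Z}^{\star}_{\texttt{cc-not-cont}}$ is congenitally clean, so by \Cref{lem:expansion:main} at most $n$ of them share any fixed transcript. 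Therefore $|\mathcal{Z}^{\star}_{\texttt{cc-not-cont}}| \le n\,\Delta^{2\ell}$.

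Finally I would substitute the parameters. From~(\ref{eq:parameters:main}) we have $L = 10^{3}\ell^{2}(\Delta^{2} + \sqrt{\Delta n}) \ge 10^{3}\ell^{2}\Delta^{2}$, so $L^{\ell} \ge (10^{3}\ell^{2})^{\ell}\,\Delta^{2\ell}$, and thus $\Pr[\mathcal{E}^{\star}_{\texttt{cc-not-cont}}] \le n / (10^{3}\ell^{2})^{\ell}$. Since $\ell = 100\log n$, the denominator is at least $2^{\ell} = 2^{100\log n}$, which exceeds $10n$ once $n$ is above a fixed constant (tiny graphs being colored directly by any $\tilde O(m\Delta)$-time algorithm), giving $\Pr[\mathcal{E}^{\star}_{\texttt{cc-not-cont}}] \le 1/10$. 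The only point that needs care — more bookkeeping than a genuine obstacle — is the \emph{exactness} of the $L^{-\ell}$ reach-probability: it relies on every ancestor of an element of $\mathcal{Z}^{\star}_{\texttt{cc-not-cont}}$ being non-terminal with exactly $L$ children, which is precisely what membership of those ancestors in the core of $\mathcal{T}^{\star}$, together with \Cref{obs:internal}, guarantees.
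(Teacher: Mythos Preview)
Your proof is correct and follows essentially the same approach as the paper's: bound the reach-probability of each depth-$\ell$ leaf by $L^{-\ell}$, bound $|\mathcal{Z}^{\star}_{\texttt{cc-not-cont}}|$ by (number of transcripts)$\times n$ via \Cref{lem:expansion:main}, and plug in the parameters from~(\ref{eq:parameters:main}). Your version is marginally tighter in that you observe $\tau_0$ is fixed (yielding $\Delta^{2\ell}$ rather than the paper's $\Delta^{2(\ell+1)}$ transcripts) and that the reach-probabilities sum to an exact equality, but these refinements do not change the argument.
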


\begin{proof}
By  Observation~\ref{obs:depth}, every  meta-node $x \in \mathcal{Z}^{\star}_{\texttt{cc-not-cont}}$ is at depth $\ell$. As there are $\binom{\Delta+1}{2} \leq \Delta^2$ possible types, the  meta-nodes at depth $\ell$    have at most $(\Delta^2)^{\ell+1} = \Delta^{2(\ell+1)}$ possible transcripts. For each  transcript, there are at most $n$ meta-nodes in $\mathcal{Z}^{\star}_{\texttt{cc-not-cont}}$ (see Lemma~\ref{lem:expansion:main}). Thus, we get:
$$|\mathcal{Z}^{\star}_{\texttt{cc-not-cont}}| \leq \Delta^{2(\ell+1)} n.$$

Next, consider any meta-node $x \in \mathcal{Z}^{\star}_{\texttt{cc-not-cont}}$, and let $(x_0, x_1, \ldots, x_{\ell})$ be the unique path from the root $r$ to $x$ in $\mathcal{T}$ (and in $\mathcal{T}^{\star}_{\ell}$), with $r = x_0$ and $x = x_{\ell}$. By our construction of the meta-tree $\mathcal{T}$, every internal meta-node on this path has exactly $L$ children in $\mathcal{T}$. Thus, the random walk taken by our algorithm traverses this path (and ends up at $x$) with probability  $1/L^{\ell}$. Summing these probabilities over all $x \in \mathcal{Z}^{\star}_{\texttt{cc-not-cont}}$, we get: $$\Pr\left[\mathcal{E}^{\star}_{\texttt{cc-not-cont}} \right] \leq |\mathcal{Z}^{\star}_{\texttt{cc-not-cont}}| \cdot (1/L^{\ell}) \leq \Delta^{2(\ell+1)} n/L^{\ell}  <  1/10,$$ where the last inequality follows from~(\ref{eq:parameters:main}).
\end{proof}

\begin{claim}
\label{cl:dirty}
We have $\Pr\left[\mathcal{E}^{\star}_{\texttt{d}} \right] \leq 1/5$.
\end{claim}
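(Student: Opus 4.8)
The plan is to control the first time the random walk leaves the \emph{core} of the critical subtree $\mathcal{T}^{\star}$ (its set of non-leaf meta-nodes) and to show that this exit lands on a dirty leaf with probability at most $1/10$. The starting observation is that if $x \in \mathcal{Z}^{\star}_{\texttt{d}}$, then the parent $p$ of $x$ in $\mathcal{T}$ must itself lie in $V(\mathcal{T}^{\star})$ and be a non-leaf there (it has the child $x$), so $p$ is a core meta-node; by \Cref{obs:internal}, $p$ is congenitally clean and not contaminated. Being clean, $p$ has exactly $L$ children in $\mathcal{T}$; being \emph{not} contaminated, strictly fewer than $L/(10\ell)$ of them are dirty. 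Hence, conditioned on the walk visiting any particular core meta-node $p$, the probability that its very next step moves to a dirty child of $p$ is less than $1/(10\ell)$.

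Next I would decompose $\mathcal{E}^{\star}_{\texttt{d}}$ according to the depth at which the walk exits the core. By \Cref{obs:connect}, every meta-node outside $V(\mathcal{T}^{\star})$ is separated from $r$ by a leaf of $\mathcal{T}^{\star}$, so the walk leaves $V(\mathcal{T}^{\star})$, if at all, exactly once; and every core meta-node has depth at most $\ell-1$, since a depth-$\ell$ meta-node of $\mathcal{T}^{\star}$ has no child in $\mathcal{T}^{\star}$ and is therefore a leaf. Thus $\mathcal{E}^{\star}_{\texttt{d}}$ is the disjoint union, over $k \in \{0,1,\dots,\ell-1\}$, of the events $A_k$: ``the walk reaches a core meta-node at depth $k$, and its next step goes to a dirty child.'' The core meta-node visited at depth $k$ (if any) is determined by the first $k$ steps of the walk, and distinct core meta-nodes at depth $k$ give mutually exclusive events; conditioning on which one is reached and applying the one-step bound above yields $\Pr[A_k] \le 1/(10\ell)$. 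Summing over the at most $\ell$ values of $k$ gives $\Pr[\mathcal{E}^{\star}_{\texttt{d}}] \le \ell \cdot \tfrac{1}{10\ell} = \tfrac{1}{10} \le \tfrac{1}{5}$.

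I do not expect a serious obstacle here: the argument is essentially a union bound over the $\ell$ depth levels, where the per-level contribution is exactly the ``not contaminated'' guarantee of \Cref{obs:internal}. The only points requiring care are bookkeeping — confirming that the parent of a dirty leaf of $\mathcal{T}^{\star}$ is genuinely a core meta-node (so \Cref{obs:internal} applies to it), that the core occupies only depths $0$ through $\ell-1$ so that the sum has $\ell$ terms, and that the $1/(10\ell)$ factor is charged once per \emph{depth level} rather than once per meta-node, which is handled by the disjointness of the events ``the walk is at $p$ at depth $k$'' over distinct $p$.
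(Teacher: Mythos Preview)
Your proof is correct and follows essentially the same approach as the paper: both arguments track the walk level-by-level through the core of $\mathcal{T}^{\star}$ and use \Cref{obs:internal} (core meta-nodes are not contaminated) to bound the per-level chance of stepping to a dirty child by $1/(10\ell)$. The only cosmetic difference is that you package this as a union bound over the $\ell$ depths (yielding $1/10$), whereas the paper chains the complementary events multiplicatively via $(1-1/(10\ell))^{\ell} \geq 4/5$ to get the stated bound of $1/5$.
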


\begin{proof}
Consider any  $i \in [0, \ell-1]$. Let $\mathcal{E}_i$ denote the event that after $i$ recursive calls of Algorithm~\ref{alg:extend:main}, the concerned random walk is at a meta-node (say) $x_i$ that belongs to the core of $\mathcal{T}^{\star}$, and as a corollary, the event $\mathcal{E}^{\star}_{\texttt{d}}$  has not yet taken place. By Observation~\ref{obs:connect}, we have:
\begin{equation}
\label{eq:inference:0}
\Pr[\mathcal{E}_0] = 1.
\end{equation}
We will next prove the following inequality.
\begin{equation}
\label{eq:infer:main}
\Pr\left[ \, \overline{\mathcal{E}^{\star}_d} \cup \mathcal{E}_{i+1} \, \big| \, \mathcal{E}_i \, \right] \geq 1-\frac{1}{10\ell} \text{ for all } i \in [0, \ell-1].
\end{equation}
Towards this end, fix any $i \in [0, \ell-1]$, and condition on the event $\mathcal{E}_i$. Since  $x_i$ is part of the core of $\mathcal{T}^{\star}$, it is congenitally clean and not contaminated (see Observation~\ref{obs:internal}). Thus, the meta-node $x_i$ has $L$ children in $\mathcal{T}$ and at most $L/(10\ell)$ of these children are dirty. Accordingly, with probability at least $1-1/(10 \ell)$, in the very next step the random walk moves on to a non-dirty child (say) $y$ of $x$. Such a non-dirty child $y$ is either: (1) terminal, or (2) congenitally clean and contaminated, or (3) congenitally clean and not contaminated. In the former two cases, we are guaranteed that the event $\mathcal{E}^{\star}_{\texttt{d}}$ cannot occur, because the event $\mathcal{E}^{\star}_{\texttt{t}} \cup \mathcal{E}^{\star}_{\texttt{cc-cont}}$ has already taken place and this is mutually exclusive with the event $\mathcal{E}^{\star}_{\texttt{d}}$ (see Corollary~\ref{cor:partition}). We next consider the third case. Here, if $i < \ell-1$, then meta-node $y$ is still within the core of $\mathcal{T}^{\star}$, and hence the event $\mathcal{E}_{i+1}$ has occurred. Otherwise, if $i = \ell-1$, then the event $\mathcal{E}^{\star}_{\texttt{cc-not-cont}}$ has occurred which again is mutually exclusive with the event $\mathcal{E}^{\star}_d$ (see Corollary~\ref{cor:partition}). This concludes the proof of inequality~(\ref{eq:infer:main}).

From~(\ref{eq:inference:0}) and~(\ref{eq:infer:main}), we infer that
  $\Pr\left[\, \overline{\mathcal{E}^{\star}_d} \, 
\right] \geq (1-1/(10\ell))^{\ell} \geq 4/5$. Thus, we get $\Pr\left[ \, \mathcal{E}^{\star}_d \, \right] = 1 - \Pr\left[\, \overline{\mathcal{E}^{\star}_d}\, \right] \leq 1/5$.
\end{proof}

\begin{corollary}
\label{cor:prob}
We have $\Pr\left[\mathcal{E}^{\star}_{\texttt{t}}\right]
+
\Pr\left[\mathcal{E}^{\star}_{\texttt{cc-cont}}\right] \geq 7/10$.
\end{corollary}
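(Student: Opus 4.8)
\textbf{Proof proposal for Corollary~\ref{cor:prob}.}

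The plan is to simply combine the exhaustiveness identity from Corollary~\ref{cor:partition} with the two tail bounds already established in Claim~\ref{cl:clean} and Claim~\ref{cl:dirty}. Concretely, I would start from the equation
\[
\Pr\left[\mathcal{E}^{\star}_{\texttt{d}}\right] + \Pr\left[\mathcal{E}^{\star}_{\texttt{t}}\right]
+
\Pr\left[\mathcal{E}^{\star}_{\texttt{cc-cont}}\right]  + \Pr\left[\mathcal{E}^{\star}_{\texttt{cc-not-cont}}\right] = 1,
\]
which holds because the four events in question are mutually exclusive and exhaustive (this is exactly where Observation~\ref{obs:connect}, guaranteeing that the random walk must exit the core through some leaf of $\mathcal{T}^{\star}$, and Observation~\ref{obs:partition}, giving the four-way partition of those leaves, are used).

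Next I would rearrange this identity to isolate the quantity of interest:
\[
\Pr\left[\mathcal{E}^{\star}_{\texttt{t}}\right] + \Pr\left[\mathcal{E}^{\star}_{\texttt{cc-cont}}\right]
= 1 - \Pr\left[\mathcal{E}^{\star}_{\texttt{d}}\right] - \Pr\left[\mathcal{E}^{\star}_{\texttt{cc-not-cont}}\right].
\]
Then I would plug in $\Pr[\mathcal{E}^{\star}_{\texttt{d}}] \leq 1/5$ from Claim~\ref{cl:dirty} and $\Pr[\mathcal{E}^{\star}_{\texttt{cc-not-cont}}] \leq 1/10$ from Claim~\ref{cl:clean}, obtaining
\[
\Pr\left[\mathcal{E}^{\star}_{\texttt{t}}\right] + \Pr\left[\mathcal{E}^{\star}_{\texttt{cc-cont}}\right] \geq 1 - \tfrac{1}{5} - \tfrac{1}{10} = \tfrac{7}{10},
\]
which is the claimed bound.

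Since all the substantive work has already been done in the preceding claims, there is no real obstacle here — the corollary is a purely arithmetic consequence. The only thing to be careful about is bookkeeping: making sure the four events referenced are precisely the ones whose probabilities were bounded (and not, say, the analogous events on $\mathcal{T}$ rather than $\mathcal{T}^{\star}$), and that the constants $1/5$ and $1/10$ are invoked with the correct inequality direction so that they subtract from $1$ in the right way. No new lemma or case analysis is needed.
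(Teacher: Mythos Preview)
Your proposal is correct and matches the paper's own proof exactly: the paper simply states that the corollary follows from Corollary~\ref{cor:partition}, Claim~\ref{cl:clean}, and Claim~\ref{cl:dirty}, which is precisely the rearrangement-and-substitution you spell out. There is nothing to add.
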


\begin{proof}
Follows from Corollary~\ref{cor:partition}, Claim~\ref{cl:clean} and Claim~\ref{cl:dirty}.
\end{proof}

Let $\mathcal{E}^{\star}$ denote the event that the random walk taken by our algorithm  ends at a terminal meta-node at a depth $\leq \ell+1$. It is trivial to note that  $\Pr\left[\mathcal{E}^{\star} | \mathcal{E}^{\star}_{\texttt{t}}\right] = 1$. Henceforth, we condition on the event $\mathcal{E}^{\star}_{\texttt{cc-cont}}$. This means that the concerned random walk has reached some contaminated and congenitally clean meta-node $x$ (say) at depth $\leq \ell$. The meta-node $x$ has $L$ children in $\mathcal{T}$, and by Lemma~\ref{lem:cont is good:main} at least $1/(40\ell)$-fraction of its children are terminal. Thus, with probability at least $1/(40\ell)$, in the very next step the random walk moves on to a terminal child of $x$, at depth $\leq \ell+1$.  To summarize, we  deduce that $\Pr\left[\mathcal{E}^{\star} | \mathcal{E}^{\star}_{\texttt{t}} \cup \mathcal{E}^{\star}_{\texttt{cc-cont}} \right] \geq 1/(40 \ell)$. Lemma~\ref{lm:extension:main} now follows from~(\ref{eq:parameters:main}) and Corollary~\ref{cor:prob}.

\subsection{Getting Rid of Assumption~\ref{assume:main}: The Major Technical Hurdles}
\label{sec:algo:hurdle:main}

If we remove Assumption~\ref{assume:main}, our algorithm might now fail if, given an uncolored edge $e=(u,v)$, it finds a sufficiently short alternating path $P$ starting at $u$ and ending at $v$. In this case, our algorithm runs $\textsf{Apply}(\chi, e, P)$, but since $P$ has both $u$ \emph{and} $v$ as endpoints, this does not produce a proper coloring. In order to deal with this case, we need to use \emph{Vizing fans} in order to construct this alternating path $P$. In addition to making the algorithm more technical, this leads to the following two major hurdles that we need to overcome.

\medskip
\noindent \textbf{Hurdle 1:} Let $x$ and $y$ be two distinct congenitally clean meta-nodes with the same transcript $(\tau_0,\dots,\tau_i)$. Previously, by \Cref{cl:expansion:main}, we had that the paths $P_{\leq L}^{(x)}$ and $P_{\leq L}^{(y)}$ were vertex-disjoint. However, \emph{this is no longer necessarily the case}. Every time we construct a fan around some vertex $u$, we make changes to the colors of the edges around $u$, even if they have colors that are not contained in any of the types of the alternating paths that have been used so far. Thus, the alternating path $P^{(x)}$ of some congenitally clean meta-node $x$ might not be a maximal alternating path in the original coloring $\chi^{(r)}$. We say that such a meta-node $x$ is \emph{damaged}. Since each fan only changes the colors of at most $\Delta$ edges, each of which is contained in at most $O(\Delta)$ alternating paths, and our algorithm only runs for $\tilde O(1)$ steps, we can argue that each meta-node has at most $\tilde O(\Delta^2)$ many damaged children. By taking $L$ to be sufficiently large, we can ensure that (with probability $\Omega(1)$) we do not encounter any damaged meta-nodes in a random walk (see \Cref{lem:low damage}).

\medskip
\noindent \textbf{Hurdle 2:} Let $e=(u,v)$ be an uncolored edge and $c_u \in \miss_\chi(u)$, $c_v \in \miss_\chi(v)$ be blocking colors. Previously, our algorithm always found an alternating path $P$ starting at $u$ that did not use either $c_u$ or $c_v$. The ability to find an alternating path that can avoid such blocking colors is crucial for the proof of \Cref{lem:cont is good:main}. However, if we use Vizing fans to find alternating paths, then we can no longer guarantee that we can avoid using these blocking colors. In the case that we cannot avoid these blocking colors, we use a modified Vizing fan construction to find a $\{c_u,c_v\}$-alternating path that does not start at either $u$ or $v$ (see \Cref{lm:overlap}). Similarly to \Cref{lem:cont is good:main}, we consider a different case where an $\Omega(1/\ell)$-fraction of the children of a meta-node have this property. In this situation, we can guarantee that $\Omega(L/\ell)$ of the alternating paths corresponding to these children must be vertex-disjoint, and thus have an average length of $\tilde O(\sqrt{\Delta n})$, leading to $\Omega(L/\ell)$ many of these children being terminal.

\medskip
\noindent In \Cref{sec:extension}, we give the complete proof of \Cref{th:extension:main} without Assumption~\ref{assume:main}.
\newpage

\part{Full Version}\label{part:full version}

In this part of the paper, we provide the full proof of \Cref{thm:main}, which we restate at the start of \Cref{sec:our alg}. In \Cref{sec:basic notation}, we define the basic notations used throughout the rest of the paper. In \Cref{sec:parts}, we describe the main algorithmic components that we use to construct our algorithm. In \Cref{sec:our alg}, we describe and analyse our algorithm. Finally, in Sections~\ref{sec:extension} and \ref{sec:stars}, we prove \Cref{thm:extension} and \Cref{lem:key 1} respectively. 

We note that this part of the paper is fully self-contained and uses slightly different terminology and notation than the extended abstract.

\section{Basic Notations}\label{sec:basic notation}
Let $G = (V, E)$ be an undirected simple graph on $n$ vertices and $m$ edges with maximum vertex degree $\Delta$. For any partial $(\Delta+1)$-edge coloring $\chi$ of $G = (V, E)$, we say that a neighbor $v$ of vertex $u$ is a {\em colored neighbor} (respectively, {\em uncolored neighbor}) of $u$ if $\chi(u, v)\neq\bot$ (resp., $\chi(u, v)= \bot$). For any vertex $u\in V$, let $\miss_\chi(u)\subseteq [\Delta+1]$ be the set of colors which are not used around $u$ by $\chi$. Let $N_\chi(u)$ denote the set of all colored neighbors of $u$ in $G$. We use $\deg_G^\chi(u)$ to denote the number of \emph{uncolored} edges incident on $u$ w.r.t.~$\chi$. For any pair of different colors $\{x, y\}$, a simple path $P$ between two endpoints $s,t$ is an $\{x, y\}$-alternating path under $\chi$, if all edges on $P$ are colored either $x$ or $y$; plus, $P$ is maximal if $\{x, y\}\cap \miss_\chi(s)\neq \emptyset$ and $\{x, y\}\cap \miss_\chi(t)\neq \emptyset$.

\section{The Algorithmic Components}\label{sec:parts}

In this section, we describe the subroutines that we use to construct our algorithm.

\subsection{Coloring Stars}

We prove the following lemma following the basic ideas from the recent work of \cite{BhattacharyaCCSZ24}. The proof of this lemma hinges on the same key technical observations but with a simpler algorithm and better runtime bound, whose full proof is deferred to \Cref{sec:stars}.

\begin{lemma}\label{lem:key 1}
There is an algorithm $\textnormal{\texttt{ColorLightStars}}$ that, given a graph $G$, a partial $(\Delta + 1)$-edge coloring $\chi$ of $G$ and subset $U^\star \subseteq V$ such that:
\begin{itemize}
    \item $|\miss_\chi(u)| \geq d$ for all $u \in U^\star$ for some positive integer $d \in \mathbb N$,
    \item there are $\lambda$ uncolored edges incident on $U^\star$,
\end{itemize}
extends the coloring $\chi$ to $\Omega(\lambda)$ uncolored edges incident on $U^\star$ in $\tilde O(\lambda \Delta + \Delta m / d)$ expected time.
\end{lemma}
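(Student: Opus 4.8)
The plan is to extend the coloring to a constant fraction of the $\lambda$ uncolored edges incident on $U^\star$ by processing them in batches, where within each batch we run short Vizing chains whose alternating paths are truncated at a carefully chosen length threshold, and we pay the $\tilde O(\Delta)$ cost per chain only for the chains that actually succeed. The key tension is that a single Vizing chain for an uncolored edge $e=(u,v)$ with $u\in U^\star$ can have an alternating path of length up to $\Omega(n)$, which is too expensive; but since every $u\in U^\star$ has $|\miss_\chi(u)|\ge d$ missing colors, we have many choices of the type $\{\alpha_u,\alpha_v\}$ of the alternating path, and the alternating paths of distinct types (w.r.t.\ a fixed coloring) are vertex-disjoint, so their total length is $O(n)$ per color, hence $O(nd)$ over the $\Theta(d)$ available types. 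Thus on average a random admissible type gives an alternating path of length $O(n/d)$... more precisely, a random choice among $\Omega(d)$ types gives an alternating path of length $O(n \cdot \Delta / (d \cdot \Delta)) = O(n/d)$ in expectation, since there are at most $\Delta$ distinct maximal alternating paths of each fixed type but actually only $O(n/\text{length})$ of them — I would formalize this by a counting argument over vertex-disjointness. So I would set a length threshold $T = \tilde\Theta(n/d)$ (or $\tilde\Theta(\Delta m/(d\lambda))$ after balancing), try a random admissible type for each uncolored edge, and keep only the chains whose alternating path has length $\le T$; by Markov's inequality at least half of them succeed.

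The steps, in order: (1) For each uncolored edge $e=(u,v)$ incident on $U^\star$, pick a uniformly random pair $\{\alpha_u,\alpha_v\}$ with $\alpha_u\in\miss_\chi(u)$, $\alpha_v\in\miss_\chi(v)$, build the Vizing fan at $u$ (of size $O(\Delta)$) and then trace the $\{\alpha_u,\alpha_v\}$-alternating path $P$; abort the trace once it exceeds the threshold $T$. (2) If the fan plus the (short) path successfully extends the coloring — and crucially the path does not loop back to hit the fan or create a conflict — apply it; the per-edge work is $\tilde O(\Delta + T)$. (3) Argue that in expectation a constant fraction of the $\lambda$ edges get colored: here I would be careful that coloring one edge changes the coloring and can lengthen the alternating paths of later edges, so I would process the uncolored edges in a random order and argue via a charging/linearity argument that an $\Omega(1)$ fraction still have short chains at the time they are processed, or alternatively do one synchronized "round" analysis as in \cite{BhattacharyaCCSZ24} where all paths are measured against the coloring at the start of the round. (4) Iterate this $O(\log n)$ times — by Lemma-style reasoning each round colors an $\Omega(1)$ fraction of what remains, so $O(\log\lambda)$ rounds color all but $o(\lambda)$, though the lemma only asks for $\Omega(\lambda)$, so a single round suffices.

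Balancing the two cost terms: the total time is $\tilde O(\lambda \cdot (\Delta + T))$ where $T$ is the length threshold, and correctness of the "half succeed" step requires $T = \Omega(\text{average path length}) = \Omega(\Delta m /(d\lambda))$ — using that the total length of all alternating paths of all admissible types, summed over all $\lambda$ edges, is at most $O(\lambda/d \cdot \text{(something)})$; more cleanly, each edge has $\ge d$ admissible types whose paths are vertex-disjoint so total length $\le 2n$ per type... I would instead bound: summed over the $\lambda$ edges, the expected path length is $\le \frac{1}{d}\sum_{\text{types}}(\text{total length of that type's paths})\le \frac{2n|{\rm types}|}{d}$, and since only $O(\Delta)$ types are relevant per edge neighborhood this gives the $\Delta m/d$-type bound after accounting for $\lambda \le m$. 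Setting $T := \tilde\Theta(\Delta m/(d\lambda))$ yields total time $\tilde O(\lambda\Delta + \Delta m/d)$, matching the claim.

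\textbf{Main obstacle.} The hard part will be step (3)/the interaction between successive chain applications: once we recolor edges for one uncolored edge, the alternating paths for the remaining uncolored edges can change and even grow, so the "average length is small" bound must be made robust to these modifications. I expect the cleanest fix (and the one matching the remark that this is "arguably much simpler" than \cite{BhattacharyaCCSZ24}) is to argue that each applied chain touches only $\tilde O(\Delta + T)$ edges, and a modification to an edge only affects the alternating paths passing through it, so a potential/charging argument bounds the total "damage" and shows that a constant fraction of edges still have short chains when processed; handling the Vizing-fan edges (which get recolored even though their colors lie outside the relevant types) inside this charging argument is the delicate point, exactly analogous to "Hurdle 1" in the extension-subroutine analysis.
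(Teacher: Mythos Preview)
Your high-level shape --- pick a random missing color $x\in\miss_\chi(u)$, truncate the resulting alternating path at a threshold $T=\Theta(\Delta m/(d\lambda))$, and use Markov to argue a constant fraction succeed --- is in the right ballpark and matches the paper's length parameter. But the averaging step has a real gap that you do not close. Several uncolored edges $(v_1,u),(v_2,u),\ldots$ sharing the same center $u\in U^\star$ can, for the same choice of $x\in\miss_\chi(u)$, produce the \emph{same} Vizing-chain alternating path (the fan construction may converge to the same primed color regardless of which $v_i$ you started from), so your lower bound on the number of distinct paths, and hence your average-length bound, collapses. Your writeup also mis-describes the Vizing chain: the alternating path has type $\{x,z\}$ with $z$ determined by the fan, not $\{\alpha_u,\alpha_v\}$ with $\alpha_v\in\miss_\chi(v)$.

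The paper resolves exactly this distinctness issue, and in a way quite different from your proposed charging argument. It assigns each oriented uncolored edge $(v,u)$ a tentative color $\clr_\chi(v,u)\in\miss_\chi(v)$ (distinct across the $v$'s neighbors in $W$), builds an auxiliary fan digraph $F_\chi(u)$ on $N_\chi(u)$, and classifies edges as \emph{ready}, \emph{social}, \emph{independent}, or \emph{lonely}. For social edges the alternating path is taken starting at $v$ (not $u$), which makes the paths for different social edges at the same $u$ genuinely distinct; for independent edges the fan-paths land in different components of $F_\chi(u)$, again forcing distinctness. Lonely edges are paired via a rotation in $F_\chi(u)$ that turns two lonely edges into two social ones without any path-flipping. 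A potential $\Phi=10|S|+m_{\mathrm{lonely}}$ is shown to drop in expectation each iteration, and the total-length bound used is $\sum_{y\neq x}L_{x,y}<3m$ for each fixed $x$ (an $m$-based count, not $n$-based). The interaction problem you flag is handled not by damage-charging but by a structural lemma that each successful extension turns at most $8$ social edges non-social, so the potential analysis survives recoloring.
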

\noindent {\bf Remark.} As mentioned in the technical overview of \Cref{sec:overview}, previously in \cite{BhattacharyaCCSZ24}, the runtime bound for the same task is $\tilde{O}\brac{\lambda\Delta + \min_{\tau\geq 1}\{\frac{\Delta m\tau}{d} + \frac{\lambda n}{\tau}\}}$, which is always worse than the time bound of \Cref{lem:key 1}.
To clarify, note that in the technical overview of \cite{BhattacharyaCCSZ24}, the authors claimed the same runtime bound $\tilde O(\lambda \Delta + \Delta m / d)$ in their Lemma 2.1 as we claim in the current paper (see \Cref{lem:key 2}, which we derive from \Cref{lem:key 1}), but their Lemma 2.1 was only given as part of an informal technical overview, provided  merely for the sake of conveying the high-level ideas. 
What \cite{BhattacharyaCCSZ24} actually achieved implicitly (in the formal part of their paper) is the weaker runtime bound of $\tilde{O}\brac{\lambda\Delta + \min_{\tau\geq 1}\{\frac{\Delta m\tau}{d} + \frac{\lambda n}{\tau}\}}$, which was sufficient for their $\tilde{O}(mn^{1/3})$ time algorithm, but it is insufficient for our goal of achieving a runtime of $\tilde{O}(mn^{1/4})$.\\

\noindent In our final algorithm, we use \texttt{ColorLightStars} to extend the coloring $\chi$ to vertices with sufficiently low degree. To deal with high degree vertices, we also need a variant of the algorithm that we call \texttt{ColorHeavyStars}, which follows as a corollary from \Cref{lem:key 1}.

\begin{corollary}\label{lem:key 2}
There is an algorithm $\textnormal{\texttt{ColorHeavyStars}}$ that, given a graph $G$, a partial $\Delta + 1$ edge coloring $\chi$ of $G$ and subset $U^\star\subseteq V$ such that there are $\lambda$ uncolored edges incident on $U^\star$,
extends the coloring $\chi$ to $\Omega(\lambda)$ uncolored edges incident on $U^\star$ in $\tilde O(\lambda \Delta + \Delta m |U^\star| / \lambda)$ expected time.
\end{corollary}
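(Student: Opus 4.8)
The plan is to derive \Cref{lem:key 2} from \Cref{lem:key 1} by a simple reduction: artificially boost the number of missing colors at the vertices of $U^\star$ by restricting attention to a random sub-palette, apply \texttt{ColorLightStars} with the parameter $d$ chosen to balance the two terms in its running time, and then argue that the restriction costs us nothing essential. Concretely, first I would observe that any vertex $u\in U^\star$ already has at least $\deg_G^\chi(u)\ge 1$ uncolored incident edges, so $|\miss_\chi(u)|\ge 2$, but in general $|\miss_\chi(u)|$ may be as small as a constant, which is why \Cref{lem:key 1} cannot be applied directly with a useful value of $d$. To fix this, set the target parameter to $d := \Theta\!\left(\min\{\Delta,\ \lambda/|U^\star|\}\right)$ — more precisely, the value of $d$ for which $\lambda\Delta + \Delta m/d$ equals the claimed bound $\lambda\Delta + \Delta m |U^\star|/\lambda$, namely $d = \Theta(\lambda/|U^\star|)$ (capped at $\Delta$, which only helps).

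The key step is handling the case $|U^\star| \ge \lambda$ versus $|U^\star| < \lambda$. If $|U^\star|\ge \lambda$, then $\lambda/|U^\star|\le 1$, so the claimed running time $\tilde O(\lambda\Delta + \Delta m|U^\star|/\lambda)$ already dominates $\tilde O(\Delta m)$, and we can afford to extend the coloring to the $\lambda$ uncolored edges incident on $U^\star$ one at a time using the trivial $O(n)$-time (or even $\tilde O(\Delta^2+\sqrt{\Delta n})$-time, via \Cref{th:extension:main}) color-extension subroutine, or even more simply using Vizing's original $O(n)$-time procedure — the total cost is at most $\lambda\cdot O(n) = O(\lambda n) = \tilde O(\Delta m |U^\star|/\lambda)$ since $m \ge \lambda$ and $|U^\star|\ge \lambda$ forces $m|U^\star|/\lambda \ge m \ge n$ up to the usual degeneracies; a cleaner route is to note $\lambda n \le \lambda \cdot (m|U^\star|/\lambda)$ whenever $m|U^\star| \ge \lambda n$, which holds since each of the $\lambda$ uncolored edges is incident on $U^\star$. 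In the main case $|U^\star| < \lambda$, I would partition $U^\star$ into $O(\log n)$ groups according to $|\miss_\chi(u)|$ falling in a dyadic range $[2^j, 2^{j+1})$, and within the group with $|\miss_\chi(u)| \in [d', 2d')$ for $d' \ge d$ directly invoke \Cref{lem:key 1} with that $d'$; for the group of vertices with $|\miss_\chi(u)| < d$, the number of uncolored edges incident on them is at most $\sum_u \deg_G^\chi(u) \le \sum_u |\miss_\chi(u)| \le d|U^\star| = O(\lambda)$, wait — this needs $d|U^\star| = O(\lambda)$, which is exactly our choice $d = \Theta(\lambda/|U^\star|)$. So on that low-missing-color group we again fall back to the trivial per-edge extension: it touches $O(\lambda)$ edges at cost $O(n)$ each, total $\tilde O(\lambda n)$, and since these edges are incident on a set of size $\le |U^\star|$, a charging argument gives $\lambda n \le \Delta m |U^\star|/\lambda$ as above.

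Putting the pieces together: run \texttt{ColorLightStars} on each dyadic group with $|\miss_\chi| \ge d$ (there are $O(\log n)$ of them, each costing $\tilde O(\lambda\Delta + \Delta m/d) = \tilde O(\lambda\Delta + \Delta m|U^\star|/\lambda)$ by our choice of $d$), and handle the single low group with $O(\lambda)$ trivial extensions at total cost $\tilde O(\lambda n) = \tilde O(\Delta m|U^\star|/\lambda)$. Each group where \texttt{ColorLightStars} is applied gets a constant fraction of its uncolored edges colored; by averaging over the $O(\log n)$ groups, a $\Omega(1/\log n) = \tilde\Omega(1)$ fraction — and by iterating $O(\log n)$ rounds within the heaviest remaining group, as in the derivation of \Cref{cor:key}, an $\Omega(1)$ fraction — of all $\lambda$ uncolored edges incident on $U^\star$ get colored, all within the stated expected time. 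The main obstacle I anticipate is bookkeeping the charging argument that converts the $O(\lambda n)$ cost of the trivial fallback into the target bound $\tilde O(\Delta m |U^\star|/\lambda)$: one must be careful that the uncolored edges incident on a size-$|U^\star|$ set, of which there are $\lambda$, satisfy $\lambda \le \Delta |U^\star|$ (trivially true), so $\lambda n \le \Delta |U^\star| n \le \Delta m |U^\star|/\lambda \cdot (\lambda n / m)$, and this last factor $\lambda n/m$ is $\le n$ which is not obviously $\le 1$ — so actually the clean statement is simply that $n \le m$ is false in general, meaning the trivial-fallback cost must instead be bounded using the $\tilde O(\Delta^2 + \sqrt{\Delta n})$ extension subroutine of \Cref{th:extension:main} or, most simply, by noting we only ever invoke this corollary in a regime where $\lambda = \tilde\Omega(\Delta m |U^\star| /n)$ is guaranteed by the caller; I would state the reduction cleanly enough that the low-missing-color group contributes a genuinely subordinate term and flag any regime assumption explicitly.
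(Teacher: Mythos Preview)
Your proposal has the right core idea---restrict to vertices of $U^\star$ with many missing colors and apply \Cref{lem:key 1} with $d \approx \lambda/|U^\star|$---but it leaves a genuine gap that you yourself flag and do not resolve: you cannot fit the cost of coloring the low-$|\miss_\chi|$ group into the target budget, and the workarounds you float (invoking \Cref{th:extension:main}, or pushing a regime assumption onto the caller) are not proofs. The missing observation is that \emph{you do not need to color that group at all}. Since $\deg_G^\chi(u) < |\miss_\chi(u)|$ always, the vertices with $|\miss_\chi(u)| < d$ together carry fewer than $d\,|U^\star|$ uncolored edges; taking $d = \lfloor \lambda/(2|U^\star|)\rfloor$ (or $d=1$ when this is zero, in which case the low group is empty) that is at most $\lambda/2$. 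Hence the set $U^\star_{\mathrm{high}} := \{u\in U^\star : |\miss_\chi(u)|\ge d\}$ carries at least $\lambda/2$ uncolored edges, and a \emph{single} call to \texttt{ColorLightStars} on $U^\star_{\mathrm{high}}$ with this $d$ already colors $\Omega(\lambda)$ of them in time $\tilde O(\lambda\Delta + \Delta m/d) = \tilde O(\lambda\Delta + \Delta m|U^\star|/\lambda)$. No dyadic bucketing, no per-edge fallback, and no regime assumption are needed. (Your opening mention of a ``random sub-palette'' is never actually used and seems to be a false start.)

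For comparison, the paper takes a different route. It buckets $U^\star$ dyadically by $\deg_G^\chi(u)$, picks the bucket $U^\star_p$ maximizing $\sum_{u\in U^\star_p}\deg_G^\chi(u)^2$, and uses Cauchy--Schwarz to show that in that bucket $\lambda'\cdot d \ge \Omega(\lambda^2/(|U^\star|\log n))$, where $\lambda'$ is the number of uncolored edges incident on $U^\star_p$ and $d = 2^p$. One call to \texttt{ColorLightStars} on that bucket then costs $\tilde O(\lambda\Delta + \Delta m|U^\star|/\lambda)\cdot(\lambda'/\lambda)$ and colors $\Omega(\lambda')$ edges; iterating until $\lambda/2$ edges have been colored makes the costs telescope to the stated bound. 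Your thresholding idea is in fact simpler once the ``discard the low group'' step is supplied; the paper's Cauchy--Schwarz argument is instead arranged so that no bucket ever has to be thrown away.
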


\begin{proof}
    For each integer $p$, let $U^\star_p := \{ u \in V \mid \deg_G^\chi(u) \in [2^p, 2^{p+1})\}$.\footnote{Recall that $\deg_G^\chi(u)$ denotes the number of uncolored edges incident on $u$ w.r.t.~$\chi$.} Let $p \in \mathbb N$ be the value that maximizes $\sum_{u \in U^\star_p} \deg_G^\chi(u)^2$. Let $\lambda'$ denote the number of uncolored edges incident on $U^\star_p$ and $d = 2^p$. Then it follows that
    \begin{eqnarray} \nonumber
    \lambda' \cdot 2d &=& \sum_{u \in U^\star_p} \deg_G^\chi(u) \cdot 2d \geq \sum_{u \in U^\star_p} \deg_G^\chi(u)^2 \geq \frac{1}{\log n} \cdot \sum_{u \in U^\star} \deg_G^\chi(u)^2 
     \\ &\geq& \frac{1}{\log n} \cdot \left( \frac{1}{\sqrt{|U^\star|}} \cdot \sum_{u \in U^\star} \deg_G^\chi(u) \right)^2 = \frac{1}{\log n} \cdot \frac{\lambda^2}{|U^\star|}, 
    \end{eqnarray}
    where the last inequality follows from the Cauchy-Schwarz inequality. 
    It follows that $\lambda^2 / (\lambda' |U^\star|) \leq d \cdot 2\log n$. The set $U^\star_p$ has the properties that
    \begin{enumerate}
    \item $|\miss_\chi(u)| \geq d$ for all $u \in U^\star_p$,
    \item there are $\lambda'$ uncolored edges incident on $U^\star_p$.
\end{enumerate}
Thus, applying the algorithm $\textnormal{\texttt{ColorLightStars}}$ (\Cref{lem:key 1}) with the set $U^\star_p$, we can color $\Omega(\lambda')$ uncolored edges incident on $U^\star_p$ in expected time 
\begin{equation}\label{eq:k1 to k2}
    \tilde O \! \left(\lambda' \Delta + \frac{\Delta m}{d} \right) 
\leq \tilde O \! \left(\lambda' \Delta + \frac{\Delta m |U^\star|}{\lambda} \cdot \frac{\lambda'}{\lambda} \right) \leq \tilde O \! \left(\lambda \Delta 
+ \frac{\Delta m |U^\star|}{\lambda}\right) \cdot \frac{\lambda'}{\lambda}.
\end{equation}
We can repeat this process until the total number of edges to which we have extended the coloring exceeds $\lambda/2$. Let $\lambda_i$ denote the number of uncolored edges at the start of the $i$ iteration and let $\lambda_i'$ denote the number of edges that we extend the coloring to in the $i^{th}$ iteration. Suppose that we perform $t$ iterations in total. Then, for all $i \in [t]$, we have that $\lambda/2 \leq \lambda_i \leq \lambda$. It follows from \Cref{eq:k1 to k2} that the expected total running time of our algorithm is at most
$$ \sum_{i=1}^t  \tilde O \! \left(\lambda_i \Delta 
+ \frac{\Delta m |U^\star|}{\lambda_i}\right) \cdot \frac{\lambda'_i}{\lambda_i} \leq  \tilde O \! \left(\lambda \Delta 
+ \frac{\Delta m |U^\star|}{\lambda}\right) \cdot \left( \frac{1}{\lambda} \cdot \sum_{i=1}^t \lambda'_i \right) \leq \tilde O \! \left(\lambda \Delta 
+ \frac{\Delta m |U^\star|}{\lambda}\right).$$
Thus, the expected total time spent handling these calls to $\textnormal{\texttt{ColorLightStars}}$ is upper bounded by $\tilde O(\lambda \Delta + \Delta m |U^\star|/\lambda)$.
\end{proof}

\subsection{Near-Vizing Coloring}

The following theorem is the main technical result in the recent work of \cite{Assadi24}. While this is not strictly necessary to prove \Cref{thm:assadi}, it significantly simplifies the technical details of our algorithm by allowing us to avoid using Euler partitioning and recursion.

\begin{theorem}\label{thm:assadi}
There is an algorithm $\textnormal{\texttt{NearVizingColoring}}$ that, given a graph $G$, computes a $\Delta + 300 \log n$ edge coloring $\chi$ of $G$ in $\tilde O(m)$ time with high probability.
\end{theorem}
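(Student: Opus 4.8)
The plan is to obtain this statement directly from the near-linear-time near-Vizing edge coloring algorithm of \cite{Assadi24}, together with a trivial fallback for small $\Delta$ that lets us pin down the explicit constant $300$.

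First I would dispose of the regime $\Delta \le 300\log n$ without using \cite{Assadi24} at all. Here a greedy edge coloring — process the edges in arbitrary order, and for each edge pick any color avoided by its at most $2\Delta - 2$ already-colored incident edges — produces a proper edge coloring using at most $2\Delta - 1 < \Delta + 300\log n$ colors, in $O(m\Delta) = O(m\log n) = \tilde O(m)$ time. Since its palette is contained in $[\Delta + 300\log n]$, this is a valid $(\Delta + 300\log n)$-edge coloring (in fact deterministically).

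In the complementary regime $\Delta > 300\log n$, I would invoke the algorithm of \cite{Assadi24}, which computes a $(\Delta + O(\log n))$-edge coloring of $G$ in $O(m\log\Delta)$ time with high probability; since $\log\Delta \le \log n$, this running time is $\tilde O(m)$ as required. The one bookkeeping point is that \cite{Assadi24} outputs $\Delta + c\log n$ colors for some universal constant $c$ that is not a priori equal to $300$; one simply takes the constant $300$ large enough that $c \le 300$, so that, exactly as in the previous paragraph, the output palette lies inside $[\Delta + 300\log n]$ and the coloring is valid.

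I do not expect a genuine mathematical obstacle here: the theorem is essentially a black-box restatement of \cite{Assadi24}, and the only ``hard part'' is bookkeeping — ensuring the additive slack $300\log n$ simultaneously dominates the implicit constant coming from \cite{Assadi24} and the $2\Delta - 1$ bound used when $\Delta$ is small, and checking that $O(m\log\Delta)$ genuinely collapses to $\tilde O(m)$. I would also add a remark that downstream only the very weak features of this theorem are ever used — any $O(\log n)$ additive slack, any $\tilde O(m)$ randomized running time — which is exactly why the precise constant is immaterial and is fixed here to a convenient value.
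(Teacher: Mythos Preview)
Your proposal is correct and matches the paper's treatment: the paper states this theorem as a direct black-box citation of \cite{Assadi24} with no additional proof. Your extra bookkeeping (the greedy fallback for small $\Delta$ and the discussion of the constant $300$) is sound but goes beyond what the paper itself supplies.
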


\subsection{Color Extension Theorem}

The following theorem is our main technical contribution. This theorem is a generalization of the results of \cite{duan2019dynamic} to the case where we only have access to $\Delta + 1$ colors instead of $\Delta + \tilde \Omega(\sqrt{ \Delta})$. We defer the proof of this theorem to \Cref{sec:extension}.

\begin{theorem}\label{thm:extension}
There is an algorithm $\textnormal{\texttt{FastVizingExtension}}$ that, given a graph $G$, a partial $\Delta + 1$ edge coloring $\chi$ of $G$ and an uncolored edge $e$, extends the coloring $\chi$ to the edge $e$ in $\tilde O(\min\{\Delta^2 + \sqrt{\Delta n}, n\})$ expected time.
\end{theorem}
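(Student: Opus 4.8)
\textbf{Proof plan for Theorem~\ref{thm:extension}.}
The plan is to extend the proof sketch of Theorem~\ref{th:extension:main} (from Section~\ref{sec:th:extension:main}) so that it works \emph{without} Assumption~\ref{assume:main}, i.e.\ on general graphs that may contain short odd cycles. First I would set the same parameters $\ell = \Theta(\log n)$ and $L = \tilde\Theta(\Delta^2 + \sqrt{\Delta n})$ and reformulate the recursive procedure \textsf{ExtendColoring} iteratively (to simplify the bookkeeping of the random walk): each iteration takes an uncolored edge $(u,v)$ together with a pair of blocking colors $\{c_u,c_v\}$, picks $\alpha_u \in \miss_\chi(u)$, $\alpha_v\in\miss_\chi(v)$ avoiding the blocking colors when possible, and either terminates (if some $\alpha\in\{\alpha_u,\alpha_v\}$ is missing at both endpoints, or if the relevant alternating structure is short) or shifts the uncolored edge along a uniformly random prefix of length in $[L]$ of an alternating path and recurses with the current type as the new blocking pair. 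The key change from the assumption-free setting is that the "alternating path'' must now be replaced by a \emph{Vizing chain}: when the naive $\{\alpha_u,\alpha_v\}$-alternating path $P$ from $u$ happens to end at $v$ (creating a short odd cycle), we instead build a Vizing fan around $u$ to rotate to a safe missing color, and then follow the alternating path from the fan's pivot. I would then re-run the meta-tree analysis: define the meta-tree $\mathcal T$, the notions of dirty/clean/contaminated/congenitally-clean meta-nodes, and the critical subtree $\mathcal T^\star$ exactly as in Section~\ref{sec:algo:metatree:main}, and argue $\Pr[\mathcal E^\star_{\texttt t}] + \Pr[\mathcal E^\star_{\texttt{cc-cont}}] = \Omega(1)$, from which Lemma~\ref{lm:extension:main} and hence the $\tilde O(\Delta^2+\sqrt{\Delta n})$ bound follow; the $\min\{\cdot, n\}$ in the statement comes from the fallback that a single failed random walk can always be completed via the $O(n)$-time textbook Vizing argument, and the expected running time is the weighted average.

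The two places where fans break the clean argument — and which I expect to be the main obstacles — are exactly Hurdle~1 and Hurdle~2 from Section~\ref{sec:algo:hurdle:main}. For Hurdle~1: in the assumption-free setting, building a fan around some vertex $w$ recolors up to $\Delta$ edges incident to $w$, and these recolorings can destroy the maximality of alternating paths of \emph{other} congenitally clean meta-nodes with the same transcript, so Claim~\ref{cl:expansion:main} (vertex-disjointness of length-$L$ prefixes, which bounds $|\Gamma_j|\le n$) no longer holds verbatim. The fix is to call a meta-node \emph{damaged} if its alternating path is affected by such a recoloring, and to bound the number of damaged children of any meta-node: since the random walk has depth $O(\log n)$, only $\tilde O(1)$ fans are ever built along a root-to-leaf path, each touching $\le\Delta$ edges, each edge lying on $O(\Delta)$ alternating paths, so each meta-node has at most $\tilde O(\Delta^2)$ damaged children; taking $L \gg \tilde\Theta(\Delta^2)$ makes the walk avoid damaged meta-nodes with probability $1 - o(1)$ (this is Lemma~\ref{lem:low damage}), after which the vertex-disjointness argument goes through on the non-damaged meta-nodes and Lemma~\ref{lem:expansion:main} is recovered. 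For Hurdle~2: the proof of Lemma~\ref{lem:cont is good:main} crucially needs the alternating path of a child to avoid the blocking colors $\{c_u,c_v\}$ (so that the dirty children's paths all coexist under $\chi^{(x)}$ and a counting/pigeonhole bound on their total length forces most of them to be short, hence terminal). With fans one cannot always avoid the blocking colors; the remedy is a modified fan construction (Lemma~\ref{lm:overlap}) producing a $\{c_u,c_v\}$-alternating path that starts at \emph{neither} $u$ nor $v$. One then splits into the usual contaminated case plus a new case where an $\Omega(1/\ell)$-fraction of a meta-node's children have this "overlap'' property; in that case the corresponding alternating paths are forced to be vertex-disjoint, so their total length is $O(n)$ per type and $O(\ell\Delta n)$ overall, giving average length $O(\ell^2\Delta n/L) = O(\sqrt{\Delta n})\le L$ — which is exactly where the $\sqrt{\Delta n}$ term in $L$ is needed — so $\Omega(L/\ell)$ of these children are terminal, just as in Lemma~\ref{lem:cont is good:main}.

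Modulo these two hurdles, the rest of the argument is a faithful re-run of Section~\ref{sec:algo:analysis:main}: Observation~\ref{ob:parent:child:main} still holds because consecutive blocking colors are chosen to be disjoint from the new type; Claim~\ref{cl:clean} bounds $\Pr[\mathcal E^\star_{\texttt{cc-not-cont}}]$ using the (repaired) transcript-counting bound $|\mathcal Z^\star_{\texttt{cc-not-cont}}|\le \Delta^{2(\ell+1)} n$ against the escape probability $1/L^\ell$, which is $o(1)$ by the choice of $L$; Claim~\ref{cl:dirty} bounds $\Pr[\mathcal E^\star_{\texttt d}]$ by a union bound over the $\le\ell$ steps inside the core, using that a core meta-node has at most $L/(10\ell)$ dirty children; and then conditioning on reaching a contaminated congenitally-clean leaf, a single further random step lands on a terminal child with probability $\Omega(1/\ell)$ by Lemma~\ref{lem:cont is good:main}. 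Combining these gives that one run of the random walk succeeds within $O(\log n)$ steps with probability $\Omega(1/\log n)$, each step costing $\tilde O(L)$ time via standard data structures; boosting over $O(\log n)$ independent attempts and falling back to the $O(n)$-time classical extension when all attempts fail yields expected time $\tilde O(\min\{\Delta^2+\sqrt{\Delta n},\, n\})$, proving Theorem~\ref{thm:extension}.
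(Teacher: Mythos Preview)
Your plan is essentially the paper's own approach: the iterative Vizing-chain walk with parameters $\ell=\Theta(\log n)$, $L=\tilde\Theta(\Delta^2+\sqrt{\Delta n})$, the meta-tree analysis with damaged meta-nodes handling Hurdle~1, and a separate ``overlap'' case handling Hurdle~2 via Lemma~\ref{lm:overlap}. The paper formalizes your ``new case'' by splitting dirty into $\alpha$-dirty ($\tau^{(x)}$ intersects an earlier type but not the parent's) and $\beta$-dirty ($\tau^{(x)}=\tau^{(x_{i-1})}$, i.e.\ the overlapping case), with two separate contaminated lemmas; this is exactly the bifurcation you describe.

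There is one genuine imprecision in your Hurdle~2 treatment. You assert that in the overlap case ``the corresponding alternating paths are forced to be vertex-disjoint, so their total length is $O(n)$ per type.'' This is not true as stated: all $\beta$-dirty children share the single type $\tau^{(x)}$, their uncolored edges sit at different positions along the parent's long path, and the $\{c_u,c_v\}$-alternating paths they launch (via the fan vertex $v_i\notin\{u,v\}$) can coincide---the same maximal $\tau$-path in $G$ can be reached from up to $2\Delta$ different children. The paper's actual argument (Lemma~\ref{lem:cont is good 2}) first observes that the colorings $\chi^{(y_1)},\dots,\chi^{(y_k)}$ differ only on a single concatenated path/cycle $Q$, then discards at most $2\Delta$ ``bad extension points'' whose fan edge leads back into $Q$, and finally uses that each remaining path is disjoint from $Q$ (hence a genuine maximal $\tau$-path in $\chi^{(y_1)}$) but may repeat up to $2\Delta$ times, giving total length $\le 2\Delta n$ rather than $n$. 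The arithmetic still closes ($4\Delta n\cdot 20\ell/L\le L$), but the mechanism is repetition-bounding, not vertex-disjointness. Your sketch would need this refinement to go through.
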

\section{Our Algorithm}\label{sec:our alg}
\label{sec:full:algo}

In this section, we prove \Cref{thm:main}, which we restate below.

\begin{theorem}[\Cref{thm:main} Restated]
Given a simple, undirected $m$-edge $n$-node graph $G = (V, E)$ with maximum degree $\Delta$, we can compute a $(\Delta + 1)$-edge coloring of $G$ in $\tilde O(mn^{1/4})$ time with high probability.
\end{theorem}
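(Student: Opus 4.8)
The plan is to combine the three algorithmic components stated in Section~\ref{sec:parts} --- \texttt{NearVizingColoring} (\Cref{thm:assadi}), \texttt{ColorHeavyStars} (\Cref{lem:key 2}), and \texttt{FastVizingExtension} (\Cref{thm:extension}) --- in essentially the way previewed in the proof sketch of Lemma~\ref{lm:main}, but with a more careful final optimization of the parameter $L$ so as to land on the bound $\tilde O(mn^{1/4})$ rather than $\tilde O(mn^{3/10})$. First I would dispose of the easy regime: if $\Delta \leq n^{1/4}$, then any classical $\tilde O(m\Delta)$-time algorithm of \cite{arjomandi1982efficient,gabow1985algorithms,sinnamon2019fast} already runs in $\tilde O(mn^{1/4})$ time, so we may assume $\Delta > n^{1/4}$ (and also $\Delta = \omega(\log n)$, as otherwise the same applies). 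In this regime, \Cref{thm:extension} gives a color-extension subroutine running in $\tilde O(\Delta^2 + \sqrt{\Delta n}) = \tilde O(\Delta^{2.5})$ expected time, since $\Delta > n^{1/4} \Rightarrow \Delta^{2.5} \geq \sqrt{\Delta n}$.

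Next I would sample a random vertex subset $U^\star \subseteq V$ by including each vertex independently with probability $\Theta(\log n / \Delta)$. A Chernoff bound shows that w.h.p. the graph $G_0 := (V, E \setminus E^\star)$ obtained by deleting all edges incident on $U^\star$ has maximum degree $\leq \Delta - 300\log n$, so \texttt{NearVizingColoring} (\Cref{thm:assadi}) produces a proper $(\Delta+1)$-edge coloring $\chi$ of $G_0$ in $\tilde O(m)$ time; this is a partial $(\Delta+1)$-coloring of $G$ whose uncolored edges $E^\star$ are all incident on $U^\star$, with $|U^\star| = \tilde O(n/\Delta)$ and $|E^\star| = \tilde O(n)$ w.h.p. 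Then, for a parameter $L$ to be fixed, I would iterate \texttt{ColorHeavyStars} (\Cref{lem:key 2}) $O(\log n)$ times --- exactly as in Corollary~\ref{cor:key} --- to reduce the number of uncolored edges to at most $L$, in expected time $\tilde O(|E^\star|\Delta + \Delta m |U^\star| / L) = \tilde O(n\Delta + n^2\Delta/L) \cdot (\text{using } m \le n\Delta)$... more carefully, using $|E^\star| = \tilde O(n)$ and $|U^\star| = \tilde O(n/\Delta)$, this is $\tilde O(n\Delta + mn/L)$. Finally I would scan the remaining $\leq L$ uncolored edges and extend the coloring to each one individually by invoking \texttt{FastVizingExtension}, costing $\tilde O(L \cdot \Delta^{2.5})$ expected time in total.

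The total expected runtime is then $\tilde O\!\left(m + n\Delta + \tfrac{mn}{L} + L\Delta^{5/2}\right)$; balancing the last two terms by choosing $L := \sqrt{mn}\,/\,\Delta^{5/4}$ gives $\tfrac{mn}{L} = L\Delta^{5/2} = \sqrt{mn}\,\Delta^{5/4}$. Using $m \le n\Delta$ one checks $\sqrt{mn}\,\Delta^{5/4} \le \sqrt{n\Delta \cdot n}\,\Delta^{5/4} = n\Delta^{7/4}$, and since we are in the regime $\Delta > n^{1/4}$, one can also bound this in terms of $m$: writing $\sqrt{mn}\,\Delta^{5/4} = m \cdot \sqrt{n/m}\,\Delta^{5/4} \le m \cdot \Delta^{5/4} \cdot (n/m)^{1/2}$, and then since $m \ge \Delta$ (assuming, as we may, that $G$ has no isolated vertices, or else restricting to the support), a short computation using $n \ge \Delta^4$... actually the cleanest route is: the bound $\sqrt{mn}\Delta^{5/4}$ combined with $n\Delta \ge m$ yields $\sqrt{mn}\Delta^{5/4} \le \sqrt{mn}(n)^{5/16}\cdot(\ldots)$ --- I would instead do the arithmetic as in the paper's intended refined analysis (cf. the remark before \Cref{thm:alg}), splitting into the subcases $m \le n^{3/2}$ and $m > n^{3/2}$ or comparing against $mn^{1/4}$ directly, to conclude the whole expression is $\tilde O(mn^{1/4})$. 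I should also verify the additive terms $m + n\Delta$ are dominated: $m = O(mn^{1/4})$ trivially, and $n\Delta \le m n^{1/4}$ whenever $\Delta \le m n^{1/4}/n = m n^{-3/4}$, which follows from $m \ge n\Delta \cdot (\text{...})$... this needs $\Delta^2 \le m$ essentially, i.e. it holds unless the graph is very sparse relative to $\Delta$, a case again absorbed by the $\tilde O(m\Delta)$ algorithm when $\Delta$ is small; for large $\Delta$ with $m$ small one falls back on $\tilde O(\min\{\Delta^2+\sqrt{\Delta n}, n\})$ per extension.

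The main obstacle I anticipate is precisely this last bookkeeping: making the case analysis on $(m, n, \Delta)$ airtight so that in \emph{every} parameter regime the final bound is $\tilde O(mn^{1/4})$ --- in particular handling sparse graphs (where $m \ll n\Delta$, so that the crude substitution $m \le n\Delta$ is lossy and one must keep $m$ and $n$ separate and possibly choose $L$ differently or fall back to the classical algorithm), and confirming that the $\tilde O(n\Delta)$ setup cost and the $O(\log n)$-fold iteration of \texttt{ColorHeavyStars} never dominate. A secondary point to be careful about is that \Cref{lem:key 2} and \Cref{thm:extension} give \emph{expected}-time guarantees, so I would argue the overall bound holds with high probability by a standard restart/Markov argument (run each subroutine with a time budget a constant factor above its expectation, restarting on overrun), exactly the "boosting" alluded to after Lemma~\ref{lm:extension:main}.
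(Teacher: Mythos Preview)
Your overall plan is on the right track, but the case analysis you flag as ``the main obstacle'' is not mere bookkeeping --- there is a genuine missing idea, and as written the argument fails on non-near-regular graphs. Concretely: with $|U^\star| = \tilde O(n/\Delta)$ the \texttt{ColorHeavyStars} step costs $\tilde O(\Delta m |U^\star|/L) = \tilde O(mn/L)$, and balancing against $L \cdot T_{\mathrm{ext}}$ (with $T_{\mathrm{ext}} = \Delta^2 + \sqrt{\Delta n}$) gives at best $\tilde O(\sqrt{mn \cdot T_{\mathrm{ext}}})$. This is $\le mn^{1/4}$ only when $T_{\mathrm{ext}} \le m/\sqrt n$, which fails badly on sparse graphs. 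Take a disjoint union of $n^{2/3}$ stars each with $n^{1/3}$ leaves: then $m = n$, $\Delta = n^{1/3}$, $T_{\mathrm{ext}} = \Theta(n^{2/3})$, and your balanced bound is $\Theta(n^{4/3})$ versus the target $mn^{1/4} = n^{5/4}$. Neither a different choice of $L$ nor the fallback to $\tilde O(m\Delta)$ helps here (that also gives $n^{4/3}$). Your side remark that $n\Delta$ might dominate is a symptom of the same issue, though note that in fact $|E^\star| = \tilde O(m/\Delta)$ (not just $\tilde O(n)$), so the $|E^\star|\Delta$ term is harmless; the real culprit is the $|U^\star|$ factor.

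The paper's missing ingredient is to split $U$ by degree into $U_{\textsf{lo}}$ (degree $\le \Delta/2$) and $U_{\textsf{hi}}$ (degree $> \Delta/2$) and treat them with \emph{different} subroutines. Low-degree vertices have $|\miss_\chi(u)| \ge \Delta/2$, so \texttt{ColorLightStars} (\Cref{lem:key 1}) with $d = \Delta/2$ handles all their incident uncolored edges in $\tilde O(\lambda\Delta + m) = \tilde O(m)$ time, with no residue passed to the extension phase. For high-degree vertices, the crucial point is that $n_{\textsf{hi}} := |\{v : \deg(v) > \Delta/2\}|$ satisfies $n_{\textsf{hi}}\Delta \le 4m$, so $|U_{\textsf{hi}}| = \tilde O(n_{\textsf{hi}}/\Delta)$ and the \texttt{ColorHeavyStars} cost becomes $\tilde O(m\,n_{\textsf{hi}}/\tau)$ rather than $\tilde O(mn/\tau)$. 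The paper then verifies $\sqrt{m\,n_{\textsf{hi}} \cdot T_{\mathrm{ext}}} \le 4mn^{1/4}$ via a short three-case analysis on $\Delta$ versus $n^{1/3}$ and $\sqrt n$, each time using $n_{\textsf{hi}}\Delta \le 4m$ (in the star example, $n_{\textsf{hi}} = n^{2/3}$ and the bound becomes $n^{7/6} \le n^{5/4}$). One further wrinkle: when $n_{\textsf{hi}} < \Delta/4$ the paper samples $U$ only from $V_{\textsf{lo}}$, so that $U_{\textsf{hi}} = \emptyset$ deterministically and the Chernoff bound on $|U_{\textsf{hi}}|$ is not needed; one must then re-verify that the maximum-degree drop in $G[V\setminus U]$ still holds.
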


Let $G = (V,E)$ be a graph with maximum degree $\Delta$. We assume that $\Delta \geq \kappa n^{1/4} \log n$, where $\kappa := 10^4$ is a constant. Otherwise, we can use the $\tilde O(m\Delta)$ time algorithm of \cite{sinnamon2019fast} to $\Delta + 1$ edge color $G$ in $\tilde O(mn^{1/4})$ time. Let $V_{\textsf{lo}} := \{u \in V \mid \deg_G(u) \leq \Delta/2 \}$ be the {\em low degree} nodes, $V_{\textsf{hi}} := V \setminus V_{\textsf{lo}}$ be the {\em high degree} nodes, $n_{\textsf{lo}} := |V_{\textsf{lo}}|$ and $n_{\textsf{hi}} := |V_{\textsf{hi}}|$. Given the graph $G$, we now describe how our algorithm computes a $(\Delta + 1)$-edge coloring $\chi$ of the graph $G$ in 2 phases.

\subsubsection*{Phase 1: Extracting Stars}

Our algorithm begins by finding a subset $U \subseteq V$ that satisfies the following:
\begin{equation}\label{eq:good set}
    \text{(I) } \Delta(G[V \setminus U]) \leq \Delta - 300 \log n, \;  \text{(II) } |U_{\textnormal{\textsf{hi}}}| \leq 
    \frac{|V_{\textnormal{\textsf{hi}}}|}{\Delta} \cdot \frac{3\kappa\log n}{2}, \;  \text{(III) } \lambda_{\textsf{init}} \leq \ \frac{m}{\Delta} \cdot 10\kappa \log n,
\end{equation}
where $U_{\textnormal{\textsf{hi}}} := U \cap V_{\textnormal{\textsf{hi}}}$ and $\lambda_{\textsf{init}} := \sum_{u \in U} \deg_G(u)$.
We use the following algorithm in order to obtain such a set $U$. 

\medskip
\begin{algorithm}[H]
    \SetAlgoLined
    \DontPrintSemicolon
    \For{$10\log n$ \textnormal{\textbf{iterations}}}{
        Sample $U \subseteq V$ by placing each $u \in V$ into $U$ independently with probability $\kappa \log n/ \Delta$\;
        \If{$|V_{\textnormal{\textsf{hi}}}| < \Delta / 4$}{
            $U \leftarrow U \cap V_{\textnormal{\textsf{lo}}}$\;
        }
        \If{$U$ satisfies \Cref{eq:good set}}{
            \Return{$U$}
        }
    }
    \caption{\texttt{ExtractStars}$(G)$}
    \label{alg:extractstars}
\end{algorithm}
\medskip
\noindent
The following lemma summarises the behaviour of \Cref{alg:extractstars}.
\begin{lemma}\label{lem:extract stars}
    \Cref{alg:extractstars} runs in $\tilde O(m)$ time and returns a set $U$ satisfying \Cref{eq:good set} w.h.p.
\end{lemma}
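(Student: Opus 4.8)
The plan is to analyze one iteration of \Cref{alg:extractstars} in isolation, show that the sampled set $U$ satisfies \eqref{eq:good set} with probability at least a constant (say $3/4$), and then boost: since the $10\log n$ iterations use independent randomness, the probability that \emph{none} of them returns is at most $(1/4)^{10\log n} = n^{-\Omega(1)}$, and whenever the algorithm does return, the returned set satisfies \eqref{eq:good set} by the explicit test inside the loop. The runtime is immediate: one iteration samples $U$ in $O(n)$ time, forms $U\cap V_{\textsf{lo}}$ in $O(n)$ time, and tests conditions (I)--(III) in $O(m+n)$ time --- condition (I) by a single scan over $E$ that, for each vertex, counts its neighbors lying outside $U$ and takes the maximum; conditions (II) and (III) by a scan over $U$ using precomputed degrees. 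Hence the total over $O(\log n)$ iterations is $\tilde O(m)$ (after discarding isolated vertices we may assume $m=\Omega(n)$).

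For the single-iteration bound, set $p := \kappa\log n/\Delta$ and note that the standing hypothesis $\Delta\geq\kappa n^{1/4}\log n$ gives $\Delta\geq\kappa\log n=10^4\log n$, so $\Delta/2\geq\Delta-300\log n$ and every Chernoff bound we invoke will have mean $\Omega(\log n)$. I would first record a convenient reformulation of condition (I): it suffices that \emph{every vertex $v$ with $\deg_G(v)>\Delta-300\log n$ has at least $300\log n$ neighbors in $U$}, since then any $v\in V\setminus U$ either already has $\deg_G(v)\leq\Delta-300\log n$ or loses at least $300\log n$ from its degree upon passing to $G[V\setminus U]$. Conditions (II) and (III) I would handle by, respectively, a Chernoff upper-tail bound and Markov's inequality.

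Now split on the test in the algorithm. If $|V_{\textsf{hi}}|\geq\Delta/4$, then $U$ is a $p$-random subset of all of $V$. For (I): a vertex $v$ with $\deg_G(v)>\Delta-300\log n$ has $\deg_G(v)>\Delta/2$, so $\mathbb E|N_G(v)\cap U|=\deg_G(v)\,p>\kappa\log n/2$, and a Chernoff lower-tail bound gives $|N_G(v)\cap U|\geq 300\log n$ except with probability $n^{-\Omega(1)}$ (the hidden constant being as large as we like, since $\kappa$ is a large constant); a union bound over $V$ yields (I) with probability $1-n^{-\Omega(1)}$. For (II): $|U_{\textsf{hi}}|=\sum_{v\in V_{\textsf{hi}}}\mathbf{1}[v\in U]$ has mean $|V_{\textsf{hi}}|\,p\geq(\Delta/4)p=\kappa\log n/4=\Omega(\log n)$, so a Chernoff upper-tail bound gives $|U_{\textsf{hi}}|\leq\tfrac32|V_{\textsf{hi}}|\,p=\tfrac{3\kappa\log n}{2}\cdot\tfrac{|V_{\textsf{hi}}|}{\Delta}$ except with probability $n^{-\Omega(1)}$. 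For (III): $\lambda_{\textsf{init}}=\sum_u\deg_G(u)\mathbf{1}[u\in U]$ has mean at most $p\sum_u\deg_G(u)=2mp=\tfrac{2\kappa m\log n}{\Delta}$, so Markov's inequality gives $\lambda_{\textsf{init}}\leq\tfrac{10\kappa m\log n}{\Delta}$ with probability at least $4/5$. If instead $|V_{\textsf{hi}}|<\Delta/4$, then $U\subseteq V_{\textsf{lo}}$, so (II) holds vacuously ($U_{\textsf{hi}}=\emptyset$) and (III) is handled exactly as before (restricting the sample only decreases the mean). For (I), a low-degree $v$ trivially has $\deg_{G[V\setminus U]}(v)\leq\deg_G(v)\leq\Delta/2$; a high-degree $v$ with $\deg_G(v)>\Delta-300\log n$ has at least $\deg_G(v)-|V_{\textsf{hi}}|>(\Delta-300\log n)-\Delta/4\geq\Delta/2$ neighbors in $V_{\textsf{lo}}$, so $\mathbb E|N_G(v)\cap U|\geq(\Delta/2)p=\kappa\log n/2$, and again Chernoff plus a union bound give $|N_G(v)\cap U|\geq 300\log n$ for all such $v$ with probability $1-n^{-\Omega(1)}$. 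Taking a union bound over the at most three failure events, one iteration fails with probability at most $1/5+n^{-\Omega(1)}\leq 1/4$ for $n$ large, as desired.

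The one step that genuinely needs care is condition (I) in the regime $|V_{\textsf{hi}}|<\Delta/4$: there we deliberately refuse to put high-degree vertices into $U$, yet we must still shave $300\log n$ off the degree of every high-degree vertex. The point, which the argument above exploits, is that this is precisely the regime in which there are so few high-degree vertices that each of them has $\Omega(\Delta)$ low-degree neighbors, enough of which land in $U$. The remaining thing to keep an eye on is quantitative: all Chernoff applications must have mean $\Omega(\log n)$ so that the per-vertex failure probabilities survive the union bound over $V$ (this is why $\kappa$ is a large constant and the case threshold is $\Delta/4$), whereas condition (III) cannot be made high-probability in general --- e.g.\ when $m=o(\Delta^2/\log n)$ --- which is harmless since a constant per-iteration success probability already suffices once we repeat $\Theta(\log n)$ times.
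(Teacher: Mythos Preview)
Your proof is correct and follows essentially the same approach as the paper: a per-iteration constant success probability (via Chernoff for (I) and (II), Markov for (III), with the same $|V_{\textsf{hi}}|\gtrless\Delta/4$ case split) boosted over $\Theta(\log n)$ independent rounds. The only cosmetic difference is that you restrict attention to vertices with $\deg_G(v)>\Delta-300\log n$ when verifying (I), whereas the paper argues about all vertices with $\deg_G(v)\geq\Delta/2$; both are fine.
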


\begin{proof}
    Each iteration of the algorithm can be implemented to run in $\tilde O(m)$ time in the obvious way.
    We now argue that each iteration of the algorithm finds a set $U$ which satisfies \Cref{eq:good set} with probability at least $1/2$ by lower bounding the probability that the set $U$ satisfies each of the conditions.

    \medskip
    \noindent \textbf{Condition (I):} We first consider the case that $|V_{\textnormal{\textsf{hi}}}| \geq \Delta / 4$. Given some $u \in V$, we want to show that $\deg_{V \setminus U}(u) \leq \Delta - 300\log n$ w.h.p. If $\deg_V(u) \leq \Delta/2$, then this holds trivially since $\Delta \geq \kappa \log n$. Otherwise, if $\deg_V(u) \geq \Delta/2$, we have that $\mathbb E [\deg_U(u)] = \deg_V(u) \cdot (\kappa / \Delta) \log n \geq (\kappa / 2) \log n$ and it follows from Chernoff bounds that 
    $$\Pr \left[\deg_U(u) < \frac{\kappa}{4} \cdot \log n \right] \leq \exp \left( - \frac{\kappa}{16} \cdot \log n \right) \leq n^{-\kappa / 16}.$$
    Thus, $\deg_U(u) \geq 300 \log n$ with probability at least $1 - n^{-\kappa/16}$. Taking a union bound, we get that Condition (I) holds w.h.p.
    Now, consider the case that $|V_{\textnormal{\textsf{hi}}}| < \Delta / 4$. Given some $u \in V$, we want to show that $\deg_{V \setminus U_{\textsf{lo}}}(u) \leq \Delta - 300\log n$ w.h.p., where $U_{\textsf{lo}}(u) = U \cap 
    V_{\textsf{lo}}(u)$.
    If $\deg_V(u) \leq \Delta/2$, then this again holds trivially. Otherwise, $u$ has at least $\Delta / 4$ neighbors in $V_{\textsf{lo}}$. It follows that $\mathbb E [\deg_{U_{\textsf{lo}}}(u)] \geq (\kappa / 4) \log n$ and, applying Chernoff bounds, we get that 
    $$\Pr \left[\deg_{U_{\textsf{lo}}}(u) < \frac{\kappa}{8} \cdot \log n \right] \leq \exp \left( - \frac{\kappa}{32} \cdot \log n \right) \leq n^{-\kappa / 32}.$$ 
    Taking a union bound, we again get that Condition (I) holds w.h.p.
    
    \medskip
    \noindent \textbf{Condition (II):} If $|V_{\textnormal{\textsf{hi}}}| < \Delta / 4$, then $U_{\textsf{hi}} = \emptyset$, and Condition (II) holds trivially. If $|V_{\textnormal{\textsf{hi}}}| \geq \Delta / 4$, we have that $\mathbb E[|U_{\textsf{hi}}|] = (|V_{\textsf{hi}}|/\Delta) \cdot \kappa \log n \geq (\kappa/4) \log n$. Applying Chernoff bounds, we get that 
    $$ \Pr \left[|U_{\textsf{hi}}| \geq \frac{3}{2} \cdot \frac{|V_{\textsf{hi}}|}{\Delta} \cdot \kappa \log n \right] \leq \exp \left( - \frac{1}{10} \cdot \frac{|V_{\textsf{hi}}|}{\Delta} \cdot \kappa \log n \right) \leq \exp \left( - \frac{\kappa}{40} \cdot \log n \right) \leq n^{-\kappa / 40}. $$

    \medskip
    \noindent \textbf{Condition (III):} We first note that
    $$\mathbb E \left[ \sum_{u \in U} \deg_G(u) \right] = \sum_{u \in V} \deg_G(u) \cdot \Pr [u \in U] = \frac{2m\kappa \log n}{\Delta}. $$
    By Markov's inequality, it follows that
    $$\Pr \left[ \sum_{u \in U} \deg_G(u) \geq \frac{10m\kappa \log n}{\Delta} \right] \leq \frac{1}{5}.$$

    \noindent
    By taking a union bound, we have that $U$ satisfies \Cref{eq:good set} with probability at least $1/2$.
\end{proof}

\subsubsection*{Phase 2: Coloring the Graph}

After finding a set $U$ that satisfies \Cref{eq:good set}, our algorithm proceeds to edge coloring the graph $G$. Using the fact that $U$ satisfies Condition~(I), we can apply \texttt{NearVizingColoring} (\Cref{thm:assadi}) in order to produce a $(\Delta + 1)$-edge coloring $\chi$ of $G[V \setminus U]$. We then use \texttt{ColorLightStars} (\Cref{lem:key 1}) to extend $\chi$ to all of the edges incident on nodes in $U_{\textsf{lo}} := \{u \in U \mid \deg_G(u) \leq \Delta/2 \}$. Finally, we use \texttt{ColorHeavyStars} (\Cref{lem:key 2}) to extend $\chi$ to all of the edges incident on nodes in $U_{\textsf{hi}} := U \setminus U_{\textsf{lo}}$.
\Cref{alg:final} gives the full description of the second phase of our algorithm using the subroutines from \Cref{sec:parts}. We write $n_{\textsf{hi}} = |V_{\textsf{hi}}|$.

\medskip
\begin{algorithm}[H]
    \SetAlgoLined
    \DontPrintSemicolon
    \SetKwRepeat{Do}{do}{while}
    \SetKwBlock{Loop}{repeat}{EndLoop}
    \texttt{// Step 1:~Efficiently color the non-star edges using slack}\;
    $\chi \leftarrow \texttt{NearVizingColoring}(G[V \setminus U])$\;
    \medskip
    \texttt{// Step 2:~Extend $\chi$ to the light stars}\;
    $\lambda_{{\textsf{lo}}} := \sum_{u \in U_{\textsf{lo}}} \deg_G^\chi(u)$\;
    \While{$\lambda_{{\textnormal{\textsf{lo}}}} \geq 1$}{
        $\chi \leftarrow \texttt{ColorLightStars}(G, \chi, U_{\textsf{lo}})$\;
    }
    \medskip
    \texttt{// Step 3:~Extend $\chi$ to the heavy stars}\;
    $\lambda_{{\textsf{hi}}} := \sum_{u \in U_{\textsf{hi}}} \deg_G^\chi(u)$\;
    $\tau := \sqrt{m n_{\textsf{hi}} / \min \{\Delta^2 + \sqrt{\Delta n}, n\}}$\;
    \While{$\lambda_{{\textnormal{\textsf{hi}}}} > \tau$}{
        $\chi \leftarrow \texttt{ColorHeavyStars}(G, \chi, U_{\textsf{hi}})$\;
    }
    \While{$\lambda_{{\textnormal{\textsf{hi}}}} \geq 1$}{
        Let $e$ be an uncolored edge\;
        $\chi \leftarrow \texttt{FastVizingExtension}(G, \chi, e)$\;
    }
    \Return{$\chi$}
    \caption{\texttt{FastColoring}$(G, U)$}
    \label{alg:final}
\end{algorithm}
\medskip

\noindent The following theorem summarizes the properties of \Cref{alg:final}.

\begin{theorem}\label{thm:alg}
    \Cref{alg:final} produces a $(\Delta + 1)$-edge coloring of the graph $G$ in expected time $\tilde O(mn^{1/4})$.
\end{theorem}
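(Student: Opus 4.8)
The plan is to prove \Cref{thm:alg} by verifying three things about \Cref{alg:final}: correctness (the output is a proper $(\Delta+1)$-edge coloring), termination of each \texttt{while} loop, and the expected runtime bound. Correctness is essentially immediate: Step~1 produces a valid $(\Delta+1)$-coloring of $G[V\setminus U]$ via \Cref{thm:assadi} (we may use $\Delta+1 \geq \Delta(G[V\setminus U]) + 300\log n$ by Condition~(I)), and Steps~2 and~3 only extend the coloring to new edges using \texttt{ColorLightStars}, \texttt{ColorHeavyStars}, and \texttt{FastVizingExtension}, each of which preserves validity and reduces the number of uncolored edges. Since $U_{\textsf{lo}}$ and $U_{\textsf{hi}}$ partition $U$ and every uncolored edge of $\chi$ is incident on $U$ throughout, once all three loops terminate there are no uncolored edges left. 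For termination: \texttt{ColorLightStars} colors $\Omega(\lambda)$ of the remaining $\lambda$ uncolored edges incident on $U_{\textsf{lo}}$ in each call (\Cref{lem:key 1}), so $\lambda_{\textsf{lo}}$ decreases geometrically and the loop runs $O(\log m)$ times; similarly for the first \texttt{while} of Step~3 with \texttt{ColorHeavyStars} (\Cref{lem:key 2}); and the last \texttt{while} runs at most $\tau$ times since each call of \texttt{FastVizingExtension} colors exactly one edge.

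The substance is the runtime analysis, which I would organize term by term. Step~1 costs $\tilde O(m)$ by \Cref{thm:assadi}. For Step~2, by the same geometric-sum argument used in the proof of \Cref{lem:key 2}, the total cost of all \texttt{ColorLightStars} calls is $\tilde O(\lambda_{\textsf{lo},\textsf{init}}\,\Delta + \Delta m/d)$ where $d = \Delta/2$ is a valid lower bound on $|\miss_\chi(u)|$ for $u\in U_{\textsf{lo}}$ (each such $u$ has $\deg_G(u)\le\Delta/2$, so at least $\Delta/2$ missing colors), and $\lambda_{\textsf{lo},\textsf{init}} \le \lambda_{\textsf{init}} = \tilde O(m/\Delta)$ by Condition~(III); hence Step~2 costs $\tilde O\!\left(\tfrac{m}{\Delta}\cdot\Delta + \tfrac{\Delta m}{\Delta/2}\right) = \tilde O(m)$. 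For the first loop of Step~3, \Cref{lem:key 2} and the geometric-sum argument give total cost $\tilde O(\lambda_{\textsf{hi,init}}\,\Delta + \Delta m |U_{\textsf{hi}}|/\tau)$ where by the definition of $\tau$ the second term is $\tilde O(\Delta m |U_{\textsf{hi}}| / \tau)$, and I would bound $|U_{\textsf{hi}}| = |U_{\textnormal{\textsf{hi}}}|$ using Condition~(II) by $\tilde O(n_{\textsf{hi}}/\Delta)$ (also handling the degenerate case $|V_{\textsf{hi}}| < \Delta/4$, where $U_{\textsf{hi}}=\emptyset$ and this loop does nothing); so $|U_{\textsf{hi}}|\Delta \le \tilde O(n_{\textsf{hi}})$ and the loop costs $\tilde O(\lambda_{\textsf{hi,init}}\,\Delta + m\,n_{\textsf{hi}}/\tau) = \tilde O(m + m\,n_{\textsf{hi}}/\tau)$. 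Finally, the last loop of Step~3 makes at most $\tau$ calls to \texttt{FastVizingExtension}, each costing $\tilde O(\min\{\Delta^2 + \sqrt{\Delta n}, n\})$ by \Cref{thm:extension}, for total $\tilde O(\tau\cdot\min\{\Delta^2+\sqrt{\Delta n}, n\})$.

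Collecting terms, the expected runtime is $\tilde O\!\left(m + \tfrac{m\,n_{\textsf{hi}}}{\tau} + \tau\cdot\min\{\Delta^2 + \sqrt{\Delta n}, n\}\right)$, and the choice $\tau = \sqrt{m\,n_{\textsf{hi}} / \min\{\Delta^2+\sqrt{\Delta n}, n\}}$ balances the last two terms, giving $\tilde O\!\left(m + \sqrt{m\,n_{\textsf{hi}}\cdot\min\{\Delta^2 + \sqrt{\Delta n}, n\}}\right)$. It then remains to show this is $\tilde O(mn^{1/4})$ under the standing assumption $\Delta \ge \kappa n^{1/4}\log n$. The cleanest route is a case split on the size of $\Delta$: when $\Delta \le \sqrt{n}$ we have $\Delta^2 + \sqrt{\Delta n} = O(\sqrt{\Delta n})$, and combined with $n_{\textsf{hi}}\Delta \le 2m$ (each high-degree vertex has degree $>\Delta/2$) one gets $\sqrt{m n_{\textsf{hi}}\sqrt{\Delta n}} \le \sqrt{m\cdot (2m/\Delta)\cdot\sqrt{\Delta n}} = \sqrt{2}\, m\, (n/\Delta)^{1/4} \le \tilde O(m n^{1/4}/\Delta^{1/4})\le \tilde O(m n^{1/4})$; when $\Delta > \sqrt n$ one instead uses the $\min$'s other branch, $\min\{\cdots\} \le n$, so $\sqrt{m n_{\textsf{hi}}\cdot n} \le \sqrt{m\cdot(2m/\Delta)\cdot n} = \sqrt{2m^2 n/\Delta} \le \tilde O(m\sqrt{n/\Delta})$, which is $\tilde O(m n^{1/4})$ precisely because $\Delta \ge \kappa n^{1/4}\log n$ forces $n/\Delta \le n^{3/4}/(\kappa\log n)$. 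I expect this final case analysis — choosing which branch of the $\min$ to invoke and checking the arithmetic matches $n^{1/4}$ in the boundary regime $\Delta \approx \sqrt{n}$ — to be the main (though routine) obstacle; everything upstream is bookkeeping, since the three subroutine lemmas and the Euler-free structure of \Cref{alg:final} already do the heavy lifting.
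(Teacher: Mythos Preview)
Your overall structure matches the paper's proof: correctness is immediate, Step~1 is $\tilde O(m)$ by \Cref{thm:assadi}, Step~2 is $\tilde O(m)$ using $d=\Delta/2$ and Condition~(III), and Step~3 reduces to bounding $\sqrt{m\,n_{\textsf{hi}}\cdot\min\{\Delta^2+\sqrt{\Delta n},\,n\}}$ by $\tilde O(mn^{1/4})$ using $n_{\textsf{hi}}\Delta = O(m)$. That is exactly what the paper does.

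The gap is in your final case analysis. Your claim that ``when $\Delta\le\sqrt n$ we have $\Delta^2+\sqrt{\Delta n}=O(\sqrt{\Delta n})$'' is false: the inequality $\Delta^2\le\sqrt{\Delta n}$ is equivalent to $\Delta\le n^{1/3}$, so for $n^{1/3}<\Delta\le\sqrt n$ the term $\Delta^2$ dominates and your computation $\sqrt{m\,n_{\textsf{hi}}\sqrt{\Delta n}}$ does not control the quantity. The paper resolves this with a three-way split, inserting an intermediate case $n^{1/3}\le\Delta<\sqrt n$ in which one uses $\min\{\Delta+(\Delta n)^{1/4},\sqrt n\}\le 2\Delta$ and bounds $2\Delta\sqrt{m\,n_{\textsf{hi}}}\le 2\sqrt{m\cdot\Delta n_{\textsf{hi}}}\cdot\sqrt\Delta\le 4m\sqrt\Delta\le 4mn^{1/4}$.

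There is also a slip in your second case: to get $m\sqrt{n/\Delta}\le \tilde O(mn^{1/4})$ you must invoke the \emph{case assumption} $\Delta>\sqrt n$ (which gives $n/\Delta<\sqrt n$), not the standing assumption $\Delta\ge\kappa n^{1/4}\log n$, which only yields $n/\Delta\le n^{3/4}/(\kappa\log n)$ and hence $\sqrt{n/\Delta}\le n^{3/8}$, not $n^{1/4}$.
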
 
\Cref{thm:main} follows directly from \Cref{thm:alg} and \Cref{lem:extract stars} using standard probabilistic amplification. 

\subsection{Analysis}

It follows from the properties of the set $U$ and the correctness of the subroutines used by our algorithm that \Cref{alg:final} always returns a $(\Delta + 1)$-edge coloring of $G$.
It remains to bound the expected running time of our algorithm. It follows directly from \Cref{thm:assadi} that Step~1 of our algorithm takes $\tilde O(m)$ expected time. We now show that Steps~2 and 3 of our algorithm take $\tilde O(m)$ and $\tilde O(mn^{1/4})$ expected time, respectively.

\subsubsection*{The Running Time of Step 2}

We first note that, for all $u \in U_{\textsf{lo}}$, we have  $|\miss_G^\chi(u)| \geq \Delta/2$. Applying \Cref{lem:key 1}, it follows that Step~2 of our algorithm runs for $\tilde O(1)$ iterations, where each iteration takes expected time $\tilde O(\lambda_{\textsf{lo}} \Delta + m)$. By the properties of the vertex set $U$ (refer to \Cref{eq:good set}) obtained by
\Cref{alg:extractstars}, we have $\lambda_{\textsf{lo}} \leq \lambda_{\textsf{init}} \leq \tilde O(m/\Delta)$. Thus, Step~2 takes $\tilde O(m)$ expected time in total.

\subsubsection*{The Running Time of Step 3}

Applying \Cref{lem:key 2} and \Cref{thm:extension}, it follows that Step~3 of our algorithm takes expected time
$$ \tilde O \! \left(\lambda_{\textsf{hi}}\Delta + \frac{\Delta m|U_{\textsf{hi}}|}{\tau} \right) + \tilde O \!\left(\tau \cdot \min \{\Delta^2 + \sqrt{\Delta n}, n\}\right). $$
By the properties of the vertex set $U$ (refer to \Cref{eq:good set}),
we have $\lambda_{\textsf{hi}} \leq \lambda_{\textsf{init}} \leq \tilde O(m/\Delta)$ and $|U_{\textsf{hi}}| \leq \tilde O(n_{\textsf{hi}}/\Delta)$, hence the expected running time of Step 3 of our algorithm is bounded by
\begin{eqnarray*} \tilde O \! \left(m + \frac{mn_{\textsf{hi}}}{\tau} \right) + \tilde O \!\left(\tau \cdot \min \{\Delta^2 + \sqrt{\Delta n}, n\}\right)  &=& \tilde O(m) + \tilde O \! \left(\sqrt{mn_{\textsf{hi}} \cdot \min\{\Delta^2 + \sqrt{\Delta n}, n\}}\right) 
\\ &=& \tilde O(m) + \tilde O \! \left(\sqrt{mn_{\textsf{hi}}} \cdot \min\{\Delta + (\Delta n)^{1/4}, \sqrt n\}\right). 
\end{eqnarray*}
Let $\Gamma := \sqrt{mn_{\textsf{hi}}} \cdot \min\{\Delta + (\Delta n)^{1/4}, \sqrt n\}$. We now show that $\Gamma \leq 4 mn^{1/4}$, which implies that Step~3 has an expected running time of $\tilde O(mn^{1/4})$.
We consider the following 3 cases.

\medskip
\noindent \emph{Case 1: $\Delta \geq \sqrt n$.} Then $\min\{\Delta + (\Delta n)^{1/4}, \sqrt n\} \leq \sqrt n$, thus 
$$\Gamma \leq \sqrt{mn_{\textsf{hi}}} \cdot \sqrt n = m \sqrt{n_{\textsf{hi}}n/m} < m \sqrt{4n/\Delta} = 2 m \sqrt{n/\Delta} \leq 2 mn^{1/4},$$ 
where the penultimate and last inequalities hold as
$\Delta n_{\textsf{hi}} \leq 4m$ and $\Delta \geq \sqrt{n}$, respectively.

\medskip
\noindent \emph{Case 2: $n^{1/3} \leq \Delta < \sqrt n$.} Then $\min\{\Delta + (\Delta n)^{1/4}, \sqrt n\} \leq 2\Delta$, thus  
$$\Gamma \leq 2\sqrt{mn_{\textsf{hi}}} \cdot \Delta = 2\sqrt{m\Delta n_{\textsf{hi}}} \cdot \sqrt{\Delta} \le 4m \sqrt \Delta \leq 4mn^{1/4},$$ 
where the penultimate and last inequalities hold as $\Delta n_{\textsf{hi}}\leq 4m$ and $\Delta \leq \sqrt n$.

\medskip
\noindent \emph{Case 3: $\Delta < n^{1/3}$.} Then $\min\{\Delta + (\Delta n)^{1/4}, \sqrt n\} \leq 2(\Delta n)^{1/4}$, thus 
$$\Gamma \leq 2\sqrt{mn_{\textsf{hi}}} \cdot (\Delta n)^{1/4} = 2m \sqrt{n_{\textsf{hi}}/m} \cdot (\Delta n)^{1/4} \leq 2m \sqrt{4/\Delta} \cdot (\Delta n)^{1/4} = 4m (n/\Delta)^{1/4} \leq 4mn^{1/4},$$
where the penultimate inequality holds as $\Delta n_{\textsf{hi}} \leq 4m$.
\section{Color Extension to Edges}\label{sec:extension}

In this section, we prove the following theorem.

\begin{theorem}[\Cref{thm:extension} Restated]\label{thm:extension expected}
    Given a undirected simple graph $G = (V, E)$ on $n$ vertices and maximum degree $\Delta$, as well as and a partial $(\Delta + 1)$-edge coloring $\chi$ of $G$ with an uncolored edge $e$, we can extend $\chi$ to the edge $e$ in $\tilde O(\Delta^2 + \sqrt{\Delta n})$ expected time.
\end{theorem}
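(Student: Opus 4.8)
The plan is to reduce the general case to the ``nice'' case already analyzed in the extended abstract (Sections~\ref{sec:algo:main}--\ref{sec:algo:analysis:main}), where Assumption~\ref{assume:main} lets us ignore Vizing fans and reason purely about alternating paths. The core algorithm is the recursive random-walk procedure \textsf{ExtendColoring}: starting from the uncolored edge $e=(u,v)$ with blocking colors $\{c_u,c_v\}$, we pick a type $\tau=\{\alpha_u,\alpha_v\}$ disjoint from the blocking pair, trace the $\tau$-alternating path $P$ from $u$, and either apply $P$ (if $\texttt{length}(P)\le 2L+2$) or shift by a uniformly random prefix length $i\in[L]$ and recurse on the new uncolored edge $e_i$ with blocking colors $\{\alpha_u,\alpha_v\}$. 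The whole construction is encoded by the meta-tree $\mathcal{T}$, each visited meta-node costs $\tilde O(L)=\tilde O(\Delta^2+\sqrt{\Delta n})$ time via standard data structures, so it suffices to show the walk terminates within depth $O(\log n)$ with probability $\Omega(1/\log n)$ (Lemma~\ref{lm:extension:main}); amplification and the $O(n)$-time fallback of Vizing's original proof then give the claimed \emph{expected} bound $\tilde O(\min\{\Delta^2+\sqrt{\Delta n},\,n\})$.

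For the analysis I would follow exactly the structure laid out in Section~\ref{sec:algo:analysis:main}: define \emph{clean}/\emph{dirty} meta-nodes according to whether the new type intersects the union of ancestor types, \emph{contaminated} if $\ge L/(10\ell)$ children are dirty, and work inside the critical subtree $\mathcal{T}^\star$ (kill everything with a dirty or contaminated ancestor, or of depth $>\ell$). The three key structural facts are: (a) Lemma~\ref{lem:expansion:main} --- congenitally clean meta-nodes with a fixed transcript number at most $n$, proved by showing their length-$L$ prefixes are vertex-disjoint (two type-$\tau$ alternating paths under a fixed coloring are either vertex-disjoint or the same path reversed, and non-terminal ones have length $\ge 2L+2$); (b) Lemma~\ref{lem:cont is good:main} --- a contaminated meta-node at depth $\le\ell$ has $\ge L/(40\ell)$ terminal children, because the $\ge L/(20\ell)$ non-trivial dirty children carry simultaneously-existing alternating paths of only $O(\ell\Delta)$ possible types, whose total length is $\le 2n\cdot O(\ell\Delta)$, forcing average length $\le O(\ell^2\Delta n/L)\le L$ by the choice $L=\Omega(\sqrt{\Delta n})$; (c) a counting bound $|\mathcal{Z}^\star_{\texttt{cc-not-cont}}|\le \Delta^{2(\ell+1)}n$ combined with the $1/L^\ell$ reach-probability gives $\Pr[\mathcal{E}^\star_{\texttt{cc-not-cont}}]\le 1/10$ from $L\gg\Delta^2$. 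A union-bound over the $\le\ell$ core steps (each losing $\le 1/(10\ell)$) bounds $\Pr[\mathcal{E}^\star_{\texttt{d}}]\le 1/5$, so with probability $\ge 7/10$ the walk either already terminated or sits at a congenitally-clean contaminated node, from which by (b) it terminates in one more step with probability $\ge 1/(40\ell)$.

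The main obstacle --- and where most of the real work lies --- is removing Assumption~\ref{assume:main}, i.e.\ handling Vizing fans, as flagged in Section~\ref{sec:algo:hurdle:main}. Two things break. \textbf{Hurdle~1:} constructing a fan around a vertex recolors up to $\Delta$ incident edges with colors outside the tracked types, so an alternating path $P^{(x)}$ may fail to be maximal in the original coloring $\chi^{(r)}$ (a ``damaged'' meta-node), destroying the vertex-disjointness argument behind Lemma~\ref{lem:expansion:main}. The fix is a charging argument: each fan damages $O(\Delta)$ edges, each lies on $O(\Delta)$ alternating paths, the walk has $\tilde O(1)$ depth, so each meta-node has only $\tilde O(\Delta^2)$ damaged children; since $L\gg\Delta^2$, with probability $\Omega(1)$ the walk meets no damaged node (this is Lemma~\ref{lem:low damage} in the full version). \textbf{Hurdle~2:} with fans one can no longer always avoid the blocking colors $\{c_u,c_v\}$ when building $P$, which was essential to Observation~\ref{ob:parent:child:main} and hence to Lemma~\ref{lem:cont is good:main}. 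The remedy is a modified fan construction that, when the blocking colors are unavoidable, produces a $\{c_u,c_v\}$-alternating path starting at \emph{neither} $u$ nor $v$ (Lemma~\ref{lm:overlap}); one then adds a parallel case to the win-win analysis --- if an $\Omega(1/\ell)$-fraction of a node's children are of this ``overlap'' type, their paths are forced vertex-disjoint, so their average length is $\tilde O(\sqrt{\Delta n})\le L$, again yielding $\Omega(L/\ell)$ terminal children. Carefully re-deriving (a)--(c) above with these two modifications absorbed, and verifying the $O(n)$ fallback keeps the expected bound at $\tilde O(\min\{\Delta^2+\sqrt{\Delta n},n\})$, constitutes the complete proof deferred to Section~\ref{sec:extension}.
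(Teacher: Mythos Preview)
Your proposal is correct and matches the paper's approach essentially step for step --- the full version in Section~\ref{sec:extension} carries out exactly the plan you describe, splitting ``dirty'' into $\alpha$-dirty (type overlaps a grandparent-or-earlier type) and $\beta$-dirty (type equals the parent's type, i.e.\ the overlapping-path case from Hurdle~2), and proving a separate ``contaminated $\Rightarrow$ many terminal children'' lemma for each, together with the damaged-node bound you anticipate. The only refinement worth flagging is that in the $\beta$-contaminated case (Lemma~\ref{lem:cont is good 2}) the paths are not literally vertex-disjoint as you (and the paper's own overview) suggest; rather, after discarding $O(\Delta)$ ``bad extension points'', each distinct $\tau$-alternating path can repeat up to $2\Delta$ times in the collection, so the total-length bound is $2\Delta n$ rather than $n$ --- but this still gives average length $\le L$ since $L = \tilde\Omega(\sqrt{\Delta n})$.
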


This section is self-contained (excluding the basic notations from \Cref{sec:basic notation}) and uses slightly different notation than the sketch of this proof in \Cref{sec:th:extension:main}.

\subsection{Preliminaries}

We now describe the main components used in our multi-step Vizing chain construction.

\subsubsection*{The Algorithm \textnormal{\textsf{VizingFan}}}

The first component of our algorithm is a standard Vizing fan construction \cite{sinnamon2019fast, duan2019dynamic} but with one simple modification. The algorithm $\VizingF$ takes as input some uncolored edge $(u,v)$ and colors $c_u \in \miss_\chi(u)$ and $c_v \in \miss_\chi(v)$. It then proceeds to either extend $\chi$ to the uncolored edge $(u,v)$ by shifting colors around $u$, or constructs a Vizing fan around the vertex $u$ while ensuring that (1) none of the vertices $v_i$ participating in the fan have $c_u \in \miss_\chi(v_i)$,
and (2) that the color of the first vertex in the fan is not $c_v$. 
\Cref{alg:vizing fan} gives a formal description of this procedure.

\medskip

\begin{algorithm}[H]
    \SetAlgoLined
    \DontPrintSemicolon
    \SetKwRepeat{Do}{do}{while}
    \SetKwBlock{Loop}{repeat}{EndLoop}
    $i \leftarrow 0$ and $v_0 \leftarrow v$\;
    Let $c_0 \in \miss_\chi(v_0) \setminus \{c_v\}$\;
    \While{$c_i \notin \{c_0 ,\dots, c_{i-1}\}$ \textnormal{and} $c_i \notin \miss_\chi(u)$}{
        Let $(u, v_{i+1})$ be the edge with color $\chi(u, v_{i+1}) = c_i$\;
        Let $c_{i+1} \in \miss_\chi(v_{i+1})$\;
        \If{$c_u \in \miss_\chi(v_{i+1})$}{
            $c_{i+1} \leftarrow c_u$\;
        }
        $i \leftarrow i + 1$\;
    }
    \If{$c_i \in \miss_\chi(u)$}{
        \For{$j = 0\dots i$}{
            $\chi(u, v_j) \leftarrow c_j$\;
        }
        \Return{$\chi$}
    }
    \Else{
        \Return{$(v_0,c_0),\dots,(v_i,c_i)$}
    }
    \caption{\textsf{VizingFan}$(\chi,u, v, c_u, c_v)$}
    \label{alg:vizing fan}
\end{algorithm}

\medskip
\noindent
We refer to the sequence $(v_0,c_0),\dots,(v_i,c_i)$ returned by \Cref{alg:vizing fan} as a Vizing fan. We say that a Vizing fan is $c$-primed if $c = c_i$.

\subsubsection*{The Algorithm \textnormal{\textsf{Vizing}}}

We now give an algorithm $\Vizing$ which builds a Vizing chain using $\VizingF$ as a subroutine and extends the coloring $\chi$ to the uncolored edge $(u,v)$. The algorithm also takes colors $c_u \in \miss_\chi(u)$ and $c_v \in \miss_\chi(v)$ as input.
We make some small modifications to the standard implementation of this algorithm for technical reasons that will become clear later on. \Cref{alg:vizing} gives a formal description of this procedure. 

\begin{algorithm}[H]
    \SetAlgoLined
    \DontPrintSemicolon
    \SetKwRepeat{Do}{do}{while}
    \SetKwBlock{Loop}{repeat}{EndLoop}
    \If{\textnormal{$\textsf{VizingFan}(\chi,u, v, c_u, c_v)$ extends $\chi$}}{
        $\chi \leftarrow \textsf{VizingFan}(\chi,u, v, c_u, c_v)$\;
        \Return{$\chi$}
    }
    $F = (v_0,c_0),\dots,(v_k,c_k) \leftarrow \textsf{VizingFan}(\chi,u, v, c_u, c_v)$\;
    \If{\textnormal{$F$ is \emph{not} $c_v$-primed}}{
        Let $c \in \miss_\chi(u) \setminus \{c_u\}$\;
        Let $P$ denote the maximal $\{c, c_k\}$-alternating path starting at $u$\;
    }
    \If{\textnormal{$F$ is $c_v$-primed}}{
        Let $c_i$ be the first appearance of $c_v$ in $c_0,\dots,c_k$\;
        Let $P$ denote the maximal $\{c_u, c_v\}$-alternating path starting at $v_i$\;
    }
    Extend $\chi$ to $(u,v)$ by flipping the path $P$ and shifting colors around $u$ (see \Cref{lem:full vizing proof})\;
    \Return{$\chi$}
    \caption{\textsf{Vizing}$(\chi,u, v, c_u, c_v)$}
    \label{alg:vizing}
\end{algorithm}

\medskip
\noindent Thus, running $\textsf{Vizing}(\chi,u, v, c_u, c_v)$ extends the coloring $\chi$ to the uncolored edge $(u,v)$ by shifting colors around the vertex $u$ and flipping the colors of the alternating path $P$. The analysis of the case where $F$ is not $c_v$-primed is the same as in the standard Vizing chain construction \cite{sinnamon2019fast}. The case where $F$ is $c_v$-primed follows from a similar argument. The following lemma summarises the behaviour of this algorithm.

\begin{lemma}\label{lem:full vizing proof}
    \Cref{alg:vizing} extends the coloring $\chi$ to the edge $(u,v)$ in time $\tilde O(\Delta + |P|)$.
\end{lemma}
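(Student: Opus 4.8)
\textbf{Proof plan for Lemma~\ref{lem:full vizing proof}.}

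The plan is to verify that \Cref{alg:vizing} produces a valid extension of $\chi$, splitting into the two cases that govern the choice of the alternating path $P$, and then to bound the running time. First I would recall the structural guarantees of $\VizingF$: when $\VizingF(\chi,u,v,c_u,c_v)$ does not extend $\chi$, it returns a fan $F = (v_0,c_0),\dots,(v_k,c_k)$ with $v_0 = v$, $\chi(u,v_{i+1}) = c_i$ for $i < k$, each $c_i \in \miss_\chi(v_i)$, and $c_k$ either equal to some earlier $c_j$ (the ``rotational'' case) or $c_k \in \miss_\chi(u)$ (but the latter is exactly the case where $\VizingF$ extends, so it does not arise here). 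The key invariant to extract is that $c_k \notin \miss_\chi(u)$, so there is a genuine Vizing chain to follow, together with the two modifications: $c_u \notin \miss_\chi(v_i)$ for the participating vertices $v_i$ with $i \geq 1$ (enforced by the \textbf{if} $c_u \in \miss_\chi(v_{i+1})$ branch, which would otherwise have terminated the loop earlier with an extension), and $c_0 \neq c_v$.

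For the case where $F$ is \emph{not} $c_v$-primed, this is the textbook Vizing-chain argument as in \cite{sinnamon2019fast}: we pick a fresh color $c \in \miss_\chi(u)\setminus\{c_u\}$ (which exists because $|\miss_\chi(u)| \geq 2$ since $(u,v)$ is uncolored), take the maximal $\{c, c_k\}$-alternating path $P$ starting at $u$, flip it, and then rotate the fan to free up the color needed to color $(u,v)$. The standard analysis shows $P$ does not end at $v$ in a way that obstructs the rotation, using that $c \in \miss_\chi(u)$ and the fan was correctly constructed; I would cite this rather than reprove it. For the case where $F$ \emph{is} $c_v$-primed, i.e. $c_k = c_v$, let $c_i$ be the first occurrence of $c_v$ among $c_0,\dots,c_k$; note $i \geq 1$ since $c_0 \neq c_v$. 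Here the idea is that $v_i$ has $c_v \in \miss_\chi(v_i)$, and we take the maximal $\{c_u,c_v\}$-alternating path $P$ starting at $v_i$. One flips $P$ and then shifts colors around $u$ along the fan prefix $(v_0,\dots,v_i)$; the point is that after flipping, $c_v$ becomes available at $v_i$ from the $u$-side (so the edge $(u,v_i)$ can receive $c_v$), which cascades the rotation back down to free a color at $v_0 = v$ for the edge $(u,v)$. The correctness hinges on $P$ not terminating at $u$ with color $c_u$ (which would break the shift); this follows because $c_u \in \miss_\chi(u)$, so if $P$ reached $u$ it would stop there on a $c_v$-colored edge, and one checks this does not interfere --- this mirrors the argument in the non-primed case and I expect it to go through with the same bookkeeping. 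The modification ensuring $c_u \notin \miss_\chi(v_j)$ for fan vertices is what guarantees the path from $v_i$ genuinely has the alternating structure needed and does not collide with the fan.

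For the running time: $\VizingF$ runs in $O(\Delta)$ time since the fan has at most $\Delta+1$ vertices and each step is $O(1)$ with suitable data structures (maintaining, for each vertex, the set of missing colors). Constructing the maximal alternating path $P$ takes $O(|P|)$ time, and flipping it plus shifting the $O(\Delta)$ fan edges takes $O(\Delta + |P|)$ time. Hence the total is $\tilde O(\Delta + |P|)$, with the $\tilde O$ absorbing logarithmic factors from the data structures.

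\textbf{Main obstacle.} The delicate point is the $c_v$-primed case, which is the nonstandard part of the construction: one must verify that starting the alternating path at the \emph{interior} fan vertex $v_i$ (rather than at $u$) still yields a valid recoloring, i.e. that flipping $P$ and then performing the fan rotation around $u$ restricted to the prefix $v_0,\dots,v_i$ does not create a monochromatic conflict at $u$, at $v_i$, or along $P$. This requires carefully tracking which colors are missing where before and after the flip, and crucially using both modifications built into $\VizingF$ (the color $c_0 \neq c_v$ ensures $i \geq 1$ so the prefix is nontrivial and behaves like a standard fan; the constraint $c_u \notin \miss_\chi(v_j)$ ensures the flip of the $\{c_u,c_v\}$-path near $v_i$ does not accidentally free $c_u$ at a fan vertex and desynchronize the rotation). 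Everything else is routine adaptation of the classical Vizing chain analysis.
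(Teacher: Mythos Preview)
Your overall approach matches the paper's, including the case split on whether $F$ is $c_v$-primed and the identification of the primed case as the nonstandard part. However, there are two concrete gaps in the $c_v$-primed case.

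First, the color you assign to $(u,v_i)$ is wrong. Before the flip, $c_v \in \miss_\chi(v_i)$ and $c_u \notin \miss_\chi(v_i)$ (this is exactly what the $\VizingF$ modification guarantees), so the first edge of $P$ out of $v_i$ carries color $c_u$. After flipping $P$, that edge carries $c_v$, so $c_v$ is \emph{no longer} missing at $v_i$ --- it is $c_u$ that becomes free there. Since $c_u \in \miss_\chi(u)$, the correct assignment is $\chi(u,v_i) \leftarrow c_u$, not $c_v$; your stated procedure would create a conflict at $v_i$.

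Second, and more substantively, even with the corrected color the prefix-rotation you describe only works when $P$ does \emph{not} end at $u$. If $P$ does end at $u$, then $P$ contains the edge $(u,v_{i+1})$ (which has color $c_v = c_i$), and after the flip that edge has color $c_u$. Now assigning $\chi(u,v_i) \leftarrow c_u$ creates a conflict at $u$. The paper handles this sub-case differently: flip $P$ first, then rotate the \emph{entire} fan $\chi(u,v_0)\leftarrow\chi(u,v_1),\dots,\chi(u,v_{k-1})\leftarrow\chi(u,v_k)$, and finally set $\chi(u,v_k) \leftarrow c_v$ (this works because after the flip $c_v \in \miss_\chi(u)$, and $c_v = c_k \in \miss_\chi(v_k)$; the rotation is consistent since the flip changed $\chi(u,v_{i+1})$ from $c_i$ to $c_u$, and $c_u \neq c_{i-1}$). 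You flagged ``$P$ terminating at $u$'' as the delicate point but did not supply the alternative rotation; ``one checks this does not interfere'' hides precisely the step where your described procedure breaks. The non-primed case has an analogous two-way split on whether $P$ ends at $v_j$ (where $c_j$ is the first occurrence of $c_k$), which you are right to defer to the standard argument. The running-time bound is fine.
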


\begin{proof}
    We begin by showing that the path $P$ is well-defined. In the case that $F$ is not $c_v$-primed, we know that $c \in \miss_\chi(u)$ and $c_k \notin \miss_\chi(u)$, so there is a $\{c,c_k\}$-alternating path starting at $u$. In the case that $F$ is $c_v$-primed, we know by the properties of our Vizing fan construction that $c_v \in \miss_\chi(v_i)$ and $c_u \notin \miss_\chi(v_i)$, so there is a $\{c_u,c_v\}$-alternating path starting at $u$. Thus, in both cases, the path $P$ is well-defined. We now describe how to extend the coloring $\chi$ to $(u,v)$ in both cases.
    
    If $F$ is not $c_v$-primed: Let $c_j$ be the first appearance of $c_k$. We consider the cases where the $P$ does or does not have $v_{j}$ as an endpoint. If $P$ does not end at $v_j$, then we can shift the colors of the first $j + 1$ edges in the fan by setting $\chi(u, v_0) \leftarrow c_0,\dots,\chi(u, v_{j}) \leftarrow c_{j}$ and flip the colors of the alternating path $P$.
    If $P$ does end at $v_j$, then we flip the colors of the alternating path $P$, shift the colors of the fan by setting $\chi(u, v_0) \leftarrow \chi(u, v_1),\dots,\chi(u, v_{k-1}) \leftarrow \chi(u, v_{k})$, and set $\chi(u, v_k) \leftarrow c_k$. Note that, while shifting the fan, we have $\chi(u, v_{i+1}) = c \neq c_{k}$.

    If $F$ is $c_v$-primed: We consider the cases where the $P$ does or does not have $u$ as an endpoint. If $P$ does not end at $u$, then the edge $(u, v_{i+1})$ (which has color $c_v$) is not contained in $P$. Thus, we can shift the colors of the first $i$ edges in the fan by setting $\chi(u, v_0) \leftarrow c_0,\dots,\chi(u, v_{i-1}) \leftarrow c_{i-1}$, flip the colors of the alternating path $P$, and set $\chi(u, v_i) \leftarrow c_u$.
    If $P$ does end at $u$, then the edge $(u, v_{i+1})$ is contained in $P$. Thus, we can flip the colors of the alternating path $P$, shift the colors of the fan by setting $\chi(u, v_0) \leftarrow \chi(u, v_1),\dots,\chi(u, v_{k-1}) \leftarrow \chi(u, v_{k})$, and set $\chi(u, v_k) \leftarrow c_v$.  Note that, while shifting the fan, we have $\chi(u, v_{i+1}) = c_u \neq c_i$.

    Using standard data structures, we can implement this algorithm to run in time $\tilde O(\Delta + |P|)$.
\end{proof}

The following lemma summarises the key properties of the path $P$ considered by \Cref{alg:vizing}.

\begin{lemma}
\label{lm:overlap}
    The path $P$ considered by the algorithm satisfies one of the following properties:
\begin{enumerate}
    \item $P$ is a maximal $\{c',c''\}$-alternating path in $\chi$ starting at $u$ where $\{c',c''\} \cap \{c_u, c_v\} = \emptyset$.\label{type 1 path}
    \item $P$ is a maximal $\{c_u, c_v\}$-alternating path in $\chi$ starting at neither $u$ nor $v$.\label{type 2 path}\footnote{Note that, in this case, the path $P$ starts at the vertex $v_i$ of the fan such that $c_i$ is the first appearance of $c_v$ in $c_0,\dots,c_k$. Since $c_0 \neq c_v$, we know that $v_i \neq v$.}
\end{enumerate}
\end{lemma}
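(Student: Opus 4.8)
The plan is to prove \Cref{lm:overlap} by a direct case analysis following the structure of \Cref{alg:vizing}, arguing about exactly which of the two branches of the conditional on ``$F$ is $c_v$-primed'' is taken. The key observation is that \Cref{alg:vizing} only ever enters one of two branches to define $P$, and in each branch $P$ is defined using a carefully chosen pair of colors whose relationship to $\{c_u,c_v\}$ is controlled by the modified Vizing fan construction in \Cref{alg:vizing fan}. So I would first recall precisely what $\textsf{VizingFan}$ guarantees when it returns a fan $F = (v_0,c_0),\dots,(v_k,c_k)$ (rather than extending $\chi$): by the two invariants baked into the while-loop and the $c_u$-check, we have (a) $c_u \notin \miss_\chi(v_j)$ for every $j$, since whenever $c_u$ \emph{was} missing at some $v_{i+1}$ the algorithm would have set $c_{i+1} \leftarrow c_u$, and then the loop would have terminated at the next check with $c_i = c_u \in \miss_\chi(u)$, causing $\textsf{VizingFan}$ to extend $\chi$ rather than return a fan --- contradiction; and (b) $c_0 \neq c_v$ by the explicit choice $c_0 \in \miss_\chi(v_0) \setminus \{c_v\}$. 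I would state these as a short preliminary claim (or cite them as established properties of \Cref{alg:vizing fan}).

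Next I would split into the two cases exactly as in \Cref{alg:vizing}. \textbf{Case A: $F$ is not $c_v$-primed}, i.e.\ $c_k \neq c_v$. Here $P$ is the maximal $\{c, c_k\}$-alternating path starting at $u$, where $c \in \miss_\chi(u) \setminus \{c_u\}$. I need $\{c, c_k\} \cap \{c_u, c_v\} = \emptyset$. We have $c \neq c_u$ by choice, and $c \neq c_v$ because $c \in \miss_\chi(u)$ while $c_v \notin \miss_\chi(u)$ (recall $c_v$ is a color \emph{not} missing at $u$, i.e.\ it blocks the uncolored edge $(u,v)$ from being colored $c_v$; more precisely $c_v \in \miss_\chi(v)$ and is one of the blocking colors --- I should double-check the input convention, but since $(u,v)$ is uncolored and we picked $c_v \in \miss_\chi(v)$ as a blocking color for $u$ in the calling context, $c_v \notin \miss_\chi(u)$). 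For $c_k$: since the fan did not extend $\chi$, the loop exited because $c_k \in \{c_0,\dots,c_{k-1}\}$, so in particular $c_k \notin \miss_\chi(u)$, giving $c_k \neq c$; and $c_k \neq c_u$ again because $c_u \notin \miss_\chi(v_k)$ means $c_k$ (which is a missing color at $v_k$, possibly forced to $c_u$ only when $c_u$ \emph{is} missing) satisfies $c_k \neq c_u$ --- wait, I need to be careful: $c_k$ is chosen from $\miss_\chi(v_k)$, and by invariant (a) $c_u \notin \miss_\chi(v_k)$, so indeed $c_k \neq c_u$; and $c_k \neq c_v$ is exactly the case hypothesis. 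Hence $\{c,c_k\}\cap\{c_u,c_v\}=\emptyset$ and $P$ satisfies Property~\ref{type 1 path}.

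\textbf{Case B: $F$ is $c_v$-primed}, i.e.\ $c_k = c_v$. Then $P$ is the maximal $\{c_u, c_v\}$-alternating path starting at $v_i$, where $c_i$ is the first appearance of $c_v$ among $c_0,\dots,c_k$. This is manifestly a $\{c_u,c_v\}$-alternating path, so I only need to show it starts at neither $u$ nor $v$. It does not start at $u$: by the well-definedness argument already given in \Cref{lem:full vizing proof}, $c_v \in \miss_\chi(v_i)$ and $c_u \notin \miss_\chi(v_i)$ (the latter by invariant (a)), so $v_i$ is a genuine endpoint of a $\{c_u,c_v\}$-path; but $u$ has $c_u \in \miss_\chi(u)$, so $u$ cannot be an endpoint of a $\{c_u,c_v\}$-alternating path that has $v_i$ as its \emph{starting} endpoint unless $u = v_i$, which is impossible since $v_i$ is a neighbor of $u$ in the fan. (Alternatively and more robustly: $v_i \neq u$ trivially since $v_i$ is one of $u$'s neighbors.) And it does not start at $v$: since $i \geq 1$ (because $c_0 \neq c_v$ by invariant (b), so the first appearance of $c_v$ is at index $\geq 1$), we have $v_i \neq v_0 = v$. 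Thus $P$ starts at $v_i \notin \{u,v\}$ and satisfies Property~\ref{type 2 path}. This completes the case analysis and hence the proof.

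The main obstacle I anticipate is nailing down invariant (a) --- that $c_u \notin \miss_\chi(v_j)$ for all fan vertices --- cleanly from the pseudocode of \Cref{alg:vizing fan}, since it requires tracing through the interaction between the ``if $c_u \in \miss_\chi(v_{i+1})$ then $c_{i+1} \leftarrow c_u$'' step and the subsequent loop-exit condition ``$c_i \in \miss_\chi(u)$'': one has to argue that if $c_u$ were ever missing at a fan vertex, the algorithm would have detected a rotation/extension and returned an \emph{extended} coloring rather than a fan. A secondary subtlety is being careful about the input conventions for $c_u, c_v$ (which is missing where) and about the orientation convention for alternating paths, so that ``starting at $u$'' versus ``starting at $v_i$'' is interpreted consistently with the definitions in \Cref{sec:basic notation}. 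Everything else is bookkeeping.
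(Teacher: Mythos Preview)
Your case analysis is exactly the argument the paper has in mind; in fact the paper states \Cref{lm:overlap} without proof, relying on the construction of \Cref{alg:vizing fan} and \Cref{alg:vizing} together with the footnote (which is precisely your observation that $c_0\neq c_v$ forces $i\ge 1$, hence $v_i\neq v$).

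The one point you flagged deserves a clean resolution rather than a hedge. Your step ``$c\neq c_v$ because $c\in\miss_\chi(u)$ while $c_v\notin\miss_\chi(u)$'' does \emph{not} follow from the bare input convention $c_v\in\miss_\chi(v)$. What makes it true is the calling context: after every call to \textsf{TruncatedVizing}, the returned pair $\{c_u',c_v'\}$ equals the two colors $\{c',c''\}$ of the just-truncated path, and the new pivot $u'$ is incident to the (unflipped) edge $e_{t+1}$ whose color is exactly $c_{v'}$, so $c_{v'}\notin\miss_\chi(u')$. Thus in every recursive iteration one has $c_v\notin\miss_\chi(u)$, and your Case~A argument goes through. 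At the very first iteration this is not guaranteed, but the lemma is only ever used to relate $\tau^{(x)}$ to $\tau^{(x_{i-1})}$, which is vacuous at the root. A minor related point: your invariant~(a) ``$c_u\notin\miss_\chi(v_j)$ for every $j$'' need not hold at $j=0$ (nothing prevents $c_u\in\miss_\chi(v)$), but you only invoke it at $j=k\ge 1$ and $j=i\ge 1$, where your derivation from the $c_u$-check in \Cref{alg:vizing fan} is correct.
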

These properties of the path $P$ will be crucial in our analysis later on. We refer to an alternating path satisfying Condition \ref{type 1 path} (resp.~Condition \ref{type 2 path}) above as \emph{non-overlapping} (resp.~\emph{overlapping}).

\subsubsection*{The Algorithm \textnormal{\textsf{TruncatedVizing}}}

We now give an algorithm \textsf{TruncatedVizing}, which takes the same input as \textsf{Vizing} along with an additional argument $t \in \mathbb N$.
It then proceeds to extend the coloring $\chi$ to the edge $e$ in the same way as \textsf{Vizing}, with one difference: If the maximal alternating path $P$ that is flipped by \textsf{Vizing} has length greater than $t$, the algorithm only flips the first $t-1$ edges of $P$ and leaves the $t^{th}$ edge in the path $P$ uncolored. It then returns the new coloring obtained after applying this procedure along with the edge left uncolored (as long as $P$ had length greater than $t$). \Cref{alg:trunc vizing} gives a formal description of this procedure.

\medskip
\begin{algorithm}[H]
    \SetAlgoLined
    \DontPrintSemicolon
    \SetKwRepeat{Do}{do}{while}
    \SetKwBlock{Loop}{repeat}{EndLoop}
    Let $P=e_1,\dots,e_t$ denote the alternating path considered by $\textsf{Vizing}(\chi,u, v, c_u, c_v)$\;
    \If{$|P| \leq t$}{
        $\chi \leftarrow \textsf{Vizing}(\chi,u, v, c_u, c_v)$\;
        \Return{$\chi$}\;
    }
    \If{\textnormal{$P$ is non-overlapping}}{
        Let $(u, v_i)$ be the first edge in $P$\;
        \For{$j = 0\dots i-1$}{
            $\chi(u, v_j) \leftarrow c_j$\;
        }
    }
    \If{\textnormal{$P$ is overlapping}}{
        Let $v_i$ be the first vertex in $P$\;
        \For{$j = 0\dots i-1$}{
            $\chi(u, v_j) \leftarrow c_j$\;
        }
        $\chi(u, v_i) \leftarrow c_u$\;
    }
    Flip the colors of the first $t-1$ edges in the $\{c',c''\}$-alternating path $P$\;
    $\chi(e_t) \leftarrow \bot$\;
    $(u',v') \leftarrow e_t$\;
    Let $c_u' \in \miss_\chi(u') \cap \{c', c''\}$ and $c_v' \in \miss_\chi(v') \cap \{c', c''\}$\;
    \Return{$\chi, u', v', c_u', c_v'$}
    \caption{\textsf{TruncatedVizing}$(\chi,u, v, c_u, c_v, t)$}
    \label{alg:trunc vizing}
\end{algorithm}

\subsection{Our Algorithm}

We define parameters $\ell := 10^2 \log n$ and $L := 10^3\ell^2(\Delta^{2} + \sqrt{\Delta n})$.
Our algorithm is given an uncolored edge $(u,v)$ and starts by attempting to construct a Vizing chain by calling $\Vizing$. If the Vizing chain has length $\Omega(L)$, then our algorithm instead calls $\TVizing$ and randomly truncates the Vizing chain after $O(L)$ steps. It then repeats this process, moving the uncolored edge around the graph by randomly truncating long Vizing chains, until it finds a short Vizing chain and successfully extends the coloring.
\Cref{lem:extend works} gives a formal description of this procedure.

\begin{algorithm}[H]
    \SetAlgoLined
    \DontPrintSemicolon
    \SetKwRepeat{Do}{do}{while}
    \SetKwBlock{Loop}{repeat}{EndLoop}
    Let $c_u \in \miss_\chi(u)$ and $c_v \in \miss_\chi(v)$\;
    \Loop{
    Let $P = u_1,\dots,u_{k+1}$ be the alternating path considered by $\textsf{Vizing}(\chi,u, v, c_u, c_v)$\;
    \If{$k \leq 2L+2$}{
            $\chi \leftarrow \textsf{Vizing}(\chi, u, v, c_u, c_v)$\;
            \Return{$\chi$}
        }
        Sample $i \sim [L]$ independently and u.a.r.\;
        $(\chi, u, v, c_u, c_v) \leftarrow \textsf{TruncatedVizing}(\chi, u,v,c_u, c_v, i+1)$\;
    }
    \caption{\textsf{ExtendColoring}$(\chi, u, v)$}
    \label{alg:extend}
\end{algorithm}

\medskip
\noindent
Using standard data structures, we can implement each iteration of \Cref{alg:extend} to run in time $\tilde O(L) \leq \tilde O(\Delta^2 + \sqrt{\Delta n})$. The following lemma, which we prove in \Cref{sec:anal}, shows that the algorithm terminates after $\tilde O(1)$ many iterations with constant probability.

\begin{lemma}\label{lem:extend works}
    \Cref{alg:extend} terminates within $\ell + 1$ iterations with probability at least $1/(160 \log n)$.
\end{lemma}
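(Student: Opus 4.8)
The plan is to rerun, in the presence of Vizing fans, the meta-tree random-walk argument sketched in \Cref{sec:algo:analysis:main}. First I would build the (possibly infinite) rooted meta-tree $\mathcal T$ whose meta-nodes are the iterations of \Cref{alg:extend}: the root is the first iteration, a terminal meta-node is an iteration that returns inside $\Vizing$ (because the Vizing chain it found has length $\le 2L+2$), and every non-terminal meta-node has exactly $L$ children, one for each choice of $i\sim[L]$ fed to $\TVizing$. Running \Cref{alg:extend} is precisely a random walk down $\mathcal T$ that samples a uniform child at each step; by \Cref{lem:full vizing proof} (together with the cost of a single $\TVizing$ call) each visited meta-node takes $\tilde O(\Delta+L)=\tilde O(\Delta^2+\sqrt{\Delta n})$ time, so \Cref{lem:extend works}, and hence \Cref{thm:extension expected}, reduces to showing the walk reaches a terminal within $\ell+1$ steps with probability $\Omega(1/\log n)$.

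Next I would import the clean/dirty/contaminated vocabulary, adapted to fans. At each meta-node $x$ record the type $\tau^{(x)}=\{c',c''\}$ of the alternating path $P^{(x)}$ that $\Vizing/\TVizing$ flips; by \Cref{lm:overlap}, either $P^{(x)}$ is \emph{non-overlapping}, in which case $\tau^{(x)}\cap\{c_u,c_v\}=\emptyset$ (the crucial disjointness hook, the analog of Observation~\ref{ob:parent:child:main}), or $P^{(x)}$ is \emph{overlapping} and $\tau^{(x)}=\{c_u,c_v\}$ but $P^{(x)}$ is anchored at a fan vertex $v_i\ne u,v$. Call $x$ \emph{dirty} if $\tau^{(x)}$ meets the union of the types of its strict ancestors and \emph{clean} otherwise; \emph{contaminated} if at least $L/(10\ell)$ of its children are dirty or correspond to overlapping paths; \emph{congenitally clean} if $x$ and all its ancestors are clean; and \emph{damaged} if, because of fan rotations performed on the way to $x$, the recorded path $P^{(x)}$ is not a maximal alternating path of the original coloring $\chi^{(r)}$. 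Define the critical subtree $\mathcal T^\star$ by deleting every meta-node that has a dirty, contaminated, or damaged ancestor, or that sits at depth $>\ell$. As in the sketch, $\mathcal T^\star$ is a connected subtree rooted at $r$ whose non-leaf meta-nodes are all congenitally clean, undamaged, and not contaminated, and the walk either (a) stops at a terminal leaf of $\mathcal T^\star$, (b) stops at a contaminated congenitally-clean leaf, (c) leaves $\mathcal T^\star$ through a dirty or damaged leaf, or (d) reaches a depth-$\ell$ congenitally-clean, not-contaminated leaf. I would then bound $\Pr[(c)]$ and $\Pr[(d)]$ by small constants: for (d) use the expansion lemma (the analog of \Cref{lem:expansion:main}) that for each fixed transcript $(\tau_0,\dots,\tau_i)$ there are at most $n$ undamaged congenitally clean meta-nodes with that transcript, since their truncated chains are vertex-disjoint, whence $|\mathcal Z^\star_{\texttt{cc-not-cont}}|\le(\Delta^2)^{\ell+1}n$ and $\Pr[(d)]\le\Delta^{2(\ell+1)}n/L^\ell<1/10$ by the choice $L=10^3\ell^2(\Delta^2+\sqrt{\Delta n})$; for (c) union-bound over the $\le\ell$ steps, using that each core meta-node has fewer than $L/(10\ell)$ dirty-or-overlapping children, and additionally that it has only $\tilde O(\Delta^2)\ll L$ \emph{damaged} children (cf.~\Cref{lem:low damage}, since each of the $\tilde O(1)$ fans built so far recolors $O(\Delta)$ edges, each lying on $O(\Delta)$ alternating paths), so the walk stays inside the core for all $\ell$ steps with probability $\ge(1-1/(10\ell))^\ell-o(1)\ge 4/5-o(1)$. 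Hence $\Pr[(a)]+\Pr[(b)]\ge 7/10-o(1)$.

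Finally I would close with the analog of \Cref{lem:cont is good:main}: at a contaminated congenitally-clean meta-node $x$ of depth $\le\ell$, either a constant fraction of its children are trivial (the fan extends $\chi$ immediately), and we are done, or a constant fraction of its $\Omega(L/\ell)$ dirty-or-overlapping undamaged children carry alternating paths that by \Cref{lm:overlap} live simultaneously in the coloring $\chi^{(x)}$; in the dirty case these paths use types among the $O(\ell\Delta)$ types meeting the $O(\ell)$ ancestor colors, and in the overlapping case they all have the single type $\{c_u,c_v\}$ but start at distinct fan vertices, so in both cases they are vertex-disjoint and their total length is $O(\ell\Delta n)$. Thus their average length is $O(\ell^2\Delta n/L)\le L$ — this is the one place the estimate $L=\tilde\Omega(\sqrt{\Delta n})$ is used — so $\Omega(L/\ell)$ of these children are short enough to be terminal. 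Conditioned on outcome (b), the very next step therefore lands on a terminal child of $x$ (at depth $\le\ell+1$) with probability $\Omega(1/\ell)$; combined with $\Pr[(a)\cup(b)]\ge 7/10-o(1)$ and the fact that outcome (a) already lands on a terminal, this yields the claimed $\Omega(1/\log n)$ lower bound.

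The hard part will be the fan bookkeeping underlying Hurdles~1 and~2. The expansion lemma and the ``contaminated $\Rightarrow$ many terminals'' lemma both rest, as in the fan-free sketch, on distinct maximal alternating paths of a common type being vertex-disjoint; fan rotations break this for \emph{damaged} meta-nodes (a rotation around a vertex recolors up to $\Delta$ edges and can destroy the maximality of a previously-recorded path), and they force a separate treatment of \emph{overlapping} paths, where $\Vizing$ cannot avoid the blocking colors $\{c_u,c_v\}$ and returns a path anchored at a fan vertex rather than at $u$. The two technical pillars are therefore (i) proving that a single run of \Cref{alg:extend} produces only $\tilde O(\Delta^2)$ damaged children per meta-node, so the random walk dodges all of them with probability $1-o(1)$, and (ii) verifying that overlapping children fold into exactly the same disjointness and averaging estimates as dirty children; the remainder is a faithful, if longer, rerun of the $\mathcal T^\star$ probability accounting.
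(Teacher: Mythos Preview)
Your plan is essentially the paper's own proof: build the meta-tree, classify meta-nodes as terminal/dirty/contaminated/damaged, prove the expansion and low-damage lemmas, and finish with a random-walk accounting that yields the $\Omega(1/\log n)$ bound. The paper organizes the final accounting via two nested subtrees $\mathcal T'\supseteq\mathcal T''$ rather than one critical subtree $\mathcal T^\star$, but this is cosmetic.

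There is one substantive wrinkle you should be aware of. The paper does \emph{not} lump ``overlapping'' children together with ordinary dirty children; it splits dirty into $\alpha$-dirty (type meets an ancestor strictly above the parent, disjoint from the parent) and $\beta$-dirty (type \emph{equals} the parent's type, which is exactly the overlapping case), and proves two separate ``contaminated $\Rightarrow$ many terminals'' lemmas. Your sketch of the overlapping case --- ``they all have the single type $\{c_u,c_v\}$ but start at distinct fan vertices, so \dots\ they are vertex-disjoint'' --- does not work as stated: the $\beta$-dirty children $y_1,\dots,y_k$ of $x$ live in \emph{different} colorings $\chi^{(y_j)}$ (the truncation point on $P^{(x)}$ differs), the anchoring fan vertices sit near different $u^{(y_j)}$, and the resulting $\tau^{(x)}$-paths $P^{(y_j)}$ are in general \emph{not} vertex-disjoint and can even repeat. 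The paper's actual argument (its \Cref{lem:cont is good 2}) first observes that all the colorings $\chi^{(y_j)}$ agree off a single path/cycle $Q$ built from the two $\tau^{(x)}$-paths through the uncolored edge; it then discards the $\le 2\Delta$ ``bad'' children whose first fan edge hits an endpoint of $Q$, so that every remaining $P^{(y_j)}$ is a maximal $\tau^{(x)}$-path already in $\chi^{(y_1)}$, and finally notes that each such maximal path can be hit by at most $2\Delta$ of the $y_j$ (one per edge incident to each of its two endpoints), giving total length $\le 2\Delta n$ and the same averaging conclusion. Once you replace your one-line disjointness claim with this argument, the rest of your outline goes through exactly as you describe.
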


\subsection{Analysis}\label{sec:anal}

In order to analyse our algorithm, we define a (possibly infinite) rooted meta-tree $\mathcal T$ which captures information about all of the different possible executions of our algorithm. 

\subsubsection*{The Meta-Tree $\mathcal T$}

The meta-tree $\mathcal T$ is rooted at a meta-node $r$.
Every meta-node $x \in \mathcal T$ corresponds to an iteration of \Cref{alg:extend}, and the root-to-$x$ path from $r$ to $x$ in $\mathcal T$ corresponds to the execution of \Cref{alg:extend} leading to this iteration.
For each such meta-node $x$, we denote the state of an object $\Phi$ in \Cref{alg:extend} at the start of the corresponding iteration of the algorithm by $\Phi^{(x)}$, e.g.~$\chi^{(x)}$ denotes the coloring $\chi$ at the start of the corresponding iteration of the algorithm. Given a meta-node $x$, $P^{(x)}$ denotes the alternating path considered during the corresponding iteration of the algorithm. Consider the following definition:
\begin{wrapper}
    \begin{itemize}[leftmargin=1.5em]
    \item We say that $x$ is \emph{terminal} if the length of $P^{(x)}$ is at most $2L + 2$.
\end{itemize}
\end{wrapper}
Each meta-node $x$ has no children in $\mathcal T$ if it is terminal. Otherwise, $x$ has exactly $L$ children in $\mathcal T$, each one corresponding to a different choice of edge in $P^{(x)}$ for truncating the alternating path.\footnote{Intuitively, one can visualise constructing $\mathcal T$ by starting at the root $r$, where $\chi^{(r)} = \chi$ and $(u^{(r)}, v^{(r)}) = (u,v)$, and considering all possible executions of \Cref{alg:extend} for the different random choices made by the algorithm.}

\medskip
\noindent \textbf{Random Walks in $\mathcal T$:}
We first note that there is a one-to-one correspondence between root-to-leaf paths in $\mathcal T$ and executions of \Cref{alg:extend}. Furthermore, an execution of our algorithm defines a \emph{random walk} on $\mathcal T$ that starts at the root and, at each step, independently and uniformly samples a random child of the current meta-node (as long as the current meta-node is not terminal). Our algorithm successfully extends the coloring $\chi$ if and only if this random walk encounters a terminal meta-node. To this end, we show that such a random walk encounters a terminal meta-node within $O(\log n)$ steps with probability $\Omega (1/\log n)$, proving \Cref{lem:extend works}.

\medskip
\noindent \textbf{Classifications of Meta-Nodes:} 
Given a meta-node $x \in \mathcal T$, let $\tau^{(x)}$ denote the colors of the alternating path $P^{(x)}$ that we consider during this iteration. We refer to any pair of colors as a \emph{type}, and to $\tau^{(x)}$ as the type of $x$. Let $\tilde P^{(x)}$ denote the alternating path obtained by truncating the $\tau^{(x)}$-alternating path $P^{(x)}$ after the first $L+1$ edges.\footnote{Note that the alternating path $\tilde P^{(x)}$ is \emph{not} maximal, while the alternating path $P^{(x)}$ \emph{is} maximal.}
Let $r = x_0,\dots,x_i = x$ denote the root-to-$x$ path in $\mathcal T$. Consider the following definitions:

\begin{wrapper}
\begin{itemize}[leftmargin=1.5em]
    \item We say that $x$ is \emph{$\alpha$-dirty} if $\tau^{(x)} \cap (\tau^{(x_0)} \cup \dots \cup \tau^{(x_{i-2})}) \neq \emptyset$ and $\tau^{(x)} \cap \tau^{(x_{i-1})} = \emptyset$.
    \item We say that $x$ is \emph{$\beta$-dirty} if $\tau^{(x)} = \tau^{(x_{i-1})}$.
    \item We say that $x$ is \emph{$\alpha$-contaminated} (resp.~\emph{$\beta$-contaminated}) if it at least a $1/(10\ell)$ proportion of its children are \emph{$\alpha$-dirty} (resp.~\emph{$\beta$-dirty}).
    \item We say that $x$ is \emph{damaged} if it is \textbf{not} dirty and $P^{(x)}$ is {\bf not} a maximal $\tau^{(x)}$-alternating path in $\chi^{(r)}$ such that $\chi^{(r)}(e) = \chi^{(x)}(e)$ for all $e \in P^{(x)}$. 
\end{itemize}    
\end{wrapper}
Note that a meta-node $x$ cannot be both $\alpha$-dirty and $\beta$-dirty, but it can be both $\alpha$-contaminated and $\beta$-contaminated. We say that a meta-node $x$ is dirty (resp.~contaminated) if it is $\alpha$-dirty or $\beta$-dirty (resp.~$\alpha$-contaminated or $\beta$-contaminated).
Furthermore, it follows from \Cref{lm:overlap} that either $\tau^{(x)} = \tau^{(x_{i-1})}$ or $\tau^{(x)} \cap \tau^{(x_{i-1})} = \emptyset$, so $x$ is not dirty if and only if $\tau^{(x)} \cap (\tau^{(x_0)} \cup \dots \cup \tau^{(x_{i-1})}) = \emptyset$.

\subsubsection*{Properties of the Meta-Tree $\mathcal T$}

We now describe some properties of the meta-tree $\mathcal T$ that will be useful while analyzing the behaviour of random walks in $\mathcal T$.
We begin with the following lemmas which show that contaminated meta-nodes have lots of terminal children.

\begin{lemma}\label{lem:cont is good}
    Let $x$ be an $\alpha$-contaminated meta-node within the first $\ell$ levels of $\mathcal T$.
    Then at least a $1/(40\ell)$ proportion of the children of $x$ are terminal.
\end{lemma}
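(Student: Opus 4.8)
The plan is to mimic the proof of Lemma~\ref{lem:cont is good:main} from the extended abstract, adapting it to the new classification of dirty meta-nodes ($\alpha$-dirty versus $\beta$-dirty) and to the presence of Vizing fans. Let $x = x_k$ be an $\alpha$-contaminated meta-node at depth $k \leq \ell$, with root-to-$x$ path $r = x_0, x_1, \dots, x_k = x$. Let $\mathcal{C}^\star := \tau^{(x_0)} \cup \cdots \cup \tau^{(x_k)}$; since each $|\tau^{(x_i)}| = 2$, we have $|\mathcal{C}^\star| \leq 2(k+1) \leq 4\ell$. Let $K^\star := \{\tau \in \binom{[\Delta+1]}{2} : \tau \cap \mathcal{C}^\star \neq \emptyset\}$, so $|K^\star| \leq |\mathcal{C}^\star|(\Delta+1) \leq 8\ell\Delta$. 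Let $D$ be the set of $\alpha$-dirty children of $x$; since $x$ is $\alpha$-contaminated, $|D| \geq L/(10\ell)$, and every $y \in D$ has $\tau^{(y)} \in K^\star$ (indeed $\tau^{(y)}$ meets $\mathcal{C}^\star$).

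First I would dispose of the "trivial" children — those $y \in D$ for which the corresponding iteration extends $\chi$ immediately (the fan succeeds, or $\VizingF$ finds a short chain); such $y$ are terminal. If at least half of $D$ is trivial we are done with room to spare, so assume $|D \setminus D_t| \geq |D|/2 \geq L/(20\ell)$, where $D_t$ is the set of trivial children in $D$. The key structural point, exactly as before, is that for every $y \in D \setminus D_t$, by definition of $\alpha$-dirty we have $\tau^{(y)} \cap \tau^{(x)} = \emptyset$ (this is the condition $\tau^{(x)} \cap \tau^{(x_{i-1})} = \emptyset$ built into the definition of $\alpha$-dirty), so the alternating path $P^{(y)}$ already exists with respect to $\chi^{(x)}$ — the truncation step that produced $y$ from $x$ did not touch the colors $\tau^{(y)}$. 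Here I must be slightly careful with the fans: the truncation in $\TVizing$ recolors some edges around $u^{(x)}$, and the fan built at $y$ recolors edges around $u^{(y)}$; but the claim I need is that all the paths $\{P^{(y)}\}_{y \in D \setminus D_t}$ simultaneously exist in a single coloring, so that a counting argument bounds their total length. I would phrase this via: each $P^{(y)}$ is a maximal $\tau^{(y)}$-alternating path in the coloring $\chi^{(x)}$ (up to the fan modification at $y$ itself, which only affects the very first edge(s) of $P^{(y)}$ near $u^{(y)}$ and does not affect vertex-disjointness of the bulk). Then $\sum_{y \in D \setminus D_t} \texttt{length}(P^{(y)}) \leq 2n|K^\star| \leq 16\ell\Delta n$, since alternating paths of a fixed type in a fixed coloring are vertex-disjoint (each counted at most twice for orientation), and there are at most $|K^\star|$ relevant types.

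Consequently the average length of $P^{(y)}$ over $y \in D \setminus D_t$ is at most $16\ell\Delta n / |D\setminus D_t| \leq 320\ell^2\Delta n / L \leq L$, using $L \geq 10^3 \ell^2(\Delta^2 + \sqrt{\Delta n}) \geq 320 \ell^2 \Delta n / L$ — wait, more precisely: $L \geq 10^3\ell^2\sqrt{\Delta n}$ gives $L^2 \geq 10^6 \ell^4 \Delta n \geq 320\ell^2 \cdot \ell^2 \cdot \Delta n$, hence $320\ell^2\Delta n/L \leq L/(\ell^2) \leq L$. By Markov, at least half of the $y \in D \setminus D_t$ have $\texttt{length}(P^{(y)}) \leq 2L$; by definition every such $y$ is terminal. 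Therefore at least $|D\setminus D_t|/2 \geq L/(40\ell)$ children of $x$ are terminal, as claimed. The main obstacle I anticipate is the bookkeeping around the Vizing fans: I need to argue cleanly that the recoloring done inside $\TVizing$ when producing a child $y$ (shifting the fan plus truncating the path) does not destroy the disjointness-by-type structure that the counting argument relies on — in particular that for distinct $\alpha$-dirty children $y, y'$ with the same type $\tau^{(y)} = \tau^{(y')}$, the paths $P^{(y)}$ and $P^{(y')}$ are still essentially the same path or vertex-disjoint. This is where the "not dirty $\Rightarrow$ type disjoint from all ancestor types" observation and the properties of $\VizingF$ (that the fan avoids the blocking colors $c_u, c_v$ on all but possibly the overlapping case) must be invoked carefully; the overlapping case is handled separately in the companion Lemma for $\beta$-contaminated nodes, so here I only deal with $\alpha$-dirty, non-overlapping-type children, which keeps the fan analysis manageable.
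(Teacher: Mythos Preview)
Your proposal is correct and matches the paper's approach. The paper streamlines your fan-bookkeeping worry by taking $\chi^{(y_1)}$ (rather than $\chi^{(x)}$) as the common reference coloring: since the $\chi^{(y_j)}$ differ pairwise only on edges with colors in $\tau^{(x)}$, and $\tau^{(y_j)}\cap\tau^{(x)}=\emptyset$ for every $\alpha$-dirty child $y_j$, each $P^{(y_j)}$ is automatically a maximal $\tau^{(y_j)}$-alternating path in $\chi^{(y_1)}$---so no separate trivial-children case-split or ``fan at $y$'' caveat is needed (the fan at $y$ does not modify $\chi^{(y)}$ before $P^{(y)}$ is read off in \textsf{Vizing}).
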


\begin{proof}
    Let $r = x_0,\dots,x_i=x$ denote the root-to-$x$ path in $\mathcal T$. Since $x$ is $\alpha$-contaminated, it has at least $L /(10 \ell)$ $\alpha$-dirty children $y_1, \dots,y_k$. Since $i \leq \ell$, at most $2\ell$ colors appear in $\tau^{(x_0)} \cup \dots \cup \tau^{(x_{i-2})}$. Thus, each of the $\alpha$-dirty children $y_1, \dots,y_k$ has one of at most $2\ell (\Delta+1) \leq 4\ell \Delta$ many types. Since $\tau^{(x)}$ is disjoint from the types of the $\alpha$-dirty children of $x$, we have that the alternating paths $P^{(y_1)},\dots,P^{(y_k)}$ corresponding to the children of $x$ are all maximal alternating paths with the same colors in $\chi^{(y_1)}$ and thus have a total length of at most $4\ell \Delta n$ (since the total length of the maximal alternating paths of any type is at most $n$).
    Furthermore, each alternating path in the collection $P^{(y_1)},\dots,P^{(y_k)}$ appears at most twice (once in each orientation), so we have at least $L/(20\ell)$ distinct alternating paths. Thus, by an averaging argument, at least half of these paths have length at most $ 8\ell \Delta n \cdot (20\ell/L) \leq L$, and hence correspond to terminal meta-nodes.
\end{proof}

\begin{lemma}\label{lem:cont is good 2}
    Let $x$ be a $\beta$-contaminated meta-node within the first $\ell$ levels of $\mathcal T$. 
    Then at least a $1/(20\ell)$ proportion of the children of $x$ are terminal.
\end{lemma}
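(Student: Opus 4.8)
The plan is to mimic the averaging argument from the proof of Lemma~\ref{lem:cont is good}, but to exploit the stronger structural fact that a $\beta$-dirty child has *the same type* as its parent $x$ rather than merely a type drawn from a bounded-size family. First I would set up notation: let $r = x_0, \dots, x_i = x$ be the root-to-$x$ path, and let $y_1, \dots, y_k$ be the $\beta$-dirty children of $x$, so that $k \geq L/(10\ell)$ by the definition of $\beta$-contamination. By definition of $\beta$-dirty, each $y_j$ satisfies $\tau^{(y_j)} = \tau^{(x_{i-1})}$, i.e. they all share one single common type $\tau := \tau^{(x_{i-1})}$. The key point to check is that all the alternating paths $P^{(y_1)}, \dots, P^{(y_k)}$ live simultaneously in the \emph{same} coloring and are maximal $\tau$-alternating paths there; I would verify this by the same reasoning as in Lemma~\ref{lem:cont is good} — the truncation operation that produces $\chi^{(y_j)}$ from $\chi^{(x)}$ only recolors edges whose colors lie in $\tau^{(x)}$, and since $x$'s children that are $\beta$-dirty use a type $\tau$ (which, by the observation following the classification definitions and by \Cref{lm:overlap}, is either equal to or disjoint from $\tau^{(x)}$) — one must be a little careful here, but either way the paths $P^{(y_j)}$ are maximal $\tau$-alternating paths in a common coloring, e.g.\ $\chi^{(y_1)}$.

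Next I would invoke the standard fact (already used in Lemma~\ref{lem:cont is good}) that the maximal alternating paths of a \emph{fixed} type, with respect to a fixed coloring, are vertex-disjoint, hence their total length is at most $2n$ (the factor $2$ accounting for each undirected path being counted in two orientations). Since we have $k \geq L/(10\ell)$ paths $P^{(y_j)}$ of type $\tau$, each appearing at most twice in this list, there are at least $L/(20\ell)$ distinct such paths, and their total length is at most $2n$. By an averaging argument, at least half of these distinct paths have length at most $2n \cdot (20\ell)/L \leq 2n \cdot 20\ell / (10^3 \ell^2(\Delta^2 + \sqrt{\Delta n})) \leq L$ — in fact the bound here is far more generous than in Lemma~\ref{lem:cont is good} because we are comparing against a single type rather than $O(\ell\Delta)$ types, so the length bound $2n/(L/(40\ell)) \le L$ holds comfortably. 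A path of length at most $2L$ (indeed at most $2L+2$) corresponds, by definition, to a terminal meta-node. Hence at least $\tfrac12 \cdot L/(20\ell) = L/(40\ell)$ of the $\beta$-dirty children are terminal; but we want the stronger constant $1/(20\ell)$, which I would obtain by not throwing away the orientation-doubling slack: among the $k \ge L/(10\ell)$ $\beta$-dirty children, the underlying edge-sets come in pairs (each distinct path giving at most two children), and a short underlying path makes \emph{both} of its meta-nodes terminal, so at least half of the $k \ge L/(10\ell)$ children are terminal, giving the claimed $1/(20\ell)$ proportion.

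The main obstacle I anticipate is the bookkeeping in the first paragraph: verifying precisely that $P^{(y_1)},\dots,P^{(y_k)}$ are simultaneously realizable as maximal $\tau$-alternating paths in one coloring, since the $\beta$-dirty case has $\tau^{(x)} = \tau^{(x_{i-1})} = \tau$ potentially equal to the parent's own type, so the truncation at $x$ recolors exactly the edges of type $\tau$ and the claim that the children's paths are "the same old paths" is more delicate than in the $\alpha$-dirty case (where $\tau^{(y_j)}$ was disjoint from $\tau^{(x)}$). I would handle this by comparing against the coloring $\chi^{(x_{i-1})}$ at the grandparent, or by directly tracking that the shift/flip operations between $x$ and its children $y_j$ leave the relevant $\tau$-colored subgraph's connected components intact except near the uncolored edge; once the common-coloring claim is established, the averaging argument is routine and matches the one already written for Lemma~\ref{lem:cont is good}.
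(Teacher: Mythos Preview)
Your proposal has a genuine gap rooted in a misreading of the $\beta$-dirty case. First, a small correction: for a child $y_j$ of $x$, being $\beta$-dirty means $\tau^{(y_j)} = \tau^{(x)}$, not $\tau^{(x_{i-1})}$. More importantly, this equality $\tau^{(y_j)} = \tau^{(x)}$ is exactly the ``overlapping'' case of \Cref{lm:overlap}: the path $P^{(y_j)}$ is a $\{c_u,c_v\}$-alternating path that starts at a \emph{fan vertex} of the Vizing fan built at $u^{(y_j)}$, not at $u^{(y_j)}$ itself. This breaks both pillars of your argument. Your multiplicity bound of $2$ (``each distinct path gives at most two children'') is false: the same maximal $\tau$-alternating path can arise as $P^{(y_j)}$ for up to $2\Delta$ different children, since each of its two endpoints can be reached via up to $\Delta$ different fan edges $f^{(y_j)}$ from different $u^{(y_j)}$'s. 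And the ``common coloring'' obstacle you flag is not resolvable by comparing to $\chi^{(x_{i-1})}$: the colorings $\chi^{(y_1)},\dots,\chi^{(y_k)}$ differ from one another precisely on $\tau$-colored edges, namely along a single path-or-cycle $Q$ obtained by concatenating the two $\tau$-alternating paths emanating from the endpoints of the uncolored edge in $\chi^{(y_1)}$.

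The paper's proof supplies the missing structural idea. It observes that for each $y_j$, either the fan edge $f^{(y_j)}$ lands at one of the two extreme endpoints of $Q$ (a ``bad extension point'', of which there are at most $2\Delta$), or $P^{(y_j)}$ is vertex-disjoint from $Q$ and is therefore a genuine maximal $\tau$-alternating path in the fixed coloring $\chi^{(y_1)}$. After discarding the $\le 2\Delta$ bad points, the remaining $k' \ge L/(10\ell) - 2\Delta \ge L/(20\ell)$ paths have total length at most $2\Delta n$ (accounting for the $2\Delta$ multiplicity), and averaging gives length $\le 4\Delta n \cdot 20\ell / L \le L$ for at least half of them. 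Note that this step genuinely requires $L = \tilde\Omega(\sqrt{\Delta n})$, whereas your proposed bound would only have needed $L = \tilde\Omega(\sqrt{n})$; the extra factor of $\Delta$ coming from the fan-induced multiplicity is essential and cannot be avoided by the orientation-pairing trick you suggest.
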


\begin{proof}
    Since $x$ is $\beta$-contaminated, it has at least $L /(10 \ell)$ $\beta$-dirty children $y_1, \dots,y_k$. Let $\tau$ denote the type of $x$ (which is the same as the types of $y_1, \dots,y_k$). Consider the coloring $\chi^{(y_1)}$ and the uncolored edge $(u^{(y_1)}, v^{(y_1)})$. Let $P'$ (resp.~$P''$) denote the $\tau$-alternating path starting at $u^{(y_1)}$ (resp.~$v^{(y_1)}$), and let $w'$ (resp.~$w''$) denote it's other endpoint. Let $Q$ denote the path or cycle obtained by concatenating $P'$, $(u^{(y_1)}, v^{(y_1)})$, and $P''$. Then the colorings $\chi^{(y_1)},\dots,\chi^{(y_k)}$ only differ on the colors of the edges in $Q$. Furthermore, in each of these colorings, $Q$ always consists of either 1 or 2 $\tau$-alternating paths and an uncolored edge.

    Now, consider some meta-node $y_i$ and it's corresponding vertex $u^{(y_i)}$ in $Q$. There is an edge $f^{(y_i)}$ connecting $u^{(y_i)}$ to the first vertex on the $\tau$-alternating path $P^{(y_i)}$. We either have that the other endpoint of $f^{(y_i)}$ is $w'$ or $w''$, or that $P^{(y_i)}$ is vertex disjoint from $Q$. In the former case, we say that $y_i$ is a \emph{bad extension point}. Since there are only $2\Delta$ many edges incident on $w'$ and $w''$, there are at most $2\Delta$ bad extension points. Let $\{z_1,\dots,z_{k'}\} \subseteq \{y_1,\dots,y_k\}$ be the subset of the $y_i$ that are not bad extension points. Now, consider the collection of $\tau$-alternating paths $P^{(z_1)},\dots,P^{(z_{k'})}$. Since these are all vertex disjoint from $Q$, they are all maximal $\tau$-alternating paths with the same colors in $\chi^{(y_1)}$, and hence have a total length of at most $n$ without counting repeated paths. Furthermore, each alternating path in the collection $P^{(z_1)},\dots,P^{(z_{k'})}$ appears at most $2\Delta$ times since each endpoint of each path has maximum degree $\Delta$, and thus can only be incident on at most $\Delta$ edges $f^{(y_i)}$ connecting the path to some $u^{(y_i)}$. Thus, their total length (counting repeated paths) is at most $2\Delta n$. By an averaging argument, at least half of these paths have length at most
    $$\frac{4\Delta n}{k'} \leq 4\Delta n \cdot \frac{20\ell}{L}\leq L$$
    since $k' \geq L/(10 \ell) - 2\Delta \geq L/(20\ell)$. It follows that at least a $1/(20 \ell)$ proportion of the children of $x$ are terminal.
\end{proof}
The following lemma shows that meta-nodes cannot have many damaged children.

\begin{lemma}\label{lem:low damage}
    Let $x$ be a meta-node within the first $\ell$ levels of $\mathcal T$. Then at most a $1/(10\ell)$ proportion of the children of $x$ are damaged.
\end{lemma}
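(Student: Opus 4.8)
The plan is to bound the number of damaged children of a meta-node $x$ by controlling how many edges can have had their colors altered by fan operations along the root-to-$x$ path, and then arguing that each such edge can ``damage'' only a limited number of potential alternating paths emanating from the children of $x$. Recall that a child $y$ of $x$ is damaged iff it is not dirty, yet $P^{(y)}$ is not a maximal $\tau^{(y)}$-alternating path already present in the \emph{initial} coloring $\chi^{(r)}$ (with matching colors on its edges). Since $y$ is not dirty, $\tau^{(y)}$ is disjoint from all the types $\tau^{(x_0)},\dots,\tau^{(x_{i-1})}$ used so far; hence the \emph{path-flipping} and \emph{color-shifting along paths} performed by $\textsf{TruncatedVizing}$ in rounds $0,\dots,i-1$ touch only edges whose colors lie in $\bigcup_j \tau^{(x_j)}$, and so they cannot affect edges of color in $\tau^{(y)}$. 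The \emph{only} way an edge of color in $\tau^{(y)}$ can have been recolored is through one of the Vizing fan constructions (the modification at \Cref{alg:vizing fan} and the fan-shifting in \Cref{alg:vizing}/\Cref{alg:trunc vizing}), which recolor edges around a single vertex regardless of their colors.

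First I would count the total number of edges whose color differs between $\chi^{(r)}$ and $\chi^{(x)}$ \emph{and} between $\chi^{(x)}$ and the hypothetical ``pure-path'' coloring at each child --- more precisely, the set $S$ of edges $e$ with $\chi^{(x)}(e) \neq \chi^{(r)}(e)$ that were changed by a fan operation. Along the root-to-$x$ path there are at most $i \leq \ell = O(\log n)$ iterations, and each iteration invokes $\textsf{VizingFan}$ (inside $\textsf{Vizing}$ and $\textsf{TruncatedVizing}$) a constant number of times, each such fan having at most $\Delta+1$ vertices and thus changing the colors of at most $O(\Delta)$ edges. Therefore $|S| = O(\ell \Delta) = \tilde O(\Delta)$. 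Now fix a child $y$ that is damaged. Because $y$ is not dirty and the only non-fan modifications preserve colors in $\tau^{(y)}$, if $P^{(y)}$ were vertex-disjoint from $S$ then flipping/shifting would have left the $\tau^{(y)}$-colored edges exactly as in $\chi^{(r)}$, so $P^{(y)}$ \emph{would} be a maximal $\tau^{(y)}$-alternating path in $\chi^{(r)}$ --- contradicting that $y$ is damaged. Hence the path $P^{(y)}$ (equivalently, its first edge / the fan vertex $u^{(y)}$ around which the $i$-th fan was built) must be incident to, or touch a vertex of, an edge in $S$.

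Next I would convert this ``touches $S$'' condition into a counting bound over the $L$ children of $x$. Each child $y$ of $x$ is determined by the choice of truncation index $i \in [L]$ on the single alternating path $P^{(x)}$; the resulting uncolored edge $e_i = (u', v')$ and the new type are fixed by that index, and the fan built in round $i+1$ is rooted at a specific endpoint determined by $e_i$. For $y$ to be damaged, the new active path $P^{(y)}$ must meet $S$. Since $S$ has $\tilde O(\Delta)$ edges, hence $\tilde O(\Delta)$ endpoints, and since each vertex of $P^{(x)}$ can serve as the root of at most a constant number of distinct child-paths (a path of a given type from a given start vertex is essentially unique up to orientation), the number of indices $i$ for which $P^{(y)}$ can touch $S$ is at most $\tilde O(\Delta) \cdot O(1) = \tilde O(\Delta)$. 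Taking $L = 10^3 \ell^2(\Delta^2 + \sqrt{\Delta n}) \geq 10^3 \ell^2 \Delta^2$, this is at most an $\tilde O(\Delta)/L \leq 1/(10\ell)$ fraction of the $L$ children, as claimed; I would track the absolute constants in the definition of $L$ to make the $1/(10\ell)$ bound tight.

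The main obstacle is the third step: cleanly pinning down ``$P^{(y)}$ touches $S$ implies few admissible truncation indices $i$''. The subtlety is that a single edge of $S$ could, a priori, lie on many different alternating paths of many different types, and different truncation indices produce different start vertices \emph{and} different types --- so I need the non-dirtiness of $y$ (which fixes $\tau^{(y)}$ to avoid all previously used colors) together with the structure of $\textsf{TruncatedVizing}$ (which deterministically maps the index $i$ to the pair $(u', v')$, hence to the fan root and, via \Cref{lm:overlap}, to at most one relevant new type per root) to argue that the map from ``damaged child'' to ``vertex of $P^{(x)}$ that is an endpoint of an $S$-edge'' is at most $O(1)$-to-one. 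Making this precise requires carefully inspecting the case analysis in \Cref{alg:trunc vizing} for non-overlapping versus overlapping paths, since in the overlapping case the new uncolored edge and the fan that follows are governed by the position $v_i$ inside the (now possibly damaged) fan structure; I expect this bookkeeping, rather than any conceptual difficulty, to be where the real work lies.
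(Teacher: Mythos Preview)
Your first two steps are sound and match the paper: fan operations along the root-to-$x$ path recolor at most $O(\ell\Delta)$ edges whose colors lie outside the previously used types, and a non-dirty child $y$ can be damaged only if its path $P^{(y)}$ is affected by this set $S$. The gap is in your counting. You propose a map from damaged children $y$ to ``vertices of $P^{(x)}$ that are $S$-endpoints'' and assert it is $O(1)$-to-one, but no such map exists. The truncation index $i$ determines the \emph{start vertex} $u^{(y)}$ of $P^{(y)}$ on $P^{(x)}$, whereas $P^{(y)}$ may meet $S$ arbitrarily far from $u^{(y)}$; there is no reason $u^{(y)}$ itself is incident to any edge of $S$. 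Conversely, a single $S$-endpoint $v$ can lie on many maximal alternating paths of many different types, and the type $\tau^{(y)}$ varies with the child $y$, so $v$ cannot be associated to only $O(1)$ children. Your closing paragraph correctly flags this as the main obstacle, but the proposed resolution (localizing to a vertex of $P^{(x)}$ via the case split in \textsf{TruncatedVizing}) does not address it: the difficulty is not where the fan sits but where the new alternating path intersects $S$.

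The paper fixes this by counting \emph{paths} rather than truncation indices. Each edge of $S$ lies on at most $\Delta$ maximal alternating paths (one per companion color), and changing the color of one edge creates or destroys at most $3\Delta$ such paths; hence the symmetric difference $\mathcal P(\chi^{(r)}) \oplus \mathcal P(\chi^{(y_1)})$, taken over types disjoint from $X=\bigcup_j \tau^{(x_j)}$, has size at most $6\Delta|S| \le 6\Delta^2\ell$. Since the paths $P^{(y)}$ are all maximal in the common coloring $\chi^{(y_1)}$ and each such path arises from at most two children (one per orientation), this yields at most $12\Delta^2\ell$ damaged children, which is $\le L/(10\ell)$ precisely because $L \ge 10^3\ell^2\Delta^2$. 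Your claimed bound $\tilde O(\Delta)$ would be \emph{stronger} than needed, but the argument as written does not establish it; the correct dependence is quadratic in $\Delta$, and this is exactly why the $\Delta^2$ term appears in the definition of $L$.
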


\begin{proof}
    We begin with the following structural claim about alternating paths.

    \begin{claim}
        Let $X \subseteq [\Delta + 1]$ a be set of at most $2\ell$ colors and let $\chi$ and $\chi'$ be $(\Delta + 1)$-edge colorings such that the set $D := \{e \in E \mid \{\chi(e), \chi'(e)\} \setminus X \neq \emptyset, \chi(e) \neq \chi'(e)\}$ has size at most $\Delta \ell$. For any $(\Delta + 1)$-edge coloring $\tilde \chi$, let $\mathcal P(\tilde \chi)$ denote the set of all maximal alternating paths w.r.t.~$\tilde \chi$ with colors in $[\Delta + 1] \setminus X$. Then we have that $|\mathcal P(\chi) \oplus \mathcal P(\chi')| \leq 6\Delta^2 \ell$.
    \end{claim}

    \begin{proof}
        Consider the following process: Start with the coloring $\chi_0 := \chi$, uncolor each of the edges in $D$ one by one to obtain a sequence of colorings $\chi_1,\dots, \chi_{|D|}$, then recolor each of the edges in $D$ with the color it receives under $\chi'$ to obtain a sequence of colorings $\chi_{|D|+1},\dots, \chi_{2|D|}$. Then we can observe that $\mathcal P(\chi') = \mathcal P(\chi_{2|D|})$ since any edge that receives a different color under $\chi'$ and $\chi_{2|D|}$ has colors in $X$ under both. Furthermore, for any $0 \leq i < 2|D|$, we have that $|\mathcal P(\chi_i) \oplus \mathcal P(\chi_{i+1})| \leq 3\Delta$. This is because each edge can belong to at most $\Delta$ different maximal alternating paths (one for each other color) and changing the color of an edge can cause 2 maximal alternating paths to merge into 1, and vice versa.
        It follows that
        $$|\mathcal P(\chi) \oplus \mathcal P(\chi')| \leq \sum_{i = 0}^{2|D|-1} |\mathcal P(\chi_i) \oplus \mathcal P(\chi_{i+1})| \leq 6\Delta |D| \leq 6\Delta^2 \ell.$$
    \end{proof}
    Let $x$ be a meta-node within the first $\ell$ levels of $\mathcal T$ with root-to-$x$ path $x_0,\dots,x_i$. If $x$ is terminal, then we are done. Otherwise, let $y_1,\dots,y_k$ denote the children of $x$ whose types are disjoint from the types $\tau^{(x_0)},\dots,\tau^{(x_i)}$ of its ancestors (note that any child of $x$ which is not in $\{y_1,\dots,y_k\}$ is dirty and hence not damaged). Let $e \in E$ be an edge such that $\chi^{(r)}(e) \notin \tau^{(x_0)} \cup \dots \cup \tau^{(x_i)}$ and $\chi^{(r)}(e) \neq \chi^{(y_j)}(e)$ for some $y_j$. Then we know that $e$ was involved in a fan construction in one of the iterations corresponding to the meta-nodes $x_0,\dots,x_i$. 
    Since there are at most $\Delta$ such edges per iteration, there are at most $\Delta \ell$ such edges in total. 
    We can observe that each alternating path in the collection $P^{(y_1)},\dots,P^{(y_k)}$ appears at most twice (once in each orientation) and that each of these paths receives the same colors and is a maximal alternating path in each of the colorings $\chi^{(y_1)},\dots,\chi^{(y_k)}$. Thus, applying the preceding claim with $X = \tau^{(x_0)} \cup \dots \cup\tau^{(x_i)}$, $\chi = \chi^{(r)}$ and $\chi' = \chi^{(y_1)}$, it follows that at most $6\Delta^2 \ell$ of the alternating paths in $P^{(y_1)},\dots,P^{(y_k)}$ are not maximal alternating paths with the same colors in $\chi^{(r)}$. Hence, at most $12\Delta^2 \ell$ of the meta-nodes $y_1,\dots,y_k$ are damaged. Since $12\Delta^2 \ell/L \leq 1/(10 \ell)$,  it follows that at most an $1/(10\ell)$ proportion of the children of $x$ are damaged.
\end{proof}

The following lemma is crucial for showing that there cannot be many long walks that do not contain any dirty or damaged meta-nodes.

\begin{lemma}\label{lem:expansion}
    Let $\tau_0,\dots,\tau_i$ be a sequence of types. Then there exist at most $n$ meta-nodes $x$ with root-to-$x$ path $x_0,\dots,x_i$ such that $\tau^{(x_j)} = \tau_j$ and $x_j$ is not dirty, damaged or terminal for all $x_j$.
\end{lemma}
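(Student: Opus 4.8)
The plan is to prove Lemma~\ref{lem:expansion} by induction on the depth $i$, establishing a stronger statement that will carry the induction through cleanly. Specifically, for each $j \in [0,i]$, let $\Gamma_j$ denote the set of meta-nodes $x$ whose root-to-$x$ path $x_0,\dots,x_j$ satisfies $\tau^{(x_\ell)} = \tau_\ell$ for all $\ell \leq j$, and such that none of $x_0,\dots,x_{j-1}$ is dirty, damaged, or terminal (the node $x_j$ itself is allowed to be arbitrary). The claim I would prove by induction is: \emph{the collection of truncated paths $\{\tilde P^{(x)}\}_{x \in \Gamma_j}$ consists of pairwise vertex-disjoint paths.} Since vertex-disjoint paths in $G$ can number at most $n$, setting $j = i$ yields the lemma (every node counted in the lemma statement lies in $\Gamma_i$).

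First I would handle the base case $j = 0$: here $\Gamma_0 = \{r\}$, so the claim is trivial. For the inductive step, assume the claim holds at level $j$ and consider level $j+1$. If any color repeats among $\tau_0,\dots,\tau_{j+1}$, then no node at level $j+1$ with that transcript can have all ancestors non-dirty (the type $\tau_{j+1}$ would intersect an earlier type, making that node — or an ancestor — dirty, using the remark right before the ``Properties'' subsection that $x$ is not dirty iff $\tau^{(x)}$ is disjoint from all earlier types), so $\Gamma_{j+1} = \emptyset$ and we are done. Otherwise $\tau_0,\dots,\tau_{j+1}$ are mutually disjoint. Now take two distinct $x, y \in \Gamma_{j+1}$ with parents $x', y' \in \Gamma_j$. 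The key point is that because all ancestors of $x$ (and of $y$) up through level $j$ are non-dirty and \emph{non-damaged}, the path $P^{(x)}$ is a genuine maximal $\tau_{j+1}$-alternating path in the \emph{original} coloring $\chi^{(r)}$ — this is exactly what ``not damaged'' buys us, and it means we can reason about all these paths simultaneously in the single fixed coloring $\chi^{(r)}$. (Non-damaged is defined only for non-dirty nodes, which is consistent since we are in the mutually-disjoint-types case.)

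With that observation in hand, the argument mirrors the proof of Claim~\ref{cl:expansion:main} in the extended abstract. There are two cases. If $x' \neq y'$, then $x$ and $y$ start their alternating paths from distinct vertices (the parent's truncation point determines where the child's path begins), and by the inductive hypothesis the truncated parent paths $\tilde P^{(x')}, \tilde P^{(y')}$ are vertex-disjoint, so $x$ and $y$ begin from vertices lying on disjoint earlier-level paths; combined with the fact that $P^{(x)}, P^{(y)}$ are both maximal $\tau_{j+1}$-alternating paths in $\chi^{(r)}$ (hence any two such paths are either vertex-disjoint or identical as edge sets), and that identical-edge-set is impossible here because they emanate from different vertices, we get $\tilde P^{(x)}$ and $\tilde P^{(y)}$ vertex-disjoint. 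If $x' = y'$, then $x$ and $y$ are two children of the same node obtained by truncating the \emph{same} path $P^{(x')}$ at two different positions $i_x \neq i_y$; since $x'$ is non-terminal, $P^{(x')}$ has length $> 2L+2$, and one then checks that truncating at two distinct positions produces two alternating paths whose length-$(L+1)$ prefixes occupy disjoint vertex ranges along $P^{(x')}$ — here one must be slightly careful about the fan edges prepended by \textsf{TruncatedVizing} and about orientation, but the spacing argument goes through because the two truncation points differ and each prefix has bounded length $L+1$ while the remaining tail is long enough to separate them.

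I expect the main obstacle to be the case $x' = y'$ together with the bookkeeping around Vizing fans: unlike the bipartite sketch, here each child's alternating path $P^{(x)}$ is obtained by first shifting colors through a fan (as in \textsf{TruncatedVizing}) and then following a maximal alternating path, so one has to verify that the ``non-damaged'' hypothesis really does pin $P^{(x)}$ down as a maximal $\tau_{j+1}$-path in $\chi^{(r)}$ and that the fan modifications do not secretly create overlaps among the $\tilde P^{(x)}$ for siblings. The definition of \emph{damaged} was crafted precisely to rule this out, so the work is in carefully unwinding that definition rather than in any hard combinatorics; once it is unwound, the disjointness follows from the elementary fact that maximal alternating paths of a fixed type partition (a subset of) the vertices.
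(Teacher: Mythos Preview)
Your overall plan---induction on depth, proving pairwise vertex-disjointness of the truncated paths $\tilde P^{(x)}$---is exactly the paper's approach. However, there are two genuine gaps in your inductive step.

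First, your $\Gamma_j$ allows $x_j$ itself to be dirty, damaged, or terminal, yet in the step you assert that $P^{(x)}$ (for $x \in \Gamma_{j+1}$) is a maximal $\tau_{j+1}$-path in $\chi^{(r)}$. That conclusion is precisely the content of ``$x$ is not damaged''; knowing only that the \emph{ancestors} of $x$ are non-damaged does not give it. You also need $x$ itself non-terminal to handle the reversed-orientation case below. The fix is simply to require $x_j$ to be non-dirty, non-damaged, non-terminal in the definition of $\Gamma_j$ (the lemma hypothesis already grants this at level $i$, and it is what the paper does).

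Second, your case split goes wrong in both branches. In the $x' \neq y'$ branch you write ``identical-edge-set is impossible here because they emanate from different vertices''---but a maximal $\tau_{j+1}$-path can be entered from either endpoint, so $P^{(x)}$ and $P^{(y)}$ may well be one path read in opposite directions even when $u^{(x)} \neq u^{(y)}$. In the $x' = y'$ branch, the sentence about ``length-$(L+1)$ prefixes occupy disjoint vertex ranges along $P^{(x')}$'' confuses the child paths with the parent path: $P^{(x)}$ and $P^{(y)}$ are $\tau_{j+1}$-alternating paths, not subpaths of the $\tau_j$-path $P^{(x')}$. The paper handles both cases uniformly, with no split: once $u^{(x)} \neq u^{(y)}$ (which follows from the inductive hypothesis whether or not $x' = y'$) and once $P^{(x)}, P^{(y)}$ are maximal $\tau_{j+1}$-paths in $\chi^{(r)}$ (from $x,y$ non-dirty and non-damaged), the two paths are either vertex-disjoint or the same path with opposite orientations; in the latter case, $x,y$ non-terminal forces length $\geq 2L+3$, so the two length-$(L+1)$ truncations taken from opposite ends are disjoint.
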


\begin{proof}
    Let $\Gamma(\tau_0,\dots,\tau_i)$ denote the set of all such meta-nodes. We now prove the following claim.

    \begin{claim}
        For any distinct meta-nodes $x,y \in \Gamma(\tau_0,\dots,\tau_i)$, the $\tau_i$-alternating paths $\tilde P^{(x)}$ and $\tilde P^{(y)}$ are vertex disjoint.
    \end{claim}

    \begin{proof}
        We prove this by induction. For any type $\tau_0$ we have that $\Gamma(\tau_0) \subseteq \{x_0\}$. Thus, this claim holds trivially for $i=0$. For the inductive step, suppose that the claim holds for some $0 \leq j-1 \leq i-1$. If the type $\tau_j$ shares a color with any of the types $\tau_0,\dots,\tau_{j-1}$, then $\Gamma(\tau_0,\dots,\tau_j) = \emptyset$ since the types of the meta-nodes on the root-to-$x$ path of a non-dirty meta-node $x$ must be disjoint.
        Thus, we can assume that the types $\tau_0,\dots,\tau_{j}$ are disjoint. Now, let $x,y \in \Gamma(\tau_0,\dots,\tau_j)$ and let $(u^{(x)},v^{(x)})$ and $(u^{(y)},v^{(y)})$ be the uncolored edges in the iterations of our algorithm corresponding to meta-nodes $x$ and $y$ respectively. By the induction hypothesis, the vertices $u^{(x)}$ and $u^{(y)}$ are distinct since they either lie on different positions of the same $\tau_{j-1}$-alternating path or on different vertex disjoint paths.
        The $\tau_j$-alternating paths $P^{(x)}$ and $P^{(y)}$ constructed by our algorithm have $u^{(x)}$ and $u^{(y)}$ as endpoints respectively. Since $x$ and $y$ are not dirty and not damaged, these alternating paths are also maximal $\tau_j$-alternating paths with the exact same colors under $\chi^{(r)}$, so they are either distinct paths or the same path but with different orientations. In the first case, $\tilde P^{(x)}$ and $\tilde P^{(y)}$ are clearly disjoint. In the second case, we note that $x$ and $y$ are not terminal; hence, $P^{(x)}$ has length at least $2L + 3$, so $\tilde P^{(x)}$ and $\tilde P^{(y)}$ are also vertex disjoint since they have length at most $L+1$.
    \end{proof}

    Thus, the size of $\Gamma(\tau_0,\dots,\tau_i)$ is upper bounded by the maximum size of a collection of vertex disjoint paths, which is at most $n$.
\end{proof}

\subsubsection*{Analyzing the Random Walk}

Consider the following definition of a \emph{good} walk.

\begin{wrapper}
\begin{itemize}[leftmargin=1.5em]
    \item We say that a walk $x_0, x_1,\dots$ in the meta-tree $\mathcal T$ is \emph{good} if it encounters either a terminal or contaminated meta-node within the first $\ell$ steps.
\end{itemize}    
\end{wrapper}
The following lemma bounds the probability that a good random walk has length at most $\ell + 1$.

\begin{lemma}\label{lem:good walks are good}
    Let $x_0,x_1,\dots$ be a random walk in $\mathcal T$. Given that the random walk is good, it has length at most $\ell + 1$ with probability $1/(40\ell)$.
\end{lemma}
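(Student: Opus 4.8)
The plan is to condition on the event that the random walk is good, meaning it hits a terminal or contaminated meta-node within its first $\ell$ steps, and then argue that from such a meta-node the walk is very likely to terminate in at most one additional step. Let $x_0, x_1, \dots$ be the random walk, and let $j \le \ell$ be the first index such that $x_j$ is terminal or contaminated (this index is well-defined by the definition of a good walk). If $x_j$ is terminal, then the walk has already ended at $x_j$, which is at depth at most $\ell$, so certainly it has length at most $\ell + 1$. So the interesting case is when $x_j$ is contaminated but not terminal: then $x_j$ has exactly $L$ children, and we want to lower-bound the probability that the next step lands on a terminal child.

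The key ingredient is that a contaminated meta-node within the first $\ell$ levels has a constant-over-$\ell$ fraction of terminal children. Concretely, $x_j$ is $\alpha$-contaminated or $\beta$-contaminated (or both). If it is $\alpha$-contaminated, Lemma~\ref{lem:cont is good} gives that at least a $1/(40\ell)$ proportion of its children are terminal; if it is $\beta$-contaminated, Lemma~\ref{lem:cont is good 2} gives at least a $1/(20\ell)$ proportion. In either case, at least a $1/(40\ell)$ fraction of the $L$ children of $x_j$ are terminal. Since the random walk picks a uniformly random child of $x_j$ independently of everything in the past, with probability at least $1/(40\ell)$ it moves to a terminal child $x_{j+1}$, at which point the walk ends. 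Because $j \le \ell$, we have $j + 1 \le \ell + 1$, so the walk has length at most $\ell + 1$.

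Putting this together: conditioned on the walk being good, with probability $1$ we reach the first terminal-or-contaminated meta-node $x_j$ at depth $j \le \ell$; if it is terminal the walk has already stopped within depth $\ell$; if it is contaminated and not terminal, then independently of the history the walk terminates in the next step with probability at least $1/(40\ell)$, again within depth $\ell + 1$. Hence, conditioned on being good, the walk has length at most $\ell + 1$ with probability at least $1/(40\ell)$, which is the claimed bound. The only subtlety to be careful about is that the event ``$x_j$ is terminal or contaminated, and $x_0,\dots,x_{j-1}$ are not'' depends only on the first $j$ meta-nodes visited, so conditioning on it (together with goodness) still leaves the choice of $x_{j+1}$ uniform among the children of $x_j$; this is where we use that the classification of a meta-node as terminal or contaminated is a property of the meta-node itself and of its children's types, all of which are fixed once $x_j$ is fixed, and does not bias the subsequent uniform choice.
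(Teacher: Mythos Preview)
Your proof is correct and follows essentially the same approach as the paper's own proof: condition on the first index $j\le\ell$ at which the walk hits a terminal or contaminated meta-node, finish immediately if $x_j$ is terminal, and otherwise invoke Lemma~\ref{lem:cont is good} or Lemma~\ref{lem:cont is good 2} to conclude that a uniformly random child of $x_j$ is terminal with probability at least $1/(40\ell)$. Your added paragraph justifying that conditioning on goodness and on the stopping time $j$ does not bias the uniform choice of $x_{j+1}$ is a welcome clarification that the paper leaves implicit.
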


\begin{proof}
    Since the random walk is good, we know that $x_i$ is terminal or contaminated for some $i$.
    If $x_i$ is terminal, then the random walk has length $i \leq \ell$. 
    If $x_i$ is $\alpha$-contaminated, then it follows from \Cref{lem:cont is good} that $x_{i+1}$ is terminal (and hence that the random walk has length at most $i+1$) with probability at least $1/(40\ell)$. Similarly, if $x_i$ is $\beta$-contaminated, then it follows from \Cref{lem:cont is good 2} that the random walk has length at most $i+1$ with probability at least $1/(20\ell)$.
\end{proof}

It follows from \Cref{lem:good walks are good} that it suffices to lower bound the probability that a random walk in the meta-tree $\mathcal T$ is good. More precisely, we have the following.
Let $\mathcal W_{\mathcal T}$ denote the collection of all walks in $\mathcal T$ and let $(x_i)_i \sim \mathcal W_{\mathcal T}$ denote a random walk. Then we have that

\begin{align}
    \Pr_{(x_i)_i \sim \mathcal W_{\mathcal T}}[ |(x_i)_i| \leq \ell + 1 ] &\geq \Pr_{(x_i)_i \sim \mathcal W_{\mathcal T}}[ |(x_i)_i| \leq \ell + 1 \mid (x_i)_i \textrm{ is good} ] \cdot \Pr_{(x_i)_i \sim \mathcal W_{\mathcal T}}[ (x_i)_i \textrm{ is good} ]\nonumber\\
    &\geq \frac{1}{40\ell} \cdot \Pr_{(x_i)_i \sim \mathcal W_{\mathcal T}}[ (x_i)_i \textrm{ is good}].\label{eq:random:1}
\end{align}
Here, the first inequality follows from conditioning on the event that the random walk $(x_i)_i$ is good, and the second inequality follows from \Cref{lem:good walks are good}.

\medskip
\noindent \textbf{The Meta-Subtree $\mathcal T'$:}
We define a meta-subtree $\mathcal T'$ of $\mathcal T$ as follows: Start with the meta-tree $\mathcal T$ and remove all of the (strict) descendants of contaminated meta-nodes. 
We can observe that the probability of a random walk in $\mathcal T$ being good is the same as the probability of a random walk in $\mathcal T'$ having length at most $\ell$, i.e.~that
\begin{equation}\label{eq:random:2}
\Pr_{(x_i)_i \sim \mathcal W_{\mathcal T}}[ (x_i)_i \textrm{ is good} ] = \Pr_{(x_i)_i \sim \mathcal W_{\mathcal T'}}[ |(x_i)_i| \leq \ell ].     
\end{equation}
To see why this is true, consider a mapping $\phi : \mathcal W_{\mathcal T} \longrightarrow \mathcal W_{\mathcal T'}$ which maps a walk $(x_i)_i$ in $\mathcal T$ to a walk $\phi((x_i)_i)$ which is defined as the prefix of $(x_i)_i$ which is contained in $\mathcal T'$. Then we can observe that $(x_i)_i$ is good if and only if $|\phi((x_i)_i)| \leq \ell$.

\medskip
\noindent \textbf{The Meta-Subtree $\mathcal T''$:} We define a meta-subtree $\mathcal T''$ of $\mathcal T'$ as follows: Start with the meta-tree $\mathcal T'$ and remove all of the meta-nodes that are dirty or damaged (and their descendants) from $\mathcal T'$. It turns out that bounding the length of random walks in $\mathcal T''$ is much easier than bounding the length of random walks in $\mathcal T'$. Thus, we only want to consider random walks in $\mathcal T'$ which are also contained in $\mathcal T''$ for the first $\ell$ steps. In particular, we use the following bound
\begin{equation}\label{eq:random:3}
    \Pr_{(x_i)_i \sim \mathcal W_{\mathcal T'}}[ |(x_i)_i| \leq \ell ] \geq \Pr_{(x_i)_i \sim \mathcal W_{\mathcal T''}}[ |(x_i)_i| \leq \ell ] \cdot \Pr_{(x_i)_i \sim \mathcal W_{\mathcal T'}}[ (x_i)_{i \leq \ell} \subseteq \mathcal T'' ].
\end{equation}
Given any meta-node $x \in \mathcal T'$ that has children, we know that at most a $1/(5\ell)$ proportion of the children of $x$ are dirty since $x$ is not contaminated.\footnote{Recall that since $x$ is neither $\alpha$ nor $\beta$-contaminated, at most a $1/(10\ell)$ proportion of its children at $\alpha$-dirty and at most a $1/(10\ell)$ proportion of its children at $\beta$-dirty} 
Furthermore, it follows from \Cref{lem:low damage} that at most a $1/(10\ell)$ proportion of the children of $x$ are damaged.
It follows that a random walk in $\mathcal T'$ does not encounter any dirty or damaged meta-nodes within the first $\ell$ steps with probability at least $(1 - 3/(10\ell))^\ell \geq 1 - 3/10 = 7/10$. Thus, we have that
\begin{equation}\label{eq:random:4}
    \Pr_{(x_i)_i \sim \mathcal W_{\mathcal T'}}[ (x_i)_{i \leq \ell} \subseteq \mathcal T'' ] \geq \frac{7}{10}.
\end{equation}
Combining these inequalities, it follows that
\begin{equation}\label{eq:reduce to clean}
    \Pr_{(x_i)_i \sim \mathcal W_{\mathcal T}}[ |(x_i)_i| \leq \ell + 1 ] \geq \frac{1}{80\ell} \cdot \Pr_{(x_i)_i \sim \mathcal W_{\mathcal T''}}[ |(x_i)_i| \leq \ell ].
\end{equation}
We now lower bound the probability that a random walk in $\mathcal T''$ terminates within at most $\ell$ steps.

\medskip
\noindent \textbf{Random Walks in $\mathcal T''$:} Let $\mathcal T''_{\ell+1}$ denote the meta-nodes at depth $\ell+1$ in $\mathcal T''$. Given a walk $(x_i)_i$ in $\mathcal T''$, $(x_i)_i$ has length greater than $\ell$ if and only if $(x_i)_i$ contains some meta-node from $\mathcal T''_{\ell+1}$, thus
$$ \Pr_{(x_i)_i \sim \mathcal W_{\mathcal T''}}[| (x_i)_i | > \ell ] = \Pr_{(x_i)_i \sim \mathcal W_{\mathcal T''}}[\mathcal T''_{\ell+1} \cap (x_i)_i \neq \emptyset]. $$

\begin{lemma}\label{lemma:node mass}
    For any $x \in \mathcal T''_{\ell+1}$, we have that $\Pr_{(x_i)_i \sim \mathcal W_{\mathcal T''}}[ x \in (x_i)_i] \leq (2/L)^{\ell+1}$.
\end{lemma}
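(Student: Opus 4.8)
The plan is to exploit the expansion bound from \Cref{lem:expansion} together with the fact that every meta-node in $\mathcal T''$ is non-dirty, non-damaged and non-terminal (except possibly at the very last level), so that each internal meta-node on the path to $x$ has exactly $L$ children. First I would fix $x \in \mathcal T''_{\ell+1}$ and let $r = x_0, x_1, \dots, x_{\ell+1} = x$ be the root-to-$x$ path, with transcript $(\tau_0, \dots, \tau_{\ell+1}) := (\tau^{(x_0)}, \dots, \tau^{(x_{\ell+1})})$. Since a random walk in $\mathcal T''$ at each internal meta-node moves to a uniformly random one of its (at most $L$, but we want a lower bound on the branching, so exactly $L$) children, the probability that the walk traverses this particular path is $\prod_{j=0}^{\ell} \frac{1}{(\text{number of children of } x_j)}$. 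Every $x_j$ for $j \le \ell$ lies in $\mathcal T''$ and is therefore non-terminal (terminal nodes are leaves) and hence has exactly $L$ children in $\mathcal T$; removing descendants of dirty/damaged/contaminated nodes in passing from $\mathcal T$ to $\mathcal T''$ never removes a node whose parent survives except by deleting that node itself, so the surviving children of $x_j$ in $\mathcal T''$ number at most $L$ — but crucially, the random walk as defined picks among \emph{all $L$ children in $\mathcal T$}, so the traversal probability of the path in $\mathcal T''$ is exactly $L^{-(\ell+1)}$ if all of $x_1, \dots, x_{\ell+1}$ survive into $\mathcal T''$, and the event "$x \in (x_i)_i$" is contained in that.

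Hence $\Pr[x \in (x_i)_i] \le L^{-(\ell+1)}$. This already gives the claimed bound $(2/L)^{\ell+1}$ with room to spare, since $L^{-(\ell+1)} \le (2/L)^{\ell+1}$ trivially. So the heart of the argument is simply the observation that the walk reaches a \emph{specified} depth-$(\ell+1)$ node only by making $\ell+1$ correct uniform choices out of $L$ options each. I would write this out carefully: condition step by step, noting $\Pr[x_{j+1} = x_{j+1} \mid x_0, \dots, x_j] = 1/L$ whenever $x_j$ is non-terminal (which holds for $j \le \ell$ since $x_j \in \mathcal T''$ and terminals are leaves of depth potentially less than $\ell+1$ but in any case a terminal node stops the walk), and multiply.

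The one subtlety — and the main obstacle to watch for — is bookkeeping around what "the number of children" means in $\mathcal T''$ versus $\mathcal T$: the random walk $(x_i)_i \sim \mathcal W_{\mathcal T''}$ is a walk in $\mathcal T''$, so at a node $x_j \in \mathcal T''$ it picks uniformly among the children \emph{of $x_j$ that remain in $\mathcal T''$}, a set of size $\le L$; this makes the per-step probability $\ge 1/L$, i.e.\ potentially \emph{larger}, not smaller, so a naive bound $L^{-(\ell+1)}$ would be in the wrong direction. The fix is that the lemma only needs $(2/L)^{\ell+1}$, and the walk can only reach $x$ at depth $\ell+1$ if it stays inside $\mathcal T''$ for the first $\ell+1$ steps, which in particular requires $x_0, \dots, x_{\ell}$ to each be non-contaminated, non-dirty, non-damaged, and non-terminal; a non-terminal meta-node has exactly $L$ children in $\mathcal T$, and since at most a $3/(10\ell)$-fraction of them are dirty or damaged and none of the dirty ones... — actually the cleanest route is: the number of children of $x_j$ surviving in $\mathcal T''$ is at least $L(1 - 3/(10\ell)) \ge L/2$ (no child of $x_j$ is removed for being contaminated, only for being dirty/damaged or for being a descendant of such, and $x_j$ itself is not contaminated), so each per-step probability is at most $2/L$, and multiplying over the $\ell+1$ steps from $x_0$ down to $x$ gives $\Pr[x \in (x_i)_i] \le (2/L)^{\ell+1}$, as required. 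I expect that disentangling "children in $\mathcal T$" from "children in $\mathcal T''$" and pinning down the exact $\ge L/2$ branching lower bound is the only place where care is needed; the rest is a one-line product.
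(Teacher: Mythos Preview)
Your proposal is correct and, after working through the initial false start, arrives at essentially the same argument as the paper: each non-leaf meta-node $x_j \in \mathcal T''$ on the root-to-$x$ path is non-contaminated (else its children would be absent from $\mathcal T' \supseteq \mathcal T''$), hence loses at most a $3/(10\ell)$-fraction of its $L$ children to dirtiness or damage, leaving $\ge L/2$ children in $\mathcal T''$; multiplying the per-step bound $2/L$ over the $\ell+1$ steps gives $(2/L)^{\ell+1}$. The subtlety you flagged about ``children in $\mathcal T$'' versus ``children in $\mathcal T''$'' is exactly the one point the paper's proof addresses, and your resolution matches it.
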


\begin{proof}
    Give some non-leaf meta-node $y \in \mathcal T''$, we can see that $y$ has at least $L \cdot (1 - 3/(10 \ell)) \geq L/2$ children in $\mathcal T''$. This is because $y$ is not contaminated (otherwise it would be a leaf) and hence has at most $L/(5\ell)$ dirty children in $\mathcal T'$ and at most $L/(10\ell)$ damaged children in $\mathcal T'$. Thus, the probability that a random walk in $\mathcal T''$ contains a child $y'$ of $y$ given that it contains $y$ is at most $1/(L/2)$. It follows that the probability that a random walk in $\mathcal T''$ contains a meta-node $x \in \mathcal T''_{\ell+1}$ is at most $(2/L)^{\ell + 1}$.
\end{proof}

\begin{lemma}\label{lemma:small layer}
    $|\mathcal T''_{\ell+1}| \leq \Delta^{2\ell} nL$.
\end{lemma}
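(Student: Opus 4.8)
The plan is to bound $|\mathcal T''_{\ell+1}|$ by counting the possible transcripts of meta-nodes at depth $\ell+1$ and then multiplying by the number of meta-nodes sharing each transcript, using \Cref{lem:expansion}. First I would recall that every meta-node $x \in \mathcal T''$ is, by construction of $\mathcal T''$, not dirty, not damaged, and not terminal (terminal nodes are leaves, and $\mathcal T''_{\ell+1}$ consists of non-leaf meta-nodes reached only after $\ell+1$ internal steps); moreover none of its ancestors are dirty, damaged or contaminated either. In particular, for such an $x$ at depth $\ell+1$ with root-to-$x$ path $x_0,\dots,x_{\ell+1}$, the sequence of types $(\tau^{(x_0)},\dots,\tau^{(x_{\ell+1})})$ consists of $\ell+2$ pairs of colors from $\binom{[\Delta+1]}{2}$, so there are at most $\left(\binom{\Delta+1}{2}\right)^{\ell+2} \le (\Delta^2)^{\ell+2}$ possible transcripts. (I should double-check the exponent: depth $\ell+1$ means $\ell+2$ nodes on the path including the root, and the root's type $\tau^{(r)}$ is itself determined by the algorithm's initial choices, but in the worst case I count all of them, giving $\Delta^{2(\ell+2)}$; the lemma as stated wants $\Delta^{2\ell}$, so I will want to be slightly more careful — see the obstacle paragraph below.)

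Next, for each fixed transcript $(\tau_0,\dots,\tau_{\ell+1})$, I would invoke \Cref{lem:expansion} with the index $i = \ell+1$: since every meta-node $x \in \mathcal T''_{\ell+1}$ and every node on its root-to-$x$ path is non-dirty, non-damaged and non-terminal, the set of such $x$ with transcript exactly $(\tau_0,\dots,\tau_{\ell+1})$ is contained in $\Gamma(\tau_0,\dots,\tau_{\ell+1})$, which has size at most $n$. Combining, $|\mathcal T''_{\ell+1}| \le \Delta^{2(\ell+2)} \cdot n$, and then I would absorb the extra $\Delta^4$ factor and the constants into the bound, writing it as $\Delta^{2\ell} n L$ by using that $L = 10^3 \ell^2 (\Delta^2 + \sqrt{\Delta n}) \ge \Delta^4$ whenever $\Delta \ge 1$ — wait, that's false for large $n$ relative to $\Delta$. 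More honestly: $L \ge 10^3 \ell^2 \Delta^2 \ge \Delta^4$ only when $\Delta \le \sqrt{10^3}\,\ell$, which need not hold. So the cleanest route is: the transcript count is really $\Delta^{2(\ell+1)}$ if one observes that the root type $\tau^{(r)}$ is fixed by the input (the algorithm picks $c_u, c_v$ and then a specific type disjoint from them — actually $\tau^{(r)}$ is random too, so it is not fixed). I would therefore bound transcript count by $\Delta^{2(\ell+1)}$, giving $|\mathcal T''_{\ell+1}| \le \Delta^{2(\ell+1)} n = \Delta^{2\ell} \cdot \Delta^2 \cdot n \le \Delta^{2\ell} n L$ since trivially $L \ge \Delta^2$ (indeed $L = 10^3\ell^2(\Delta^2 + \sqrt{\Delta n}) \ge \Delta^2$), which closes the bound.

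The main obstacle will be nailing down the exact exponent of $\Delta$ in the transcript count so that the slack matches the claimed $\Delta^{2\ell} nL$. The depth-$(\ell+1)$ path has $\ell+2$ meta-nodes, hence naively $\ell+2$ types and exponent $2(\ell+2)$, which overshoots $2\ell$ by $4$; the $L$ factor in the target only buys one more $\Delta^2$ (plus a $\sqrt{\Delta n}$ term), so I need to shave two factors of $\Delta^2$. One factor comes for free because the type $\tau^{(x_j)}$ of a non-dirty node must be \emph{disjoint} from all earlier types on its path, so once $\tau_0,\dots,\tau_{j-1}$ are chosen the number of choices for $\tau_j$ is at most $\binom{\Delta+1}{2}$ minus forbidden pairs — this does not actually reduce the count asymptotically. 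The real saving is to note that $\Gamma$ is defined relative to a root-to-$x$ path of length $i$; applying \Cref{lem:expansion} with a transcript that already pins down all $\ell+2$ types and gets $n$, then observing the number of transcripts is at most $(\Delta^2)^{\ell+2}$, and finally writing $(\Delta^2)^{\ell+2} n = \Delta^{2\ell} n \cdot \Delta^4 \le \Delta^{2\ell} n L$ \emph{provided} $L \ge \Delta^4$. This inequality holds precisely when $\sqrt{\Delta n} \ge \Delta^2 / (10^3 \ell^2)$ or $\Delta^2 \ge \Delta^2$... in fact $10^3 \ell^2 \Delta^2 \ge \Delta^4 \iff \Delta \le 10^{3/2}\ell$, i.e.~for small $\Delta$; for large $\Delta$ one uses instead that $\Delta \le n$ (maximum degree) so $\Delta^4 \le \Delta^2 \cdot \Delta^2 \le \Delta^2 \cdot \Delta n \le (10^3\ell^2)^{-2}\cdot\dots$ — this case analysis is the finicky part. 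I expect the intended reading is simply $|\mathcal T''_{\ell+1}| \le (\Delta^2)^{\ell+1} \cdot n \le \Delta^{2\ell} n L$ using a tighter transcript count (e.g.~the root type is not free in $\mathcal T''$, or depth is counted so that level $\ell+1$ nodes have exactly $\ell+1$ ancestors strictly above the root's type-commitment), and the one-line proof is: at most $(\Delta^2)^{\ell+1}$ transcripts, at most $n$ per transcript by \Cref{lem:expansion}, and $\Delta^2 \le L$. I would write it that way and flag the exponent bookkeeping as the only subtlety.
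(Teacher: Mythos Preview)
Your approach has a genuine gap that the paper's proof avoids. You assert that every $x \in \mathcal T''_{\ell+1}$ is non-terminal, but this is false: $\mathcal T''$ is obtained from $\mathcal T'$ only by deleting dirty and damaged meta-nodes (and their descendants), so terminal leaves of $\mathcal T$ at depth $\ell+1$ with clean, undamaged, uncontaminated ancestors remain in $\mathcal T''_{\ell+1}$. Since \Cref{lem:expansion} explicitly requires every $x_j$ on the root-to-$x$ path to be non-terminal (this is used in the inductive claim to guarantee $|P^{(x)}| \ge 2L+3$, which is what forces the two orientations of the same path to have disjoint length-$L$ prefixes), you cannot apply the lemma directly to the nodes at depth $\ell+1$.

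The paper's fix is exactly the missing idea, and it also resolves your exponent bookkeeping. Instead of bounding $|\mathcal T''_{\ell+1}|$ directly, the paper bounds the number of \emph{parents} of such nodes: the parent $x_\ell$ of any $x \in \mathcal T''_{\ell+1}$ lies in $\mathcal T''$, is not dirty or damaged, and is certainly not terminal (it has a child), so $x_\ell \in \Gamma(\tau^{(x_0)},\dots,\tau^{(x_\ell)})$. One then ranges over the at most $(\Delta^2)^{\ell}$ relevant transcripts (the root's type being fixed), applies \Cref{lem:expansion} to get at most $n$ parents per transcript, and multiplies by the $L$ children each parent has. This yields $|\mathcal T''_{\ell+1}| \le L \cdot (\Delta^2)^{\ell} \cdot n$, which is precisely the claimed $\Delta^{2\ell} nL$. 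The factor $L$ you were trying to conjure from the inequality $L \ge \Delta^2$ is not slack at all---it is the branching factor coming from passing from parents to children.
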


\begin{proof}
    Recall the definition of $\Gamma$ from the proof of \Cref{lem:expansion}. Let $x \in \mathcal T''_{\ell+1}$ and let $x_0,\dots,x_{\ell + 1}$ denote the root-to-$x$ path in $\mathcal T$. Then it must be the case that $x_{\ell + 1}$ is the child of some meta-node in $\Gamma(\tau^{(x_0)},\dots,\tau^{(x_{\ell})})$ since it's parent cannot be terminal. It follows that every meta-node in $\mathcal T''_{\ell+1}$ is a child of some meta-node in
    $$\bigcup_{\tau \in \binom{[\Delta+1]}{2}^{\ell}} \Gamma(\tau).$$
    Since any meta-node has at most $L$ children and $|\Gamma(\tau)| \leq n$ by \Cref{lem:expansion}, it follows that
    $$ |\mathcal T''_{\ell+1}| \leq L \cdot \sum_{\tau \in \binom{[\Delta+1]}{2}^{\ell}} |\Gamma(\tau)| \leq  L \cdot (\Delta^2)^\ell \cdot n \leq \Delta^{2\ell} nL.$$
\end{proof}
It follows from Lemmas~\ref{lemma:node mass} and \ref{lemma:small layer} that
$$ \Pr_{(x_i)_i \sim \mathcal W_{\mathcal T''}}[\mathcal T''_{\ell+1} \cap (x_i)_i \neq \emptyset] \leq \sum_{x \in \mathcal T''_{\ell+1}} \Pr_{(x_i)_i \sim \mathcal W_{\mathcal T''}}[ x \in (x_i)_i] \leq |\mathcal T''_{\ell+1}| \cdot \left(\frac{2}{L}\right)^{\ell+1} \leq nL \cdot \left( \frac{2\Delta^2}{L} \right)^{\ell+1} \leq \frac{1}{2}. $$
Hence, we have that
\begin{equation}\label{eq:random:5}
    \Pr_{(x_i)_i \sim \mathcal W_{\mathcal T''}}[| (x_i)_i | \leq \ell ] = 1 - \Pr_{(x_i)_i \sim \mathcal W_{\mathcal T''}}[| (x_i)_i | > \ell ] \geq \frac{1}{2}.
\end{equation}
Combining Equations~(\ref{eq:reduce to clean}) and (\ref{eq:random:5}), it follows that
\begin{equation}
    \Pr_{(x_i)_i \sim \mathcal W_{\mathcal T}}[ |(x_i)_i| \leq \ell + 1 ] \geq \frac{1}{160\ell}.
\end{equation}
In other words, a random walk in the meta-tree $\mathcal T$ terminates within $\ell + 1$ steps with probability at least $1/(80\ell)$.

\subsection{Proof of \Cref{thm:extension expected}}

Suppose that we run \Cref{alg:extend} for $\kappa = (\ell + 1) \cdot 1600 \log^2 n$ iterations. It follows from \Cref{lem:extend works} that, given an arbitrary input, the probability of the algorithm successfully extending the coloring within $\ell + 1$ iterations is at least $1/(160\log n)$. Thus, we can split these $\kappa$ updates into batches of $\ell + 1$ iterations, where the probability that the algorithm successfully extends $\chi$ within some batch is at least $1/(160\log n)$. Thus, the probability that the algorithm fails to extend the coloring within $\kappa$ iterations it at most
$$ \left(1 - \frac{1}{160 \log n} \right)^{1600 \log^2 n} \leq \exp \!\left( - \frac{1600 \log^2 n}{160 \log n} \right) = \frac{1}{n^{10}}. $$
In the low probability event that the algorithm does not successfully extend the coloring within $\kappa$ iterations, then we can extend it in $O(n)$ time using Vizing's original algorithm. Thus, our algorithm runs in time $\kappa \cdot \tilde O(L) \leq \tilde O(\Delta^2 + \sqrt{\Delta n})$ both with high probability and in expectation.

\section{Color Extension to Stars}\label{sec:stars}
\label{sec:star}

The idea of {\em coloring} stars was first proposed in \cite{BhattacharyaCCSZ24}. In this section, we provide a simpler algorithm with improved runtime bound.
\begin{lemma}[\Cref{lem:key 1} Restated]
There is an algorithm $\textnormal{\texttt{ColorLightStars}}$ that, given a graph $G$, a partial $(\Delta + 1)$-edge coloring $\chi$ of $G$ and subset $U^\star \subseteq V$ such that:
\begin{itemize}
    \item $|\miss_\chi(u)| \geq d$ for all $u \in U^\star$ for some positive integer $d \in \mathbb N$,
    \item there are $\lambda$ uncolored edges incident on $U^\star$,
\end{itemize}
extends the coloring $\chi$ to $\Omega(\lambda)$ uncolored edges incident on $U^\star$ in $\tilde O(\lambda \Delta + \Delta m / d)$ expected time.
\end{lemma}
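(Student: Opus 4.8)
The plan is to iteratively extend the coloring one uncolored star-edge at a time using (truncated) Vizing chains, but to keep each chain-extension cheap by exploiting the large slack $d$ at the centres in $U^\star$. Concretely, for a fixed uncolored edge $e=(u,v)$ with $u\in U^\star$, I would build a Vizing fan around $v$ and then an alternating path $P$ of a type $\{c,c'\}$ where $c\in\miss_\chi(u)$; because $|\miss_\chi(u)|\ge d$, there are at least $d$ "fresh" choices for the free colour at $u$, and the key observation (as in \cite{duan2019dynamic,BhattacharyaCCSZ24}) is that the total length of all maximal alternating paths through $u$, summed over these $\gtrsim d$ candidate types, is at most $O(n)$ (the paths of a fixed type are vertex-disjoint). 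Hence the \emph{average} such candidate path has length $O(n/d)$, so a uniformly random choice of free colour at $u$ yields, with constant probability, a Vizing chain of length $O(n/d)$. Extending the coloring to $e$ along such a short chain costs $\tilde O(\Delta + n/d)$ time (the $\Delta$ accounting for the fan). Charging this over all $\lambda$ uncolored edges gives $\tilde O(\lambda\Delta + \lambda n/d)$; since $\lambda \le |U^\star|\Delta$ is not directly bounded, I instead bound $\lambda n/d$ against $\Delta m/d$ using $\lambda \le \Delta\cdot(\text{number of uncolored edges}) \le \Delta m / \Delta \cdot \Delta$... — more carefully, one uses that each chosen path of length $O(n/d)$ need not be charged to a single edge but can be \emph{amortized}: shifting/flipping a chain of length $t$ recolors $t$ edges and also makes progress, so a potential-function / batching argument (as in \Cref{cor:key}) shows that the total work over a batch that colors $\Omega(\lambda)$ edges is $\tilde O(\lambda\Delta + \Delta m/d)$.

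The cleanest route to the $\Delta m/d$ term is the following accounting. When we process an uncolored edge $e=(u,v)$ with $u\in U^\star$, we sample the free colour at $u$ u.a.r.\ from $\miss_\chi(u)$ (of size $\ge d$) and, if the resulting alternating path has length $\le T$ for a threshold $T$ to be fixed, we apply the chain; otherwise we \emph{discard} $e$ for this round and move on. The probability that a uniformly random candidate path exceeds length $T$ is at most $O(n/(dT))$ by the averaging argument above, so setting $T := c\cdot n/d$ for a small constant ensures that in expectation only a constant fraction of the $\lambda$ uncolored edges are discarded; thus one round colors $\Omega(\lambda)$ edges in expectation, each at cost $\tilde O(\Delta + T) = \tilde O(\Delta + n/d)$, for a total of $\tilde O(\lambda\Delta + \lambda n/d)$. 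Finally, since each uncolored edge is incident to some centre of degree $\le \Delta$, the uncolored star-edges can be grouped so that $\lambda \le \Delta\cdot(\#\text{centres})$ is not needed directly; instead note $\lambda n/d \le \Delta m/d$ whenever $\lambda n \le \Delta m$, which holds because $\lambda\le m$ and $n\le$ (well — this needs care; see below).

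The main obstacle — and the place where the argument must be done carefully rather than hand-waved — is exactly the replacement of the crude bound $\lambda n/d$ by $\Delta m/d$. The naive chain-by-chain accounting gives $\lambda n / d$, which is \emph{worse} than the claimed $\Delta m/d$ when $\lambda n > \Delta m$. The fix, following \cite{BhattacharyaCCSZ24} but with their simpler analysis, is that we should not pick an arbitrary uncolored edge each time: rather, we should think of the alternating paths we flip as living in the graph and being \emph{vertex-disjoint} across a single round, so that the \emph{total} length of all flipped chains in one round is $O(n\cdot \#\text{types used})$ rather than (\#edges)$\times$(max length); combined with the slack $d$ bounding the number of distinct types forced, and with the fact that each vertex has degree $\le\Delta$ so each edge participates in $\le\Delta$ fans, one obtains that the total recoloring work in a round is $\tilde O(\lambda\Delta + \Delta m/d)$ — this is precisely the content deferred to \Cref{sec:stars}. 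I would structure the proof as: (i) a structural lemma that a uniformly random free colour at a centre yields a short chain w.h.p.\ over the randomness; (ii) a round subroutine that colors a constant fraction of the uncolored star-edges with the claimed expected running time, using the vertex-disjointness bound to control the total chain length; (iii) the observation that $O(\log n)$ rounds suffice to color $\Omega(\lambda)$ edges (this is the reduction already recorded as \Cref{cor:key} in the extended abstract).
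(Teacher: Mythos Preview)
Your proposal has a genuine gap at exactly the point you flag as the ``main obstacle'', and the hand-waved fixes do not work. First, the claim that ``the total length of all maximal alternating paths through $u$, summed over these $\gtrsim d$ candidate types, is at most $O(n)$'' is unjustified: the parenthetical ``paths of a fixed type are vertex-disjoint'' gives $\sum_P |P| \le n$ for a \emph{single} type, but you are summing one path per type over $d$ different types, and those paths (all starting at $u$, all sharing the first edge of colour $c'$) are certainly not vertex-disjoint. So even the per-edge $O(n/d)$ average is not established. Second, even if one had an $O(n/d)$ chain per edge, the resulting $\tilde O(\lambda n/d)$ cannot in general be bounded by $\tilde O(\Delta m/d)$ (there is no inequality $\lambda n \le \Delta m$ to invoke), and your ``vertex-disjointness across a round'' argument does not repair this: chains for different uncolored edges have different types and are not disjoint.

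The paper's argument is structurally different and avoids both issues. The key lemma (their \Cref{sum-alt-path}) is a \emph{global} bound: for a fixed colour $x$, the total length of all maximal $\{x,y\}$-alternating paths over all $y$ and all such paths in $G$ is at most $3m$; hence over all pairs it is at most $3(\Delta+1)m$. The algorithm samples \emph{both} a random uncolored edge $(v,u)\in S$ \emph{and} a random colour $x\in\miss_\chi(u)$; since $|\miss_\chi(u)|\ge d/2$ and distinct (edge, colour) choices yield (mostly) distinct alternating paths, there are $\Omega(d\lambda)$ candidate paths whose total length is $O(\Delta m)$. This gives an average chain length of $O(\Delta m/(d\lambda))$ --- which is precisely the truncation threshold $L$ --- so each iteration costs $\tilde O(\Delta + L)$ and $O(\lambda)$ iterations suffice, yielding $\tilde O(\lambda\Delta + \Delta m/d)$ directly with no $\lambda n/d$ term ever appearing. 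To make this work the paper also needs a careful edge classification (ready/social/independent/lonely), tentative colours $\clr_\chi(v,u)$ that are stable across iterations, and a potential function $\Phi = 10|S| + m_{\lon}$, since a naive Vizing fan on a random edge does not guarantee that distinct edge/colour pairs give distinct paths; these ingredients are what your sketch is missing beyond the counting bound.
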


\subsection{Recap on Vizing's algorithm}\label{prelim:vizing} Our algorithm will again use the original Vizing's algorithm in \cite{Vizing} that extends any partial edge coloring $\chi$ by one more colored edge. Since we will not need the involved modifications as in previous sections and only use the basic version in a slightly different way, for reader's convenience, we will describe it again here with simpler terms.

Let $(u, v)\in E$ be any uncolored edge under $\chi$; that is, $\chi(u, v) = \bot$. For any vertex $w\in V$, specify an arbitrary color $c_\chi(w)\in \miss_\chi(w)$, and also specify an arbitrary color $z\in \miss_\chi(v)$ different from $c_\chi(v)$. Then find a sequence of distinct neighbors $v = v_0, v_1,v_2, \ldots, v_k$ of $u$ such that the following holds; this sequence $v_1,v_2, \ldots, v_k$ is usually called a {\em Vizing fan}.
\begin{itemize}[leftmargin=*]
	\item For any $2\leq i\leq k$, $\chi(u, v_i) = c_\chi(v_{i-1})$, and $z = \chi(u, v_1)$.
	\item Either $c_\chi(v_k)\in \miss_\chi(u)$, or there exists an index $1\leq j<k$ such that $\chi(u, v_j) = c_\chi(v_k)$.
\end{itemize}

If $c_\chi(v_k)\in \miss_\chi(u)$, then rotate the coloring around $u$ as: $\chi(u, v_i)\leftarrow \chi(u, v_{i+1}), 0\leq i<k$, and $\chi(u, v_k)\leftarrow c_\chi(v_k)$.

Now, let us assume $c_\chi(v_k)\notin \miss_\chi(u)$, so there must exist an index $1\leq j<k$ such that $\chi(u, v_j)=c_\chi(v_k)$. Take an arbitrary color $x\in \miss_\chi(u)$, and define $y = \chi(u, v_j)$. Let $P$ be the maximal $\{x, y\}$-alternating path beginning at $u$.
\begin{enumerate}[(1),leftmargin=*]
	\item $P$ does not end at $v_{j-1}$.
	
	Apply a rotation operation: $\chi(u, v_i)\leftarrow \chi(u, v_{i+1}), 0\leq i<j$, and flip the maximal $\{x, y\}$-alternating path $P$. Finally, assign $\chi(u, v_{j})\leftarrow x$.
	
	\item $P$ ends at $v_{j-1}$.
	
	Flip the color of the maximal $\{x, y\}$-alternating path from $u$, then apply a rotation operation: $\chi(u, v_i)\leftarrow \chi(u, v_{i+1}), 0\leq i<k$, then assign color $\chi(u, v_k)\leftarrow y$.
\end{enumerate}

The runtime of Vizing's procedure is bounded by $\tilde{O}(\Delta + |P|)$.

\subsection{Algorithm Description}
Throughout our color extension procedure, we will be repeatedly assigning proper colors to edges in $S$ defined below while decreasing the size of some sets $\miss_\chi(u), u\in U^\star$. Therefore, we will keep a terminal subset $W\subseteq U^\star$ with the property that for any $u\in W$, we have $|\miss_\chi(u)|\geq d/2$. Initially, set $W\leftarrow U^\star$.

\begin{framed}
	\noindent We will maintain the following data structures.
	\begin{itemize}[leftmargin=*]
		\item A set $S$ of currently uncolored edges incident on $W$.
At the beginning we choose $S$ to be those $\lambda$ uncolored edges incident on $W$, which is initially $U^\star$.
While initially $S$ is the set of uncolored edges incident on $U^\star$, throughout the execution of the algorithm some uncolored edges may be removed from $S$; however, as we will argue (refer to \Cref{wrapup}), whenever a constant fraction of edges have been removed from $S$, it implies that a constant fraction of the original set of $\lambda$ uncolored edges have been colored.

   We assign all edges in $S$ a direction:
for each edge $(v, u)\in S$ such that $u\in W$, orient this edge from $v$ to $u$; if both $u, v$ are in $W$, orient this edge arbitrarily.
		
		\item For each vertex $v\in V$, our algorithm will explicitly maintain a color $c_\chi(v)\in \miss_\chi(v)$.
		
		\item For each vertex $u\in W$ and each of its neighbor $v$ such that $\chi(u, v) = \bot$, maintain a tentative color $\clr_\chi(v, u)\in \miss_\chi(v)$ with the following two properties.
		\begin{itemize}[leftmargin=*]
			\item $\clr_\chi(v, u)\neq c_\chi(v)$ for all $u\in W$  such that $\chi(u, v) = \bot$;
			\item $\clr_\chi(v, u)\neq \clr_\chi(v, u')$ for all pairs of distinct neighbors $u, u'$ of $v$ in $W$ such that $\chi(u, v) = \chi(u',v) = \bot$. 
		\end{itemize}
            Note that such choices of $\clr_\chi(\cdot, \cdot)$ which are compatible with $c_\chi(\cdot)$ always exist since the palette size is $\Delta+1$.
  
		\item For each color $x$, maintain a list $\lst_\chi(u, x) = \{(v, u)\in S\mid u \in W, x = \clr_\chi(v, u)\}$.
	\end{itemize}
\end{framed}

We first argue that these data structures can be maintained efficiently.
\begin{lemma}\label{clm:ds-maintain}
    Suppose the partial coloring $\chi$ and the set $S$ has undergone $k$ changes. Then, the data structures $\miss_\chi(\cdot), c_\chi(\cdot), \clr_\chi(\cdot, \cdot)$ can be maintained in $\tilde{O}(k)$ time.
\end{lemma}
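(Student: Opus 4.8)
The plan is to treat the $k$ changes as a sequence of \emph{atomic} operations — coloring or uncoloring a single edge, or inserting or deleting a single edge of $S$ — where any higher-level operation (e.g.\ removing a vertex from $W$, which ejects several edges from $S$) is simply a batch of atomic operations counted with multiplicity in $k$. We then show that each atomic operation can be processed while touching only $O(1)$ entries of the maintained structures, each touch costing $\tilde O(1)$ time. The crucial augmentation is to keep at every vertex $v$ a \emph{free list} $\mathsf{free}_\chi(v)$, namely the set of colors in $\miss_\chi(v)$ that are currently \emph{not} serving as $c_\chi(v)$ nor as $\clr_\chi(v,u)$ for any uncolored neighbour $u \in W$ of $v$. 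We store $\miss_\chi(v)$, $\mathsf{free}_\chi(v)$ and each list $\lst_\chi(\cdot,\cdot)$ as doubly linked lists with position-pointer arrays indexed by color (and back-pointers from list elements); since the graph, and hence the color universe $[\Delta+1]$, is fixed, this gives $O(1)$-time membership tests, insertions, deletions, and ``extract an arbitrary element''.

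First I would record the counting invariant that drives the argument: for every vertex $v$ we have $|\miss_\chi(v)| \ge 1 + \deg_G^\chi(v)$, because $v$ has at most $\deg_G(v) - \deg_G^\chi(v)$ colored incident edges and the palette has size $\Delta+1 \ge \deg_G(v)+1$. The colors ``in use'' at $v$ — i.e.\ $c_\chi(v)$ together with all the $\clr_\chi(v,u)$'s — are distinct, missing at $v$, and number at most $1 + \deg_G^\chi(v)$ by the defining properties of $\clr_\chi$; hence $\mathsf{free}_\chi(v)$ is nonempty at exactly the moments we will need to draw from it, which is what guarantees the repair steps never get stuck.

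Next I would go through the atomic operations. Uncoloring an edge $(a,b)$ of color $c$ only enlarges $\miss_\chi$ at $a$ and $b$: we insert $c$ into $\miss_\chi(a)$ and $\mathsf{free}_\chi(a)$ and symmetrically at $b$, and no other invariant is violated — $O(1)$ work. Coloring $(a,b)$ with color $c$ removes $c$ from $\miss_\chi(a)$ and $\miss_\chi(b)$; by the distinctness invariants, at $a$ the color $c$ equals \emph{at most one} of the in-use colors. If $c$ was free we merely delete it from $\mathsf{free}_\chi(a)$; if $c = c_\chi(a)$ we extract a fresh color from $\mathsf{free}_\chi(a)$ and install it as $c_\chi(a)$; if $c = \clr_\chi(a,u)$ for the unique such $u$, we extract a fresh color $x$ from $\mathsf{free}_\chi(a)$, set $\clr_\chi(a,u) \gets x$, and move $(a,u)$ from $\lst_\chi(u,c)$ to $\lst_\chi(u,x)$. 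The same is done at $b$. Deleting $(v,u)$ from $S$ returns $\clr_\chi(v,u)$ to $\mathsf{free}_\chi(v)$ and unlinks it from its $\lst$; inserting $(v,u)$ into $S$ pulls a fresh color from $\mathsf{free}_\chi(v)$ and registers it in the appropriate $\lst$. In every case the number of structure entries touched is $O(1)$, so the cost over all $k$ changes is $\tilde O(k)$.

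The main obstacle — and essentially the only nontrivial point — is ruling out \emph{cascades}: verifying that repairing $c_\chi(v)$ or a single $\clr_\chi(v,u)$ never forces further repairs of the other $\clr_\chi(v,\cdot)$ values or of neighbours' data, and that the free list is never empty at the instant we must draw from it. Both follow from the counting invariant above together with the single observation that every replacement color is drawn from $\mathsf{free}_\chi(v)$, i.e.\ from a color that is simultaneously missing at $v$ \emph{and} not in use there, so installing it creates no new collision. I would make this the technical heart of the write-up; the rest is the standard linked-list-with-pointer-array bookkeeping already employed elsewhere in the paper.
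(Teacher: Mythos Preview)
Your proof takes essentially the same approach as the paper: both maintain, at each vertex $v$, the pool of colors that are missing at $v$ but not currently serving as $c_\chi(v)$ or as some $\clr_\chi(v,\cdot)$, and repair the structures by drawing from that pool. The paper processes the $k$ changes in a per-vertex batch (temporarily unset everything broken, then reallocate from the pool $C_v$), while you process them one atomic operation at a time with a persistent $\mathsf{free}_\chi(v)$; this is a cosmetic difference, and your doubly-linked-list-plus-pointer-array implementation is exactly what is needed.

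One small gap worth tightening: your counting invariant only gives $|\mathsf{free}_\chi(v)| \geq 0$, not $\geq 1$, so the sentence ``hence $\mathsf{free}_\chi(v)$ is nonempty at exactly the moments we will need to draw from it'' is not yet justified. The missing observation is that whenever you must draw at $a$ because coloring $(a,b)$ evicted an in-use color, the edge $(a,b)$ was itself uncolored; if $b\in W$ then $\clr_\chi(a,b)$ is no longer needed and should be freed \emph{as part of the same atomic step}, yielding the extra unit of slack, while if $b\notin W$ the in-use count was already strictly below $1+\deg_G^\chi(a)$. Equivalently, order the atomic operations so that the $S$-deletion (freeing $\clr_\chi(a,b)$) precedes the coloring of $(a,b)$. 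This is precisely the paper's ``since there are $\Delta+1$ colors, $|C_v|\geq |S_v|+1$'' step, just phrased incrementally.
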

\begin{proof}
    For each vertex $v\in V$, maintain a list of colors $\clr_\chi(v) = \{\clr_\chi(v, u)\mid (v, u)\in S\}$, and also maintain the set $\miss_\chi(v)\setminus \clr_\chi(v)$. To recover the data structures after $k$ changes to $\chi$ and $S$, consider the following two steps.
    \begin{enumerate}[(1),leftmargin=*]
        \item Temporarily assign $c_\chi(v)\leftarrow\bot$. Initialize a color set $C_v\leftarrow \miss_\chi(v)\setminus \clr_\chi(v)$ using our old version of the data structure. Also, let $S_v$ be the set of new edges $(v, u)\in S$ added to $S$, and temporarily assign $\clr_\chi(v, u)\leftarrow\bot$ for the moment.
        
        Then, for each old color $x$ removed around $v$ in the new $\chi$, add $x$ to $C_v$. For each new color $x$ added around $v$, remove $x$ from $C_v$, and also if there is currently an edge $(v, u)\in S$ such that $\clr_\chi(v, u) = x$, temporarilly assign $\clr_\chi(v, u) \leftarrow\bot$ and add $(v, u)$ to $S_v$. Each of these steps can be implemented in $O(\log\Delta)$ time.

        \item We can see that $S_v$ is currently the set of edges $(v, u)\in S$ such that $\clr_\chi(v, u) = \bot$, and $C_v = \miss_\chi(v)\setminus \clr_\chi(v)$. Since there are $\Delta+1$ colors, we know that $|C_v|\geq |S_v|+1$. Therefore, we can allocate $C_v$ properly to $\clr_\chi(v, u), (v, u)\in S_v$ as well as $c_\chi(v)$ without any conflicts.
    \end{enumerate}
    The total runtime of the above two steps is $\tilde{O}(k)$.
\end{proof}

For each vertex $u\in U^\star$, define a directed graph $F_\chi(u)$ in the following way; this digraph $F_\chi(u)$ will not be maintained by our algorithm, but only computed on-the-fly when needed.
\begin{itemize}[leftmargin=*]
	\item \textbf{Vertices.} The vertex set of digraph $F_\chi(u)$ is the set of all colored neighbors 
 $N_\chi(u)$ of $u$.
	\item \textbf{Edges.} For each $z, w\in N_\chi(u)$, if $\chi(u, w) = c_\chi(z)$, then add a directed edge $(z, w)$ to $F_\chi(u)$.
\end{itemize}
So by definition, the out-degree of each vertex in $F_\chi(u)$ is at most one, and so $F_\chi(u)$ is a directed pseudo-forest; in particular, note that any weakly connected component in $F_\chi(u)$ is a directed tree plus at most one extra edge.

\begin{definition}[uncolored edge classification]\label{def-social}
	For any directed edge $(v, u)\in S$ such that $u\in W$, the directed edge $(v, u)$ is classified into one of the following four types.
	\begin{itemize}[leftmargin=*]
		\item If $\clr_\chi(v, u)\in \miss_\chi(u)$, then the directed edge $(v, u)$ is called \textbf{ready}.
		
		\item Otherwise, if there exists another uncolored directed edge $(v', u)\in S$ such that $\clr_\chi(v\rightarrow u) = \clr_\chi(v'\rightarrow u)$, then the directed edge $(v, u)$ is called \textbf{social}.
		
		\item Otherwise, let $w\in N_\chi(u)$ be the unique vertex such that $\chi(u, w) = \clr_\chi(v, u)$. Let $T_\chi^w(u)\subseteq F_\chi(u)$ be the weakly connected component that contains vertex $w$. If $w$ is the only vertex $w'\in T_\chi^w(u)$ such that $|\lst_\chi(u, \chi(u, w'))| = 1$, then the directed edge $(v, u)$ is called \textbf{independent}.
		
		\item Otherwise, the directed edge $(v, u)$ is called \textbf{lonely}.
	\end{itemize} 
	
\end{definition}

Our main color extension algorithm is described as follows.

\begin{framed}
	\noindent Define a length parameter $L = \frac{200\Delta m}{d\lambda}$. The algorithm consists of iterations of random color extension, which are repeated as long as $|S|\geq \frac{3\lambda}{4}$.  We start by uniformly sampling an uncolored edge $(v, u)\in S$ incident on $W$, and then sampling uniformly at random a color $x\in \miss_\chi(u)$.
The execution of an iteration splits into the following four cases.

	\begin{enumerate}[(1),leftmargin=*]
		\item \label{ext:direct} If $(v, u)$ is classified as ready, then assign $\chi(v, u) \leftarrow \clr_\chi(v, u)$. In this case, this iteration of color extension would be considered {\em successful}.
		
		\item \label{ext:social}Otherwise, check if $(v, u)$ is social, which can be done by checking if $|\lst_\chi(u, \clr_\chi(v, u))|>1$. If so, let $P$ be the maximal $\{x, y\}$-alternating path starting from vertex $v$ where $y = \clr_\chi(v, u)$.
		
		If $|P|\leq L$ and $P$ does not terminate at vertex $u$, then flip this maximal $\{x, y\}$-alternating path $P$, and assign $\chi(v, u) \leftarrow x$. In this case, this iteration of color extension would be considered successful.
		
		\item  \label{ext:vizing} Otherwise, construct the entire graph $F_\chi(u)$ and check if $(v, u)$ is independent. If so, try to apply Vizing's procedure described in \Cref{prelim:vizing} to extend $\chi$ to $(v, u)$, where vertex $v$ uses the missing color $\clr_\chi(v, u)$, vertices $w$ use missing colors $c_\chi(w), \forall w\in N_\chi(u)$, and $u$ uses the missing color $x\in \miss_\chi(u)$. Let $Q$ be the maximal $\{x, z\}$-alternating path which is the Vizing chain for $(v, u)$; in Vizing's procedure, $Q$ could be an empty path.
		
		If $|Q|\leq L$, then complete the Vizing procedure, which extends $\chi$ to $(v, u)$. In this case, this iteration of color extension would be considered successful. 
		
		\item \label{ext:pairing} Otherwise, suppose $\chi(u, w) = \clr_\chi(v, u)$ and let $T_\chi^w(u)\subseteq F_\chi(u)$  be the weakly connected component containing $w$. Since $(v, u)$ is not independent, by \Cref{def-social}, there exists another edge $(v', u)\in S$ and $w'\in T_\chi^w(u)$, such that $\clr_\chi(v', u) = \chi(u, w')$ and $|\lst_\chi(u, \chi(u, w'))|=1$. This implies that $(v', u)\in S$ is also lonely.
		
		Let $T\subseteq T_\chi^w(u)$ be a spanning tree such that all edges are in the direction from children to parents in $T_\chi^w(u)$. Let $t\in T$ be the least common ancestor of $w, w'$ in tree $T$, and define $w_0(=v), w_1(=w), \ldots, w_l = t$ and $w'_0(=v'), w'_1(=w'), \ldots, w'_{k} = t$ be the tree paths from $w, w'$ to $t$. As $t$ is the least common ancestor, these two tree paths are internally disjoint.
		
		Perform the following color shifts to $\chi$:  $$\chi(u, w_i)\leftarrow \chi(u, w_{i+1}), \forall \; 0\leq i < l-1$$
		$$\chi(u, w'_j)\leftarrow \chi(u, w'_{j+1}), \forall \; 0\leq j\leq k-1$$
		Then, uncolor the edges $\chi(u, w_{l-1}), \chi(u, w'_{k-1})\leftarrow\bot$. After that, update the edge set $$S\leftarrow S\cup \{(w_{l-1}, u), (w'_{k-1}, u)\}\setminus \{(v, u), (v', u)\}$$
		Finally, assign tentative colors $$\clr_\chi(w_{l-1}, u), \clr_\chi(w'_{k-1}, u) \leftarrow \chi(u, t)$$
		\end{enumerate}
  After the above steps (in all four cases), we maintain the data structures regarding $\clr_\chi(\cdot, \cdot), c_\chi(\cdot), \lst_\chi(\cdot, \cdot)$ in the straightforward manner. Finally, if the size of some set $\miss_\chi(u_0), u_0\in W$ drops below $d/2$, remove $u_0$ from $W$, and also remove all edges $(v, u_0)$ from $S$.

\end{framed}

\subsection{Runtime Analysis}
\begin{lemma}\label{sum-alt-path}
	Given any partial edge coloring $\chi$ of $G = (V, E)$ and any color $x\in \{1, 2, \ldots, \Delta+1\}$, for any $y\in \{1, 2, \ldots, \Delta+1\}\setminus \{x\}$, let $L_{x, y}$ denote the total length of all maximal $\{x, y\}$-alternating paths of lengths at least $2$. Then $\sum_{y, y\neq x}L_{x, y} < 3m$.
\end{lemma}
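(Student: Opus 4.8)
The plan is to bound $\sum_{y\ne x}L_{x,y}$ by charging it edge by edge. Fix the color $x$; for each $y\ne x$, let $G_{x,y}$ be the subgraph of $G$ consisting of the edges colored $x$ or $y$. Since $\chi$ is proper, every vertex is incident to at most one $x$-edge and at most one $y$-edge, so $G_{x,y}$ has maximum degree $2$, and its connected components are vertex-disjoint simple paths and cycles; these components are exactly the maximal $\{x,y\}$-alternating paths and cycles. Expanding the length of each maximal path as a sum of $1$'s over its edges, I would rewrite
$$
\sum_{y\ne x}L_{x,y}=\sum_{e\in E}\#\bigl\{(y,P):y\ne x,\ P\text{ a maximal }\{x,y\}\text{-alternating path},\ |P|\ge 2,\ e\in P\bigr\},
$$
and then bound the contribution of each edge $e$ separately, splitting on whether $\chi(e)=x$. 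Write $m_x:=|\{e\in E:\chi(e)=x\}|$.

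For an edge $e$ with $\chi(e)\ne x$, the only relevant color is $y=\chi(e)$, and $e$ lies on a unique maximal $\{x,\chi(e)\}$-component, so it contributes at most $1$; summing over all such edges gives at most $m-m_x$. The interesting case is $\chi(e)=x$, say $e=(u,v)$: here I would observe that if $e$ lies on a maximal $\{x,y\}$-alternating path of length $\ge 2$, then $e$ has a neighbouring edge on that path, which is colored $y$ (consecutive edges alternate between $x$ and $y$ and $\chi(e)=x$) and incident to $u$ or $v$; hence only those colors $y\ne x$ that already appear at $u$ or at $v$ can contribute. Since the $\deg(u)$ edges at $u$ have pairwise distinct colors, exactly one of which is $x$, there are at most $\deg(u)-1$ such colors at $u$ and likewise at most $\deg(v)-1$ at $v$, so $e$ contributes at most $(\deg(u)-1)+(\deg(v)-1)$. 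Summing $\deg(u)+\deg(v)-2$ over all $x$-colored edges and using once more that each vertex has at most one $x$-edge, the sum of endpoint degrees collapses:
$$
\sum_{\substack{e=(u,v)\in E\\ \chi(e)=x}}\bigl(\deg(u)+\deg(v)-2\bigr)=\Bigl(\sum_{\substack{w\in V\\ w\text{ incident to an }x\text{-edge}}}\deg(w)\Bigr)-2m_x\le\sum_{w\in V}\deg(w)-2m_x=2m-2m_x.
$$
Combining the two cases, $\sum_{y\ne x}L_{x,y}\le(2m-2m_x)+(m-m_x)=3m-3m_x\le 3m$; for the strict inequality I would note that $G$ has at least one edge, so either $m_x\ge 1$ and the bound is $\le 3m-3$, or $m_x=0$ and the sum is at most $m<3m$.

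The step I expect to be the main obstacle is the case $\chi(e)=x$: a careless estimate there only yields $O(m\Delta)$, and the whole point is to see that an $x$-edge participates in a maximal $\{x,y\}$-alternating path of length $\ge 2$ only for the (at most $\deg(u)+\deg(v)-2$) colors $y$ actually present at one of its endpoints, and then that the resulting sum of endpoint degrees over all $x$-edges telescopes to $2m$ precisely because $\chi$ is a proper coloring. Everything else — the decomposition of $G_{x,y}$ into vertex-disjoint paths and cycles, and the trivial bound $m-m_x$ for non-$x$ edges — is routine.
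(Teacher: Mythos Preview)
Your proof is correct, but the paper's argument is shorter and proceeds via a different decomposition. Instead of splitting the contribution of each edge according to whether it is colored $x$, the paper observes that in any maximal $\{x,y\}$-alternating path of length at least $2$, the number of $x$-edges is at most twice the number of $y$-edges (the two counts differ by at most one, and there is at least one $y$-edge). Hence the length of each such path is at most three times its number of $y$-edges, and summing over all paths and all $y\ne x$, each edge $e$ with $\chi(e)\ne x$ is counted at most once (only for $y=\chi(e)$), yielding $\sum_{y\ne x}L_{x,y}\le 3\,|\{e:\chi(e)\ne x\}|<3m$. Your approach reaches the identical intermediate bound $3(m-m_x)$ but via a degree-sum argument on the $x$-edges; this is a bit more work, though it has the minor advantage of making the role of the endpoint degrees explicit. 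Both arguments share the same tiny wrinkle in the strict inequality when $m_x=0$, which you handle correctly by noting the sum is then at most $m<3m$.
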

\begin{proof}
	For any maximal $\{x, y\}$-alternating path $P$ such that $|P|\geq 2$, the number of edges in $P$ with color $x$ is at most twice the number of edges in $P$ with color $y$. Therefore, taking a summing over all $y$ such that $y\neq x$ and all such maximal $\{x, y\}$-alternating paths, we have 
	$$\sum_{y, y\neq x}L_{x, y} \leq 3\cdot |\{e\in E\mid \chi(e)\neq x\}| < 3m.$$
\end{proof}

\begin{lemma} \label{runtime}
    The runtime of each iteration of color extension is $\tilde{O}(\Delta + L)$.
\end{lemma}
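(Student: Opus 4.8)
The plan is to walk through the four cases~\ref{ext:direct}--\ref{ext:pairing} in the description of a single iteration, bound the work done in each, and then add the cost of restoring the auxiliary data structures via \Cref{clm:ds-maintain}. Sampling $(v,u)\in S$ and $x\in\miss_\chi(u)$, and looking up the stored values $\clr_\chi(v,u)$, $c_\chi(\cdot)$, and the list sizes $|\lst_\chi(u,\cdot)|$, all take $\tilde O(1)$ time, so classifying $(v,u)$ among the first two types is cheap. Case~\ref{ext:direct} (ready) only recolors one edge, costing $\tilde O(1)$ besides maintenance. For Case~\ref{ext:social} (social), the only potentially expensive operation is tracing the maximal $\{x,y\}$-alternating path $P$ from $v$; the key point is that this trace can be aborted after at most $L+1$ edges, since we only need to decide whether $|P|\le L$ and whether $P$ terminates at $u$ — in either failure case we do nothing, and otherwise flipping $P$ costs $O(|P|)=O(L)$. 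Hence Case~\ref{ext:social} is $\tilde O(L)$ plus maintenance.

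For Cases~\ref{ext:vizing} and~\ref{ext:pairing} I would first bound the cost of building and inspecting $F_\chi(u)$: since its vertex set is $N_\chi(u)$, of size at most $\Delta$, and every vertex has out-degree at most one (so it is a directed pseudoforest), it can be constructed and its weakly connected components identified in $\tilde O(\Delta)$ time, and classifying $(v,u)$ as independent versus lonely — examining, within the component $T_\chi^w(u)$, which vertices $w'$ satisfy $|\lst_\chi(u,\chi(u,w'))|=1$ (cf.~\Cref{def-social}) — is likewise $\tilde O(\Delta)$. In Case~\ref{ext:vizing} we then run Vizing's procedure of \Cref{prelim:vizing}, whose cost is $\tilde O(\Delta+|Q|)$; walking the Vizing chain $Q$ only up to $L+1$ edges (exactly as in the social case) truncates this to $\tilde O(\Delta+L)$. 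In Case~\ref{ext:pairing} we take a spanning tree $T$ of $T_\chi^w(u)$, compute the least common ancestor $t$ of $w$ and $w'$, and shift colors along the two internally disjoint tree paths $w_0,\dots,w_l$ and $w'_0,\dots,w'_k$; since both paths live inside a component of $F_\chi(u)$ of size at most $\Delta$, we have $l,k\le\Delta$, so all the color shifts, the two uncolorings $\chi(u,w_{l-1}),\chi(u,w'_{k-1})\leftarrow\bot$, the update to $S$, and the two new $\clr_\chi$ assignments cost $O(\Delta)$ in total.

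Finally, in each of the four cases the total number of changes made to $\chi$ and to $S$ during the iteration is $O(\Delta+L)$ — at most $L$ recolorings along a flipped alternating path, plus $O(\Delta)$ recolorings from a Vizing fan or from the tree-path shifts, plus $O(1)$ edges added to or removed from $S$ — and removing a saturated vertex $u_0$ from $W$ deletes at most $\Delta$ further edges from $S$. By \Cref{clm:ds-maintain}, restoring $\miss_\chi(\cdot)$, $c_\chi(\cdot)$, $\clr_\chi(\cdot,\cdot)$ (and, analogously and straightforwardly, the lists $\lst_\chi(\cdot,\cdot)$) after $k$ such changes costs $\tilde O(k)=\tilde O(\Delta+L)$. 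Combining the four cases yields the claimed $\tilde O(\Delta+L)$ bound per iteration. The main obstacle I anticipate is not any individual estimate but making the case analysis airtight: in particular, justifying that the path traces in Cases~\ref{ext:social} and~\ref{ext:vizing} can be cut off after $O(L)$ steps without ever reading a whole (possibly $\Theta(n)$-long) maximal alternating path, and that the pseudoforest structure of $F_\chi(u)$ genuinely keeps the fan, the component $T_\chi^w(u)$, and the tree paths $w_0,\dots,w_l$ and $w'_0,\dots,w'_k$ all of size $\le\Delta$.
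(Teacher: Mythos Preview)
Your proposal is correct and follows essentially the same case-by-case analysis as the paper's own proof: bound each of the four cases by $\tilde O(\Delta+L)$ (using truncated path traces for Cases~\ref{ext:social} and~\ref{ext:vizing} and the pseudoforest size bound for Cases~\ref{ext:vizing} and~\ref{ext:pairing}), then invoke \Cref{clm:ds-maintain} on the $O(\Delta+L)$ total changes to $\chi$ and $S$. The only minor addition in the paper is the observation that at most two vertices can leave $W$ per iteration (since at most one edge is newly colored), but your coarser ``at most $\Delta$ edges removed from $S$'' already suffices for the bound.
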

\begin{proof}
    First, let us analyze the runtime in each of the four cases.
    \begin{itemize}[leftmargin=*]
        \item In Step \ref{ext:direct}, the runtime is $O(1)$.

        \item In Step \ref{ext:social}, checking if an edge $(v, u)\in S$ is social takes $O(1)$ time. Checking if the length of the maximal alternating path $P$ exceeds $L$ takes $\tilde{O}(L)$ time by tracing this path $P$ until it ends or reaches length $L+1$. Flipping this path also takes $\tilde{O}(L)$ time.

        \item In Step \ref{ext:vizing}, constructing the entire digraph $F_\chi(u)$ takes $\tilde{O}(\Delta)$ time. Deciding if $(v, u)$ is independent also takes $\tilde{O}(\Delta)$ time by going over all vertices $w'$ in the weakly connected component $T_\chi^w(u)\subseteq F_\chi(u)$ containing the unique vertex $w\in N_\chi(u)$ such that $\chi(u, w) = \clr_\chi(v, u)$ and checking if $|\lst_\chi(u, \chi(u, w'))|=1$. After that, running Vizing's procedure takes time $\tilde{O}(\Delta+L)$.

        \item In Step \ref{ext:pairing}, finding the other edge $(v', u)\in F$ and $w'\in T_\chi^w(u)$ takes $\tilde{O}(\Delta)$ time. Later on, finding the two tree paths and shifting the colors also take time $\tilde{O}(\Delta)$.

    \end{itemize}
        In each of the four cases,  to recover the validity of the data structures $\clr_\chi(\cdot, \cdot), c_\chi(\cdot), \lst_\chi(\cdot, \cdot)$, it takes $\tilde{O}(1)$ for each color change to $\chi$. As we have just proved that the total number of changes to $\chi$ and $S$ is bounded by $O(\Delta+L)$, the runtime to maintain $\clr_\chi(\cdot, \cdot), c_\chi(\cdot), \lst_\chi(\cdot, \cdot)$ would be $\tilde{O}(\Delta+L)$, according to \Cref{clm:ds-maintain}. Finally, since each iteration adds one colored edge, $W$ loses at most $2$ vertices in each iteration, which takes runtime $O(\Delta)$ to update $S$.
\end{proof}

So it suffices to analyze the total number of iterations. The following claim serves as the basis of our analysis.
\begin{lemma} \label{base}
	During our algorithm, whenever we have $\clr_\chi(v_1, u_0) = \clr_\chi(v_2, u_0)$ for two different edges $(v_1, u_0), (v_2, u_0)\in S$, meaning that both $(v_1, u_0), (v_2, u_0)$ are social, they will stay in $S$ as two social edges until one of the following events occur.
	\begin{enumerate}[leftmargin=*]
		\item Vertex $v_1$ or $v_2$ is incident on a newly colored edge $(v, u)$ with new color $\chi(v, u) \leftarrow \clr_\chi(v_1, u_0)$.
		\item $v_1$ or $v_2$ becomes an endpoint of a flipped maximal alternating path $P$ or $Q$ computed in Step \ref{ext:social} or Step \ref{ext:vizing}, which successfully extends $\chi$ to one more uncolored edge.
	\end{enumerate}
\end{lemma}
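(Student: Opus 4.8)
The plan is to prove \Cref{base} by induction on the iterations of the algorithm. Fix the pair $(v_1,u_0),(v_2,u_0)\in S$ with common tentative colour $c:=\clr_\chi(v_1,u_0)=\clr_\chi(v_2,u_0)$, and show that a single iteration preserves the three properties (a) $(v_1,u_0),(v_2,u_0)\in S$, (b) $\clr_\chi(v_1,u_0)=\clr_\chi(v_2,u_0)=c$, and (c) $c\notin\miss_\chi(u_0)$, unless Event~1 or Event~2 occurs. Since (c) together with $|\lst_\chi(u_0,c)|\ge 2$ is exactly what makes both edges social rather than ready, preserving (a)--(c) is equivalent to the conclusion of the lemma. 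The argument is a bounded case analysis: Steps~\ref{ext:direct}--\ref{ext:vizing} colour a single new edge (possibly flipping an alternating path or rotating a fan) and Step~\ref{ext:pairing} only shifts colours along two internally disjoint tree-paths of $F_\chi(u)$ and relocates two uncoloured edges of $S$.

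To preserve (a), observe that an edge $(v_i,u_0)$ can leave $S$ only by being coloured or by $u_0$ leaving $W$. Since $(v_1,u_0),(v_2,u_0)\in S$ are oriented towards $u_0$, whenever one of them is the sampled edge the iteration is centred at $u_0$ and, because $|\lst_\chi(u_0,c)|\ge 2$, it is classified as social; it can then be coloured only by the flip of Step~\ref{ext:social}, which makes $v_i$ the starting endpoint of the flipped path, i.e.\ Event~2 (and if the flip is not performed, the iteration changes nothing). If the sampled edge is elsewhere, then in Steps~\ref{ext:direct}--\ref{ext:vizing} the only edge removed from $S$ is the sampled edge itself, the flipped alternating path consists of coloured edges (so none of them is the uncoloured edge $(v_i,u_0)$), and in Step~\ref{ext:pairing} the only edges removed from $S$ are the sampled edge and the auxiliary edge $(v',u)$, which satisfies $|\lst_\chi(u,\clr_\chi(v',u))|=1$ and so cannot equal $(v_1,u_0)$ or $(v_2,u_0)$. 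Finally, as one checks while verifying correctness of Step~\ref{ext:pairing}, a fan rotation or a Step~\ref{ext:pairing} shift centred at $u_0$ merely relocates uncoloured edges and leaves $\miss_\chi(u_0)$ unchanged, so $u_0$ can be ejected from $W$ only through colourings of its incident edges, which we have just seen require Event~1 or~2.

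To preserve (b) and (c): the data-structure maintenance re-chooses $\clr_\chi(v_i,u_0)$ only when $c$ ceases to be available at $v_i$, i.e.\ when some edge incident on $v_i$ acquires colour $c$, which is Event~1. And $c$ re-enters $\miss_\chi(u_0)$ only if the unique $c$-coloured edge at $u_0$ gets recoloured, which can only occur inside an alternating-path flip (Step~\ref{ext:social} or the Vizing chain of Step~\ref{ext:vizing}) or a fan rotation. A flip through $u_0$ as an \emph{interior} vertex, or a fan rotation centred at $u_0$, leaves $u_0$'s palette unchanged; so the only remaining case is that $u_0$ is an \emph{endpoint} of a flipped $\{x,y\}$-path $P$ with $c\in\{x,y\}$. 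Here one argues that such a flip either destroys properness of an intermediate colouring (contradiction), or is precisely the successful-extension step in which one of $v_1,v_2$ appears as an endpoint of $P$ or receives a new incident colour, i.e.\ Event~1 or~2. Combining the three paragraphs, no iteration breaks (a)--(c) without triggering one of the two listed events, which is the lemma.

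I expect the endpoint-flip analysis in the third paragraph to be the main obstacle: a single iteration may flip an alternating path (or Vizing chain) of length up to $L$, touching many edges, and one must argue carefully that such a flip cannot silently turn $(v_1,u_0),(v_2,u_0)$ from a social pair into ready edges, nor alter their common tentative colour, without one of $v_1,v_2$ explicitly participating as an endpoint of that path (or one of their incident edges explicitly receiving colour $c$) — that is, without one of Events~1 and~2.
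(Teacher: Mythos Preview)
Your approach is essentially the same as the paper's: a case analysis over the four steps of an iteration, showing the social pair persists unless one of the two listed events occurs. You are more explicit in decomposing the invariant into (a) membership in $S$, (b) equality of the tentative colours, and (c) $c\notin\miss_\chi(u_0)$; the paper's proof, however, only explicitly verifies (a) and (b). Concretely, the paper argues: during Step~\ref{ext:pairing}, any new colour appearing around $v_b$ is either $c_\chi(v_b)$ or $\clr_\chi(v_b,u')$ with $u'\neq u_0$, hence $\neq c$ by the data-structure invariants, so $\clr_\chi(v_b,u_0)$ stays put; during a successful extension, $\miss_\chi(v_b)$ can change only if $v_b$ is a path endpoint (Event~2) or lies on the Vizing fan, and in the fan case the new colour around $v_b$ is again $c_\chi(v_b)$ or some $\clr_\chi(v_b,u')$, so $\clr_\chi(v_b,u_0)$ is unaffected unless Event~1. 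From (a) and (b) the paper then concludes ``remain social'' directly.

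In particular, the ``endpoint-flip analysis'' you flag as the main obstacle --- which is entirely about your condition (c), i.e.\ whether $c$ can silently re-enter $\miss_\chi(u_0)$ --- is not carried out in the paper's proof at all. So while your plan for (a) and (b) matches the paper and is correct, you are holding yourself to a stronger standard than the paper does on (c); you can follow the paper and argue only that the edges stay in $S$ with unchanged tentative colours.
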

\begin{proof}
	If there are no successful color extensions since we have the equality $\clr_\chi(v_1, u_0) = \clr_\chi(v_2, u_0)$, then the partial coloring $\chi$ could only be modified by Step \ref{ext:pairing}. Since both $(v_1, u_0)$ and $(v_2, u_0)$ are social, by the description of Step \ref{ext:pairing}, $(v_1, u_0)$ and $(v_2, u_0)$ will stay in $S$ after this iteration of Step \ref{ext:pairing}. Furthermore, during Step \ref{ext:pairing}, the new color added around vertex $v_b, \forall b\in \{1, 2\}$, can only be $c_\chi(v_b)$ or $\clr_\chi(v_b, u'), u'\neq u$. Since our data structure has guaranteed that $c_\chi(v_b)\neq \clr_\chi(v_b, u_0)$ and $\clr_\chi(v_b, u')\neq \clr_\chi(v_b, u_0), u_0'\neq u_0$, both tentative colors $\clr_\chi(v_1, u_0)$ and $\clr_\chi(v_2, u_0)$ will not change, and so $(v_1, u_0)$ and $(v_2, u_0)$ will remain social.
	
	We may henceforth assume that there has been a successful color extension. Further, we assume that neither $v_1$ nor $v_2$ is incident on the newly colored edge $(v, u)$ such that $\chi(v, u)$ is assigned $\clr_\chi(v_1, u_0)$, otherwise event 1 occurs and we are done. Consequently, the sets $\miss_\chi(v_b), b\in \forall \{1, 2\}$, could change only in one of the following two cases.
	\begin{itemize}[leftmargin=*]
	\item $v_b$ is an endpoint of the maximal alternating path $P$ or $Q$ in Step \ref{ext:social} \ref{ext:vizing}. Thus, event 2 occurs.

	\item $v_b$ lies on the Vizing fan in Step \ref{ext:vizing}.
	
In this case, since we assumed that event 1 does not occur and so $(v_b, u_0)$ is not colored right away, the new possible color added around vertex $v_b$ can only be $c_\chi(v_b)$ or $\clr_\chi(v_b, u_0')$ for some $u_0'\neq u_0$. Since our data structure has guaranteed that $c_\chi(v_b)\neq \clr_\chi(v_b, u_0)$ and $\clr_\chi(v_b, u')\neq \clr_\chi(v_b, u_0), u_0'\neq u_0$, the tentative color $\clr_\chi(v_b, u_0)$ does not change. Hence, $(v_1, u_0), (v_2, u_0)$ will remain social.
\end{itemize}	
\end{proof}

\begin{corollary}\label{lonely-incr}
	After each successful color extension, at most $8$ social edges are turned into non-social ones.
\end{corollary}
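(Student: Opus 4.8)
The plan is to derive this as a direct consequence of \Cref{base} together with the invariants maintained by our data structures. Fix one successful color extension. Let $(v,u)$ be the edge that gets newly colored during it, and let $P$ (or $Q$, in the case that invokes Vizing's procedure) be the maximal alternating path that gets flipped; in the \textbf{ready} case there is no such path. Let $A$ be the set consisting of the two endpoints of $(v,u)$ together with the at most two endpoints of $P$ (or $Q$), so $|A|\le 4$. The point of \Cref{base} is exactly that a pair of social edges $(v_1,u_0),(v_2,u_0)$ sharing a tentative color can stop being such a pair only when one of $v_1,v_2$ is incident on the newly colored edge $(v,u)$ (event~1) or becomes an endpoint of the flipped path (event~2); equivalently, only when one of the two \emph{tails} lies in $A$. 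So every social pair that dissolves during this extension has a tail in $A$.

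Next I would analyze, vertex by vertex, how many social edges a single $w\in A$ can be responsible for. The key observation is that the extension removes only a \emph{single} relevant color from $\miss_\chi(w)$: at an endpoint of a flipped maximal alternating path exactly one of the two path colors leaves $\miss_\chi(w)$ (and one enters, which is harmless, since a tentative color is never reassigned \emph{to} a newly available color), and at an endpoint of $(v,u)$ only $\chi(v,u)$ leaves. Crucially, the fan rotation performed inside the Vizing subprocedure replaces, around the fan center, only edge colors of the form $c_\chi(\cdot)$ by colors of the form $c_\chi(\cdot)$; since our data structures guarantee $\clr_\chi(v',u')\neq c_\chi(v')$ and $\clr_\chi(v',u')\neq\clr_\chi(v',u'')$ for $u'\neq u''$, no fan vertex ever changes one of its tentative colors, so the fan --- although it touches up to $\Delta$ edges --- creates no non-social edges. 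Combining this with the distinctness invariant of the tentative colors at a fixed tail, at most one tentative color $\clr_\chi(w,\cdot)$ actually changes at each $w\in A$, which accounts for at most four ``primary'' social edges that can be turned non-social. Each such primary edge, when it leaves the group of $S$-edges sharing its target and its tentative color (cf.~\Cref{def-social}), leaves behind at most one \emph{orphaned} member of that group that becomes non-social --- the group had size $\ge2$ because the primary edge was itself social, and if its size was $\ge3$ the survivors all remain social. This gives at most four further ``secondary'' non-social edges, with the removal of $(v,u)$ from $S$ absorbed into the same bookkeeping (in the ready case $(v,u)$ simply \emph{is} the primary edge at $v$). Summing yields the bound $8$.

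The main obstacle is not this high-level counting but pinning the constant down to exactly $8$ through the three successful-extension cases. One has to check that events~1 and~2 of \Cref{base} genuinely exhaust \emph{all} mechanisms dissolving a social pair during a successful extension: for instance, a vertex may drop out of $W$ (when its missing-color budget falls below $d/2$), but then it removes its entire incident color classes from $S$ \emph{wholesale}, leaving no orphans, so this does not contribute; and one must carefully track the overlap between ``the path $P$ starts at $v$'' and ``$(v,u)$ gets colored'' in the \textbf{social} and Vizing cases, where the net change to $\miss_\chi(v)$ still turns out to be the removal of a single color. Getting these interactions right is where the bulk of the (still short) proof goes.
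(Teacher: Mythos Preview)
Your approach is essentially the same as the paper's. Both proofs invoke \Cref{base} to conclude that a social edge $(v_1,u_0)$ can cease to be social only if either its own tail $v_1$, or the tail $v_2$ of its (unique) partner, is one of the at most four ``affected'' vertices --- the two endpoints of the newly colored edge and the two endpoints of the flipped path --- and then count at most four primary edges plus at most four orphaned partners to get $8$. The paper phrases the primary count as four \emph{slots} (two from event~1, two from event~2) rather than four vertices times one removed color, which sidesteps the overlap bookkeeping you flag in your last paragraph; your additional discussion of fan vertices is already absorbed into \Cref{base}, and your remark about $W$-dropout is correct but not something the paper addresses explicitly either.
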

\begin{proof}
	Consider any successful color extension, where an uncolored edge $(v, u)$ has obtained a color. By \Cref{base}, a social edge $(v_1, u_0)\in S$ becomes no longer social after a successful iteration if one of the following conditions holds.
	\begin{itemize}[leftmargin=*]
	\item Vertex $v_1$ is incident on the newly colored $(v, u)$ which happens to be assigned the same color $\chi(v, u) = \clr_\chi(v_1, u_0)$, or $v_1$ is the endpoint of the maximal alternating path $P$ or $Q$.
	
	In this case, there are at most $4$ choices for this possible social edge $(v_1, v_0)$.
	
	\item There exists a unique social edge $(v_2, u_0)\neq (v_1, u_0)$ such that $\clr_\chi(v_1, u_0) = \clr_\chi(v_2, u_0)$, and $(v_2, u_0)$ is colored or $v_2$ is the endpoint of the maximal alternating path $P$ or $Q$.
	
	In this case, there are also at most $4$ choices for this possible social edge $(v_1, v_0)$.
	\end{itemize}
	Therefore, overall, at most $8$ social edges would no longer be social after a successful extension.
\end{proof}

At any moment, let $m_\rdy, m_\scl, m_\ind, m_\lon$ be the total number edges in $S$ that are ready, social, independent, and lonely, respectively, according to \Cref{def-social}, and therefore $|S| =  m_\rdy + m_\scl + m_\ind + m_\lon$. Define a potential function $\Phi = 10|S| + m_\lon$. Consider any single iteration of random color extension which picks a random edge $(v, u)\in S$. 

\begin{lemma}\label{analysis:ready}
	If $m_\rdy\geq |S|/4$, then with probability $\geq 1/4$, the edge $(v, u)$ gets colored in Step \ref{ext:direct}, and thus in this case the potential $\Phi$ drops by at least $2$.
\end{lemma}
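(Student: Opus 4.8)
The plan is to analyze a single iteration of random color extension conditioned on the hypothesis $m_\rdy \geq |S|/4$, and show that with good probability this iteration is successful and hence decreases $\Phi$. First I would observe that the first random choice is a uniformly random edge $(v,u) \in S$, and since at least a $1/4$-fraction of the edges in $S$ are ready, with probability $\geq 1/4$ we sample a ready edge. The key point is that once $(v,u)$ is ready, Step~\ref{ext:direct} unconditionally succeeds: by \Cref{def-social}, ready means $\clr_\chi(v,u) \in \miss_\chi(u)$, and it is also guaranteed by the data structure invariants that $\clr_\chi(v,u) \in \miss_\chi(v)$ (indeed $\clr_\chi(v,u)$ is chosen from $\miss_\chi(v)$), so assigning $\chi(v,u) \leftarrow \clr_\chi(v,u)$ yields a proper coloring and removes $(v,u)$ from $S$ without uncoloring any other edge or unclassifying more than $O(1)$ edges.

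Next I would compute the change in the potential $\Phi = 10|S| + m_\lon$. Coloring a single edge in Step~\ref{ext:direct} decreases $|S|$ by exactly $1$, contributing $-10$ to $\Phi$. The term $m_\lon$ can only increase by a bounded amount: since this is a successful color extension, by \Cref{lonely-incr} at most $8$ social edges are turned into non-social edges, and the only way $m_\lon$ grows is via social edges becoming lonely, so $m_\lon$ increases by at most $8$. (One should also check that Step~\ref{ext:direct} makes no further structural changes to $F_\chi(\cdot)$ beyond what is accounted for, since it changes the color of only one edge; the edges removed from $S$ when some $\miss_\chi(u_0)$ drops below $d/2$ only decrease $|S|$ further and can only decrease $m_\lon$.) Hence $\Delta \Phi \leq -10 + 8 = -2$, i.e.\ $\Phi$ drops by at least $2$.

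Combining the two observations: with probability at least $1/4$ the sampled edge is ready, in which case the iteration succeeds in Step~\ref{ext:direct} and $\Phi$ decreases by at least $2$. This is exactly the claim. The main obstacle I anticipate is not the probabilistic part, which is immediate, but carefully bookkeeping the net effect on $\Phi$: one must confirm that in the ready case the only changes to the data structures are the removal of $(v,u)$ (plus possibly removing some vertex $u_0$ from $W$ and its incident edges from $S$), so that the $10|S|$ term decreases and the $m_\lon$ term increases by at most the $8$ guaranteed by \Cref{lonely-incr} — in particular one should note that a ready edge becoming colored is itself one of the (at most $8$) "events" in \Cref{lonely-incr}, so the accounting is tight but valid, and no additional slack is needed.
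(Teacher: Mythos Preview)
Your proposal is correct and takes essentially the same approach as the paper, which simply states ``This is straightforward according to our algorithm description.'' You supply the natural details behind that one-line proof: the probability bound $\geq 1/4$ is immediate from uniform sampling, and the potential drop follows since $|S|$ decreases by one (contributing $-10$) while $m_\lon$ increases by at most $8$ via \Cref{lonely-incr}.
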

\begin{proof}
	This is straightforward according to our algorithm description.
\end{proof}

\begin{lemma}\label{analysis:social}
	If $m_\scl\geq |S|/4$, then with probability $\geq \frac{3}{80}$, the edge $(v, u)$ gets colored in Step \ref{ext:social}, and thus in this case the potential $\Phi$ drops by at least $2$.
\end{lemma}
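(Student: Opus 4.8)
\textbf{Proof plan for Lemma~\ref{analysis:social}.}

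The plan is to show that a constant fraction of the $\Omega(|S|)$ social edges are ``good choices'', and that conditioned on picking such an edge, the random color $x \in \miss_\chi(u)$ completes the extension with constant probability. First I would set up the averaging argument on the socialedges: since $m_\scl \geq |S|/4$, with probability at least $1/4$ over the choice of $(v,u) \in S$ we land on a social edge. So it remains to argue that for a typical social edge, the subsequent step succeeds with probability $\Omega(1)$, and then multiply the two constants.

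The heart of the argument is to bound, for a fixed social edge $(v,u)$ with $y := \clr_\chi(v,u)$, the probability over the uniformly random $x \in \miss_\chi(u)$ that the maximal $\{x,y\}$-alternating path $P$ starting at $v$ satisfies both $|P| \leq L$ and that $P$ does not terminate at $u$. For the length condition: I would invoke Lemma~\ref{sum-alt-path}, which says $\sum_{y' \neq y} L_{y,y'} < 3m$ (applying the lemma with the roles of the colors set so that $y$ is the fixed color and we range over the second color). The maximal $\{x,y\}$-alternating path through $v$ is one such path; for at most $3m/L$ choices of $x$ can this path have length $> L$. Since $|\miss_\chi(u)| \geq d/2$ (as $u \in W$), the fraction of bad $x$ is at most $(3m/L)/(d/2) = 6m/(Ld)$. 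Plugging in $L = 200\Delta m/(d\lambda)$, this is $6m \cdot d\lambda/(200 \Delta m d) = 3\lambda/(100\Delta)$, which is at most $3/100$ since $\lambda \leq \Delta \cdot |U^\star| \leq \Delta n$... wait, I need to be careful here; actually $\lambda$ can be as large as roughly $n\Delta$, so this quantity is not automatically small. Let me instead observe that $\lambda \leq \Delta \cdot (\text{number of vertices incident to an uncolored edge})$, but more simply: the total number of edges is $m$, so $\lambda \leq m$, giving $3\lambda/(100\Delta) \leq 3m/(100\Delta)$ — still not obviously small. I think the correct bound comes from a different route: one should only count $x$ for which the path is long, but crucially a path can be long only for $x$ lying in a small set, and summing lengths over \emph{all} starting points near $v$ one gets total length $< 3m$ across all second colors; restricting to the single path through $v$ of each type, at most $3m/L$ types are bad. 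The ratio to $d/2 \leq |\miss_\chi(u)|$ together with $L \geq 200\Delta m/(d\lambda)$ and the fact that we are really summing over all social edges in $S$ (of which there are $\Omega(\lambda)$, so $m$ should be compared against $\lambda$, not against a single edge) is what makes it work — I would do the charging argument \emph{globally} over all $|S| = \Omega(\lambda)$ sampled-edge outcomes rather than per-edge, so the relevant inequality becomes $(\text{bad pairs})/(\text{total pairs}) \leq (3m \cdot |S|^{-1})/\Theta(d)$ type bound, and $L$ is tuned exactly so this is a small constant.

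For the endpoint condition (that $P$ does not terminate at $u$): the path $P$ starts at $v$, alternates between $x$ and $y$, and $x \in \miss_\chi(u)$; I would argue that for all but one choice of $x$, the path cannot reach $u$ — roughly, since $u$ is incident to the edge $(v,u) \in S$ which is uncolored, and the relevant colored edges at $u$ are constrained, at most one ``bad'' color $x$ makes $P$ close up at $u$; this costs an additional $\leq 1/(d/2) = 2/d$ in probability, negligible. Combining: conditioned on $(v,u)$ social, the extension in Step~\ref{ext:social} succeeds with probability at least, say, $1 - 3/10 - o(1) \geq 3/5$ over the choice of $x$ (after fixing the constants in $L$ so the bad fraction is at most some small constant); hence the overall success probability is at least $(1/4)\cdot(3/5) = 3/20 \geq 3/80$. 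Each such successful extension colors one edge, so $|S|$ drops by $1$ and $\Phi = 10|S| + m_\lon$ drops by at least $10 - 8 = 2$ using Corollary~\ref{lonely-incr} to control how many lonely edges are \emph{created}.

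\textbf{Main obstacle.} The delicate point is the quantitative bound on the fraction of colors $x$ yielding a too-long alternating path: Lemma~\ref{sum-alt-path} bounds a \emph{sum of total lengths} by $3m$, but to convert this into ``few bad $x$'' one needs the paths under consideration (one per candidate color $x$, all passing through the fixed vertex $v$) to be essentially distinct and accounted for in that sum, and one needs the comparison $3m/L$ versus $d/2$ to come out to a small constant — which forces the precise choice $L = \Theta(\Delta m/(d\lambda))$ and requires carefully tracking that the sampling is uniform over $S$ with $|S| \geq 3\lambda/4$, so that the ``$m$'' in the numerator is effectively divided by $\lambda$ when we look at the expected number of bad outcomes across all of $S$. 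Getting these constants to line up (the $3/80$ in the statement strongly suggests $3/4 \times 1/20$ or $3/20 \times 1/2$, i.e.~a fairly loose slack) is the part that needs care, but it is a bookkeeping obstacle rather than a conceptual one.
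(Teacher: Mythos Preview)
Your high-level plan --- sample uniformly, condition on ``social'' (probability $\geq 1/4$), then do a global averaging of alternating-path lengths and apply Markov --- is exactly the paper's approach, and your potential-drop accounting ($10$ from $|S|$ minus at most $8$ new lonely via Corollary~\ref{lonely-incr}) is correct. But two details in the middle are genuinely wrong and would not go through as written.

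\medskip
\textbf{The global length bound.} You correctly abandon the per-edge argument, but your global expression ``$(\text{bad pairs})/(\text{total pairs}) \leq (3m\cdot|S|^{-1})/\Theta(d)$'' is off by a factor of $\Delta$. The right setup is: count pairs $\bigl((v,u)\text{ social},\, x\in\miss_\chi(u)\bigr)$, of which there are at least $(d/2)\cdot m_\scl = \Omega(d\lambda)$, each yielding a maximal alternating path $P_x(v,u)$. A single application of Lemma~\ref{sum-alt-path} bounds $\sum_{x}L_{x,y}$ for one \emph{fixed} $y$; since different social edges carry different colors $y=\clr_\chi(v,u)$, you must sum the lemma over all $\Delta+1$ possible fixed colors, giving total length $\leq 3(\Delta+1)m$, not $3m$. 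The resulting average length is $O(\Delta m/(d\lambda))$, which is $O(L)$ precisely because $L=200\Delta m/(d\lambda)$ --- the $\Delta$ in $L$ is there for this reason. With $3m$ in the numerator your bound would be a factor $\Delta$ too strong, which is a sign something is being overcounted.

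\medskip
\textbf{The endpoint condition.} Your claim that ``for all but one choice of $x$, the path cannot reach $u$'' is unjustified and in general false: for a fixed $v$ and varying $x$, the $\{x,y\}$-paths through $v$ are unrelated, and nothing prevents several of them from terminating at $u$. The paper instead exploits the very definition of \emph{social}: for any social $(v,u)$ there is another $(v',u)\in S$ with the same $\clr_\chi(v',u)=y$, and for fixed $(u,x,y)$ the unique $\{x,y\}$-alternating path at $u$ has only one far endpoint, so at most one of the $v_i$'s can be it. Hence at least half of the (social edge, color) pairs give paths not ending at $u$. This supplies the missing $1/2$, and $0.3 \cdot \tfrac14 \cdot \tfrac12 = 3/80$.
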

\begin{proof}
	For each social edge $e = (v, u)$ and color $x\in \miss_\chi(u)$, define $P_x(e)$ to be the maximal $\{x, y\}$-alternating path starting at vertex $v$ where $y = \clr_\chi(v, u)$.
 
Note that for any fixed $u\in W$ and $x\in \miss_\chi(u)$, the number of different paths $P_x(v, u)$ 
over social edges $(v,u)$ incident on $u$
is precisely the number of such edges, since any two distinct edges lead to different paths. Fixing an arbitrary missed color 
$x\in \miss_\chi(u)$ for each vertex $u \in W$,
the total number of different paths 
over social edges incident on each vertex of $W$ with respect to its arbitrarily fixed missed color
is at least $\frac{1}{2}\cdot m_\scl$, where the factor of $1/2$ stems from the fact that the same path may be counted twice, as two different social edges $e,'e$ could be the endpoints edges of the same alternating path.
  Since $|\miss_\chi(u)|\geq d/2$ for each $u\in W$ and as every different missing color $x$ at $u$ leads to distinct maximal alternating paths, the collection $\mathcal{P}$ of all different paths $P_x(e)$, over all 
  $m_\scl$ social edges $e = (v,u), u\in W$ and over all $x \in \miss_\chi(u)$, satisfies: \begin{equation} \label{pathslb}
  |\mathcal{P}| ~\ge~ \frac{1}{2}\cdot\frac{d}{2}\cdot m_\scl ~=~ \frac{d m_\scl}{4}.
  \end{equation}
	
	Define $\mathcal{P'}$ to be the sub-collection of $\mathcal{P}$, which includes all paths in $\mathcal{P}$ that do not terminate at vertex $u$. We argue that $|\mathcal{P'}| \ge |\mathcal{P} / 2$, which by \Cref{pathslb} yields
 $|\mathcal{P'}| \geq \frac{d m_\scl}{8}$. In fact, for any $u\in W$, any color $x\in \miss_\chi(u)$ and any other color $y$, let $v_1, v_2, \ldots, v_{k\geq 2}$ be all uncolored neighbors of $u$ such that $(v_i, u)\in S$ and $\clr_\chi(v_i, u) = y$. Then, there exists at most one index $1\leq i\leq k$, such that the maximal $\{x,y\}$-alternating path $P_x(v_i, u)$ starting at vertex $v_i$ terminates at $u$ (otherwise $u$ would be incident on more than one edge colored $y$). Therefore, all other maximal $\{x, y\}$-alternating paths $P_x(v_j, u), j\neq i$ do not end at $u$. 
It follows that $|\mathcal{P'}| \ge |\mathcal{P} / 2$.
	
	\Cref{sum-alt-path} implies that the total length of all maximal alternating paths whose lengths are at least $2$ is at most $3(\Delta+1) m$. Therefore, the average length of paths in $\mathcal{P'}$ is at most
    $$\frac{24(\Delta+1) m}{dm_\scl}\leq \frac{128(\Delta+1) m}{d\lambda} < 0.7L,$$
from which we conclude that the expected length of a path, conditioned on the event that
$(v, u)\in \mathcal{P}'$, is bounded by $0.7 L$. 
 Using Markov's inequality, the edge $(v, u)$ gets colored with probability $\geq 0.3$, conditioned on the event that $(v, u)\in \mathcal{P}'$. Therefore, the overall success probability of Step \ref{ext:social} is at least $0.3\cdot \frac{1}{8} = \frac{3}{80}$.
\end{proof}

\begin{lemma}\label{analysis:independent}
	If $m_\ind\geq |S|/4$, then with probability $\geq 0.15$, the edge $(v, u)$ gets colored in Step \ref{ext:vizing}, and thus in this case the potential $\Phi$ drops by at least $2$.
\end{lemma}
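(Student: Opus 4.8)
The plan is to follow the same template that was used for the ``ready'' and ``social'' cases in Lemmas~\ref{analysis:ready} and~\ref{analysis:social}, but now exploiting the structure of the digraph $F_\chi(u)$. First I would fix attention on the event that the randomly sampled edge $(v,u)\in S$ is one of the $m_\ind\geq |S|/4$ independent edges, which happens with probability $\geq 1/4$. Conditioned on this, I then need to show that the Vizing chain $Q$ produced by Step~\ref{ext:vizing} has length at most $L$ with constant probability over the random choice of $x\in\miss_\chi(u)$, since if $|Q|\leq L$ the extension succeeds and $|S|$ drops by one, so $\Phi=10|S|+m_\lon$ drops by at least $10$ minus whatever increase $m_\lon$ sustains; by Corollary~\ref{lonely-incr} at most $8$ social edges become non-social, but I must also check that no new lonely edges are created beyond a constant — a successful extension changes $\chi$ only along $Q$ and a bounded-size fan, so $m_\lon$ increases by $O(1)$, and the net drop in $\Phi$ is still at least $2$ after choosing the constants appropriately.

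The heart of the argument is the averaging/Markov step bounding $|Q|$. Here I would charge each independent edge $(v,u)$ to its ``critical'' colored neighbor $w$ (the unique $w\in N_\chi(u)$ with $\chi(u,w)=\clr_\chi(v,u)$), and use the defining property of independence: $w$ is the \emph{only} vertex $w'$ in its weakly connected component $T^w_\chi(u)\subseteq F_\chi(u)$ with $|\lst_\chi(u,\chi(u,w'))|=1$. The point of this condition is that distinct independent edges incident on distinct $u$'s, or even incident on the same $u$ but with distinct critical neighbors, give rise to Vizing chains $Q$ that are maximal $\{x,z\}$-alternating paths which are edge-disjoint (for a fixed $x$), much as in the social case. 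I would fix for each relevant $u$ an arbitrary $x\in\miss_\chi(u)$, argue that the number of distinct critical neighbors across all $m_\ind$ independent edges is $\Omega(m_\ind)$ (since independence pins down a unique $w$ per edge, and two independent edges sharing the same $(u,w)$ would force $|\lst_\chi(u,\chi(u,w))|\geq 2$, contradicting that $(v,u)$ is not social), and conclude that as $x$ ranges over $\miss_\chi(u)$ — of size $\geq d/2$ — we obtain a collection $\mathcal{Q}$ of distinct maximal alternating paths with $|\mathcal{Q}|=\Omega(d\,m_\ind)=\Omega(d\lambda)$. By Lemma~\ref{sum-alt-path} their total length is $O(\Delta m)$, so the average length is $O(\Delta m/(d\lambda))=O(L)$ by the choice $L=200\Delta m/(d\lambda)$, and Markov gives that a constant fraction have length $\leq L$; combined with the $\geq 1/4$ probability of hitting an independent edge and the uniform choice of $x$, this yields a constant overall success probability, say $\geq 0.15$.

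The main obstacle I anticipate is the bookkeeping around how the Vizing chain $Q$ actually sits inside $F_\chi(u)$ and whether ``independent'' really does guarantee the required disjointness. Unlike the social case, where the alternating path $P$ starts cleanly at $v$, here the Vizing procedure first walks a fan $v=v_0,v_1,\dots,v_k$ around $u$ before emitting the alternating path $Q$; the identity of $Q$'s type $\{x,z\}$ and its starting vertex depend on where in the fan the repeated color occurs, which in turn is governed by the tree/pseudoforest structure of $T^w_\chi(u)$. I would need to verify that the independence condition forces the fan to terminate at $w$ itself (so that $z=c_\chi(w)$ is determined by $w$ alone and hence $Q$ is determined by the pair $(w,x)$), which is precisely why the ``only vertex in its component with a singleton list'' phrasing appears — it ensures the fan-walk cannot be diverted and that the resulting chain is canonically attached to $w$. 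Making this argument airtight, and correctly accounting for the factor-$2$ losses from counting each path in both orientations and from the at-most-one-path-per-$(u,x,z)$ that could terminate at $u$, is the delicate part; the rest is a routine repetition of the Markov-inequality bound already carried out for Step~\ref{ext:social}.
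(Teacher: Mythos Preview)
Your overall strategy matches the paper's: condition on sampling an independent edge (probability $\geq 1/4$), show the resulting Vizing chains $Q_x(e)$ form a large family of distinct maximal alternating paths (of size $\Omega(d\,m_\ind)$ after ranging $x$ over $\miss_\chi(u)$), bound their total length via Lemma~\ref{sum-alt-path}, and apply Markov. That skeleton is exactly what the paper does, and your computation $|\mathcal{Q}|\geq d\,m_\ind/4$, average length $<0.4L$, success probability $\geq 0.25\cdot 0.6=0.15$ would go through.

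However, your handling of the distinctness step contains a real misconception. You write that ``the independence condition forces the fan to terminate at $w$ itself'', but this is false: the fan walk starts at $w$ and then follows the out-edges of $F_\chi(u)$ through the entire component $T^w_\chi(u)$ until it either hits a color in $\miss_\chi(u)$ or cycles back. The paper's distinctness argument is different and cleaner: it observes that the \emph{first edge} $(u,w')$ of $Q_x(v,u)$ satisfies $w'\in T^w_\chi(u)$ (since $w'$ is some $v_j$ on the fan), and that two independent edges $(v_1,u),(v_2,u)$ must have critical neighbors $w_1,w_2$ in \emph{different} weakly connected components of $F_\chi(u)$. (Your ``$|\lst_\chi|\geq 2$'' argument only gives $w_1\neq w_2$; you need the stronger component-separation, which follows because independence says $w_i$ is the \emph{unique} vertex in its component with a singleton list.) Distinct components force distinct first edges, hence distinct paths.

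Two smaller points: (i) you do \emph{not} need the ``at-most-one-path terminating at $u$'' exclusion that appeared in the social case---Vizing's procedure in \Cref{prelim:vizing} succeeds in \emph{both} subcases (1) and (2), regardless of where $Q$ ends, so there is no bad endpoint to discard; (ii) your worry about $m_\lon$ increasing after a successful extension is handled implicitly by the paper via Corollary~\ref{lonely-incr} (at most $8$ social edges become non-social, so $\Phi=10|S|+m_\lon$ drops by at least $10-8=2$).
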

\begin{proof}
	For each independent edge $e = (v, u)$ and color $x\in \miss_\chi(u)$, define $Q_x(e)$ to be the maximal $\{x, z\}$-alternating path starting at vertex $u$ if we try to perform Vizing's procedure to extend $\chi$ to $e$ around the colored neighborhood $N_\chi(u)$, where vertex $v$ uses the missing color $\clr_\chi(v, u)$, vertices $w$ use missing colors $c_\chi(w), \forall w\in N_\chi(u)$, and $u$ uses the missing color $x\in \miss_\chi(u)$.
	
	We argue that for any fixed $u\in W$ and $x\in \miss_\chi(u)$, all the maximal alternating paths $Q_x(\cdot, u)$ that correspond to different independent edges are different. Indeed, for any pair of independent edges $(v_1, u), (v_2, u)$, denoting by $(u, w_1)$
and $(u, w_2)$
 the first edges of maximal alternating paths $Q_x(v_1, u)$ and $Q_x(v_2, u)$, respectively, then by definition of independent edges \cref{def-social}, $w_1, w_2$ should belong to different weakly connected components of the digraph $F_\chi(u)$, and therefore $w_1\neq w_2$.
	
	Since $|\miss_\chi(u)|\geq d/2$ for each $u\in W$, the
 collection $\mathcal{Q}$ of all different paths $Q_x(e)$, over all 
  $m_\ind$ independent edges $e = (v,u), u\in W$ and over all $x \in \miss_\chi(u)$,
 satisfies:
	$$|\mathcal{Q}| ~\ge~ \frac{1}{2}\cdot\frac{d}{2}\cdot m_\ind ~=~ \frac{d m_\ind}{4},$$
where the extra factor of $1/2$ stems from the fact that two different independent edges $e, e'$ could be the endpoints edges of the same alternating path.

\Cref{sum-alt-path} implies that the total length of paths in $\mathcal{Q}$ whose lengths are at least $2$ is at most $3(\Delta+1) m$. Therefore, the average length of paths in $\mathcal{Q}$ is at most
	$$\frac{12(\Delta+1) m}{dm_\ind} ~\leq~ \frac{64(\Delta+1) m}{d\lambda} ~<~ 0.4L,$$
from which we conclude that the expected length of a path, conditioned on event that edge $(v,u)$ is independent, is bounded by $0.4 L$.
	Using Markov's inequality, the edge $(v, u)$ gets colored with probability $\geq 0.6$, conditioned on the event that edge $(v, u)$ is independent. Therefore, the overall success probability of Step \ref{ext:social} is at least $0.15$.
\end{proof}

\begin{lemma}\label{analysis:lonely}
	If $m_\lon \geq |S|/4$, then with probability $\geq 0.25$, the potential function $\Phi$ drops by $2$ after Step \ref{ext:pairing}.
\end{lemma}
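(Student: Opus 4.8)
The plan is to mirror the structure of the previous three lemmas, but now tracking the second term of the potential $\Phi = 10|S| + m_\lon$ rather than only $|S|$. The key point is that Step \ref{ext:pairing} does \emph{not} color an edge; instead, it takes two lonely edges $(v,u)$ and $(v',u)$ out of $S$ and puts two new uncolored edges $(w_{l-1},u)$ and $(w'_{k-1},u)$ back in, keeping $|S|$ unchanged. So the $10|S|$ term is unaffected, and the entire drop in $\Phi$ must come from a net decrease of $2$ in $m_\lon$: we remove two lonely edges and we must argue that the two freshly inserted edges are \emph{not} lonely. First I would show that with probability $\geq 1/4$ the sampled edge $(v,u)$ is lonely (since $m_\lon \geq |S|/4$), and that whenever a lonely edge is sampled, the algorithm necessarily falls through to Step \ref{ext:pairing}: by \Cref{def-social}, a lonely edge is by definition not ready, not social, and not independent, so Steps \ref{ext:direct}--\ref{ext:vizing} are all skipped, and — crucially — the length tests in Steps \ref{ext:social} and \ref{ext:vizing} are never even reached, so there is no event on which Step \ref{ext:pairing} is bypassed. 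Hence conditioned on $(v,u)$ being lonely, Step \ref{ext:pairing} is executed with probability $1$.

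The heart of the argument is then to verify that the two new edges $(w_{l-1},u)$ and $(w'_{k-1},u)$ are assigned tentative colors $\clr_\chi(w_{l-1},u) = \clr_\chi(w'_{k-1},u) = \chi(u,t)$ (the old color of the edge from $u$ to the least common ancestor $t$), and that this color is \emph{missing} at $u$ after the color shifts. Indeed, tracing the reassignments $\chi(u,w_i)\leftarrow\chi(u,w_{i+1})$ for $0\le i<l-1$ and $\chi(u,w'_j)\leftarrow\chi(u,w'_{j+1})$ for $0\le j\le k-1$, together with the uncoloring of $\chi(u,w_{l-1})$ and $\chi(u,w'_{k-1})$, I would check that the color $\chi(u,t)$ (which was used on exactly the two edges $(u,w_{l-1 \to t\text{-path}})$ arcs meeting at $t$, via the pseudoforest structure of $F_\chi(u)$) is freed up at $u$. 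Consequently both $(w_{l-1},u)$ and $(w'_{k-1},u)$ have their tentative color in $\miss_\chi(u)$, which by \Cref{def-social} makes them \textbf{ready}, not lonely. Thus $m_\lon$ drops by at least $2$ (we lose $(v,u)$ and $(v',u)$ from the lonely count and gain nothing), while $|S|$ is unchanged, giving $\Phi$ a net drop of $\geq 2$. Combined with \Cref{lonely-incr}, which bounds how many social edges can revert per successful extension, this keeps the bookkeeping consistent across iterations.

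I would also need to address one subtlety: the LCA $t$ is well-defined because $(v,u)$ being lonely means the witness $w'$ and the target $w$ lie in the same weakly connected component $T_\chi^w(u)$ of the pseudoforest $F_\chi(u)$, and each such component is a tree plus at most one extra edge, so a spanning tree $T$ oriented from children to parents exists and the two root-paths from $w$ and $w'$ meet at a well-defined first common vertex $t$; moreover these paths are internally disjoint, so the two sequences of color shifts do not interfere. One should double-check the degenerate cases where $t = w$ or $t = w'$ (i.e., one of $w,w'$ is an ancestor of the other), where one of the two shift-sequences is empty — the argument that $\chi(u,t)$ becomes missing at $u$ still goes through since the edge of color $\chi(u,t)$ on the short side is precisely the one incident to $t$ on that path. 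The main obstacle I anticipate is purely this case-checking of the color-shift bookkeeping around $t$ — verifying that exactly the color $\chi(u,t)$ (and no other) is freed at $u$ and that no conflict is introduced at the internal vertices $w_i, w'_j$ — rather than any conceptual difficulty; the probability bound itself ($\geq 1/4$) is immediate from the sampling once the deterministic structure of Step \ref{ext:pairing} is pinned down.
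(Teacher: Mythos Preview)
Your overall strategy is right and matches the paper's: with probability at least $1/4$ the sampled edge is lonely, a lonely edge necessarily triggers Step~\ref{ext:pairing}, and Step~\ref{ext:pairing} replaces two lonely edges by two non-lonely ones while keeping $|S|$ fixed, so $\Phi = 10|S| + m_\lon$ drops by $2$. However, your justification for why the two new edges $(w_{l-1},u)$ and $(w'_{k-1},u)$ are non-lonely is incorrect. You claim the color $\chi(u,t)$ is freed at $u$, making the new edges \emph{ready}; this is false. Tracing the shifts, the $w$-side rotation runs only for $0\le i< l-1$ and the $w'$-side rotation for $0\le j\le k-1$, and neither of them overwrites the edge $(u,t)$ itself: the last $w'$-shift merely \emph{reads} $\chi(u,w'_k)=\chi(u,t)$ into $(u,w'_{k-1})$, which is then uncolored. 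The multiset of colors present around $u$ is in fact unchanged by Step~\ref{ext:pairing}, so $\chi(u,t)\notin\miss_\chi(u)$ afterwards and the new edges are \emph{not} ready.

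The correct (and simpler) reason, which is what the paper uses, is that both new edges are assigned the \emph{same} tentative color $\clr_\chi(w_{l-1},u)=\clr_\chi(w'_{k-1},u)=\chi(u,t)$; by \Cref{def-social} they witness each other as \emph{social}. That immediately gives $m_\lon$ dropping by $2$. Your invocation of \Cref{lonely-incr} is also off: that corollary concerns \emph{successful} extensions, whereas Step~\ref{ext:pairing} colors nothing, so it plays no role here.
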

\begin{proof}
	Since $m_\lon \geq |S|/4$, with probability at least $0.25$, the random edge $(v, u)$ would be lonely. In this case, we would execute Step \ref{ext:pairing} which turns two lonely edges $(v, u), (v', u)$ into social ones. Thus, $\Phi$ decreases by $2$.
\end{proof}

\subsection{Proof of \Cref{lem:key 1}} \label{wrapup} 

By \Cref{analysis:ready},  \Cref{analysis:social}, \Cref{analysis:independent}, and \Cref{analysis:lonely}, each iteration of the random color extension procedure decreases $\Phi$ by at least $2$ with constant probability. Since the initial potential value of $\Phi$ is set as $10|S| + m_\lon \le 11\lambda$, it follows that the algorithm terminates after $O(\lambda\log n)$ iterations
with high probability. By \Cref{runtime}, each iteration takes time  $\tilde{O}(\Delta + L)$, hence the total running time is bounded by $
\tilde{O}(\lambda \log n(\Delta + L)) = 
\tilde{O}\brac{\Delta \lambda + \frac{\Delta m}{d}}$
with high probability. 

As for the total number of newly colored edges, $|S|$ decreases either when a new edge in $S$ gets colored or when a vertex $u\in U^\star$ is removed from $W$. In the latter case, when a vertex $u\in U^\star$ is removed from $W$, since $|\miss_\chi(u)|$ was at least $d$ at the beginning, there must have been at least $d/2$ newly colored edges incident on $u$ when $u$ has left $W$. Consequently, we can charge the loss of at most $d/2$ units to the size of $|S|$ due to the removal of $u$ from $S$ to the at least $d/2$ newly colored edges incident on $u$; however, a newly colored edge may be charged by its two endpoints. In other words, if the size of $S$ has reduced by $\delta$ units, at least $\delta/3$ edges of $S$ have been colored. Finally, noting that our algorithm terminates whenever the size of $S$ drops below $\frac{3\lambda}{4}$, it follows that at this stage we
must have colored at least $\frac{\lambda}{12}$ new edges, which concludes the proof of \Cref{lem:key 1}.

\section*{Acknowledgements}

Shay Solomon is funded by the European Union (ERC, DynOpt, 101043159). Views and opinions expressed are however those of the author(s) only and do not necessarily reflect those of the European Union or the European Research Council. Neither the European Union nor the granting authority can be held responsible for them. Shay Solomon is also supported by the Israel Science Foundation (ISF) grant No.1991/1, and by a grant from the United States-Israel Binational Science Foundation (BSF), Jerusalem, Israel, and the United States National Science Foundation (NSF).


\bibliography{bibliography.bib}

\newcommand{\etalchar}[1]{$^{#1}$}
\begin{thebibliography}{GKMU18}

\bibitem[ABB{\etalchar{+}}24]{assadi2024vizing}
Sepehr Assadi, Soheil Behnezhad, Sayan Bhattacharya, Mart\'in Costa, Shay
  Solomon, and Tianyi Zhang.
\newblock {Vizing's Theorem in Near-Linear Time}.
\newblock {\em arXiv preprint arXiv:2410.05240}, 2024.

\bibitem[Alo03]{alon2003simple}
Noga Alon.
\newblock {A simple algorithm for edge-coloring bipartite multigraphs}.
\newblock {\em Information Processing Letters}, 85(6):301--302, 2003.

\bibitem[Arj82]{arjomandi1982efficient}
Eshrat Arjomandi.
\newblock {An efficient algorithm for colouring the edges of a graph with
  $\Delta+1$ colours}.
\newblock {\em INFOR: Information Systems and Operational Research},
  20(2):82--101, 1982.

\bibitem[Ass25]{Assadi24}
Sepehr Assadi.
\newblock {Faster Vizing and Near-Vizing Edge Coloring Algorithms}.
\newblock In {\em Annual ACM-SIAM Symposium on Discrete Algorithms (SODA)},
  2025.

\bibitem[BBKO22]{balliu2022distributed}
Alkida Balliu, Sebastian Brandt, Fabian Kuhn, and Dennis Olivetti.
\newblock {Distributed edge coloring in time polylogarithmic in $\Delta$}.
\newblock In {\em Proceedings of the 2022 ACM Symposium on Principles of
  Distributed Computing}, pages 15--25, 2022.

\bibitem[BCC{\etalchar{+}}24]{BhattacharyaCCSZ24}
Sayan Bhattacharya, Din Carmon, Mart{\'\i}n Costa, Shay Solomon, and Tianyi
  Zhang.
\newblock {Faster $(\Delta+1)$-Edge Coloring: Breaking the $m\sqrt{n}$ Time
  Barrier}.
\newblock In {\em 65th IEEE Symposium on Foundations of Computer Science
  (FOCS)}, 2024.

\bibitem[BCHN18]{BhattacharyaCHN18}
Sayan Bhattacharya, Deeparnab Chakrabarty, Monika Henzinger, and Danupon
  Nanongkai.
\newblock {Dynamic Algorithms for Graph Coloring}.
\newblock In {\em Proceedings of the Twenty-Ninth Annual {ACM-SIAM} Symposium
  on Discrete Algorithms (SODA)}, pages 1--20. {SIAM}, 2018.

\bibitem[BCPS24a]{BhattacharyaCPS24c}
Sayan Bhattacharya, Mart{\'{\i}}n Costa, Nadav Panski, and Shay Solomon.
\newblock {Arboricity-Dependent Algorithms for Edge Coloring}.
\newblock In {\em 19th Scandinavian Symposium and Workshops on Algorithm Theory
  (SWAT)}, volume 294 of {\em LIPIcs}, pages 12:1--12:15, 2024.

\bibitem[BCPS24b]{BhattacharyaCPS24b}
Sayan Bhattacharya, Mart{\'{\i}}n Costa, Nadav Panski, and Shay Solomon.
\newblock {Density-Sensitive Algorithms for ({\(\Delta\)} + 1)-Edge Coloring}.
\newblock In {\em 32nd Annual European Symposium on Algorithms, {ESA} 2024},
  volume 308 of {\em LIPIcs}, pages 23:1--23:18, 2024.

\bibitem[BCPS24c]{BhattacharyaCPS24}
Sayan Bhattacharya, Mart\'in Costa, Nadav Panski, and Shay Solomon.
\newblock {Nibbling at Long Cycles: Dynamic (and Static) Edge Coloring in
  Optimal Time}.
\newblock In {\em Proceedings of the {ACM-SIAM} Symposium on Discrete
  Algorithms (SODA)}. {SIAM}, 2024.

\bibitem[BDH{\etalchar{+}}19]{BehnezhadDHKS19}
Soheil Behnezhad, Mahsa Derakhshan, MohammadTaghi Hajiaghayi, Marina Knittel,
  and Hamed Saleh.
\newblock Streaming and massively parallel algorithms for edge coloring.
\newblock In {\em 27th Annual European Symposium on Algorithms (ESA)}, volume
  144 of {\em LIPIcs}, pages 15:1--15:14, 2019.

\bibitem[Ber22]{Bernshteyn22}
Anton Bernshteyn.
\newblock {A fast distributed algorithm for $(\Delta+1)$-edge-coloring}.
\newblock {\em J. Comb. Theory, Ser. {B}}, 152:319--352, 2022.

\bibitem[BGW21]{BhattacharyaGW21}
Sayan Bhattacharya, Fabrizio Grandoni, and David Wajc.
\newblock Online edge coloring algorithms via the nibble method.
\newblock In {\em Proceedings of the{ACM-SIAM} Symposium on Discrete Algorithms
  (SODA)}, pages 2830--2842. {SIAM}, 2021.

\bibitem[BM17]{BarenboimM17}
Leonid Barenboim and Tzalik Maimon.
\newblock Fully-dynamic graph algorithms with sublinear time inspired by
  distributed computing.
\newblock In {\em International Conference on Computational Science (ICCS)},
  volume 108 of {\em Procedia Computer Science}, pages 89--98. Elsevier, 2017.

\bibitem[BSVW24]{BilkstadSVW24}
Joakim Blikstad, Ola Svensson, Radu Vintan, and David Wajc.
\newblock Online edge coloring is (nearly) as easy as offline.
\newblock In {\em Proceedings of the Annual {ACM} Symposium on Theory of
  Computing (STOC)}. {ACM}, 2024.

\bibitem[BSVW25]{BlikstadOnline2025}
Joakim Blikstad, Ola Svensson, Radu Vintan, and David Wajc.
\newblock {Deterministic Online Bipartite Edge Coloring}.
\newblock In {\em Annual ACM-SIAM Symposium on Discrete Algorithms (SODA)},
  2025.

\bibitem[CH82]{cole1982edge}
Richard Cole and John Hopcroft.
\newblock On edge coloring bipartite graphs.
\newblock {\em SIAM Journal on Computing}, 11(3):540--546, 1982.

\bibitem[CHL{\etalchar{+}}20]{ChangHLPU20}
Yi{-}Jun Chang, Qizheng He, Wenzheng Li, Seth Pettie, and Jara Uitto.
\newblock {Distributed Edge Coloring and a Special Case of the Constructive
  Lov{\'{a}}sz Local Lemma}.
\newblock {\em {ACM} Trans. Algorithms}, 16(1):8:1--8:51, 2020.

\bibitem[Chr23]{Christiansen23}
Aleksander Bj{\o}rn~Grodt Christiansen.
\newblock {The Power of Multi-step Vizing Chains}.
\newblock In {\em Proceedings of the 55th Annual {ACM} Symposium on Theory of
  Computing (STOC)}, pages 1013--1026. {ACM}, 2023.

\bibitem[Chr24]{Christiansen24}
Aleksander B.~G. Christiansen.
\newblock Deterministic dynamic edge-colouring.
\newblock {\em CoRR}, abs/2402.13139, 2024.

\bibitem[CK08]{cole2008new}
Richard Cole and {\L}ukasz Kowalik.
\newblock New linear-time algorithms for edge-coloring planar graphs.
\newblock {\em Algorithmica}, 50(3):351--368, 2008.

\bibitem[CMZ24]{chechik2023streaming}
Shiri Chechik, Doron Mukhtar, and Tianyi Zhang.
\newblock Streaming edge coloring with subquadratic palette size.
\newblock In {\em 51st International Colloquium on Automata, Languages, and
  Programming, {ICALP} 2024, July 8-12, 2024, Tallinn, Estonia}, volume 297 of
  {\em LIPIcs}, pages 40:1--40:12. Schloss Dagstuhl - Leibniz-Zentrum f{\"{u}}r
  Informatik, 2024.

\bibitem[CN90]{chrobak1990improved}
Marek Chrobak and Takao Nishizeki.
\newblock Improved edge-coloring algorithms for planar graphs.
\newblock {\em Journal of Algorithms}, 11(1):102--116, 1990.

\bibitem[COS01]{combinatorica/ColeOS01}
Richard Cole, Kirstin Ost, and Stefan Schirra.
\newblock {Edge-Coloring Bipartite Multigraphs in $O(E \log D)$ Time}.
\newblock {\em Comb.}, 21(1):5--12, 2001.

\bibitem[CPW19]{CohenPW19}
Ilan~Reuven Cohen, Binghui Peng, and David Wajc.
\newblock Tight bounds for online edge coloring.
\newblock In {\em 60th {IEEE} Annual Symposium on Foundations of Computer
  Science (FOCS)}, pages 1--25. {IEEE} Computer Society, 2019.

\bibitem[CRV24]{ChristiansenRV24}
Aleksander B.~G. Christiansen, Eva Rotenberg, and Juliette Vlieghe.
\newblock Sparsity-parameterised dynamic edge colouring.
\newblock In {\em 19th Scandinavian Symposium and Workshops on Algorithm Theory
  (SWAT)}, volume 294 of {\em LIPIcs}, pages 20:1--20:18, 2024.

\bibitem[CY89]{chrobak1989fast}
Marek Chrobak and Moti Yung.
\newblock Fast algorithms for edge-coloring planar graphs.
\newblock {\em Journal of Algorithms}, 10(1):35--51, 1989.

\bibitem[Dav23]{Davies23}
Peter Davies.
\newblock Improved distributed algorithms for the lov{\'{a}}sz local lemma and
  edge coloring.
\newblock In {\em Proceedings of the {ACM-SIAM} Symposium on Discrete
  Algorithms (SODA)}, pages 4273--4295. {SIAM}, 2023.

\bibitem[DGS25]{dudeja2024randomizedgreedyonlineedge}
Aditi Dudeja, Rashmika Goswami, and Michael Saks.
\newblock {Randomized Greedy Online Edge Coloring Succeeds for Dense and
  Randomly-Ordered Graphs}.
\newblock In {\em Annual ACM-SIAM Symposium on Discrete Algorithms (SODA)},
  2025.

\bibitem[DHZ19]{duan2019dynamic}
Ran Duan, Haoqing He, and Tianyi Zhang.
\newblock Dynamic edge coloring with improved approximation.
\newblock In {\em 30th Annual ACM-SIAM Symposium on Discrete Algorithms
  (SODA)}, 2019.

\bibitem[EK24]{ElkinK24}
Michael Elkin and Ariel Khuzman.
\newblock Deterministic simple ${(1+\epsilon)\Delta}$-edge-coloring in
  near-linear time.
\newblock {\em CoRR}, abs/2401.10538, 2024.

\bibitem[EPS14]{elkin20142delta}
Michael Elkin, Seth Pettie, and Hsin-Hao Su.
\newblock {$(2\Delta - 1)$-Edge-Coloring is Much Easier than Maximal Matching
  in the Distributed Setting}.
\newblock In {\em Proceedings of the Twenty-Sixth Annual ACM-SIAM Symposium on
  Discrete Algorithms}, pages 355--370. SIAM, 2014.

\bibitem[FGK17]{fischer2017deterministic}
Manuela Fischer, Mohsen Ghaffari, and Fabian Kuhn.
\newblock Deterministic distributed edge-coloring via hypergraph maximal
  matching.
\newblock In {\em 2017 IEEE 58th Annual Symposium on Foundations of Computer
  Science (FOCS)}, pages 180--191. IEEE, 2017.

\bibitem[GKMU18]{ghaffari2018deterministic}
Mohsen Ghaffari, Fabian Kuhn, Yannic Maus, and Jara Uitto.
\newblock Deterministic distributed edge-coloring with fewer colors.
\newblock In {\em Proceedings of the 50th Annual ACM SIGACT Symposium on Theory
  of Computing}, pages 418--430, 2018.

\bibitem[GNK{\etalchar{+}}85]{gabow1985algorithms}
Harold~N Gabow, Takao Nishizeki, Oded Kariv, Daneil Leven, and Osamu Terada.
\newblock Algorithms for edge coloring.
\newblock {\em Technical Rport}, 1985.

\bibitem[GP20]{GrebikP20}
Jan Grebik and Oleg Pikhurko.
\newblock Measurable versions of vizing’s theorem.
\newblock {\em Advances in Mathematics}, 374(107378), 2020.

\bibitem[GS24]{ghosh2023low}
Prantar Ghosh and Manuel Stoeckl.
\newblock Low-memory algorithms for online edge coloring.
\newblock In {\em 51st International Colloquium on Automata, Languages, and
  Programming, {ICALP} 2024, July 8-12, 2024, Tallinn, Estonia}, volume 297 of
  {\em LIPIcs}, pages 71:1--71:19. Schloss Dagstuhl - Leibniz-Zentrum f{\"{u}}r
  Informatik, 2024.

\bibitem[KLS{\etalchar{+}}22]{KulkarniLSST22}
Janardhan Kulkarni, Yang~P. Liu, Ashwin Sah, Mehtaab Sawhney, and Jakub
  Tarnawski.
\newblock Online edge coloring via tree recurrences and correlation decay.
\newblock In {\em 54th Annual {ACM} {SIGACT} Symposium on Theory of Computing
  (STOC)}, pages 104--116. {ACM}, 2022.

\bibitem[Kow24]{Kowalik24}
Lukasz Kowalik.
\newblock {Edge-Coloring Sparse Graphs with {\(\Delta\)} Colors in Quasilinear
  Time}.
\newblock In {\em 32nd Annual European Symposium on Algorithms, {ESA} 2024},
  volume 308 of {\em LIPIcs}, pages 81:1--81:17, 2024.

\bibitem[KS87]{karloff1987efficient}
Howard~J Karloff and David~B Shmoys.
\newblock Efficient parallel algorithms for edge coloring problems.
\newblock {\em Journal of Algorithms}, 8(1):39--52, 1987.

\bibitem[PR01]{panconesi2001some}
Alessandro Panconesi and Romeo Rizzi.
\newblock Some simple distributed algorithms for sparse networks.
\newblock {\em Distributed computing}, 14(2):97--100, 2001.

\bibitem[SB24]{behnezhad2023streaming}
Mohammad Saneian and Soheil Behnezhad.
\newblock Streaming edge coloring with asymptotically optimal colors.
\newblock In {\em 51st International Colloquium on Automata, Languages, and
  Programming, {ICALP} 2024, July 8-12, 2024, Tallinn, Estonia}, volume 297 of
  {\em LIPIcs}, pages 121:1--121:20. Schloss Dagstuhl - Leibniz-Zentrum
  f{\"{u}}r Informatik, 2024.

\bibitem[Sin19]{sinnamon2019fast}
Corwin Sinnamon.
\newblock Fast and simple edge-coloring algorithms.
\newblock {\em arXiv preprint arXiv:1907.03201}, 2019.

\bibitem[SV19]{SuV19}
Hsin{-}Hao Su and Hoa~T. Vu.
\newblock Towards the locality of vizing's theorem.
\newblock In Moses Charikar and Edith Cohen, editors, {\em 51st Annual {ACM}
  {SIGACT} Symposium on Theory of Computing (STOC)}, 2019.

\bibitem[SW21]{SaberiW21}
Amin Saberi and David Wajc.
\newblock The greedy algorithm is not optimal for on-line edge coloring.
\newblock In {\em 48th International Colloquium on Automata, Languages, and
  Programming (ICALP)}, volume 198 of {\em LIPIcs}, pages 109:1--109:18, 2021.

\bibitem[Viz64]{Vizing}
V.~G. Vizing.
\newblock On an estimate of the chromatic class of a p-graph.
\newblock {\em Discret Analiz}, 3:25--30, 1964.

\end{thebibliography}
\bibliographystyle{alpha}

\end{document}